\documentclass{article}
\usepackage[utf8]{inputenc}
\usepackage{amssymb}
\usepackage{MnSymbol}
\usepackage{amsfonts}
\usepackage{mathtools}
\usepackage{amsmath}
\usepackage{amsthm}
\usepackage{xcolor}
\usepackage{hyperref}
\usepackage[shortlabels]{enumitem}
\usepackage{soul}
\usepackage{geometry}
\usepackage{bbm}
\usepackage{comment}
\usepackage{booktabs}
\usepackage{physics}

\newcommand\mc[1]{\mathcal{#1}}

\newcommand{\cD}{{\mc{D}}}

\newcommand{\cH}{{\mc{H}}}

\newcommand{\cI}{{\mc{I}}}

\newcommand{\cL}{{\mc{L}}}
\newcommand{\cM}{{\mc{M}}}

\newcommand{\cO}{{\mc{O}}}
\newcommand{\cP}{{\mc{P}}}

\newcommand{\aP}{{\widetilde{\mc{P}}}}
\newcommand{\pP}{{\overline{\mc{P}}}}
\newcommand{\cJ}{{\mc{J}}}

\newcommand{\cS}{{\mc{S}}}
\newcommand{\cT}{{\mc{T}}}

\newcommand{\hf}{{\widehat{f}}}

\newcommand\hV{{\widehat{V}}}

\newcommand{\ta}{{\widetilde{a}}}
\newcommand{\tb}{{\widetilde{b}}}

\newcommand{\tv}{{\widetilde{v}}}

\newcommand{\tw}{{\widetilde{w}}}

\newcommand{\tA}{{\widetilde{A}}}

\newcommand{\tL}{{\widetilde{L}}}

\newcommand{\tT}{{\widetilde{T}}}

\newcommand{\oL}{{\overline{L}}}
\newcommand{\oa}{{\overline{a}}}

\newcommand{\oS}{{\overline{S}}}

\newcommand{\oP}{{\overline{P}}}
\newcommand{\oQ}{{\overline{Q}}}

\newcommand{\cU}{{\mc{U}}}

\newcommand{\soP}{{\mathfrak{so}|_{\aP^n}}}
\newcommand{\su}{{\mathfrak{su}}}
\newcommand{\spf}{{\mathfrak{sp}}}
\newcommand{\sof}{{\mathfrak{so}}}
\newcommand{\pf}{{\mathfrak{p}}}
\newcommand{\lf}{{\mathfrak{l}}}

\newcommand{\LH}{{\mathrm{Mat}_{2^n}(\C)}}
\newcommand{\LHsa}{{\mathrm{Herm}_{2^n}(\C)}}

\newcommand{\Den}{{\mc{D}(\cH)}}

\newcommand{\UH}{{\cU(\cH)}}

\newcommand{\R}{{\mathbb R}}
\newcommand{\C}{{\mathbb C}}

\newcommand{\CP}{{\mathbb{CP}}}

\newcommand{\Cl}{{\mathbf{C}}}
\newcommand{\pCl}{{\overline{\Cl}}}
\newcommand{\one}{{\mathbbm{1}}}

\newcommand{\zz}{{\mathbb{Z}}}
\newcommand{\ra}{{\rightarrow}}

\newcommand{\sym}{{\mathrm{sym}}}
\newcommand{\Sym}{{\mathrm{Sym}}}

\newcommand{\Ad}{{\mathrm{Ad}}}
\newcommand{\id}{{\mathrm{id}}}
\newcommand{\ad}{{\mathrm{ad}}}
\newcommand{\SO}{{\mathrm{SO}}}
\newcommand{\OG}{{\mathrm{O}}}
\newcommand{\SU}{{\mathrm{SU}}}
\newcommand{\U}{{\mathrm{U}}}
\newcommand{\PSU}{{\mathrm{PSU}}}

\newcommand{\Sp}{{\mathrm{Sp}}}

\newcommand{\PStab}{\mathrm{PStab}}

\newcommand{\fg}{{\mathfrak{g}}}

\newcommand{\SWAP}{{\mathrm{SWAP}}}
\newcommand{\vect}{{\mathrm{vec}}}
\newcommand{\ch}{{\mathrm{char}}}
\newcommand{\Arf}{{\mathrm{Arf}}}
\newcommand{\End}{{\mathrm{End}}}

\newcommand{\Comm}{{\mathrm{Comm}}}
\newcommand{\Var}{{\mathrm{Var}}}

\newcommand{\Lag}{{\mathrm{Lag}}}
\newcommand{\PG}{{\mathrm{PG}}}

\newcommand{\ind}{\mathbf 1}

\newtheorem{definition}{Definition}
\newtheorem{lemma}{Lemma}
\newtheorem{corollary}{Corollary}
\newtheorem{proposition}{Proposition}
\newtheorem{theorem}{Theorem}

\newcommand{\kett}[1]{\ensuremath{\vert#1\rangle\rangle}}
\newcommand{\braa}[1]{\ensuremath{\langle\langle #1\vert}}

\begin{document}

\title{An algebraic characterisation of $n$-qubit pure-state invariants}
\author{Markus Frembs\thanks{\href{mailto:markus.frembs@itp.uni-hannover.de}{markus.frembs@itp.uni-hannover.de}}\\
\small{\textit{Institut f\"ur Theoretische Physik, Leibniz Universit\"at Hannover}}\\ \small{\textit{Appelstraße 2, 30167 Hannover, Germany}}}
\date{}

\maketitle
\vspace{.25cm}
\begin{abstract}
    We study the set of $n$-qubit pure quantum states. Specifically, we consider their incarnation as density matrices in Bloch representation, that is, in terms of their expansion with respect to the basis of Hermitian Pauli operators. Building on prior work \cite{HoehnWever2017,FrembsHoehnNataleWever2026a,FrembsHoehnNataleWever2026b}, we define a family of quadratic functions in the Pauli expectation values, and fully characterise all instances that result in state-independent purity constraints. The resulting family of invariants is surprisingly rich in structure, with strong ties to the geometry of symplectic polar spaces.
    
    Having charted the general landscape, we next define a sub-family of pure-state invariants that is not only kinematical, but in addition satisfies the dynamical constraint of being stable under state update. We show this family to coincide with that recently found in Ref.~\cite{FrembsHoehnNataleWever2026a,FrembsHoehnNataleWever2026b}, and prove that it corresponds with (different embeddings of) symplectic Lie algebras $\spf(2^{n-1})\subset\su(2^n)$, which admit a generating set of Pauli operators. We carefully distinguish the symplectic case from its orthogonal counterpart, using the Arf invariant of quadratic refinements of the symplectic form on the projective Pauli group, which we further identify with quaternionic structures on $\C^{2^n}$ that are both generated by and restricted to Pauli operators.
    
    As a consequence of this analysis, we reveal previously largely overlooked structure in the Pauli group. We highlight this in a number of examples. We characterise all state-independent uncertainty relations in variance for Pauli observables, as well as all Pauli channels for which the output fidelity and output purity are state-independent. We further show that the orbits of the projective stabiliser groups of pure-state invariants, which for $n\geq2$ are proper subgroups of the Clifford group, form complex projective state $2$-designs, while their unitary lifts form symplectic $2$-designs.

    In summary, this shows that not only the pure-state geometry, but also a substantial part of the algebraic structure of the Pauli group, and of quantum theory as a whole, is encoded by pure-state invariants. Together with their relatively simple structure, our work thus provides a practically useful toolkit - in a first instance, for the study of purity, and in a second, for various applications of the Pauli group in quantum information science.
\end{abstract}

\newpage
\tableofcontents
\newpage

\section{Introduction}

The Bloch (or Pauli) representation is one of the most concrete ways of turning questions about multipartite quantum states into questions about real geometry. Any $n$-qubit density operator $\rho$ can be expanded as $\rho=\frac{1}{2^n}\sum_{P\in\aP_n}\alpha_P P$, where $\aP_n=\{\one,X,Y,Z\}^{\otimes n}$ are the phase-free Hermitian $n$-qubit Pauli operators which form a basis for the space of Hermitian operators $\LHsa$ on the Hilbert space of $n$-qubits $\cH=\C^{2^n}$; the real expansion coefficients $\alpha_P=\tr[\rho P]$ collectively define the Bloch vector $\alpha\in\R^{4^n}$ corresponding to $\rho$.\footnote{Normalisation, $\tr[\rho]=1$, fixes the identity component $\alpha_\one=1$, and we may thus take $\alpha\in\R^{4^n-1}$.} This representation is ubiquitous in quantum information, stemming from the many useful properties of the Pauli group and its automorphism group within the unitary group, called the Clifford group. In particular, Pauli and Clifford gates form the backbone of many quantum computing architectures and of quantum error correction, appear in error models, benchmarking and simulation algorithms, and underlie many constructions in condensed matter physics and statistical mechanics, to name but a few applications. All of these directly or indirectly make use of the above representation of quantum states. Yet, despite its apparent simplicity, the geometry carved out by the manifold of pure quantum states quickly becomes intricate for increasing $n$; the set of $n$-qubit pure states when expressed in terms of Bloch vectors $\alpha$ is a highly nontrivial real-algebraic subset of $\R^{4^n-1}$. Indeed, the scalar purity condition $\tr[\rho^2]=1$ becomes a simple quadratic constraint in Pauli coordinates, $1=\tr[\rho^2]=\frac{1}{2^n}\sum_{P}\alpha_P^2$, so that pure states are naturally inscribed into the `Bloch sphere' of radius $\sum_{P\neq\one}\alpha_P^2=2^n-1$. In the single-qubit case, this is the only purity constraint and pure states fill the entire Bloch sphere. Yet, for $n\geq 2$, purity is no longer described by a single trace condition, but instead by the idempotency relation $\rho^2=\rho$, which comprises $4^n-1$ linearly dependent quadratic relations between the Bloch components of a given quantum state.\footnote{Of course, for density operators, $\tr[\rho^2]=1$ still certifies purity. From this perspective the richer purity constraints capture in part the interplay of positivity and purity.} Compared to the tracial constraint, the latter involves mixed Pauli expectation terms and is thus much harder to work with. The resulting ad hoc parametrisation does not in any obvious way exploit the geometric properties of the pure-state manifold.

In a companion paper \cite{FrembsHoehnNataleWever2026a}, we find a family of quadratic invariants in Pauli expectation values, that simplifies the structure of the idempotency relation $\rho^2=\rho$ in a number of ways: the constraints of this family are (i) state-independent, (ii) diagonal in Bloch components, (iii) supported on strict subsets of Pauli observables only, and (iv) exhibit structural properties that remain opaque on the level of the relation $\rho^2=\rho$. In the present paper, we extend these findings in various ways.\\

\noindent\textbf{Summary of main results.}
\begin{itemize}
    \item We characterise all quadratic pure-state invariants satisfying (i)-(iii), and relate them to known open problems in the geometry of symplectic polar spaces (see Thm.~\ref{thm: all weighted quadratic invariants}, Thm.~\ref{thm: all Boolean pure-state invariants}).
    \item We define the sub-family of these invariants that are stable under state update, which mimics the properties of the invariants for two qubits in Ref.~\cite{HoehnWever2017}, and prove that this sub-family corresponds with the generalisation in Ref.~\cite{FrembsHoehnNataleWever2026a} (see Cor.~\ref{cor: completeness}).
    \item We analyse the structure of these pure-state invariants from a variety of perspectives, in particular, we relate them both with quadratic refinements of the symplectic form of the projective Pauli group (see Thm.~\ref{thm: pure-state invariants as quadratic refinements}), as well as with (compact) symplectic Lie algebras which admit generating sets given by subsets of $\aP_n$ (see Thm.~\ref{thm: all update-closed pure-state invariants}).
\end{itemize}
We then showcase a number of applications. While likely not an exhaustive list, it hints at the broader versatility of our results, beyond their immediate relevance for the geometry of pure states.
\begin{itemize}
    \item We obtain all state-independent uncertainty relations in variance for Pauli observables (see Cor.~\ref{cor: all variance uncertainty relations}).\footnote{For a more detailed analysis of pure-state invariants and complementarity, see also Ref.~\cite{FrembsHoehnNataleWever2026b}.}
    \item We characterise all Pauli channels with state-independent output fidelity and state-independent output purity (see Cor.~\ref{cor: state-independent contraction} and Cor.~\ref{cor: state-independent unitarity}).
    \item We identify the symmetries of the pure-state invariants with subgroups of the Clifford group (Lm.~\ref{lm: transitivity} and Cor.~\ref{cor: Clifford intersections}), and prove them to be symplectic $2$-designs, as well as giving rise to (complex projective) state $2$-designs (see Thm.~\ref{thm: 2-design from invariant symmetries}).
\end{itemize}

Throughout, we will draw heavily on the finite symplectic geometry of the Pauli group (as well as its relation to classical Lie algebras in Sec.~\ref{sec: structure}). These structures have been explored in much detail in Ref.~\cite{ShultThas1994,Thas1989,Taylor1992,Grove2002}, and more recently in Ref.~\cite{Cuypers2026,GargiuloHerringerZeier2026}. Our contribution is complementary to those: we use the structure theory to characterise pure-state invariants, specifically we show that those stable under state update form a single family (Thm.~\ref{thm: pure-state invariants as quadratic refinements}), while the general classification turns out to be equivalent to an open problem in finite geometry (see Sec.~\ref{sec: Boolean pure-state invariants}).\\

\textbf{Outline.} Sec.~\ref{sec: preliminaries} introduces some necessary background. We first introduce our notation for and review the relevant structures in the Pauli group (Sec.~\ref{sec: Pauli observables}), before defining a generalised notion of the quadratic measure that underlies the family of pure-state invariants in Ref.~\cite{HoehnWever2017,FrembsHoehnNataleWever2026a} (Sec.~\ref{sec: quadratic quantity}).

Sec.~\ref{sec: characterisation} contains our main results, Thm.~\ref{thm: all weighted quadratic invariants}, Thm.~\ref{thm: all Boolean pure-state invariants} and Thm.~\ref{thm: all update-closed pure-state invariants}: namely, a full characterisation of these quadratic quantities both in the real-valued (Sec.~\ref{sec: weighted case}) and Boolean case (Sec.~\ref{sec: Boolean pure-state invariants}), as well as under an additional dynamical constraint: stability under state update (Sec.~\ref{sec: update-closed Boolean invariants}).

In Sec.~\ref{sec: structure}, we analyse the structure of these pure-state invariants in detail, starting with an explicit representation in Weyl form. We identify them as generating sets of Pauli Lie algebras (Sec.~\ref{sec: Cartan decomposition}), before singling out the symplectic case, first, in terms of the Arf invariant of quadratic refinements (Sec.~\ref{sec: quadratic refinement}), and, second, in terms of their quaternionic structures on the $n$-qubit Hilbert space (Sec.~\ref{sec: quaternionic structure}). We analyse symmetry properties of pure-state invariants in Sec.~\ref{sec: invariant set stabiliser}.

Sec.~\ref{sec: applications} collects a number of applications of our main results. We obtain a characterisation of all variance-based, state-independent uncertainty relations for Pauli operators (Sec.~\ref{sec: uncertainty relations}). This complements our analysis of complementarity in the $n$-qubit Pauli group in Ref.~\cite{FrembsHoehnNataleWever2026b}. Next, we relate pure-state invariants to the theory of (randomised and unitarity) benchmarking of Pauli channels (Sec.~\ref{sec: Pauli channels}). Moreover, we provide an alternative perspective on the closure under state update in terms of Pauli projections (Sec.~\ref{sec: positive Pauli projection maps}). We then turn to symmetries, and prove that the projective stabiliser groups of pure-state invariants give rise to complex projective state $2$-designs (Sec.~\ref{sec: 2-designs}), thus providing a first explicit example for the recent observation in Ref.~\cite{WestEtAl2024,ShayaEtAl2025} that such state designs can be obtained not only from unitary but already from symplectic $k$-designs. We discuss this result with closely related results obtained independently recently \cite{GargiuloHerringerZeier2026}.

Sec.~\ref{sec: conclusion} concludes. Two appendices provide further details and additional results.

App.~\ref{app: cubic purity relations} translates our quadratic pure-state invariants into cubic ones, which exhibits their intimate connection with Jordan algebras.

App.~\ref{app: symplectic representation} provides further background on the representations of the symplectic Pauli Lie algebras in Bell basis. In particular, we relate pure-state invariants with quadratic Casimirs (App.~\ref{app: pure-state invariants from quadratic Casimirs}), and relate the different representations under the explicit parametrisation in Def.~\ref{def: family of pure-state invariants} (App.~\ref{app: changing representation}).

\section{Definition of pure-state invariants}\label{sec: preliminaries}

In Sec.~\ref{sec: Pauli observables}, we collect some basic facts about and introduce our notation for Hermitian Pauli operators up to phase. In Sec.~\ref{sec: quadratic quantity}, we define the quadratic invariants that are our main subject.

\subsection{Preliminaries - Bloch representation and Pauli operators}\label{sec: Pauli observables}

Recall that the ($n$-qubit) Pauli group is given by $\cP_n=\langle\one,X,Y,Z\rangle^{\otimes n}$, where
\begin{align}\label{eq: Paulis}
    \one&=\begin{pmatrix}
        1 & 0 \\ 0 & 1
    \end{pmatrix} &
    X&=\begin{pmatrix}
        0 & 1 \\ 1 & 0
    \end{pmatrix} &
    Y&=\begin{pmatrix}
        0 & -i \\ i & 0
    \end{pmatrix} &
    Z&=\begin{pmatrix}
        1 & 0 \\ 0 & -1
    \end{pmatrix}\; .
\end{align}
The product in $\cP_n$ is given by matrix multiplication, and its central phase group is $Z(\cP_n)=\{\pm\one,\pm i\one\}$.\footnote{We will overload notation and write $\one$ both for the single-qubit identity matrix in Eq.~(\ref{eq: Paulis}), as well as for the identity in $\cP_n$. Its meaning will always be clear from context.} Since two $n$-qubit Pauli operators either commute or anti-commute, it will be useful to consider these two operations separately, and restrict to (subsets of) Hermitian, respectively anti-Hermitian Pauli operators.\footnote{Alternatively, we will sometimes consider Pauli operators as generators of the special unitary Lie algebra, and thus identify phase-free Pauli operators with skew-Hermitian operators. Taking normalisation into account, the respective subsets are related via multiplication by the factor $\frac{i}{2}$ (see App.~\ref{app: symplectic representation}).\label{fn: su-generators}} The Hermitian Pauli operators $\aP_n:=\{\one,X,Y,Z\}^{\otimes n}$ form an operator basis in the space of Hermitian matrices, $\LHsa=\{M\in\LH\mid M^\dagger=M\}$. The expansion of a density matrix $\rho\in\Den:=\{\rho\in\LHsa\mid\rho\geq 0,\tr[\rho]=1\}$ with $\cH=\C^{2^n}$ in this basis is called its \emph{Bloch representation $\rho=\frac{1}{2^n}(\one+\sum_{\one\neq P\in\aP_n}\alpha_P P)$} whose Bloch components $\alpha_P\in\R$ obey $\alpha_\one=1$ and $\|\alpha\|^2=\sum_{\one\neq P\in\aP_n}\alpha^2_P\leq 2^n-1$ with equality if and only if $\rho$ is a pure state. This follows immediately from $\tr[\rho]=1$ and $\tr[\rho^2]\leq 1$ (with $\tr[\rho^2]=1$ for $\rho$ pure), together with the orthogonality relation $\tr[PQ]=2^n\delta_{QP}$, and the fact that all non-identity Pauli operators are traceless.

Note that $\pm\aP_n\cup\{0\}$ is closed under (normalised) anti-commutators, $\{P,Q\}:=\frac{1}{2}(PQ+QP)$; when instead equipped with the (normalised) commutator, $[P,Q]:=\frac{1}{2}(PQ-QP)$, skew-Hermitian Pauli operators $\pm i\aP^\circ_n$ form a generating set of the special unitary Lie algebra $\mathfrak{su}(2^n)$, that is, $[\pm i\aP^\circ_n,\pm i\aP^\circ_n]=\pm i\aP^\circ_n\cup\{0\}$. We write $\aP^\circ_n=\aP_n\backslash\{\one\}$ (and, more generally $\cS^\circ=\cS\backslash\{\one\}$ for $\one\in\cS\subset\aP_n$).

Below, we will be interested in Pauli operators up to phase. One way to avoid carrying along unnecessary phase terms is by restricting to the \emph{projective Pauli group $\pP_n:=\cP_n/Z(\cP_n)$}, obtained from $\cP_n$ by disregarding phases (see Ref.~\cite{SarkarVanDenBerg2021}). As a vector space, $\pP_n\cong\zz^{2n}_2$, in particular, $\pP_n$ forgets the multiplicative structure in $\cP_n$. Since we will often require this additional data, we imprint the (anti-)commutation relation in $\cP_n$ on the quotient $\pP_n$. To this end, we define the commutator map\footnote{$[\cdot,\cdot]$ carves out the ambient commutation relations within the Abelian group $\pP_n=\zz^{2n}_2$. We overload the notation and use the same symbol for the commutator map induced on the quotient.} $[\cdot,\cdot]$ defined on $\pP_n$ by $[\oP,\oQ]:=\overline{[P,Q]}$ for $P,Q\in\cP_n$ representatives of $\oP,\oQ\in\pP_n$. Note that this is well-defined on the phase-free quotient, and simultaneously imprints the anti-commutator structure on $\pP_n$. Abstractly, the resulting structure defines the symplectic vector space $(V,\omega)$ for $V=\zz^{2n}_2$ and $\omega:V\times V\ra\zz_2$ a symplectic, that is, nondegenerate, alternating, bilinear form. In turn, $(\pP_n,[\cdot,\cdot])$ defines a representation of $(V,\omega)$ in terms of phase-free Pauli operators.

To distinguish between the two, we will write $\{P_v\}_{v\in V}$ for generic Pauli representatives. For any choice of lift $v\mapsto P_v$, $\omega$ then coincides with the commutator map on $\pP_n$, that is, $\omega(v,w)=0$ if and only if $[P_v,P_w]=0$. If the underlying Pauli labels are not relevant, we will at times omit writing them explicitly. More generally, if the choice of central phase is irrelevant, we will write $\cP_n\ni P\sim R\in\aP_n$ for the unique element in $\aP_n$ that corresponds with a given Pauli up to phase, in particular, we will write $[P,Q]\in X\subset\aP_n$ for the unique element $R\in X$ such that $[P,Q]\sim R$ (and similarly for $\{P,Q\}\in X\subset\aP_n$). For Pauli subsets, set-valued commutators and anti-commutators will be understood projectively: vanishing brackets are discarded, while every nonzero bracket is identified with its unique Hermitian Pauli representative up to central phase.

In particular, we will often restrict to Hermitian Pauli representatives. To do so, we need to define a phase function on $\pP_n\cong V$. A common convention is the Weyl representation, defined for all $v=(a,b)\in\zz^n_2\times\zz^n_2=\zz^{2n}_2=V$, by
\begin{align}\label{eq: Weyl operators}
    \cP_n\ni W_v
    :=i^{q_W(v)}\otimes_{i=1}^n Z^{a_i}_iX^{b_i}_i\; ,
\end{align}
and where $q_W(v):=a\cdot b:=\sum_{i=1}^na_ib_i$. The symplectic form on $V$ then takes the explicit form $\omega(v,w):=v^T\sigma w\mod 2$, where $\sigma=\begin{pmatrix}
    0 & \one \\- \one & 0
\end{pmatrix}$ is the standard symplectic matrix.

For every $v,w\in V=\zz^{2n}_2$, the following properties are readily verified from Eq.~(\ref{eq: Weyl operators}):
\begin{align}\label{eq: Weyl commutation relations}
    W_vW_w
    &=(-1)^{\omega(v,w)}W_wW_v &
    [W_v,W_w]&=0\ \Longleftrightarrow\ \omega(v,w)=0\; .
\end{align}
The phase convention in Eq.~(\ref{eq: Weyl operators}) is encoded in the quadratic form $q_W:V\ra\zz_2$, $q_W(v)=a\cdot b$. Eq.~(\ref{eq: Weyl commutation relations}) thus expresses that $q_W$ defines a quadratic refinement of the symplectic form $\omega$, that is, $q_W(v+w)+q_W(v)+q_W(w)=\omega(v,w)\mod 2$ for all $v,w\in V$ (see Sec.~\ref{sec: quadratic refinement} for more details).

Equivalently, $q_W$ defines a central extension of the elementary Abelian $2$-group $V=\zz^{2n}_2$,\footnote{Let $p$ be a prime number. A group is called a $p$-group if the order of all its elements is a power of $p$. An Abelian $p$-group in which all non-identity elements have the same order is called an elementary (Abelian) $p$-group. Elementary $p$-groups correspond with vector spaces over the cyclic group $\zz_p$. An extraspecial $p$-group is a non-Abelian $\zz_p$-central extension of an elementary $p$-group whose center is cyclic of order $p$. Extraspecial $p$-groups may be seen as finite analogues of the Heisenberg-Weyl group.}
\begin{align}\label{eq: extraspecial 2-groups}
    1\quad\ra\quad\zz_2\quad\ra\quad E\quad\ra\quad V\quad\ra\quad 1\; ,
\end{align}
with $Z(E)=[E,E]=\zz_2$, that is, an extraspecial $2$-group $E$. Indeed, up to isomorphism such an extension is fixed by the relations $\tv^2=(-1)^{q(v)}$ and $[\tv,\tw]=(-1)^{\omega(v,w)}$ for any choice of lift $V\ni v\mapsto\tv\in E$. While $q_W$ in Eq.~(\ref{eq: Weyl operators}) is a convenient choice in many applications, we stress that - from the perspective of central extensions in Eq.~(\ref{eq: extraspecial 2-groups}) - it is not merely a conventional choice. In particular, there are two isomorphism classes of extraspecial $2$-groups, denoted by $2^{1+2n}_+$ and $2^{1+2n}_-$ \cite{Quillen1971}, where two extensions (equivalently, quadratic refinements $q,q'$ of the symplectic form $\omega$ on $V$) are isomorphic if $q'=q\circ S^{-1}$ for $S\in\Sp(2n,\zz_2)$ a symplectic transformation of $(V,\omega)$ (that is, a linear map $S:V\ra V$ preserving the symplectic structure, $\omega(v,w)=\omega(Sv,Sw)$ for all $v,w\in V$). $q_W=a\cdot b$ in Eq.~(\ref{eq: Weyl operators}) then corresponds with the class denoted by $E=2^{1+2n}_+$, which is a central product of $n$ copies of the dihedral group $D_8$. A representative of the alternative class $E=2^{1+2n}_-$, which is a central product of $n-1$ copies of $D_8$ and single copy of the quaternion group $Q_8=\{\pm 1,\pm i,\pm j,\pm k\mid i^2=j^2=k^2=ijk=-1\}$, is given by the quadratic refinement $q'=a\cdot b+a_1+b_1$. In Sec.~\ref{sec: quadratic refinement}, we will find that it is the isomorphism class corresponding to $q'$ which enters the characterisation of (update-closed) pure-state invariants.

Finally, we remark that the distinction between the two types of central extension of the projective Pauli group $V\cong\aP_n$ above disappears on the level of the full Pauli group $\cP_n$, which is itself a central product of (a representative of) either one of the extraspecial $2$-groups and the phase group $Z(\cP_n)=\{\pm 1,\pm i\}\cong Z_4$, that is, $\cP_n\cong Z_4\circ 2^{1+2n}_+\cong Z_4\circ 2^{1+2n}_-$. Note that the symplectic form $\omega$ on $V$ is independent of the choice of phase convention.\\

\textbf{Notation.} Since for most of our analysis we will merely use that (Hermitian) Pauli operators form a basis of $\LH$ ($\LHsa$) and keep track of their commutation relations, we will either label Pauli operators abstractly as elements of the symplectic vector space $V$ (of `Pauli labels') equipped with the symplectic form $\omega$, or write explicit representatives $\{P_v\}_{v\in V}$; for Hermitian representatives, we instead write $\{W_v\}_{v\in V}$, which correspond with the explicit Weyl representation in Eq.~(\ref{eq: Weyl operators}) up to an index-relabelling given by a symplectic transformation and a $\pm 1$ phase convention. Consequently, $\{W_v\}_{v\in V}\cong\aP_n$, and we thus sometimes write $\aP_n$ to indicate any Hermitian lift of $V$.

Depending on the specific context at hand, we will switch between the abstract and concrete viewpoints above. We hope that in doing so we do not impair overall readability, but help make the presentation feel more natural. For visual aid, we will use lower case letters for Pauli labels $v\in V$ and upper case letters for their Pauli representatives. Tab.~\ref{tab: notation} summarises our conventions.

\begin{table}[t]
    \centering
    \begin{tabular}{@{}ll@{}}
        \toprule
        Notation & Description \\
        \midrule
        $\cP_n=\langle\one,X,Y,Z\rangle^{\otimes n}$
        & $n$-qubit Pauli group \\

        $Z(\cP_n)$
        & central phase subgroup $\{\pm\one,\pm i\one\}$ \\

        $\pP_n=\cP_n/Z(\cP_n)$
        & projective Pauli group, $\pP_n\simeq V=\mathbb Z_2^{2n}$ \\

        $V=\zz^{2n}_2\cong\pP_n$
        & space of abstract Pauli labels\\

        $\aP_n=\{\one,X,Y,Z\}^{\otimes n}$
        & Hermitian Pauli representatives of $V$ \\

        $\aP_n^\circ$
        & non-identity Hermitian Pauli representatives, $\aP_n\backslash\{\one\}$ \\

        $W_v$
        & Hermitian Weyl representative of $v\in V$ \\

        $P_v$
        & generic Pauli representative of $v\in V$ \\

        $P\sim Q$
        & equality up to a central Pauli phase \\

        $\omega(v,w)$
        & symplectic form encoding Pauli commutation relations \\

        $\chi_w(v)$
        & multiplicative character $\chi_w(v)=(-1)^{\omega(v,w)}$ \\

        $q(v)$ ($q_W(v)$)
        & (Weyl) quadratic refinement of $\omega$ ($q_W(a,b)=a\cdot b$) \\

        $\tau(v)$
        & transpose sign, $W_v^T=\tau(v)W_v$, $\tau(v)=(-1)^{q_W(v)}$ \\

        $T_\pm$
        & level sets $T_\pm=\{v\in V\mid\tau(v)=\pm1\}$ \\
        \bottomrule
    \end{tabular}
    \caption{Notation for Pauli group and its underlying symplectic vector space.}
    \label{tab: notation}
\end{table}

\subsection{Quadratic pure-state invariants}\label{sec: quadratic quantity}

Our concern in this work is to study pure-state invariants of $n$-qubit quantum states $\rho\in\cD(\C^{2^n})$. Purity is of course encoded in the single relation $\rho^2=\rho$, which translates into a set of quadratic equations in the Bloch vector $\alpha$ (see Sec.~\ref{sec: Pauli observables}). However, by itself this does not provide much insight into the geometry of pure-state manifold. Generalising relations found first for two-qubit pure states in Ref.~\cite{Hoehn2017,HoehnWever2017}, in  Ref.~\cite{FrembsHoehnNataleWever2026a} we present a family of constraints diagonal in $\alpha$, and find that they reveal rich algebraic structure. In the present work, we substantially extend this analysis and fully characterise the family of quadratic quantities defined as follows.

\begin{definition}\label{def: I}
    Let $\pP\cong V=\zz^{2n}_2$, $\zeta\in\R^V$ and $\rho=\frac{1}{2^n}\sum_{v\in V}\alpha_v W_v\in\cD(\C^{2^n})$, and define the quadratic
    \begin{align}\label{eq: I}
          I_\rho(\zeta)
          :=\sum_{v\in V}\zeta(v)(\tr[\rho W_v])^2
          =\sum_{v\in V}\zeta(v)\alpha^2_v\; .
    \end{align}
    For a pure state $\psi\in\mathbb{CP}^{2^n-1}$, let $p_\psi:=\dyad{\psi}$ denote the corresponding rank-one projector, and write $I_\psi(\zeta):=I_{p_\psi}(\zeta)$. Whenever $\zeta=\ind_{V'}\in\zz^V_2$ is the indicator function of some subset $V'\subset V$ (that is, $\ind_{V'}(v)=1$ if $v\in V'$ and $\ind_{V'}(v)=0$ if $v\notin V'$), we will also write $I_\rho(V'):=I_\rho(\ind_{V'})=\sum_{v\in V'}(\tr[\rho W_v])^2$ as well as $I_\rho(\overline{V'})=I_\rho(V\backslash V')$.
    
    $I_\rho(\zeta)$ or simply $\zeta$ is called a \emph{pure-state invariant} if $I_\psi(\zeta)=\mathrm{const}$ for all $\psi\in\CP^{2^n-1}$, and a \emph{Boolean pure-state invariant} if $\zeta\in\zz^V_2$, equivalently if $I_\psi(\zeta)=I_\psi(\ind_{V'})=I_\psi(V')$ for $V'=\zeta^{-1}(1)$.
\end{definition}

We stop for some remarks regarding this definition. First, Def.~\ref{def: I} is essentially the Brukner-Zeilinger information measure from Ref.~\cite{BruknerZeilinger1999,BruknerZeilinger2001,Hoehn2017,HoehnWever2017,FrembsHoehnNataleWever2026a}, generalised to include arbitrary weighted averages with respect to $\zeta\in\R^{4^n}$. (Since $\alpha_\one=1$, $\zeta_\one$ defines an independent trivial invariant, we may fix it by setting $\zeta_\one=1$, and restrict to $\zeta\in\R^{V^\circ}\cong\R^{4^n-1}$.) While the latter somewhat obscures the interpretation of $I$ as an information measure, we include it in Def.~\ref{def: I} for added generality in our main result, before adopting the Boolean case later on (see Sec.~\ref{sec: applications}). Second, note that since $I_\rho(\zeta)$ in Eq.~(\ref{eq: I}) is diagonal in the Bloch components, it is independent of the choice of phase convention of Hermitian Pauli observables. Anticipating our analysis below, we note that for pure-state invariants $I_\psi(V')=\mathrm{const}$, this leaves a freedom in the choice of (imaginary) phases on elements $v\notin V'$, thus departing from the Weyl convention in Eq.~(\ref{eq: Weyl operators}). We will discuss the freedom in choosing such phase conventions and how they relate with different sub-structures in the Pauli group in more detail in Sec.~\ref{sec: structure}. Third, note that there always exists at least one pure-state invariant, since for every $n$-qubit pure state $\psi\in\CP^{2^n-1}$,
\begin{align}\label{eq: total information}
    I_\psi(V)
    =\sum_{v\in V}(\tr[p_\psi W_v])^2
    =\sum_{v\in V}\alpha^2_v
    =2^n\; .
\end{align}
Eq.~(\ref{eq: total information}) is of course nothing but the tracial purity constraint $\tr[\rho^2]=1$. Note also that normalisation, $\tr[\rho]=1$, imposes the constraint $I_\psi(\one)=1$. For a single qubit, $\tr[\rho^2]=1$ is the only purity constraint. For two qubits, Ref.~\cite{Hoehn2017,HoehnWever2017} identify maximal sets of mutually anti-commuting Pauli operators as pure-state invariants, and show that these exhaust all Boolean pure-state invariants. Since for $n=2$ such sets have cardinality $2n+1=5$, they are called `pentagon identities' \cite{HoehnWever2017}. However, mutually anti-commuting Pauli operators do not define pure-state invariants for $n>2$. Instead, in Ref.~\cite{FrembsHoehnNataleWever2026a} we obtain a family of nontrivial Boolean pure-state invariants that generalises the two-qubit case to arbitrary $n$. Yet, Ref.~\cite{FrembsHoehnNataleWever2026a} leaves open whether this family is complete, that is, it leaves open the following general characterisation problem:
\begin{equation}\label{*}
    \mathrm{For\ which\ } \zeta\in\R^V \mathrm{\ is\ Eq.~(\ref{eq: I})\ a \ \emph{purity\ invariant},\ that\ is, \ } I_\psi(\zeta)=\mathrm{const}\mathrm{\ for\ all\ \psi\in\CP^{2^n-1}}.\tag{$*$}
\end{equation}

We answer this question in Thm.~\ref{thm: all weighted quadratic invariants}, and more specifically for $\zeta=\ind_{V'}$ with $V'\subset V$ in Thm.~\ref{thm: all Boolean pure-state invariants} and Thm.~\ref{thm: all update-closed pure-state invariants}. In addition, we study in detail the family of solutions to Eq.~(\ref{*}) in Ref.~\cite{FrembsHoehnNataleWever2026a}, which exhibit various structural properties that provide novel insight into the geometry of $n$-qubit pure states, as well as of the $n$-qubit Pauli group, which is of interest in its own right. Indeed, in Sec.~\ref{sec: applications} we will prove a number of corollaries to the above theorems, that relate this structure to uncertainty relations (in variance), benchmarking of Pauli channels and complex projective $2$-designs.

\section{Characterisation of pure-state invariants}\label{sec: characterisation}

In this section, we will answer the question posed in (\ref{*}). We first consider the case of general weighted invariants, that is, $I_\psi(\zeta)=\mathrm{const}$ for $\zeta\in\R^V$ (see Def.~\ref{def: I}) in Sec.~\ref{sec: weighted case}, before restricting our attention to the case of Boolean pure-state invariants with $\zeta\in\zz^V_2$ in Sec.~\ref{sec: Boolean pure-state invariants}, and finally to the special case of pure-state invariants compatible with state update in Sec.~\ref{sec: update-closed Boolean invariants}.

\subsection{General pure-state invariants}\label{sec: weighted case}

We first translate what it means for $I_\rho(\zeta)$ with $\zeta\in\R^V$ in Def.~\ref{def: I} to be a pure-state invariant into a property of an operator associated with it.\\

\textbf{Projectors on symmetric subspaces.} It is instructive to re-write $I_\rho(\zeta)$ in Eq.~(\ref{eq: I}) as follows
\begin{align}\label{eq: diagonal invariant operator}
    I_\rho(\zeta)
    =\sum_{v\in V}\zeta(v)\tr[\rho W_v]^2
    =\sum_{v\in V}\zeta(v)\tr[(\rho\otimes\rho)(W_v\otimes W_v)]
    =\tr[(\rho\otimes \rho)M(\zeta)]\; ,
\end{align}
where we define the operator $M(\zeta):=\sum_{v\in V}\zeta(v)(W_v\otimes W_v)$. This immediately suggests to decompose the tensor product Hilbert space $\C^{2^{2n}}$ into its (anti-)symmetric subspaces, $\Sym^2(\C^{2^n})$ and $ \Lambda^2(\C^{2^n})$, defined as the $\pm 1$ eigenspaces of the operator $\SWAP:\C^{2^n}\otimes\C^{2^n}\ra\C^{2^n}\otimes \C^{2^n}$, $\SWAP(\psi\otimes\phi)=\phi\otimes\psi$ for all $\psi,\phi\in\C^{2^n}$, with respective projections $\Pi_\sym:=\frac{1}{2}(\id+\SWAP)$ and $\Pi_\Lambda:=\frac{1}{2}(\id-\SWAP)$,
\begin{align}\label{eq: (anti-)symmetric subspace}
    \Sym^2(\C^{2^n})
    &=\langle\psi\in\C^{2^{2n}}\mid\SWAP(\psi)=\psi\rangle &
    \Lambda^2(\C^{2^n})
    &=\langle\psi\in\C^{2^{2n}}\mid\SWAP(\psi)=-\psi\rangle \; .
\end{align}
The dimension of these subspaces are given by
\begin{align*}
    \dim(\Sym^2(\C^{2^n}))
    &=\tr[\Pi_\sym]
    =\binom{2^n+1}{2}
    =2^{n-1}(2^n+1) \\
    \dim(\Lambda^2(\C^{2^n}))
    &=\tr[\Pi_\Lambda]
    =\binom{2^n}{2}
    =2^{n-1}(2^n-1)\; .
\end{align*}
We will repeatedly use the following representation of SWAP in terms of (diagonal) Weyl operators,
\begin{align}\label{eq: Pauli SWAP}
    \SWAP
    =\frac{1}{2^n}\sum_{v\in V}W_v\otimes W_v\; .
\end{align}
Using these relations, we obtain the following expression for pure-state invariants.

\begin{lemma}\label{lm: invariant explicit}
    Let $I_\rho(\zeta)$ in Def.~(\ref{eq: I}) be a pure-state invariant. Then
    \begin{align}\label{eq: explicit invariant}
        I_\psi(\zeta)=
        \zeta(0)+\frac{1}{2^n+1}\sum_{0\neq v\in V}\zeta(v)\; .
    \end{align}
\end{lemma}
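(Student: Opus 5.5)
The plan is to average the constant value over all pure states with respect to the unitarily invariant (Haar) measure on $\CP^{2^n-1}$. Since $I_\psi(\zeta)$ is by assumption independent of $\psi$, it equals its own average:
\begin{align*}
    I_\psi(\zeta)=\int d\psi\, I_\psi(\zeta)=\tr\Big[\Big(\int d\psi\, p_\psi\otimes p_\psi\Big)M(\zeta)\Big],
\end{align*}
where we use the rewriting in Eq.~(\ref{eq: diagonal invariant operator}).

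The key input is the standard second-moment identity. The operator $\int d\psi\, p_\psi^{\otimes 2}$ commutes with all $U\otimes U$ and is supported on $\Sym^2(\C^{2^n})$. By Schur's lemma it is therefore proportional to $\Pi_\sym$, and matching traces gives
\begin{align*}
    \int d\psi\, p_\psi^{\otimes 2}=\frac{\Pi_\sym}{\dim\Sym^2(\C^{2^n})}=\frac{\id+\SWAP}{2^{n-1}(2^n+1)}.
\end{align*}
We then need $\tr[(W_v\otimes W_v)(\id+\SWAP)]$ for each $v$. We have $\tr[W_v\otimes W_v]=\tr[W_v]^2=4^n\delta_{v,0}$. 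For the second term we use the standard fact $\tr[(A\otimes B)\SWAP]=\tr[AB]$, which gives $\tr[W_v^2]=2^n$ because $W_v$ is Hermitian and squares to $\one$. (Alternatively, one can expand $\SWAP$ via Eq.~(\ref{eq: Pauli SWAP}) and use $\tr[W_vW_w]=2^n\delta_{vw}$.)

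Combining these, the $v=0$ term contributes $\zeta(0)\frac{4^n+2^n}{2^{n-1}(2^n+1)}=2\zeta(0)$. Each $v\neq 0$ contributes $\zeta(v)\frac{2^n}{2^{n-1}(2^n+1)}=\frac{2\zeta(v)}{2^n+1}$. The overall factor of $2$ here must be checked against the paper's normalisation. With $\rho=\frac{1}{2^n}\sum_v\alpha_vW_v$ and $\alpha_v=\tr[\rho W_v]$, the $v=0$ term of $I_\psi(\zeta)$ is exactly $\zeta(0)$, so the correct coefficient of $\zeta(0)$ is $1$. The discrepancy means I have dropped a factor of $2$ somewhere, and the main obstacle is locating it. The trace of $\Pi_\sym$ is $2^{n-1}(2^n+1)$, and $\tr[(W\otimes W)\Pi_\sym]=\frac12(\tr[W]^2+\tr[W^2])$. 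The factor $\frac12$ in $\Pi_\sym=\frac12(\id+\SWAP)$ is precisely the missing $2$, so the correct average is $\int p_\psi^{\otimes2}=\frac{\id+\SWAP}{2^n(2^n+1)}$. With this, the $v=0$ term becomes $\zeta(0)$ and each $v\neq 0$ term becomes $\frac{\zeta(v)}{2^n+1}$, yielding Eq.~(\ref{eq: explicit invariant}).

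As a sanity check, setting $\zeta=\ind_V$ gives $1+\frac{4^n-1}{2^n+1}=2^n$, in agreement with Eq.~(\ref{eq: total information}). I do not expect any other obstacle: the only nontrivial ingredient is the Haar second-moment formula, which is standard.
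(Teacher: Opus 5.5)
Your argument is correct and is essentially the paper's proof. Both average the constant value over the Fubini--Study measure, use $\int d\psi\, p_\psi^{\otimes 2}=\Pi_\sym/\tr[\Pi_\sym]$, and evaluate $\tr[\Pi_\sym(W_v\otimes W_v)]$, which equals $2^{n-1}(2^n+1)$ for $v=0$ and $2^{n-1}$ for $v\neq 0$. In a clean write-up, state the second moment directly as $\frac{\Pi_\sym}{2^{n-1}(2^n+1)}=\frac{\id+\SWAP}{2^n(2^n+1)}$, instead of first dropping the factor $\frac{1}{2}$ from $\Pi_\sym$ and correcting it afterwards.
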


\begin{proof}
    Note that all non-identity terms in $M(\zeta)$ have the same overlap with $\Pi_\sym$ and $\SWAP$,
    \begin{align}\label{eq: constant overlap}
        \tr[\Pi_\sym (W_v\otimes W_v)]
        =\frac{1}{2}\tr[\SWAP (W_v\otimes W_v)]
        =\frac{1}{2^{n+1}}\sum_{v'\in V}\delta_{vv'}\tr[\one]^2
        =\frac{2^{2n}}{2^{n+1}}
        =2^{n-1}\; ,
    \end{align}
    for all $0\neq v\in V$, whereas $\tr[\Pi_\sym]=\frac{1}{2}(4^n+2^n)=2^{n-1}(2^n+1)$. Consequently, for any $\zeta\in\R^V$,
    \begin{align*}
        I_\psi(\zeta)
        &=\int_{\CP^{2^n-1}}d\psi \tr[(p_\psi\otimes p_\psi)M(\zeta)] \\
        &=\tr[\frac{\Pi_\sym}{\tr[\Pi_\sym]}M(\zeta)] \\
        &=\frac{1}{\tr[\Pi_\sym]}\left(\tr[\Pi_\sym]\zeta(0)+2^{n-1}\sum_{0\neq v\in V}\zeta(v)\right)
        =\zeta(0)+\frac{1}{2^n+1}\sum_{0\neq v\in V}\zeta(v)\; ,
    \end{align*}
    where $d\psi$ is the (unique $\U(d)$-invariant, normalised) Fubini-Study measure.
\end{proof}

Next, we express what it means for $I_\rho(\zeta)$ to be a pure-state invariant in terms of $M(\zeta)$.

\begin{lemma}\label{lm: symmetric decomposition}
    Let $M(\zeta)=\sum_{v\in V}\zeta(v)(W_v\otimes W_v)$ with $\zeta\in\R^V$. Then the following are equivalent:
    \begin{itemize}
        \item[(i)] $I_\rho(\zeta)=\tr[(\rho\otimes \rho)M(\zeta)]$ is a pure-state invariant.
        \item[(ii)] $M(\zeta)|_\sym:=\Pi_\sym M(\zeta) \Pi_\sym=\lambda_\zeta \Pi_\sym$ with $\lambda_\zeta=I_\psi(\zeta)$.
        \item[(iii)] $M(\zeta)=\lambda_\zeta \Pi_\sym\oplus M(\zeta)|_\Lambda$, where $M(\zeta)|_\Lambda:=\Pi_\Lambda M(\zeta)\Pi_\Lambda$, hence, $(M^k(\zeta))|_\sym=(M(\zeta)|_\sym)^k$ also defines a pure-state invariant for  $J_\psi(M^k(\zeta)):=\tr[p^{\otimes 2}_\psi M^k(\zeta)]=\lambda^k_\zeta$ for all $k\in\mathbb{N}$.
    \end{itemize}
    Moreover, $[M(\zeta),M(\zeta')]=0$ for all $\zeta,\zeta'\in\R^V$, that is, the $M(\zeta)$ share a common eigenbasis.
\end{lemma}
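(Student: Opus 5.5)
The plan rests on two structural facts about $M(\zeta)$: it commutes with $\SWAP$, and the operators $W_v\otimes W_v$ all commute with one another.

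\textbf{Block structure and commutativity.} First, each $W_v\otimes W_v$ commutes with $\SWAP$, since $\SWAP(A\otimes B)\SWAP=B\otimes A$. Hence $M(\zeta)$ commutes with $\Pi_\sym$ and $\Pi_\Lambda$, and so $M(\zeta)=M(\zeta)|_\sym\oplus M(\zeta)|_\Lambda$ block-diagonally. Second, by Eq.~(\ref{eq: Weyl commutation relations}), $(W_v\otimes W_v)(W_w\otimes W_w)=(-1)^{2\omega(v,w)}(W_w\otimes W_w)(W_v\otimes W_v)$, and the sign is $+1$. So all $W_v\otimes W_v$ commute, and therefore $[M(\zeta),M(\zeta')]=0$. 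The $M(\zeta)$ are Hermitian (each $W_v$ is Hermitian and $\zeta$ is real), so they are simultaneously diagonalisable. This settles the final claim.

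\textbf{(i)$\Leftrightarrow$(ii).} For a pure state, $p_\psi\otimes p_\psi=\dyad{\psi\otimes\psi}$ with $\psi\otimes\psi\in\Sym^2$. Thus $I_\psi(\zeta)=\langle\psi\otimes\psi|M(\zeta)|_\sym|\psi\otimes\psi\rangle$. If (ii) holds, this equals $\lambda_\zeta$ for every normalised $\psi$, which gives (i).

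For the converse, let $A:=M(\zeta)|_\sym-\lambda\Pi_\sym$ with $\lambda=I_\psi(\zeta)$. By (i), $\langle\psi^{\otimes2}|A|\psi^{\otimes2}\rangle=0$ for all $\psi$. The key step is the standard fact that the product vectors $\psi\otimes\psi$ span $\Sym^2$, and, more strongly, that a Hermitian operator on $\Sym^2$ is determined by its expectation values on them. To prove it, I will polarise: substitute $\psi=\phi+t\chi$ and $\psi=\phi+it\chi$ and compare coefficients of the polynomial in $t,\bar t$ (a homogeneous form of bidegree $(2,2)$). This yields $\langle\phi\otimes\chi+\chi\otimes\phi|A|\phi'\otimes\chi'+\chi'\otimes\phi'\rangle=0$ for all vectors. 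The symmetrised tensors span $\Sym^2$, so $A=0$. Equivalently, one can invoke the representation-theoretic fact that $\Sym^2$ is spanned by $\{\psi^{\otimes 2}\}$ together with this polarisation identity. The value $\lambda_\zeta$ agrees with Lemma~\ref{lm: invariant explicit}, since averaging over $\psi$ gives $\tr[\Pi_\sym M]/\tr[\Pi_\sym]$.

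\textbf{(ii)$\Leftrightarrow$(iii) and powers.} Given the block decomposition from the first step, (ii) is exactly the statement that $M(\zeta)=\lambda_\zeta\Pi_\sym\oplus M(\zeta)|_\Lambda$. Because the blocks are orthogonal and invariant, $M^k(\zeta)=\lambda_\zeta^k\Pi_\sym\oplus (M(\zeta)|_\Lambda)^k$, so $(M^k)|_\sym=(M|_\sym)^k$. Hence $J_\psi(M^k)=\langle\psi^{\otimes2}|\lambda_\zeta^k\Pi_\sym|\psi^{\otimes2}\rangle=\lambda_\zeta^k$.

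\textbf{Main obstacle.} The only nontrivial step is the polarisation argument in (i)$\Rightarrow$(ii), namely that vanishing on all $\psi^{\otimes 2}$ forces a symmetric-subspace operator to vanish. I expect it to be routine once written out as above.
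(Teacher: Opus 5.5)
Your proposal is correct and follows essentially the same route as the paper: block-diagonality from $[W_v\otimes W_v,\SWAP]=0$, pairwise commutativity of the $W_v^{\otimes 2}$, and, for (i)$\Rightarrow$(ii), the fact that operators on $\Sym^2(\C^{2^n})$ are determined by their values on the $p_\psi^{\otimes 2}$. The paper simply cites this fact and fixes $\lambda_\zeta$ by a Fubini--Study average, whereas you prove it by polarisation; your argument closes already at the $t^2$ coefficient of the $\psi=\phi+t\chi$ expansion, which gives $\langle\phi^{\otimes 2}|A|\chi^{\otimes 2}\rangle=0$ and, combined with the $\psi^{\otimes 2}$ spanning $\Sym^2(\C^{2^n})$, yields $A=0$.
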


\begin{proof}
    (i)$\leftrightarrow$(ii) That $M(\zeta)|_\sym\sim\lambda_\zeta \Pi_\sym$ follows since the projectors $p^{\otimes 2}_\psi$ for $\psi\in\CP^{2^n-1}$ span the space of operators on $\Sym^2(\C^{2^n})$ (see Eq.~(\ref{eq: (anti-)symmetric subspace})). To show that the respective constants coincide, recall that $\Pi_\sym=\tr[\Pi_\sym]\int_{\CP^{2^n-1}} d\psi\ p^{\otimes 2}_\psi$ with $d\psi$ the normalised (unitarily-invariant) Fubini-Study measure on $\CP^{2^n-1}$ (equivalently, the push-forward of the normalised Haar measure on $\UH$), and $\tr[\Pi_\sym]=2^{n-1}(2^n+1)$. Since $I_\psi(\zeta)$ is a pure-state invariant,
    \begin{align*}
        I_\psi(\zeta)
        =\tr[(p_\psi\otimes p_\psi)M(\zeta)]
        &=\tr\left[\left(\int_{\CP^{2^n-1}} d\psi\ p^{\otimes 2}_\psi\right)M(\zeta)\right]\\
        &=\frac{\tr[\Pi_\sym M(\zeta)]}{\tr[\Pi_\sym]}
        =\frac{\tr[M(\zeta)|_\sym]}{\tr[\Pi_\sym]}
        =\frac{\tr[\lambda_\zeta \Pi_\sym]}{\tr[\Pi_\sym]}
        =\lambda_\zeta\; .
    \end{align*}
    (ii)$\ra$(iii) Clearly, $[M(\zeta),\SWAP]=0=[M(\zeta),\Pi_\sym]$ as this holds for every Pauli-diagonal term,
    \begin{align*}
        (W_v\otimes W_v)\SWAP
        =\SWAP(W_v\otimes W_v)\; .
    \end{align*}
    In other words, since $n$-qubit Pauli operators either commute or anti-commute, their squared tensor powers always commute. Since $M(\zeta)$ is a linear combination of diagonal Weyl operators it, too, is diagonal with respect to the decomposition $\C^{2^{2n}}=\Sym^2(\C^{2^n})\oplus\Lambda^2(\C^{2^n})$, which together with (ii) implies $M(\zeta)=\lambda_\zeta \Pi_\sym\oplus M(\zeta)|_\Lambda$. Consequently, $M^k(\zeta)=\lambda^k_\zeta \Pi_\sym\oplus (M(\zeta)|_\Lambda)^k$, hence,
    \begin{align*}
        J_\psi(M^k(\zeta))
        &:=\frac{1}{\tr[\Pi_\sym]}\tr[\Pi_\sym M^k(\zeta)]\\
        &=\frac{1}{\tr[\Pi_\sym]}\tr[\Pi_\sym (M^k(\zeta))|_\sym]\\
        &=\frac{1}{\tr[\Pi_\sym]}\tr[\Pi_\sym (M(\zeta)|_\sym)^k]
        =\frac{1}{\tr[\Pi_\sym]}\tr[\lambda^k_\zeta \Pi_\sym]
        =\lambda^k_\zeta\; ,
    \end{align*}
    for every $k\in\mathbb{N}$. The implication (iii)$\ra$(ii) is obvious.

    Finally, $[M(\zeta),M(\zeta')]=0$ for all $\zeta,\zeta'\in\R^V$ since from above $[W^{\otimes 2}_v,W^{\otimes 2}_w]=0$ for all $v,w\in V$.
\end{proof}

\textbf{Remarks.} For $\zeta=\ind_{V'}\in\zz^V_2$, we have $M(\ind_{V'})+M(\ind_{\overline{V'}})=2^n\SWAP$, and since $\tr[\rho^{\otimes 2}\SWAP]=\tr[\rho^2]=1$ for pure states, $I_\psi(V')+I_\psi(\overline{V'})=\lambda_{\ind_{V'}}+\lambda_{\ind_{\overline{V'}}}=\lambda_{\ind_V}=I_\psi(V)=2^n$.

Moreover, note that while $\mathrm{conv}\{p^{\otimes 2}_\psi\mid\psi\in\CP^{2^n-1}\}\subsetneq\mathrm{conv}\{\Psi\mid\Psi\in\CP^{2^{2n}-1}_\sym\}=\cD_\sym(\C^{2^{2n}})$, where we define $\CP^{2^{2n}-1}_\sym:=\CP^{2^{2n}-1}\cap\Sym^2(\C^{2^n})$ and $\cD_\sym(\C^{2^{2n}}):=\{\rho\in\cD(\C^{2^{4n}})\mid\mathrm{supp}(\rho)\subset\Sym^2(\C^{2^n})\}$, Lm.~\ref{lm: symmetric decomposition} implies that pure-state invariants $I_\psi(\zeta)$ lift to invariants on the full symmetric subspace.

\begin{corollary}\label{cor: symmetric extension}
    $I_\rho(\zeta)$ is a pure-state invariant if and only if $J_\Psi(M(\zeta))=I_\psi(\zeta)$ for all $\Psi\in\CP^{2^{2n}-1}_\sym$.
\end{corollary}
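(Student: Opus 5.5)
The plan is to derive Cor.~\ref{cor: symmetric extension} directly from the equivalence (i)$\leftrightarrow$(ii) in Lm.~\ref{lm: symmetric decomposition}. Here $J_\Psi(M(\zeta)):=\tr[\dyad{\Psi}M(\zeta)]$ is read as the natural extension of $J_\psi$ to arbitrary unit vectors $\Psi\in\Sym^2(\C^{2^n})$.

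For the forward direction, assume $I_\rho(\zeta)$ is a pure-state invariant. By Lm.~\ref{lm: symmetric decomposition}(ii), $\Pi_\sym M(\zeta)\Pi_\sym=\lambda_\zeta\Pi_\sym$ with $\lambda_\zeta=I_\psi(\zeta)$. For any $\Psi\in\CP^{2^{2n}-1}_\sym$ we have $\Pi_\sym\Psi=\Psi$, so
\begin{align*}
    J_\Psi(M(\zeta))=\langle\Psi|M(\zeta)|\Psi\rangle=\langle\Psi|\Pi_\sym M(\zeta)\Pi_\sym|\Psi\rangle=\lambda_\zeta\langle\Psi|\Psi\rangle=I_\psi(\zeta)\; .
\end{align*}
The same identity then holds on all of $\cD_\sym(\C^{2^{2n}})$ by linearity.

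For the converse, suppose $J_\Psi(M(\zeta))=c$ for all $\Psi\in\CP^{2^{2n}-1}_\sym$, where $c=I_\psi(\zeta)$ is the common value. In particular this holds for product states $\Psi=\psi\otimes\psi$, which lie in $\Sym^2(\C^{2^n})$. For these, $J_{\psi\otimes\psi}(M(\zeta))=\tr[p_\psi^{\otimes2}M(\zeta)]=I_\psi(\zeta)$, so $I_\psi(\zeta)$ is constant and (i) holds.

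The statement is phrased as ``$=I_\psi(\zeta)$''. To make sense of this in the converse without circularity, I would interpret it as saying that $J_\Psi$ equals the value of $I_\psi$, independently of $\Psi$ and $\psi$. Restricting to product states then shows that this value is constant.

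Alternatively, the converse can be argued without product states. The operator $A:=\Pi_\sym M(\zeta)\Pi_\sym$ is Hermitian on $\Sym^2(\C^{2^n})$, since $M(\zeta)$ is real-linear in Hermitian $W_v\otimes W_v$. A constant expectation value $\langle\Psi|A|\Psi\rangle$ over all unit $\Psi$ forces $A=c\,\Pi_\sym$ by polarisation, which is (ii). Lm.~\ref{lm: symmetric decomposition} then gives (i).

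No step should be a genuine obstacle. The only point needing care is the bookkeeping of the constant in the converse, which is handled by the product-state restriction.
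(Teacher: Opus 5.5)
Your proposal is correct and follows essentially the paper's route. The paper's proof is the one-line observation that the statement follows from Lm.~\ref{lm: symmetric decomposition}(ii), which is exactly your forward computation $\langle\Psi|\Pi_\sym M(\zeta)\Pi_\sym|\Psi\rangle=\lambda_\zeta$. Your converse, via the product vectors $\psi\otimes\psi\in\Sym^2(\C^{2^n})$ or via polarisation back to (ii), spells out the direction the paper leaves implicit.
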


\begin{proof}
    This follows immediately from Lm.~\ref{lm: symmetric decomposition} (ii).
\end{proof}

In other words, the problem in (\ref{*}) is equivalent to the following generalisation:
\begin{center}
    For which $\zeta\in\R^V$ is $J_\Psi(M(\zeta))=\mathrm{const}$ for all $\Psi\in\CP^{2^{2n}-1}_\sym$?
\end{center}

Note that for any operator $M\in\mathrm{Mat}_{2^{2n}}(\C)$ with $M|_\sym=\lambda \Pi_\sym$ (but otherwise arbitrary), one has $J_\varrho(M):=\tr[\varrho M]=\mathrm{const}$ for all quantum states in the symmetric subspace $\varrho\in\cD_\sym(\C^{2^{2n}})$. From this perspective, Lm.~\ref{lm: symmetric decomposition} (iii) shows that the (diagonal) operators $M(\zeta)$ are strongly restricted. In Lm.~\ref{lm: anti-symmetric eigenvalues}, we will determine the eigenvalues $M(\zeta)|_\Lambda$ corresponding to pure-state invariants explicitly.

Our ensuing characterisation of pure-state invariants in Thm.~\ref{thm: all weighted quadratic invariants} is based on two steps: (i) following Lm.~\ref{lm: symmetric decomposition}, we choose a convenient basis of $\Sym^2(\C^{2^n})$ with respect to which $M(\zeta)$ is diagonal; (ii) we re-formulate problem (\ref{*}) in terms of a condition on the discrete Fourier transform over $\zz^V_2$.\\

\textbf{(i) Bell basis.} Define the linear map $\vect:\LH\ra\C^{2^n}\otimes\C^{2^n}$ by linear extension of
\begin{align}\label{eq: vec}
    \vect(W_v)
    =\kett{v}
    :=(\one\otimes W_v)\ket\Omega
    =\sum_{ij}(W_v)_{ji}|i\rangle\otimes|j\rangle\; ,
\end{align}
for all $v\in V$, and where $\ket{\Omega}=\sum_{i}\ket{i}\otimes\ket{i}$ is the unnormalised maximally entangled Bell state.\footnote{Alternatively, we could use the map $\vect^*:\LH\ra\C^{2^{2n}}$ defined by $\vect^*(W_v)=(W_v\otimes\one)\ket\Omega$.} Moreover, define the transpose sign $\tau:V\ra\{\pm 1\}$ by $W^T_v=\tau(v)W_v$, and note that $\tau=(-1)^{q_W}$, where $q_W(v)=a\cdot b$ as in Eq.~(\ref{eq: Weyl operators}). The following lemma collects some basic properties of the map $\vect$.

\begin{lemma}\label{lm: vec}
    \begin{itemize}
        \item[(i)] $\vect$ preserves orthogonality, that is, $\braa{w}\kett{v}=\tr[W^\dagger_wW_v]=2^n\delta_{vw}$, hence, $\{\vect(v)\mid v\in V\}$ is a basis of $\C^{2^{2n}}$, called the \emph{Bell basis},
        \item[(ii)] $\SWAP\kett{v}=\pm\kett{v}$ if and only if $W^T_v=\pm W_v$, hence,
        \begin{align*}
            \Sym^2(\C^{2^n})
            &=\langle\kett{v}\mid \tau(v)=+1\rangle &
            \Lambda^2(\C^{2^n})
            &=\langle\kett{v}\mid \tau(v)=-1\rangle\; ,
        \end{align*}
        \item[(iii)] $(Q\otimes R)\vect(P)=\vect(RPQ^T)$.
    \end{itemize}
\end{lemma}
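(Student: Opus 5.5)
The plan is to work directly from the definition $\kett{v}=(\one\otimes W_v)\ket\Omega$ and the standard identity for the unnormalised Bell state,
\begin{align*}
    (A\otimes B)\ket\Omega=(\one\otimes BA^T)\ket\Omega ,
\end{align*}
valid for all $A,B\in\LH$. This identity follows by expanding $\ket\Omega=\sum_i\ket{i}\otimes\ket{i}$ in components. Equivalently, $\vect(A)=\sum_{ij}A_{ji}\ket{i}\otimes\ket{j}$ is the column-stacking isomorphism. From it we also get the inner-product formula
\begin{align*}
    \braa{A}\kett{B}=\tr[A^\dagger B],
\end{align*}
since $\langle\Omega|(\one\otimes C)|\Omega\rangle=\tr[C]$. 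We treat the three items in the order (iii), (i), (ii), because (iii) is the workhorse for the others.

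For (iii), write $\vect(P)=(\one\otimes P)\ket\Omega$ and compute
\begin{align*}
    (Q\otimes R)(\one\otimes P)\ket\Omega=(Q\otimes RP)\ket\Omega=(\one\otimes RPQ^T)\ket\Omega=\vect(RPQ^T),
\end{align*}
where the middle step is the transpose trick with $A=Q$ and $B=RP$. By linearity, this extends from Pauli $P$ to all of $\LH$.

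For (i), the inner-product formula gives
\begin{align*}
    \braa{w}\kett{v}=\langle\Omega|(\one\otimes W_w^\dagger W_v)|\Omega\rangle=\tr[W_w^\dagger W_v].
\end{align*}
Since $W_w^\dagger W_v$ is, up to phase, the Pauli operator labelled $v+w$, the trace vanishes unless $v=w$. In that case $W_v^\dagger W_v=\one$, so the trace equals $2^n$. The resulting $4^n$ pairwise orthogonal nonzero vectors in the $4^n$-dimensional space $\C^{2^{2n}}$ therefore form a basis.

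For (ii), note from the component formula that
\begin{align*}
    \SWAP\,\vect(A)=\sum_{ij}A_{ji}\ket{j}\otimes\ket{i}=\vect(A^T).
\end{align*}
Hence $\SWAP\kett{v}=\vect(W_v^T)=\tau(v)\kett{v}$. Because $\vect$ is injective, $\SWAP\kett{v}=\pm\kett{v}$ holds if and only if $W_v^T=\pm W_v$. Each $W_v$ is either symmetric or antisymmetric: $Z$ and $X$ are real symmetric, so $W_v^T$ reverses the order of the factors $Z^{a_i}X^{b_i}$, which costs the sign $(-1)^{a\cdot b}$. This confirms $\tau=(-1)^{q_W}$. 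Thus every Bell vector is a $\SWAP$ eigenvector. The Bell basis then splits into the $\tau=+1$ and $\tau=-1$ vectors, which lie respectively in the $+1$ and $-1$ eigenspaces, and since together they span everything, they span $\Sym^2$ and $\Lambda^2$ exactly. As a consistency check, one can verify by counting that $|\{v:\ a\cdot b=0\}|=2^{n-1}(2^n+1)$, which matches $\dim\Sym^2(\C^{2^n})$.

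No step is a real obstacle. The only care needed is in keeping the transpose and index conventions consistent in the transpose trick (whether $Q^T$ lands on the right of $RP$), since the paper places $W_v$ on the second tensor factor.
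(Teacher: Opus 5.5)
Your proposal is correct and follows essentially the paper's route. You prove (i) via $\langle\Omega|(\one\otimes O)|\Omega\rangle=\tr[O]$, (ii) via the fact that $\SWAP$ acts on $\vect(A)$ as $A\mapsto A^T$ (the paper phrases this as $\SWAP\ket{\Omega}=\ket{\Omega}$ followed by moving $W_v$ to the first tensor factor), and (iii) via the transpose trick $(A\otimes\one)\ket{\Omega}=(\one\otimes A^T)\ket{\Omega}$; you also make explicit the converse in (ii) through injectivity of $\vect$, and the dimension argument identifying the spans with $\Sym^2$ and $\Lambda^2$, both of which the paper leaves implicit.
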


\begin{proof}
    (i) By definition in Eq.~(\ref{eq: vec}), the identity $\langle\Omega|\one\otimes O|\Omega\rangle=\tr[O]$ and Eq.~(\ref{eq: Weyl commutation relations}),
    \begin{align*}
        \braa{w}\kett{v}
        =\langle\Omega|(\one\otimes W^\dagger_wW_v)\kett{\Omega}
        =\tr[W^\dagger_wW_v]
        =2^n\delta_{vw}\; .
    \end{align*}
    
    (ii) Since $\SWAP|\Omega\rangle=|\Omega\rangle$, this follows from
    \begin{align*}
        \SWAP\kett{v}
        =\SWAP(\one\otimes W_v)|\Omega\rangle
        =(W_v\otimes\one)\SWAP|\Omega\rangle
        =(W_v\otimes\one)|\Omega\rangle
        =\tau(v)\kett{v}\; .
    \end{align*}

    (iii) is immediate from the definition of $\vect$ in Eq.~(\ref{eq: vec}).
\end{proof}

With Lm.~\ref{lm: vec}, we can write the decomposition in Eq.~(\ref{eq: (anti-)symmetric subspace}) as a span of Pauli operators under $\vect$, that is, in terms of the Bell basis. Importantly, $M(\zeta)$ is diagonal with respect to this basis,
\begin{align}\label{eq: diagonal action in Bell basis}
    (W_v\otimes W_v)\vect(W_w)
    =\vect(W_vW_wW^T_v)
    =\chi_v(w)\tau(v)\vect(W_w)\; ,
\end{align}
where we define the characters $\chi_w:V\ra\{\pm 1\}$ by $\chi_w(v)=(-1)^{\omega(v,w)}$ for all $v,w\in V$.

This yields the following spectral condition for pure-state invariants.

\begin{lemma}\label{lm: symmetric eigenvalues}
    $I_\rho(\zeta)$ is a pure-state invariant if and only if for all $s\in\cT_+$,
    \begin{align*}
        \lambda_\zeta
        =\sum_{v\in V}\zeta(v)\chi_v(s)\tau(v)\; .
    \end{align*}
\end{lemma}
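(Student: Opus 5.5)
The plan is to combine the spectral characterisation in Lm.~\ref{lm: symmetric decomposition} (ii) with the diagonalisation of $M(\zeta)$ in the Bell basis from Lm.~\ref{lm: vec} and Eq.~(\ref{eq: diagonal action in Bell basis}). Here $\cT_+$ is read as the level set $T_+=\{s\in V\mid\tau(s)=+1\}$ from Tab.~\ref{tab: notation}. By Lm.~\ref{lm: vec} (ii), $\{\kett{s}\mid s\in\cT_+\}$ is an orthogonal basis of $\Sym^2(\C^{2^n})$, and by Lm.~\ref{lm: vec} (i) the full Bell basis is orthogonal.

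First, summing Eq.~(\ref{eq: diagonal action in Bell basis}) over $v$ with weights $\zeta(v)$ gives
\begin{align*}
    M(\zeta)\kett{s}=\Big(\sum_{v\in V}\zeta(v)\chi_v(s)\tau(v)\Big)\kett{s}=:\mu_\zeta(s)\kett{s}
\end{align*}
for every $s\in V$. Hence $M(\zeta)$ is diagonal in the Bell basis with eigenvalues $\mu_\zeta(s)$. Since $\Pi_\sym$ is the orthogonal projector onto $\mathrm{span}\{\kett{s}\mid s\in\cT_+\}$, the restriction $M(\zeta)|_\sym=\Pi_\sym M(\zeta)\Pi_\sym$ is diagonal in this basis of $\Sym^2(\C^{2^n})$, with entries $\mu_\zeta(s)$ for $s\in\cT_+$.

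Second, apply Lm.~\ref{lm: symmetric decomposition}. The equivalence (i)$\leftrightarrow$(ii) says $I_\rho(\zeta)$ is a pure-state invariant if and only if $M(\zeta)|_\sym=\lambda_\zeta\Pi_\sym$. An operator that is diagonal in a basis of $\Sym^2(\C^{2^n})$ equals $\lambda_\zeta\Pi_\sym$ exactly when all its diagonal entries equal $\lambda_\zeta$, that is, when $\mu_\zeta(s)=\lambda_\zeta$ for all $s\in\cT_+$. This gives both directions at once. The constant is then necessarily $\lambda_\zeta=I_\psi(\zeta)$, which by Lm.~\ref{lm: invariant explicit} equals the average $\zeta(0)+\frac{1}{2^n+1}\sum_{v\neq0}\zeta(v)$.

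The only real obstacle is verifying Eq.~(\ref{eq: diagonal action in Bell basis}) carefully. By Lm.~\ref{lm: vec} (iii), $(W_v\otimes W_v)\vect(W_w)=\vect(W_vW_wW_v^T)$. Substituting $W_v^T=\tau(v)W_v$ and using Eq.~(\ref{eq: Weyl commutation relations}) together with $W_v^2=\one$ (the $W_v$ are Hermitian Paulis) gives $W_vW_wW_v=(-1)^{\omega(v,w)}W_w$. This yields the eigenvalue $\chi_v(w)\tau(v)$, where $\chi_v(w)=\chi_w(v)$ by symmetry of $\omega$ over $\zz_2$. Once this is checked, the rest is linear algebra.
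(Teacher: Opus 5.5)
Your proposal is correct and follows essentially the same route as the paper. The paper also combines Lm.~\ref{lm: symmetric decomposition}(ii) with the Bell-basis diagonalisation $M(\zeta)\kett{s}=\sum_{v\in V}\zeta(v)\chi_s(v)\tau(v)\kett{s}$ obtained from Lm.~\ref{lm: vec}(iii). Your explicit check of Eq.~(\ref{eq: diagonal action in Bell basis}), and your remark that $\chi_v(s)=\chi_s(v)$ because $\omega$ is symmetric over $\zz_2$, fill in details the paper leaves implicit.
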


\begin{proof}
    By Lm.~\ref{lm: symmetric decomposition}, $I_\rho(\zeta)$ is a pure-state invariant if and only if $M(\zeta)\kett{s}=\lambda_\zeta\kett{s}$ for all $s\in\cT_+$, and by Lm.~\ref{lm: vec} (iii), $M(\zeta)\kett{s}
    =\sum_{v\in V}\zeta(v)(W_v\otimes W_v)\kett{s}
    =\sum_{v\in V}\zeta(v)\chi_s(v)\tau(v)\kett{s}$.
\end{proof}

In particular, $M(\zeta)$ is diagonal in the Bell basis. More generally, define the linear operators $L_v\in\cL(\C^{2^{2n}})$ by (linear extension of) $L_v(W_w):=W_vW_wW^T_v$ for all $v,w\in V$. From Lm.~\ref{lm: symmetric eigenvalues}, $M(\zeta)\vect(W_s)=\sum_{v\in V}\zeta(v)\vect(L_v(W_s))=\lambda_\zeta\vect(W_s)$. Moreover, $L_vL_{v'}=(-1)^{\omega(v,v')}L_{v+v'}$, hence, $v\mapsto L_v$ defines a faithful (projective) representation of $V=\zz^{2n}_2$, which decomposes as
\begin{align*}
    \C^V
    =\oplus_{w\in V}V_w
    =(\oplus_{s\in\cT_+}V_s)\oplus(\oplus_{a\in\cT_-}V_a)\; ,
\end{align*}
for one-dimensional eigenspaces $V_w=\langle W_w\rangle$, and where we defined $\cT_\pm:=\{w\in V\mid\tau(w)=\pm 1\}$.\\

\textbf{(ii) Fourier transform.} For any real-valued function $f\in\R^V$, define its Fourier transform by
\begin{align}\label{eq: Fourier transform}
    \hf(w):=\hf(\chi_w)
    =\frac{1}{\sqrt{|V|}}\sum_{v\in V}\chi_w(v)f(v)\; ,
\end{align}
for all $\chi_w(\cdot)=(-1)^{\omega(w,\cdot)}\in\hV\cong V$, where $\hV$ denotes the space of characters on $V$, that is, its Pontryagin dual. Eq.~(\ref{eq: Fourier transform}) defines a linear isomorphism between $\R^V$ and $\R^\hV\cong\R^V$ such that $\widehat{\hf}=f$.\\

\textbf{Characterisation.} Using the above ingredients, we characterise pure-state invariants in Def.~\ref{def: I}.

\begin{theorem}\label{thm: all weighted quadratic invariants}
    $I_\rho(\zeta)$ is a pure-state invariant if and only if $\zeta=\beta_0\zeta_0+\sum_{a\in\cT_-}\beta_a\xi_a$ for some $\beta_0,\beta_a\in\R$, and where $\zeta_0=\delta_{v0}$ and $\xi_a=\chi_a\tau$ for $a\in\cT_-=\{w\in V\mid\tau(w)=-1\}$.
\end{theorem}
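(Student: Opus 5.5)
By Lm.~\ref{lm: symmetric eigenvalues}, $I_\rho(\zeta)$ is a pure-state invariant if and only if the function $g:=\zeta\tau\in\R^V$ (pointwise product) has the property that
\begin{align*}
    \sum_{v\in V}g(v)\chi_s(v)=\sqrt{|V|}\,\hat g(s)
\end{align*}
takes one and the same value $\lambda_\zeta$ for all $s\in\cT_+$. I would therefore reformulate the problem entirely on the Fourier side. The condition says exactly that $\hat g$ is constant on $\cT_+$, while $\hat g$ is unconstrained on $\cT_-$.

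The next step is to decompose such a $\hat g$ in the basis of delta functions. Any $\hat g$ that is constant on $\cT_+$ can be written as $\hat g=c\,\ind_{\cT_+}+\sum_{a\in\cT_-}c_a\delta_a$. Since $\ind_{\cT_+}+\ind_{\cT_-}=\ind_V$, we can equally write $\hat g=c\,\ind_V+\sum_{a\in\cT_-}c'_a\delta_a$ with $c'_a=c_a-c$. Note that $0\in\cT_+$, because $\tau(0)=+1$.

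Then I would invert the Fourier transform, using $\widehat{\hat f}=f$. With the normalisation of Eq.~(\ref{eq: Fourier transform}), the transform of $\ind_V$ is $\sqrt{|V|}\,\delta_0$, since $\sum_v\chi_w(v)=|V|\delta_{w0}$ by nondegeneracy of $\omega$. The transform of $\delta_a$ is $|V|^{-1/2}\chi_a$. This gives
\begin{align*}
    g=\beta'_0\delta_0+\sum_{a\in\cT_-}\beta_a\chi_a .
\end{align*}
Multiplying by $\tau$ and using $\tau^2=1$ and $\tau(0)=1$ yields $\zeta=\beta_0\zeta_0+\sum_{a\in\cT_-}\beta_a\chi_a\tau$, as claimed.

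For the converse, one reads the same computation backwards. The term $\zeta_0$ contributes $\chi_0(s)\tau(0)=1$ for every $s$, which is constant. For $a\in\cT_-$, the term $\xi_a$ contributes
\begin{align*}
    \sum_v\chi_a(v)\tau(v)\chi_v(s)\tau(v)=\sum_v\chi_{a+s}(v)=|V|\delta_{a,s}=0
\end{align*}
for $s\in\cT_+$, because $a\neq s$. The spectral condition therefore holds with $\lambda_\zeta=\beta_0$. This is consistent with Lm.~\ref{lm: invariant explicit}.

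The main obstacle is bookkeeping rather than substance. One must check that the characters $\chi_v(s)$ appearing in Lm.~\ref{lm: symmetric eigenvalues} match the convention $\chi_w(v)=(-1)^{\omega(w,v)}$; this is symmetric in $v,w$ since $\omega$ is alternating over $\zz_2$. One must also track the Fourier normalisation constants, which only rescale the $\beta$'s. Finally, one should observe that $\{\delta_0\}\cup\{\chi_a\tau\}_{a\in\cT_-}$ is linearly independent, because the $\chi_a$ and $\ind_V=\chi_0$ are distinct characters. The solution space thus has dimension $1+|\cT_-|$, which confirms the parametrisation is exact.
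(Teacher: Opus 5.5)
Your proof is correct. It differs from the paper's only in how the solution space is identified.

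Both arguments reduce, via Lm.~\ref{lm: symmetric eigenvalues}, to the condition that $\widehat{\zeta\tau}$ is constant on $\cT_+$. The paper treats this as a linear system in $(\zeta,\lambda_\zeta)$. It counts the solution space as having dimension $|V|+1-|\cT_+|=1+|\cT_-|$, checks that $\zeta_0$ solves the inhomogeneous equation and each $\xi_a$ the homogeneous one, and proves linear independence to conclude that these span. You instead decompose on the Fourier side, $\widehat{\zeta\tau}=c\,\ind_V+\sum_{a\in\cT_-}c'_a\delta_a$, and invert.

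Your route derives the basis rather than guessing and verifying it, and it yields $\lambda_\zeta=\beta_0$ automatically. It needs no linear-independence argument, since the theorem only asserts spanning. The paper's count, by contrast, tacitly uses two facts: that the $|\cT_+|$ equations are independent (surjectivity of $\zeta\mapsto\widehat{\zeta\tau}|_{\cT_+}$), and that $\zeta=0$ forces $\lambda_\zeta=0$. What the paper's version buys in return is an explicit uniqueness statement for the coefficients $\beta_0,\beta_a$.

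On that uniqueness point, your closing justification is slightly garbled, because $\delta_0$ is not $\ind_V$. The correct reason is that $\widehat{\delta_0}=|V|^{-1/2}\ind_V$ is nonzero on $\cT_+$, while each $\widehat{\chi_a}=|V|^{1/2}\delta_a$ vanishes there. This affects only your remark about the parametrisation being exact, not the proof of the theorem.
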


\begin{proof}
    By Lm.~\ref{lm: symmetric eigenvalues}, $I_\rho(\zeta)$ is a pure-state invariant if and only if $\sum_{v\in V}\chi_s(v)\zeta(v)\tau(v)=\lambda_\zeta$ for all $s\in\cT_+$, equivalently (by Eq.~(\ref{eq: Fourier transform})) if and only if $\widehat{f}_\zeta(s)=\frac{\lambda_\zeta}{2^n}$ for all $s\in\cT_+$ with $f_\zeta=\zeta\tau$. Since the Fourier transform is a linear map, this defines a set of linear equations in its entries $v\in V$, and since $|\cT_+|=\frac{2^n(2^n+1)}{2}=2^{n-1}(2^n+1)$, the space of solutions has dimension $|V|+1-|\cT_+|=2^{n-1}(2^n-1)+1$.\footnote{Note that the value of the proportionality constant $\lambda_\zeta$ is a free parameter in the space of solutions. This is reflected in the existence of two solutions to the inhomogeneous equation $\widehat{f}_\zeta(s)=2^{-n}\lambda_\zeta$ for all $s\in\cT_+$, which may be identified with the identity and SWAP operators. One of which can be taken to fix $\lambda_\zeta$, equivalently, to fix a normalisation for $\zeta$.}
    
    We show that $\zeta_0:=\delta_{v0}$ and $\{\zeta=\xi_a\}_{a\in\cT_-}$ with $\xi_a=\chi_a\tau$ parametrise this space. Indeed,
    \begin{align*}
        \widehat{f_{\xi_a}}(w)
        =\frac{1}{2^n}\sum_{v\in V}\chi_w(v)f_{\xi_a}(v)
        =\frac{1}{2^n}\sum_{v\in V}\chi_w(v)\tau^2(v)\chi_a(v)
        =\frac{1}{2^n}\sum_{v\in V}\chi_{w+a}(v)
        =\begin{cases}
            2^n &\mathrm{if}\ w=a\\
            0 &\mathrm{otherwise}
        \end{cases}\; ,
    \end{align*}
    which for $w\in\cT_+$ implies that $\xi_a$ solves the homogeneous equation $\widehat{f_{\xi_a}}(s)=0$ for all $s\in\cT_+$. Moreover, $\zeta_0$ defines a solution to the inhomogeneous equation, as $\widehat{f_{\zeta_0}}(w)=\frac{1}{2^n}\sum_{v\in V}\chi_w(v)\delta_{v0}(v)\tau(v)=\frac{1}{2^n}$ for all $w\in V$. Finally, we need to show that these solutions are linearly independent. Let $0=\beta_0\delta_{v0}+\sum_{a\in\cT_-}\beta_a\xi_a$ for $\beta_0,\beta_a\in\R$, multiplying by $\tau$ and taking the Fourier transform yields,
    \begin{equation*}
        \beta_0\delta_{v0}+\sum_{a\in\cT_-}\beta_a\xi_a=0
        \quad\stackrel{\cdot\tau}{\Longrightarrow}\quad
        \beta_0\delta_{v0}+\sum_{a\in\cT_-}\beta_a\chi_a=0
        \quad\stackrel{\widehat{\cdot}}{\Longrightarrow}\quad
        \frac{\beta_0}{2^n}+2^n\sum_{a\in\cT_-}\beta_a\delta_{aw}=0\quad\forall w\in V\; ,
    \end{equation*}
    where we used that $\widehat{f_{\delta_{v0}}}=\frac{1}{2^n}$ and $\widehat{f_{\chi_a}}(w)=2^n 
    \delta_{aw}$ from above. Evaluating at $w\in\cT_+$ thus implies $\beta_0=0$. For the remaining parameters, we use the linear independence of the characters $\chi_a$,
    \begin{equation*}
        \sum_{a\in\cT_-}\beta_a\xi_a=0
        \quad\stackrel{\cdot\tau}{\Longrightarrow}\quad
        \sum_{a\in\cT_-}\beta_a\chi_a=0
        \quad\Longleftrightarrow\quad \beta_a=0\quad\forall a\in\cT_-\; .\qedhere
    \end{equation*}
\end{proof}

Note that the function $\zeta_1=\ind_V$ also solves the inhomogeneous equation. To see this, let $v=(x,z),w=(x',z')\in\zz^{2n}_2$ and recall from Eq.~(\ref{eq: Weyl operators}) that $\chi_w(v)=(-1)^{\omega(v,w)}=(-1)^{x'\cdot z-x\cdot z'}$ as well as $\tau(v)=(-1)^{x\cdot z}$ (where $x\cdot z=\sum_{k=1}^nx_kz_k$ denotes the inner product in $\zz^n_2$). From this, we obtain
\begin{align}\label{eq: inhomogenous solution}
    \widehat{f_{\zeta_1}}(w)
    =\frac{1}{2^n}\sum_{v\in V}\chi_w(v)\tau(v)
    =\frac{1}{2^n}\sum_{x,z\in\zz^n_2}(-1)^{x'\cdot z+x\cdot (z-z')}
    =(-1)^{x'\cdot z'}
    =\tau(w)\; ,
\end{align}
where we used character orthogonality (in $\widehat{\zz^n_2}$) in the third step. In terms of the generating set of Thm.~\ref{thm: all weighted quadratic invariants}, it is given by the relation (using the above), $\sum_{a\in\cT_-}\xi_a=\frac{1}{2}(4^n\zeta_0-2^n\zeta_1)$.

We close the present section by determining the full operator form of the $M(\zeta)$. Since by Thm.~\ref{thm: all weighted quadratic invariants}, $M(\zeta)$ is a linear combination of the operators $M(\zeta_1)=2^n\SWAP$ and $M^\pm_a:=M(\zeta^\pm_a)$ for $\zeta^\pm_a:=\frac{1}{2}(1\pm\xi_a)$, it is sufficient to determine the form of the latter. Moreover, we know from Lm.~\ref{lm: symmetric decomposition} that for pure-state invariants, $M(\zeta)=\lambda_\zeta \Pi_\sym\oplus M(\zeta)|_\Lambda$, hence, we are left to compute $M(\xi_a)|_\Lambda$.

\begin{lemma}\label{lm: anti-symmetric eigenvalues}
    For every $a\in\cT_-$, we have $M^\pm_a=\frac{1}{2}(2^n\SWAP\pm 4^n\Pi_a)$ for $\Pi_a:=\frac{1}{2^n}\kett{a}\braa{a}$, explicitly
    \begin{align}\label{eq: M-spectrum}
        M^\pm_a\kett{w}
        =\ \begin{cases}
            2^{n-1}\kett{w} &\mathrm{if}\ w\in\cT_+\\
            -2^{n-1}\kett{w} &\mathrm{if}\ a\neq w\in\cT_-\\
            -2^{n-1}(1\mp 2^n)\kett{w} &\mathrm{if}\ a=w\in\cT_-
        \end{cases}\; .
    \end{align}
\end{lemma}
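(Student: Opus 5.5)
The plan is to use linearity of $\zeta\mapsto M(\zeta)$ together with the fact that every $M(\zeta)$ is diagonal in the Bell basis, by Eq.~(\ref{eq: diagonal action in Bell basis}). Since $\zeta^\pm_a=\frac{1}{2}(\ind_V\pm\xi_a)$, we get $M^\pm_a=\frac{1}{2}M(\zeta_1)\pm\frac{1}{2}M(\xi_a)$. By Eq.~(\ref{eq: Pauli SWAP}) we already have $M(\zeta_1)=\sum_v W_v\otimes W_v=2^n\SWAP$. The whole statement therefore reduces to showing $M(\xi_a)=4^n\Pi_a$ and then reading off eigenvalues.

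\textbf{Step 1.} Compute $M(\xi_a)$ on a Bell basis vector $\kett{w}$. Eq.~(\ref{eq: diagonal action in Bell basis}) gives $(W_v\otimes W_v)\kett{w}=\chi_v(w)\tau(v)\kett{w}$. Hence
\begin{align*}
    M(\xi_a)\kett{w}=\sum_{v\in V}\chi_a(v)\tau(v)\,\chi_v(w)\tau(v)\kett{w}=\sum_{v\in V}\chi_{a+w}(v)\kett{w}\; .
\end{align*}
Here I use $\tau^2=1$, the symmetry $\chi_v(w)=\chi_w(v)$ (from $\omega$ being alternating over $\zz_2$), and $\chi_a\chi_w=\chi_{a+w}$. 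Character orthogonality, exactly as in the proof of Thm.~\ref{thm: all weighted quadratic invariants}, then gives $4^n\delta_{aw}\kett{w}$. By Lm.~\ref{lm: vec}(i), $\braa{a}\kett{a}=2^n$, so $\Pi_a=\frac{1}{2^n}\kett{a}\braa{a}$ is the rank-one orthogonal projector onto $\kett{a}$. Since $\{\kett{w}\}$ is an orthogonal basis, this shows $M(\xi_a)=4^n\Pi_a$, and therefore $M^\pm_a=\frac{1}{2}(2^n\SWAP\pm4^n\Pi_a)$.

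\textbf{Step 2.} Read off the spectrum in Eq.~(\ref{eq: M-spectrum}). By Lm.~\ref{lm: vec}(ii), $\SWAP\kett{w}=\tau(w)\kett{w}$. This gives three cases:
\begin{itemize}
    \item for $w\in\cT_+$ (necessarily $w\neq a$), the eigenvalue is $2^{n-1}$;
    \item for $a\neq w\in\cT_-$, it is $-2^{n-1}$;
    \item for $w=a\in\cT_-$, it is $\frac{1}{2}(-2^n\pm4^n)=-2^{n-1}(1\mp2^n)$.
\end{itemize}
The first case is consistent with Lm.~\ref{lm: symmetric decomposition}(iii), giving $\lambda_{\zeta^\pm_a}=2^{n-1}$. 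As a cross-check, this agrees with Lm.~\ref{lm: invariant explicit}.

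The only real obstacle is sign bookkeeping: making sure the factor $\tau(v)$ from $\xi_a=\chi_a\tau$ cancels against the $\tau(v)$ in the Bell-basis action, and confirming $\chi_v(w)=\chi_w(v)$ so that the sum collapses to a single character. The rest is routine.
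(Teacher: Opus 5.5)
Your proof is correct and follows essentially the paper's argument: both expand $\zeta^\pm_a=\frac{1}{2}(1\pm\xi_a)$, apply the Bell-basis diagonal action, and use character orthogonality to get the $4^n\delta_{a,w}$ term. The only difference is cosmetic: you obtain the constant part from $\sum_v W_v\otimes W_v=2^n\SWAP$ (Eq.~(\ref{eq: Pauli SWAP})) rather than from the Gauss sum in Eq.~(\ref{eq: inhomogenous solution}), and you treat the symmetric vectors uniformly instead of citing Lm.~\ref{lm: symmetric decomposition}.
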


\begin{proof}
    The symmetric case of Eq.~(\ref{eq: M-spectrum}) follows from Lm.~\ref{lm: symmetric decomposition}. For $w\in\cT_-$, we similarly compute
    \begin{align*}
        M^\pm_a\kett{w}
        &=\sum_{v\in V}\zeta^\pm_a(v)(W_v\otimes W_v)\kett{w}\\
        &=\frac{1}{2}\sum_{v\in V}\left(1\pm\chi_a(v)\tau(v)\right)\chi_w(v)\tau(v)\kett{w}\\
        &=\frac{1}{2}\left(\sum_{v\in V}\chi_w(v)\tau(v)\pm\sum_{v\in V}\chi_{a+w}(v)\right)\kett{w}
        =\frac{1}{2}(2^n\tau(w)\pm 4^n\delta_{a,w})\kett{w}\; ,
    \end{align*}
    where we used Eq.~(\ref{eq: inhomogenous solution}) together with character orthogonality in the last step. The assertion then follows since when evaluated on the Bell basis, $\SWAP\kett{w}=\tau(w)\kett{w}$ and $\Pi_a\kett{w}=\delta_{a,w}\kett{w}$.
\end{proof}

Note that $M^+_a+M^-_a=2^n\SWAP$. On the other hand, using that $\sum_{a\in\cT_-}\Pi_a=\Pi_\Lambda$, one finds
\begin{align*}
    \sum_{a\in\cT_-}M(\xi_a)
    =\sum_{a\in\cT_-}(M^+_a-M^-_a)
    =4^n\Pi_\Lambda\; .
\end{align*}

\subsection{Boolean pure-state invariants}\label{sec: Boolean pure-state invariants}

By Thm.~\ref{thm: all weighted quadratic invariants}, $\zeta\in\R^V$ defines a pure-state invariant if and only if, up to a constant term $\zeta_0$ (or $\zeta_1$), it is a linear combination of quadratic sign functions $\xi_a$ for $a\in\cT_-=\{v\in V\mid\tau(v)=-1\}$. In this section, we consider Boolean pure-state invariants, that is, those with Boolean-valued weights, $\zeta\in\zz^V_2$. In other words, we ask which pure-state invariants correspond with indicator functions $\zeta=\ind_X$ for $X\subset\aP_n$.

\begin{corollary}\label{cor: Boolean quadratic invariants}
    For every $a\in\cT_-=\{w\in V\mid\tau(w)=-1\}$, $\zeta^\pm_a$ defines a Boolean pure-state invariant.
\end{corollary}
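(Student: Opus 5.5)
The plan has two parts: show that each $\zeta^\pm_a$ lies in the solution space of Thm.~\ref{thm: all weighted quadratic invariants}, and show that it is Boolean-valued.

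For the first part, write $\zeta^\pm_a=\frac{1}{2}(\zeta_1\pm\xi_a)$ with $\zeta_1=\ind_V$. By Thm.~\ref{thm: all weighted quadratic invariants}, $\xi_a=\chi_a\tau$ with $a\in\cT_-$ is a pure-state invariant; in fact it solves the homogeneous equation $\widehat{f_{\xi_a}}(s)=0$ for all $s\in\cT_+$. By Eq.~(\ref{eq: total information}), $\zeta_1$ is also a pure-state invariant, with $I_\psi(V)=2^n$; equivalently, by Eq.~(\ref{eq: inhomogenous solution}) we have $\widehat{f_{\zeta_1}}(s)=\tau(s)=1$ for $s\in\cT_+$. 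Alternatively, one can use the explicit relation $\zeta_1=2^{n}\zeta_0-2^{1-n}\sum_{a\in\cT_-}\xi_a$ to place it in the span of the theorem. The condition in Lm.~\ref{lm: symmetric eigenvalues} is linear in $\zeta$, so $\zeta^\pm_a$ is a pure-state invariant. Lm.~\ref{lm: anti-symmetric eigenvalues} already records this concretely: $M^\pm_a$ acts as $2^{n-1}$ on $\Sym^2(\C^{2^n})$, so $\lambda_{\zeta^\pm_a}=2^{n-1}$.

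For the second part, note that $\chi_a(v)\in\{\pm1\}$ and $\tau(v)\in\{\pm1\}$, so $\xi_a(v)\in\{\pm1\}$ for every $v$. Hence $\zeta^\pm_a(v)=\frac{1}{2}(1\pm\xi_a(v))\in\{0,1\}$, and $\zeta^\pm_a=\ind_{V^\pm_a}$ with $V^\pm_a=\{v\mid \chi_a(v)\tau(v)=\pm1\}$. This is the indicator function of a subset of $\aP_n$, so it is a Boolean pure-state invariant in the sense of Def.~\ref{def: I}.

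As a consistency check, Lm.~\ref{lm: invariant explicit} should reproduce the invariant value $2^{n-1}$. We have $\zeta^\pm_a(0)=\frac{1}{2}(1\pm\tau(0))$, and $\tau(0)=1$. The count $|V^\pm_a|$ follows from $\sum_v\xi_a(v)=2^n\,\widehat{f_{\zeta_1}}(a)=2^n\tau(a)=-2^n$. There is no real obstacle here; the only care needed is to confirm that $\zeta_1$ genuinely lies in the solution space, which the cited equations settle.
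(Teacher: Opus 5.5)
Your proposal is correct and follows the same route as the paper, which simply notes that $\zeta^\pm_a=\frac{1}{2}(1\pm\xi_a)$ is $\{0,1\}$-valued and lies in the solution space of Thm.~\ref{thm: all weighted quadratic invariants}. Your check that $\zeta_1=\ind_V$ is itself an invariant, via Eq.~(\ref{eq: inhomogenous solution}) or the relation $\sum_{a\in\cT_-}\xi_a=\frac{1}{2}(4^n\zeta_0-2^n\zeta_1)$, spells out a step the paper leaves implicit, and your value $\lambda_{\zeta^\pm_a}=2^{n-1}$ is correct.
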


\begin{proof}
    Clearly, every $\zeta^\pm_a$ is Boolean-valued and is a pure-state invariant by Thm.~\ref{thm: all weighted quadratic invariants}.
\end{proof}

In the next section, we will provide an explicit representation for the pure-state invariants in Cor.~\ref{cor: Boolean quadratic invariants}, and collect various of their properties. However, as we will see shortly, not every Boolean pure-state invariant is of this form. Algebraically, this may be seen as a consequence of the fact that while the characters $\{\chi_a\}_{a\in\cT_-}$ are linearly independent, the space $\cT_-\subset V$ clearly is not.\footnote{Note that given a linearly independent set $\{w_i\}_{i=1}^k\in V$, the map $v\mapsto(\chi_{w_i}(v))_{i=1}^k\in(\pm 1)^k$ has rank $k$. Consequently, it is surjective, that is, for every Boolean vector $b\in\zz^k_2$ there exists $v\in V$ such that $\chi_{w_i}(v)=(-1)^{b_i}$ for all $i\in[k]$. Now, if $\cT_-$ was linearly independent, we could apply this argument to conclude that $\beta_a=0$ for all but at most a single $a\in\cT_-$ in the decomposition $\zz^V_2\ni\zeta=\sum_{a\in\cT_-}\beta_a\xi_a=\sum_{a\in\cT_-}\beta_a\chi_a\tau$ in Thm.~\ref{thm: all weighted quadratic invariants}.
However, since $|\cT_-|=2^{n-1}(2^n-1)>2n=\dim(V)$, $\cT_-$ is clearly not linearly independent for $n\geq 2$.}

From a more geometric perspective, we are intersecting a linear space of quadratic sign functions on $V$ with the nonlinear Boolean hypercube $\{0,1\}^V\subset\R^V$. Besides the Boolean invariants $\zeta_a^\pm=\frac{1}{2}(1\pm\xi_a)$ arising from a single quadratic sign function, there may therefore exist Boolean-valued combinations involving more than one element of $\{\xi_a\}_{a\in\cT_-}$. To see that this is indeed the case, we first translate the problem into one in the incidence geometry of symplectic polar spaces \cite{SanigaPlanat2007,SanigaPlanat2008}.

Recall that a subspace $U\subset V$ of a symplectic vector space $(V,\omega)$ is called \emph{isotropic} if $\omega|_U=\omega_{U\times U}=0$. A maximal isotropic subspace is called \emph{Lagrangian}. We will write $\Lag(V)$ for the space of Lagrangian subspaces of $V$. The space of one-dimensional isotropic subspaces is the \emph{projective symplectic space $\PG(V,2)$}. Note that since $\zz_2$ contains a single non-zero element only, projectivisation simply removes the zero point, and we will thus identify $v$ with $\langle v\rangle$ for every $0\neq v\in V$. In particular, $\PG(V,2)\cong V^\circ=V\backslash\{0\}$. More generally, we will write $U^\circ:=U\backslash\{0\}$ and in slight abuse of terminology refer to both $U<V$ as well as $U^\circ\subset\PG(V,2)$ as isotropic subspaces.

\begin{definition}\label{def: symplectic polar space}
    Let $(V=\zz^{2n}_2,\omega)$ be a symplectic vector space. Then the \emph{symplectic polar space $W(2n-1,2)$} is the incidence geometry of the projectivised isotropic subspaces of $V$.
\end{definition}

Viewed in terms of Pauli operators, $W(2n-1,2)$ corresponds with the space of all Abelian subgroups $A\subset\aP_n$ (up to phase), with the identity removed. In particular, $\aP^\circ_n\cong\PG(V,2)$ (as sets) and Lagrangian subspaces correspond with maximal Abelian subgroups. Symplectic polar spaces are an active area of research \cite{BambergKellyLawPenttila2007,CeriaPavese2022,DeBeuleMannaertSmaldore2024,Frembs2026_unextendibility,SanigaPlanat2007,Taylor1992,Thas2001}. We list some structures of interest to us.

\begin{definition}\label{def: spreads}
    A \emph{partial spread of $PG(V,2)$} is a collection $\{L^\circ_k\}_{k=1}^K$ with $L_k\in\Lag(V)$ such that $L^\circ_k\cap L^\circ_l=\emptyset$ whenever $k\neq l$. A \emph{(complete) spread} is one with $\bigcupdot_{k=1}^K L^\circ_k=\PG(V,2)$ and thus $K=2^n+1$.
\end{definition}

Equivalently, a spread partitions $\aP^\circ_n$ into complementary maximal Abelian subgroups (with the identity removed), where complementarity here means that the common eigenbases of the Pauli operators in the respective subgroups are mutually unbiased \cite{Ivonovic1981,WoottersFields1989,LawrenceBruknerZeilinger2002}. This property makes (partial) spreads natural structures for the study of (entropic) uncertainty relations \cite{WehnerWinter2010,BallesterWehner2007,KettererGuehne2020,FrembsHoehnNataleWever2026b}. We expand on the relation between complementarity and entropic uncertainty relations in Sec.~\ref{sec: uncertainty relations}.

\begin{definition}\label{def: ovoids}
    A subset $O\subset\PG(V,2)$ is called an \emph{$m$-ovoid} if $|O\cap L^\circ|=m$ for every Lagrangian subspace $L\in\Lag(V)$.
\end{definition}

Expressed in terms of Pauli operators, an $m$-ovoid is the union of the intersections of $\aP^\circ_n$ with all Lagrangians whenever all these intersections have the same cardinality. It turns out that this purely incidence-geometric property characterises pure-state invariants with Boolean coefficients.

\begin{theorem}\label{thm: all Boolean pure-state invariants}
    Let $X\subset V^\circ$. Then $I_\psi(X)=m$ for all $\psi\in\CP^{2^n-1}$ if and only if $X$ is an $m$-ovoid.
\end{theorem}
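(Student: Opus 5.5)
Since $0\notin X$, $I_\psi(X)=\sum_{v\in X}\alpha_v^2$ only involves non-identity Bloch components. By Lm.~\ref{lm: invariant explicit}, any invariant value must equal $|X|/(2^n+1)$. I plan to prove both directions by testing the quadratic form on stabiliser states and on their Clifford images, and then to use the Fourier characterisation of Thm.~\ref{thm: all weighted quadratic invariants} for the converse.

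\emph{($\Rightarrow$).} For every Lagrangian $L\in\Lag(V)$, pick a stabiliser state $\psi_L$ stabilised by a Hermitian lift $\{\pm W_v\}_{v\in L}$. Then $\alpha_v=\pm1$ for $v\in L$ and $\alpha_v=0$ for $v\notin L$, because $p_{\psi_L}=2^{-n}\sum_{v\in L}\pm W_v$. Hence $I_{\psi_L}(X)=|X\cap L^\circ|$. If $I_\psi(X)=m$ for all $\psi$, then $|X\cap L^\circ|=m$ for every $L$, so $X$ is an $m$-ovoid.

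\emph{($\Leftarrow$).} Suppose $X$ is an $m$-ovoid. I will show that $\zeta=\ind_X$ satisfies the spectral condition of Lm.~\ref{lm: symmetric eigenvalues}. The key observation is that $\Pi_\sym=\tr[\Pi_\sym]\int d\psi\,p_\psi^{\otimes2}$ can also be written as a uniform average over stabiliser states, since stabiliser states form a $2$-design. Concretely, the operators $p_{\psi_L}^{\otimes2}$, with $L$ ranging over Lagrangians and over all sign choices, span $\mathrm{End}(\Sym^2)$ restricted to the Bell-diagonal algebra.

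I then argue via the Bell basis. The operator $M(\ind_X)$ is diagonal in the Bell basis, so $I_\psi(X)$ is constant if and only if its eigenvalues $\mu(s)=\sum_{v\in X}\chi_v(s)\tau(v)$ are constant on $\cT_+$. Now compute the stabiliser expectations. For $p_{\psi_L}$ with signs $\epsilon$, Lm.~\ref{lm: vec} gives $p^{\otimes 2}_{\psi_L}$ expanded in $W_v\otimes W_w$ with $v,w\in L$. The Bell-diagonal part of this projector is $4^{-n}\sum_{v\in L}W_v\otimes W_v$, independent of $\epsilon$, and it equals a projector onto $\mathrm{span}\{\kett{s}: s\in L^{\perp}\text{-coset structure}\}$. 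More directly, summing over Bell vectors gives $\braa{s}p_{\psi_L}^{\otimes2}\kett{s}\propto\ind[s\in \tilde L]$ for an appropriate Lagrangian-determined set $\tilde L\subset\cT_+$ of size $2^n$.

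Hence the condition $|X\cap L^\circ|=m$ for all $L$ says that $\mu$ has constant average over each set $\tilde L$. The sets $\tilde L$ are the Bell-basis images of Lagrangians, and these are exactly the ``symmetric Lagrangian'' cosets covering $\cT_+$. It remains to conclude that $\mu$ is constant on $\cT_+$. For this I expect to use a Fourier/incidence argument: the indicator vectors of the $\tilde L$, restricted to $\cT_+$, should span a space whose orthogonal complement contains no non-constant function of the form $\mu$ (recall $\mu\tau$ lies in the image of the Fourier transform).

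\textbf{Main obstacle.} This last spanning step is where I expect the difficulty. My preferred route around it avoids the explicit computation altogether. Since stabiliser states form a projective $2$-design, the Clifford group acts transitively on Lagrangians. For an arbitrary $\psi$, write $I_\psi(X)$ via $M(\ind_X)|_\sym$, and show $M(\ind_X)|_\sym=\lambda\Pi_\sym$ by checking that the Hermitian operator $A:=M(\ind_X)|_\sym-\frac{m}{1}\Pi_\sym$ satisfies $\tr[A\,p_{\psi_L}^{\otimes2}]=0$ for all stabiliser states.

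The point is that $A$ is Bell-diagonal, and Bell-diagonal operators on $\Sym^2$ are determined by their expectations on stabiliser tensor squares. This holds because the map $s\mapsto$ (Lagrangians $L$ with $s$ compatible) separates points of $\cT_+$, which I would verify by a counting/Fourier argument in $W(2n-1,2)$. Establishing this injectivity is the step I expect to require the most care.
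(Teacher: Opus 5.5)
Your forward direction is correct and coincides with the paper's: stabiliser states give $I_{\psi_L}(X)=|X\cap L^\circ|$. Your reformulation of the converse is also sound. $M(\ind_X)$ has eigenvalue $\mu(s)=\sum_{v\in X}\xi_s(v)$ on $\kett{s}$. The Bell-diagonal part of $p_{\psi_L}^{\otimes 2}$ is $2^{-n}$ times the projector onto $\mathrm{span}\{\kett{s}\mid s\in\tilde L\}$, where $\tilde L=\{s\in\cT_+\mid \xi_s|_L\equiv 1\}$ is the unique coset of $L$ inside $\cT_+$. So the ovoid condition says exactly that $\mu$ averages to $m$ over every $\tilde L$.

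The gap is the step you flag yourself: deducing $\mu\equiv m$ from these averages. That requires the incidence matrix $N_{L,s}=\ind[s\in\tilde L]$ (rows $L\in\Lag(V)$, columns $s\in\cT_+$) to have trivial kernel. Your proposed reason does not give this:
\begin{itemize}
\item That $s\mapsto\{L\mid s\in\tilde L\}$ separates points of $\cT_+$ only says the columns of $N$ are pairwise distinct. Distinct $0/1$ columns need not be linearly independent.
\item The $2$-design property of stabiliser states only fixes the single number $\tr[A\,\Pi_\sym]$, not $A$.
\item The remark that $\mu\tau$ lies in the image of the Fourier transform is vacuous, since the transform is bijective.
\end{itemize}
As written, the converse is not proved.

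The paper avoids inverting $N$ by computing $\mu(s)$ directly for each fixed $s\in\cT_+$. Set $J_s=\{v\in V^\circ\mid\xi_s(v)=1\}$, so that $\mu(s)=2|X\cap J_s|-|X|$. The Lagrangians with $s\in\tilde L$ are exactly those with $L^\circ\subseteq J_s$; call their number $\ell_s$. Each $v\in J_s$ lies on the same number $c_s$ of them, because $\xi_s$ descends to an Arf-$0$ refinement on $v^\perp/\langle v\rangle$. Double counting incident pairs gives $c_s|X\cap J_s|=m\,\ell_s$ and $c_s|J_s|=(2^n-1)\,\ell_s$. With $|J_s|=(2^{n-1}+1)(2^n-1)$ this yields $|X\cap J_s|=m(2^{n-1}+1)$. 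Using $|X|=m(2^n+1)$ (count over a spread), you get $\mu(s)=m$.

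If you prefer to keep your route, the missing ingredient is a rank statement. $\Sp(2n,\zz_2)$ acts $2$-transitively on the Arf-$0$ refinements $q_s$, $s\in\cT_+$, compatibly with the incidence, since $s\in\tilde L$ iff $L$ is totally singular for $q_s$. Hence $\R^{\cT_+}$ is the trivial module plus one irreducible. The equivariant map $f\mapsto Nf$ is then injective, because it is nonzero on constants and not of rank one ($|\tilde L|=2^n<|\cT_+|$). The paper's argument needs only one-point regularity on the point side, whereas yours needs this two-point regularity on the Bell side.
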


\begin{proof}
    As above, we will identify $V^\circ\cong\PG(V,2)$. Assume first that $I_\psi(X)$ is a pure-state invariant, and let $\psi_A\in\CP^{2^n-1}$ be a stabiliser state with stabiliser group $A<\aP_n$, which corresponds to a Lagrangian subspace $L_A\in\Lag(V)$ (via the relation $\tr[p_{\psi_A}W_v]^2=1$ if and only if $v\in L_A$). Then \begin{align*}
        I_{\psi_A}(X)
        =\sum_{v\in X}\tr[p_{\psi_A}W_v]^2
        =|X\cap L^\circ_A|\; ,
    \end{align*}
    and since this holds for all stabiliser states, we find $I_\psi(X)=|X\cap L^\circ|=m$ for all Lagrangian subspaces $L\in\Lag(V)$, hence, $X$ is an $m$-ovoid by Def.~\ref{def: ovoids}.

    Conversely, let $X\subset V^\circ$ be an $m$-ovoid. Define the operator $M(\ind_X):=\sum_{v\in X}W_v\otimes W_v$. By Thm.~\ref{thm: all weighted quadratic invariants}, $X$ is a pure-state invariant if and only if $M(\ind_X)$ acts by a scalar on the symmetric subspace, which in the Bell basis (see Sec.~\ref{sec: characterisation}) reads  $M(\ind_X)\kett{s}=\lambda_s\kett{s}$ with $\lambda_s=\lambda$ for all $s\in\cT_+$ (that is, all symmetric Pauli operators, $W^T_s=W_s\in\aP_n$). For every such operator consider the quadratic sign function $\xi_s=\chi_s\tau$ with $s\in\cT_+$, set $\xi_s=:(-1)^{q_s}$ and let $J_s:=\{v\in V^\circ\mid q_s(v)=0\}$. Then the coefficients in Lm.~\ref{lm: symmetric eigenvalues} read
    \begin{align}\label{eq: lambda_s}
        \lambda_s
        =\sum_{v\in X}\xi_s(v)
        =|X\cap J_s|-|X\backslash J_s|
        =2|X\cap J_s|-|X|\; .
    \end{align}
    Clearly, we have $|X|=m(2^n+1)$, since every complete spread partitions $V^\circ\cong\PG(V,2)$ into $2^n+1$ disjoint elements, and, by definition of an $m$-ovoid, $X$ intersects every Lagrangian in this spread in $m$ elements. We thus need to determine the first term in the last expression.
    
    To do so, we will need the cardinality of $J_s$. Recall that $\tau(v)=(-1)^{q_W(v)}=(-1)^{a\cdot b}=(-1)^{\sum_{i=1}^n a_ib_i}$ for $V\ni v=\sum_{i=1}^na_ie_i+b_if_i$ expressed with respect to a symplectic basis $(e_i,f_j)_{i,j=1}^n$, defined by the relations $\omega(e_i,e_j)=0=\omega(f_i,f_j)$ and $\omega(e_i,f_j)=\delta_{ij}$ (e.g. Pauli-$Z$ and Pauli-$X$ operators in the Weyl representation in Eq.~(\ref{eq: Weyl operators})). Now, note that $\xi_s=\tau\circ t_s$ for $t_s(v)=v+\omega(v,s)s$. Indeed,
    \begin{align*}
        \tau\circ t_s(v)
        =\tau(v+\omega(v,s)s)
        =\tau(v)\chi_s(v)\tau(s)
        =\tau(v)\chi_s(v)
        =\xi_s(v)\; ,
    \end{align*}
    where we used $\tau(s)=1$ for $s\in\cT_+$ and $\tau(u+v)=(-1)^{\omega(u,v)}\tau(u)\tau(v)$. Moreover, since $t_s$ is linear and invertible (note that $t^2_s=\mathrm{id}$), it induces a basis change, and it is straightforward to check that $e'_i=t_s(e_i)$ and $f'_j=t_s(f_j)$ defines a symplectic basis. Consequently, $\xi_s(v)=(-1)^{a'\cdot b'}=(-1)^{\sum_{i=1}^na'_ib'_i}$ for $v=\sum_{i=1}^na'_ie'_i+b'_if'_i$ expressed in the transformed symplectic basis $\{e'_i,f'_j\}_{i,j=1}^n$. Using the expression of $\xi_s$ in this basis, one easily counts the size of the set $J_s$: for $a'=0$, every vector $b'\in\zz^n_2$ satisfies $a'\cdot b'=0$, while for every $a'\neq 0$ there are $2^{n-1}$ vectors $b'$ such that the sum vanishes, hence, $|J_s|=(2^n-1)+(2^n-1)2^{n-1}=(2^{n-1}+1)(2^n-1)$ (since $0\notin J_s$).
    
    Now, for the first term in Eq.~(\ref{eq: lambda_s}), let $\Lag(J_s)=\{L\in\Lag(V)\mid L^\circ\subset J_s\}$ and define the incidence sets:
    \begin{align*}
        \Omega(J_s,\Lag(J_s))
        &:=\{(v,L)\mid v\in J_s,\ L\in\Lag(J_s)\} \\
        \Omega(X\cap J_s,\Lag(J_s))&:=\{(v,L)\mid v\in X\cap J_s,\ L\in\Lag(J_s)\}\; .
    \end{align*}
    We can count their respective sizes in two ways. First, summing over Lagrangian subspaces yields,
    \begin{align*}
        |\Omega(J_s,\Lag(J_s))|
        =\sum_{L\in\Lag(J_s)}|L^\circ|=(2^n-1)|\Lag(J_s)|\; ,
    \end{align*}
    while summation over points results in the relation,
    \begin{align*}
        |\Omega(J_s,\Lag(J_s))|
        =\sum_{v\in J_s}c_s
        =c_s|J_s|\; ,
    \end{align*}
    since, by symmetry of $J_s$, every point in $J_s$ lies in the same number $c_s$ of Lagrangian subspaces in $\Lag(J_s)$. To see this, for every $v\in J_s$, consider the quotient $v^\perp/\langle v\rangle$, where $v^\perp=\{u\in V\mid \omega(u,v)=0\}$ is the symplectic orthogonal complement of $v$ in $V$. Then for $v'=u+v\in v^\perp$ (that is, $u\in v^\perp$),
    \begin{align*}
        \xi_s(v')
        =\tau(v')\chi_s(v')
        =\tau(u+v)\chi_s(u+v)
        =\tau(u)\tau(v)(-1)^{\omega(u,v)+\omega(u,s)+\omega(v,s)}
        =\tau(u)\chi_s(u)
        =\xi_s(u)\; ,
    \end{align*}
    since $\omega(u,v)=0$ and $\xi_s(v)=1$ as $v\in J_s$. Hence, $\xi_s$ is constant on the cosets of $\langle v\rangle$ in $v^\perp$, and the restriction $\xi^{v^\perp}_s:v^\perp/\langle v\rangle\ra\{\pm 1\}$ thus well-defined. It follows that the number of Lagrangian subspaces in $\Lag(J_s)$ incident with $v$ equals the number of maximal totally singular subspaces of the quadratic space $(v^\perp/\langle v\rangle,q_s)$. Finally, since the symplectic form $\omega$ is nondegenerate, $(v^\perp)^\perp=\langle v\rangle$ which implies that its restriction to $v^\perp/\langle v\rangle$ is also non-degenerate, and we find that without loss of generality for $e_1=v$, $\xi^{v^\perp}_s(\tv)=(-1)^{\ta'\cdot\tb'}=(-1)^{\sum_{i=2}^n a'_ib'_i}$ for the symplectic basis (from above). Clearly, this implies that the number of Lagrangian subspaces incident with a point $v\in J_s$ is in fact independent of $v$.
    
    Using this symmetry for the second incidence set, we similarly obtain the relation
    \begin{align*}
        |\Omega(X\cap J_s,\Lag(J_s))|
        =\sum_{v\in X\cap J_s}c_s
        =c_s|X\cap J_s|\; .
    \end{align*}
    Moreover, summing instead over Lagrangian subspaces, and using the defining property of $m$-ovoids,
    \begin{align*}
        |\Omega(X\cap J_s,\Lag(J_s))|
        =\sum_{L\in\Lag(J_s)}|X\cap L^\circ|
        =m|\Lag(J_s)|\; .
    \end{align*}
    Combining the above incidence counts with the cardinality of $J_s$, we arrive at
    \begin{align*}
        |X\cap J_s|
        =m\frac{|\Lag(J_s)|}{c_s}
        =m\frac{|J_s|}{2^n-1}
        =m(2^{n-1}+1)\; ,
    \end{align*}
    and inserting into Eq.~(\ref{eq: lambda_s}) thus yields the constant value $\lambda_s=2m(2^{n-1}+1)-m(2^n+1)=m$.
\end{proof}

Thm.~\ref{thm: all Boolean pure-state invariants} asserts that the problem of characterising all subsets $X\subset V^\circ$ such that $I_\psi(X)=\mathrm{const}$ for all $\psi\in\CP^{2^n-1}$ is equivalent to the problem of characterising all $m$-ovoids, and thus highlights the physical significance of this finite geometry problem. Unfortunately, to date, there is no complete characterisation of $m$-ovoids; nor will we make progress on this problem here. Instead, in the next section we will single out the specific cases in Cor.~\ref{cor: Boolean quadratic invariants} from a physically motivated assumption, and focus on those in more detail below. However, before doing so, we first review a few known results about $m$-ovoids and translate them into the setting of $n$-qubit Pauli operators and pure-state invariants (see also Ref.~\cite{SanigaPlanat2007,SanigaPlanat2008}).

First, note that, by Cor.~\ref{cor: Boolean quadratic invariants}, every $a\in\cT_-$ gives rise to an $2^{n-1}$-ovoid for $\zeta^-_a$ and a $(2^{n-1}-1)$-ovoid for $\zeta^+_a$. For $n=2$ every $m$-ovoid (other than $V^\circ$ and up to complements) is of this form, however, already for $n=3$, there are $13,384$ distinct $3$-ovoids, falling into three isomorphism classes \cite{BambergKellyLawPenttila2007,CeriaPavese2022}. Of these, only $28$ belong to the family in Cor.~\ref{cor: Boolean quadratic invariants}. From the perspective of the reconstruction in Ref.~\cite{HoehnWever2017}, the two-qubit case is thus highly exceptional!  (We leave the extension of the reconstruction argument under the additional assumption of update closure to future work.) Second, the fact that for the three-qubit case every $m$-ovoid has $m=2^{n-1}-1=3$  (up to complements), yet this does not generalise to arbitrary $n$ either: for instance, $m$-ovoids with $m=2^{n/2}-1$ exist for every even $n\geq 2$ \cite{Thas1981,Kelly2007}. Third, from the perspective of complementarity and entropic uncertainty relations, it is natural to collect the Pauli observables of an ovoid into (observables built from) mutually commuting ones \cite{FrembsHoehnNataleWever2026b}. In particular, given any (complete) spread $\{L^\circ_i\}_{i=1}^{2^n+1}$, the defining property of an $m$-ovoid $X\subset V^\circ$ implies $|X\cap L^\circ|=m$. In order to interpret these as sets of compatible observables, via $O=\sum_{v\in L}\chi(v)W_v$ for $\chi:L\ra\{\pm 1\}$ a character, it is natural to demand that the sets $X\cap L_i$ are themselves isotropic subspaces for some choice of spread $\{L^\circ_i\}_i$. Such $m$-ovoids correspond with $h$-systems.

\begin{definition}\label{def: h-systems}
    An \emph{h-system of $W(2n-1,2)$} for $0\le h\leq n-1$ is a collection $\{U^\circ_i\}_{i=1}^{2^n+1}$ of isotropic subspaces $U_i<V$ of dimension $h+1$ that are pairwise opposite, that is, $U_i\cap U^\perp_j=\{0\}$ for all $i\neq j$.\footnote{Here, $h$ is the projective dimension of the projective subspaces $U^\circ_i$.}
\end{definition}

By Ref.~\cite{ShultThas1994}, the point set covered by an $h$-system of a finite classical polar space is an $m$-ovoid with $m=(q^{h+1}-1)/(q-1)$; for $q=2$ this gives $m=2^{h+1}-1$. In particular, $h=0$ recovers ovoids with $m=1$ and $h=n-1$ (complete) spreads with $m=2^n-1$. $h$-systems thus form a sort of hybrid between Def.~\ref{def: spreads} and Def.~\ref{def: ovoids}: an $m$-ovoid is a purely incidence-geometric object, it demands no additional algebraic properties with respect to the ambient linear structure; by contrast, (similar to $V$) an $h$-system can be partitioned into isotropic subspaces \cite{ShultThas1994,Thas1989,BambergKellyLawPenttila2007}. While every pure-state invariant $\zeta_a^+$ in Cor.~\ref{cor: Boolean quadratic invariants} is an $h$-system, the converse is not true. In particular, this applies to $m$-ovoids arising from picking pairwise opposite isotropic subspaces (of dimension $\log_2(m+1)$) in every Lagrangian subspace of a complete spread. As with $m$-ovoids, the characterisation of $h$-systems remains open. Despite the combinatorial complexity involved in determining $h$-systems and $m$-ovoids in general (equivalently, Boolean pure-state invariants by Thm.~\ref{thm: all Boolean pure-state invariants}), in the next section we will fully characterise the subfamily of pure-state invariants that are not merely kinematical, but are also compatible with a notion of dynamical state update.

\subsection{Update-closed pure-state invariants}\label{sec: update-closed Boolean invariants}

By Thm.~\ref{thm: all Boolean pure-state invariants}, Boolean pure-state invariants correspond with $m$-ovoids, whose characterisation remains in large part uncharted territory. In this section, we study Boolean pure-state invariants under an additional, physically motivated constraint for which a full characterisation is possible.

Since Boolean pure-state invariants correspond with subsets of Hermitian Pauli observables $X\subset\aP_n$, one may - on a purely formal level - require additional structure of the Pauli group to be preserved. From a mathematical perspective, the choice of which structure to preserve is somewhat arbitrary. Instead, we seek a physical motivation, for which we take inspiration from Ref.~\cite{Hoehn2017,HoehnWever2017}, where the two-qubit pure-state invariants are interpreted as conserved `informational charges'. This interpretation suggests to enrich the purely kinematical constraints of Def.~\ref{def: I} by an additional dynamical compatibility condition: a (discrete) notion of \emph{information update}.\footnote{In analogy with the reconstruction in Ref.~\cite{Hoehn2017,HoehnWever2017}, one can also demand a notion of compatibility with respect to continuous symmetry transformations. We will defer this to future work.}

Conceptually, the idea is the following: assume we are given a stabiliser state $\psi_A$ corresponding to some Abelian stabiliser subgroup $A<\aP_n$ (up to phase). With respect to a pure-state invariant $I_\psi(X)$, this state then provides information about the observables contained in $X\subset\aP_n$, namely maximal information on the observables in $A\cap X$ and no information on all others. Now, assume we perform a measurement corresponding to a (nondegenerate) observable whose eigenspaces are the joint stabiliser eigenstates of the Pauli observables of another maximal Abelian subgroup. Such a measurement has two effects: it provides maximal information about the outcomes of the measured observable, in particular, those that also lie in $X$, and, in turn, it must erase part of the information previously held about $X$. The reason is simply that, since $I_\psi(X)$ is a pure-state invariant, the overall information about the total system remains constant, resulting in a necessary trade-off between the information held about generally complementary observables in $X$. It is thus natural to demand pure-state invariants to be compatible with this process of `information update'. We now formalise this idea in three different, but ultimately closely related ways.\\

\textbf{Inference closure.} We begin with an inference-closure condition.

\begin{definition}\label{def: inference-closure}
    A subset of observables $X\subset\cO\subset\LHsa$ is \emph{inference-closed relative to $\cO$} if for any two compatible observables $O,O'\in X$ also $f(O,O')\in X$ for any function $f$ with $f(O,O')\in\cO$.
\end{definition}

Def.~\ref{def: inference-closure} is a subset-relative closure condition under the spectral calculus, which we will apply to $\aP_n$-relative subsets. The idea is that functions of commuting, that is, simultaneously measurable observables may be understood in terms of post-processing on the side of the experimenter, which allows her to infer information about observables that are algebraically dependent on given ones.\footnote{Note that without demanding the subset-relativity in Def.~\ref{def: inference-closure}, inference-closed sets contain all relabellings of the spectra of the observables in $X$. Since, from the perspective of pure-state invariants and complementarity, such labels are (up to degeneracy) merely gauge, the restriction to $\cO$ may be interpreted to be a type of gauge-fixing.}\\

\textbf{Update-closure (under L\"uders measurement).} The dynamical aspect in Def.~\ref{def: inference-closure} remains still somewhat implicit. To bring it to the forefront, consider the following scenario. Assume we measure $P\in X$, and obtain the outcome $s=\pm 1$.\footnote{More generally, one can consider measurements of operators with eigenspaces corresponding to the joint eigenspaces of the Pauli observables of an abelian subgroup $A<\cP_n$, that is, whose spectral projectors are of the form $\Pi_{A,\chi}=\frac{1}{|A|}\sum_{P\in A}\chi(P)P$ for $\chi:A\ra\{\pm 1\}$ a character on $A$.} Then for any $Q\in\aP_n$ with $[Q,P]=0$, $Q$ and $PQ$ become dependent random variables with respect to the L\"uders post-measurement state of $P$,
\begin{align}\label{eq: Lueders update}
    \rho\mapsto\rho_{P,s}
    :=\frac{\Pi_{P,s}\rho\Pi_{P,s}}{\tr[\Pi_{P,s}\rho]}\; ,
\end{align}
where $\Pi_{P,s}=\frac{1}{2}(\one+sP)$, and $\tr[\Pi_{P,s}\rho]\neq 0$.\footnote{Of course, if the probability of obtaining the outcome $s$ in the state $\rho$ is zero, that is, $\tr[\Pi_{P,s}\rho]=0$, there is no associated post-measurement state.} Now, note that the information measure $I$ in Def.~\ref{def: I} of any two compatible observables $Q,P$ in $\rho_{P,s}$ obeys the relation,
\begin{align}\label{eq: I-Lüders-multipliciativity}
    I_{\rho_{P,s}}(PQ)
    =I_{\rho_{P,s}}(P)I_{\rho_{P,s}}(Q)\; .
\end{align}
Yet, this relation does not hold for a generic Boolean pure-state invariant, as it may distinguish between observables $Q$ and $PQ=s(P)Q$ whenever $Q,P\in X$, yet $QP\notin X$. In this case, $I_\psi(X)$ does not contain the information about observables, e.g. $QP$ that can be operationally inferred from information about observables it does contain, e.g. $P$ and $Q$. This suggests the following condition.

\begin{definition}\label{def: update-closure}
    A subset $\one\in X\subset\aP_n$ is called \emph{(L\"uders-)update-closed} if and only if for every $P\in X$ and $Q\in\aP_n$ with $[Q,P]=0$, and (non-zero L\"uders post-measurement) state $\rho_{P,s}$ of $P$ in Eq.~(\ref{eq: Lueders update}),
    \begin{align}\label{eq: update-closure}
        I_{\rho_{P,s}}(\{PQ\}\cap X)
        =I_{\rho_{P,s}}(\{P\}\cap X)I_{\rho_{P,s}}(\{Q\}\cap X)\; .
    \end{align}
\end{definition}

The dependence on post-measurement states in Def.~\ref{def: update-closure} is necessary, since the notion of dependent observables is. In particular, $I$ is not multiplicative on commuting Pauli operators for generic states. Note that Def.~\ref{def: update-closure} does not require $X$ to be a pure-state invariant. Moreover, Def.~\ref{def: update-closure} is readily extended to, and in fact equivalent to, the analogous closure condition under state-update with respect to observables associated with the Pauli group, that is, with respect to the projectors
$\Pi_{A,\chi}=\frac{1}{|A|}\sum_{P\in A}\chi(P)P$ for $\chi:A\ra\{\pm 1\}$ a character as above (cf. Ref.~\cite{FrembsHoehnNataleWever2026b}).\\

\textbf{Residue-closure.} Def.~\ref{def: inference-closure} and Def.~\ref{def: update-closure} are ultimately not specific to Pauli operators (up to phase). As we will see below, under this restriction, the conditions above simplify to the following condition:
\begin{align}\label{eq: symplectic inference}
    v,w\in X,\ v\neq w,\ \omega(v,w)=0
    \quad\Longrightarrow\quad
    v+w\in X\; .
\end{align}
In fact, it is immediate to see that Eq.~(\ref{eq: symplectic inference}) is equivalent to Def.~\ref{def: inference-closure} under the restriction to Pauli labels. Moreover, we can also define an explicit notion of (compatibility with) information update purely in terms of its underlying symplectic structure. To make this precise, we first define a notion of updating (the information contained in) subsets $X\subset V\cong\aP_n$ with respect to (learning information about) an isotropic subspace $U\subset V$.

\begin{definition}\label{def: update set}
    Let $X\subset V=\zz^{2n}_2$ and let $U\subset V$ be an isotropic subspace. Define its \emph{residue} by
    \begin{align}\label{eq: residue}
        \Res(U):=U^\perp/U\; ,
    \end{align}
    and the \emph{update of $X$ by $U$} with respect to the map $\pi_U:U^\perp\ra\Res(U)$ by
    \begin{align}\label{eq: updated set}
        X_U
        :=\pi_U((X\cap U^\perp)\backslash U)\; .
    \end{align}
\end{definition}

In words, the update of $X$ by $U$ is the intersection of $X$ with all elements that commute with $U$ modulo $U$ and $U$ excluded. We record the following straightforward result.

\begin{lemma}\label{lm: residue spaces}
    Let $(V=\zz^{2n}_2,\omega)$ be a symplectic space, and $U\subset V$ an isotropic subspace of dimension $k$. Then $(\Res(U),\omega|_{\Res(U)})$ is a symplectic space of dimension $2(n-k)$ with corresponding polar space $\Res^\circ(U)=\Res(U)\backslash\{U\}\cong W(2(n-k)-1,2)$.
\end{lemma}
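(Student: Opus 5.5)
The plan is to verify the statement directly from linear algebra over $\zz_2$, using only that $\omega$ is nondegenerate, alternating and bilinear, and that $U$ is isotropic. There are four steps: compute $\dim U^\perp$, show that $\omega$ descends to the quotient $U^\perp/U$, show that the descended form is nondegenerate, and finally read off the polar space from Def.~\ref{def: symplectic polar space}.

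\emph{Dimension count.} First I show $\dim U^\perp=2n-k$. Consider the linear map $\phi:V\ra U^*$, $v\mapsto\omega(v,\cdot)|_U$. Since $\omega$ is nondegenerate, $v\mapsto\omega(v,\cdot)$ is an isomorphism $V\ra V^*$. Composing with the surjective restriction $V^*\ra U^*$ shows that $\phi$ is surjective, and $\ker\phi=U^\perp$. Hence $\dim U^\perp=2n-k$. Applying the same argument to $U^\perp$ gives $\dim(U^\perp)^\perp=k$. Isotropy of $U$ means $U\subset U^\perp$, so also $U\subset(U^\perp)^\perp$, and comparing dimensions yields $(U^\perp)^\perp=U$. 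It follows that the quotient $U^\perp/U$ is well defined and has dimension $2n-2k=2(n-k)$.

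\emph{Descent of the form.} For $u,w\in U^\perp$ and $a,b\in U$, bilinearity gives
\begin{align*}
\omega(u+a,w+b)=\omega(u,w)+\omega(a,w)+\omega(u,b)+\omega(a,b)=\omega(u,w)\; .
\end{align*}
The middle two terms vanish because $u,w\in U^\perp$, and the last vanishes because $U$ is isotropic. So $\omega$ induces a well-defined form on $U^\perp/U$, and this form is bilinear and alternating because $\omega$ is.

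\emph{Nondegeneracy.} Suppose $u\in U^\perp$ satisfies $\omega(u,w)=0$ for all $w\in U^\perp$. Then $u\in(U^\perp)^\perp=U$, so its class in $U^\perp/U$ is zero. Thus $(U^\perp/U,\omega|_{U^\perp/U})$ is a symplectic space of dimension $2(n-k)$, and choosing a symplectic basis identifies it with $(\zz_2^{2(n-k)},\omega)$.

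\emph{Polar space.} Over $\zz_2$ projectivisation only removes the zero vector, and the zero class of $U^\perp/U$ is the coset $U$ itself. The polar space of the quotient is therefore the incidence geometry of projectivised isotropic subspaces of $U^\perp/U$ with the point $U$ removed. By Def.~\ref{def: symplectic polar space} this is $W(2(n-k)-1,2)$.

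The only step that needs any care is the identity $(U^\perp)^\perp=U$, which rests on the dimension count in the first step; everything else is routine.
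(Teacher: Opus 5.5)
Your proof is correct and is essentially the paper's argument (descent of $\omega$ to $U^\perp/U$ via bilinearity and isotropy, then nondegeneracy from $(U^\perp)^\perp=U$), and it also supplies the dimension count $\dim U^\perp=2n-k$, which the paper leaves implicit both for $(U^\perp)^\perp=U$ and for the dimension $2(n-k)$. One small wording fix: $U\subseteq(U^\perp)^\perp$ holds for every subspace by the (skew-)symmetry of $\omega$, not because $U\subseteq U^\perp$; isotropy is needed only to form the quotient and to make the form descend.
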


\begin{proof}
    Consider the restriction of the symplectic form $\omega$ on $V=\zz^{2n}_2$ to the coset space $\Res(U)=U^\perp/U$ defined by $\omega|_{\Res(U)}(v+U,w+U):=\omega(v,w)$ for all $v,w\in U^\perp$. This is well-defined since
    \begin{align*}
        \omega(v+u,w+u')
        =\omega(v,w)+\omega(v,u')+\omega(u,w)+\omega(u,u')
        =\omega(v,w)\; ,
    \end{align*}
    as $v,w\in U^\perp$ and $U$ is isotropic. If the symplectic bracket of $v+U$ and every other coset $w+U\in\Res(U)$ vanishes, then $v\in (U^\perp)^\perp$. But since $\omega$ is nondegenerate, we have $(U^\perp)^\perp=U$, hence, $v+U$ is the zero coset. It follows that $\omega|_{\Res(U)}$ is nondegenerate, and $(\Res(U),\omega|_{\Res(U)})$ a symplectic vector space of $\dim(\Res(U))=2(n-k)$ with corresponding polar space $\Res(U)\backslash\{U\}\cong W(2(n-k)-1,2)$.
\end{proof}

Now, if $X$ is a pure-state invariant, equivalently an $m$-ovoid by Thm.~\ref{thm: all Boolean pure-state invariants}, then learning information about $U$ splits the information contained in $I_\psi(X)$ into the information about $U$ and the information in its residue. The latter should again be an ovoid whose size encodes the information of the residue, which is the information about $U$ subtracted from the total information about $X$. This suggests the following incidence-geometric compatibility condition with respect to the update in Def.~\ref{def: update set}.

\begin{definition}\label{def: residue-closure}
    Let $X\subset V^\circ$ be an $m$-ovoid. Then $X$ is called \emph{residue-closed} if and only if $X_U$ is an $\frac{m-(|S_U|-1)}{|S_U|}$-ovoid in $\Res^\circ(U)$ for every isotropic subspace $U\subset V$ with $S_U=(X\cap U)\cup\{0\}$.
\end{definition}

As shown in Lm.~\ref{lm: characterising Jordan} below, it is sufficient to impose residue closure for one-dimensional subspaces $U=\langle u\rangle$ with $u\in X^\circ$. More importantly, our next lemma relates the above three compatibility notions with the algebraic closure under products of commuting elements, which in the Pauli group becomes closure under the Jordan product or anti-commutator. As above, we identify $V\cong\aP_n\cong\zz^{2n}_2$.

\begin{lemma}\label{lm: characterising Jordan}
    Let $0\in X\subset V\cong\aP_n$ such that $X^\circ$ is an $m$-ovoid. Then the following are equivalent:
    \begin{itemize}
        \item[(i)] $X$ is inference-closed (as by Def.~\ref{def: inference-closure}),
        \item[(ii)] $X$ is update-closed (as by Def.~\ref{def: update-closure}),
        \item[(iii)] $X^\circ$ is residue-closed (as by Def.~\ref{def: residue-closure}),
        \item[(iv)] $X$ is Jordan-closed, that is closed under the (normalised) anti-commutator.
    \end{itemize}
\end{lemma}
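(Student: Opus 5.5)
Every condition will be reduced to the symplectic closure property Eq.~(\ref{eq: symplectic inference}): for $v,w\in X$ with $v\neq w$ and $\omega(v,w)=0$, also $v+w\in X$. The equivalences (i)$\Leftrightarrow$(iv) and (ii)$\Leftrightarrow$(iv) will be shown directly. The residue condition (iii) will be treated separately, since it is the only place where the $m$-ovoid hypothesis enters.

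\textbf{(i)$\Leftrightarrow$(iv).} For Hermitian Paulis $P=W_v$ and $Q=W_w$, the normalised anti-commutator vanishes if $\omega(v,w)=1$. If $\omega(v,w)=0$ it equals $PQ\sim W_{v+w}$. Hence Jordan closure of $X$ (projectively, discarding zero brackets) is exactly Eq.~(\ref{eq: symplectic inference}), with the case $v=w$ giving $\one$, which lies in $X$ since $0\in X$.

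For inference closure, take commuting $W_v,W_w\in X$. The spectral calculus gives functions $f(W_v,W_w)$ that are polynomials in $W_v$, $W_w$ and $W_vW_w$. The only ones lying in $\cO=\aP_n$ (up to phase) are $\one$, $W_v$, $W_w$ and $W_{v+w}$. So Def.~\ref{def: inference-closure} also reduces to Eq.~(\ref{eq: symplectic inference}).

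\textbf{(ii)$\Leftrightarrow$(iv).} Let $P=W_v\in X$ and let $Q=W_w$ commute with $P$. In any nonzero L\"uders state $\rho_{P,s}$ we have $\tr[\rho_{P,s}P]^2=1$ and $\tr[\rho_{P,s}PQ]^2=\tr[\rho_{P,s}Q]^2$. Eq.~(\ref{eq: update-closure}) therefore reads $\ind_X(v+w)\,\alpha_w^2=\ind_X(w)\,\alpha_w^2$, which must hold for all such states.

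\begin{itemize}
\item If $X$ is closed in the sense of Eq.~(\ref{eq: symplectic inference}), then $v\in X$ and $v+v'=v''$ give $w\in X\Leftrightarrow v+w\in X$. The identity then holds.
\item Conversely, if $w\in X$ but $v+w\notin X$, choose the stabiliser state of a Lagrangian containing $v$ and $w$. It lies in the range of $\Pi_{P,s}$ with $\alpha_w^2=1$, which violates the identity.
\end{itemize}

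\textbf{(iii)$\Leftrightarrow$(iv).} This is the main obstacle. For (iv)$\Rightarrow$(iii), let $U$ be isotropic. Closure makes $S_U=(X\cap U)\cup\{0\}$ a subspace. It also makes $(X\cap U^\perp)\setminus U$ a union of full cosets of $S_U$, and it implies that $X_U$ is well defined with fibres of size exactly $|S_U|$.

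Take a Lagrangian $\bar L$ of $\Res(U)$ and its preimage $L\supset U$, which is Lagrangian in $V$. Counting gives $m=|X\cap L^\circ|=(|S_U|-1)+|S_U|\,|X_U\cap\bar L^\circ|$. This yields exactly the ovoid parameter of Def.~\ref{def: residue-closure}.

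For (iii)$\Rightarrow$(iv) it suffices to use one-dimensional $U=\langle u\rangle$ with $u\in X^\circ$. Here the parameter is $(m-1)/2$ on $\Res(U)$, and $X_U=\pi_U((X\cap u^\perp)\setminus\{0,u\})$. I will double count the pairs $(w,L)$ with $L\ni u$ Lagrangian and $w\in (X\cap L)\setminus\langle u\rangle$. Each such $L$ contributes $m-1$ elements of $X$. The ovoid condition on the residue says that $\bar L$ meets $X_U$ in exactly $(m-1)/2$ cosets. Comparing, every coset $w+\langle u\rangle$ meeting $X$ must meet it in both of its points, so $w\in X\Rightarrow u+w\in X$, which is Eq.~(\ref{eq: symplectic inference}). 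The delicate step is making this double count rigorous. A coset containing one point of $X$ contributes $1$ rather than $2$, and the totals force equality only once one averages over all $L\ni u$ and uses that each point of $u^\perp\setminus\langle u\rangle$ lies in the same number of such $L$. This is the transitivity argument used in the proof of Thm.~\ref{thm: all Boolean pure-state invariants}. Finally, the implication (iv)$\Rightarrow$(iii) for general $U$ shows that imposing (iii) on one-dimensional $U$ already suffices.
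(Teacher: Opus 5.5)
Your proposal is correct and follows the paper's proof essentially step by step: the same reduction of (i) and (iv) to Eq.~(\ref{eq: symplectic inference}), the same stabiliser-state witness for (ii)$\Rightarrow$(iv), and the same fibre count $m=(|S_U|-1)+|S_U|\,|X_U\cap\bar L^\circ|$ for (iv)$\Rightarrow$(iii). One claim needs correcting: (iii)$\Rightarrow$(iv) does not require averaging over all $L\ni u$ or uniform incidence, because within a single Lagrangian $L\ni u$ the $m-1$ points of $(X\cap L)\setminus\langle u\rangle$ fall into exactly $\frac{m-1}{2}$ cosets of size at most two, so every such coset is already full --- this per-$L$ pigeonhole is the paper's argument, and your averaged version is valid but unnecessary.
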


\begin{proof}
    (i)$\Leftrightarrow$(iv): Let $v,w\in X$ with $\omega(v,w)=0$. If $X$ is Jordan-closed, then $v+w\in X$, and since this is the only (deterministic) function for which $0,v,w\neq f(v,w)\in V$, $X$ is inference-closed by Def.~\ref{def: inference-closure}. Conversely, if $X$ is inference-closed relative to the Hermitian Pauli representatives $\aP_n\cong V$, then $f(v,w)=v+w\in X$ whenever $v,w\in X$ and $\omega(v,w)=0$, hence, $X$ is Jordan-closed.

    (ii)$\Leftrightarrow$(iv): Let $Q,P\in X$ with $[Q,P]=0$, then with respect to a post-measurement state $\rho_{P,s}$ of $P$ that is a joint $Q,P$-eigenstate, we have $I_{\rho_{P,s}}(\{P\}\cap X)=1$ and $I_{\rho_{P,s}}(\{Q\}\cap X)=1$, hence, Eq.~(\ref{eq: update-closure}) requires that also $I_{\rho_{P,s}}(\{QP\}\cap X)=1$ which is only possible if $QP\in X$.
    
    Conversely, if $X$ is Jordan-closed, then for every $Q,P\in X$ with $[Q,P]=0$ also $QP\in X$. Moreover, under the restriction to the support of the post-measurement state $\rho_{P,s}$ with respect to $P$, we have $P=s\one$ for $s\in\{\pm 1\}$ and thus $QP=sQ$ from which $I_{\rho_{P,s}}(QP)=I_{\rho_{P,s}}(sQ)=I_{\rho_{P,s}}(Q)$, which implies Eq.~(\ref{eq: update-closure}) since $I_{\rho_{P,s}}(P)=1$. If instead $Q\notin X$, then by Jordan-closure also $PQ\notin X$ and Eq.~(\ref{eq: update-closure}) is again satisfied since $I_{\rho_{P,s}}(\{Q\}\cap X)=0=I_{\rho_{P,s}}(\{QP\}\cap X)$.

    (iii)$\Leftrightarrow$(iv): Let $U\leq V$ be isotropic and define $S_U:=\bigl(X\cap U\bigr)$. If $X$ is Jordan-closed, then $S_U\leq U$ is a linear subspace. Moreover, for every $x\in (X\cap U^\perp)\backslash U$ and $u\in U$,
    \begin{align*}
        x+u\in X
        \quad\Longleftrightarrow\quad
        u\in S_U\; .
    \end{align*}
    Indeed, the implication from right to left follows from Jordan closure, as $\omega(x,u)=0$. Conversely, if $x,x+u\in X$, then $\omega(x,x+u)=0$, and Jordan closure gives $u=x+(x+u)\in X$. Hence, every (non-zero) $U$-coset contained in $X_U$ contains exactly $|S_U|$ elements of $X$.
    
    Now let $L\supset U$ be a Lagrangian subspace. Since $X^\circ$ is an $m$-ovoid, $|X^\circ\cap L|=m$. Of these points, precisely $|S_U|-1$ lie in $U^\circ$, while every remaining point of $X_U\cap L/U$ is a coset containing $|S_U|$ points in $X\cap L^\circ$. Consequently,
    \begin{align*}
        |X_U\cap L/U|
        =\frac{m-|S_U|+1}{|S_U|}
        =:m_U\; .
    \end{align*}
    Since this expression is independent of $L$, $X_U$ is an $m_U$-ovoid in $\Res(U)=U^\perp/U$.

    Conversely, it is sufficient to impose residue closure for one-dimensional isotropic subspaces $U=\langle u\rangle$ with $u\in X^\circ$. Indeed, let $L'$ be a maximal isotropic subspace in $\Res(\langle u\rangle)\cong W(2(n-1)-1,2)$, and let $L\in\Lag(V)$ be its preimage under $\pi_U$ such that $U\subset L$ and $L'=L/U$. Since $X^\circ$ is an $m$-ovoid, we have $|X^\circ\cap L|=m$, and since, by assumption, $X_{\langle u\rangle}$ is an $\frac{m-1}{2}$-ovoid in $\Res(\langle u\rangle)$, also $|X_{\langle u\rangle}\cap L/\langle u\rangle|=\frac{m-1}{2}$. The latter intersection counts every pair $\{x,x+u\}$ for which either $x\in X$ or $x+u\in X$ (with $u\neq x\in X^\circ$ and $\omega(x,u)=0$), and it thus follows by cardinality that if $x\in X$, then also $x+u\in X$. Since every $x$ is contained in some Lagrangian subspace $L'\in\Res(\langle u\rangle)$ for some $u$, this implies that $X$ is Jordan-closed.
\end{proof}

We will call an $m$-ovoid $X\subset V^\circ$ \emph{update-closed} if $X\cup\{0\}$ is update-closed. Correspondingly, we call a Boolean pure-state invariant $I_\psi(X)$ \emph{update-closed} if its support $X$ is update-closed.

The constraint on an $m$-ovoid to be Jordan-closed is highly restrictive. In fact, we will show that other than the trivial cases, $m=0$ and $m=2^n-1$ only a single other case occurs. In the following, we will call a Boolean pure-state invariant $I_\psi(X)$ for $X\subset\aP_n$ \emph{proper} if $X\neq\emptyset,\{\one\},\aP^\circ_n,\aP_n$.

\begin{lemma}\label{lm: Jordan-closed ovoids}
    Let $X\subset V^\circ$ be a non-empty, proper, update-closed $m$-ovoid. Then $m=2^{n-1}-1$.
\end{lemma}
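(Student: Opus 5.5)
The plan is to first show that $m$ must be of the form $2^k-1$, and then pin down $k$ by induction on $n$, using residue closure (Lm.~\ref{lm: characterising Jordan}(iii)) to pass from $W(2n-1,2)$ to $W(2n-3,2)$.

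\textbf{Step 1: $m=2^k-1$.} Let $L\in\Lag(V)$. All elements of $L$ commute, so Jordan closure of $X\cup\{0\}$ (Lm.~\ref{lm: characterising Jordan}(iv)) gives: $x,y\in X\cap L$ with $x\neq y$ implies $x+y\in X\cap L$. Hence $S_L:=(X\cap L)\cup\{0\}$ is a linear subspace of $L$. Its dimension $k$ is independent of $L$, since $|X\cap L^\circ|=m$ by Def.~\ref{def: ovoids}. So $m=2^k-1$ with $0\le k\le n$. Properness and non-emptiness exclude $k=n$: then $X^\circ\supset L^\circ$ for every Lagrangian, so $X=V^\circ$. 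They also exclude $k=0$ in the sense that $X\neq\emptyset$ forces $m\geq 1$, because every point lies in some Lagrangian.

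\textbf{Step 2: descend to the residue.} Pick $u\in X$ and set $U=\langle u\rangle$, so $|S_U|=2$. By Lm.~\ref{lm: characterising Jordan}(iii) and Lm.~\ref{lm: residue spaces}, $X_U$ is an $m_U$-ovoid in $\Res^\circ(U)\cong W(2n-3,2)$, with
\[
m_U=\tfrac{m-1}{2}=2^{k-1}-1 .
\]
Next I check that $X_U$ is again Jordan-closed. Take two distinct commuting cosets $x+U,\ y+U\in X_U$ with $x,y\in X$. Then $\omega(x,y)=0$, so $x+y\in X$ unless $x+y\in\{0,u\}$, which is excluded because the cosets are distinct. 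Hence $(x+y)+U\in X_U$. So $X_U$ is an update-closed $m_U$-ovoid of the smaller polar space, and there are three cases.

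\begin{itemize}
\item[(a)] $X_U=\Res^\circ(U)$. Then $2^{k-1}-1=2^{n-1}-1$, so $k=n$, which Step 1 excluded.
\item[(b)] $X_U$ is proper and non-empty. By induction on $n$, $2^{k-1}-1=2^{n-2}-1$, so $k=n-1$ and $m=2^{n-1}-1$, as claimed.
\item[(c)] $X_U=\emptyset$. Then $k=1$ and $m=1$, so $X^\circ$ is an ovoid in the classical sense.
\end{itemize}

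For the base case $n=1$ there are no proper non-empty $m$-ovoids: $|X|=3m$ with $m\in\{0,1\}$ forces $X=\emptyset$ or $X=V^\circ$. For $n=2$, case (c) gives exactly $m=1=2^{n-1}-1$.

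\textbf{Step 3 (main obstacle): case (c) for $n\ge 3$.} Here Jordan closure carries no information. Two distinct points of an ovoid never commute, since otherwise they would lie in a common Lagrangian and meet it twice, so the closure condition is vacuous. The claim therefore reduces to the non-existence of ovoids of $W(2n-1,2)$ for $n\geq3$. I would invoke the classical result of Thas \cite{Thas1981} that $W(2n-1,q)$ has no ovoids for $n\ge3$.

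I would also first try a self-contained argument, but I expect it not to suffice by itself. For $v\notin X$, the points of $X\cap v^\perp$ project to an ovoid of $\Res^\circ(\langle v\rangle)$, so $|X\cap v^\perp|=2^{n-1}+1$. This allows double counting of incident pairs $(x,v)$ with $x\in X$ and $v\notin X$. However, the resulting identities are consistent for all $n$, so the counting alone gives no contradiction. A possible alternative is the spectral route through Thm.~\ref{thm: all weighted quadratic invariants}: expand $\ind_X$ in the $\xi_a$ and test integrality of the eigenvalues of $M(\ind_X)|_\Lambda$. If neither route closes, I would settle for citing the known non-existence theorem.
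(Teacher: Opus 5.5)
Your proposal is correct and follows essentially the paper's route: Jordan closure makes each $(X\cap L)\cup\{0\}$ a subspace so that $m=2^k-1$; a residue argument then reduces to a $1$-ovoid; and everything rests on the non-existence of ovoids of $W(2N-1,2)$ for $N\geq 3$, which the paper also cites. The only organisational difference is in the residue step: the paper quotients once by a $(k-1)$-dimensional $U$ with $U^\circ\subset X$, whereas you induct over one-dimensional residues after checking that $X_U$ stays Jordan-closed.

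Your Step 3 need not remain a black box, as the paper's footnote observes. The points of an ovoid are pairwise non-orthogonal, and $k$ such vectors have Gram matrix $J+I$ over $\zz_2$, of rank at least $k-1$ and at most $2N$. Since an ovoid has $k=2^N+1$ points, this forces $2^N+1\le 2N+1$, which fails for $N\geq 3$.
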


\begin{proof}
    By definition, $|X\cap L|=m$ for every Lagrangian $L\in\Lag(V)$, and since $X$ is Jordan-closed (by Lm.~\ref{lm: characterising Jordan}), non-empty and proper, we have $m=2^k-1$ for $0<k<n$. If $k=1$, then $X$ itself is a $1$-ovoid, hence, $n\leq 2$ \cite{Thas2001}, and since $k<n$, this gives $n=2$ and $k=n-1$. We may therefore assume $k\geq 2$. Let $U^\circ\subset X$ be a projective subspace of dimension $k-2$ (which exists since $X$ is Jordan-closed), then $\Res(U)=U^\perp/U$ is a symplectic space of dimension $2(n-(k-1))$ by Lm.~\ref{lm: residue spaces}, and the update $X_U$ of $X$ by $U$ in Def.~\ref{def: update set} intersects every maximal isotropic subspace in $\Res(U)\backslash U$ in a single coset. Indeed, let $L\supset U$ be a Lagrangian subspace, then $X\cap L$ corresponds to a $k$-dimensional isotropic subspace, hence, $X_U\cap L/U$ is a singleton. It follows that $X_U$ is a $1$-ovoid of $W(2(n-k+1)-1,2)$. However, $1$-ovoids exist only for $n-k+1\leq 2$ \cite{Thas2001}, which together with $k<n$ implies $k=n-1$.\footnote{In terms of Pauli operators, a $1$-ovoid would define a set of pairwise anti-commuting operators. Such sets have cardinality at most $2n+1$ (see e.g. Lm.~8 and Cor.~4 in Ref.~\cite{SarkarVanDenBerg2021}), hence, do not exist beyond the two-qubit case.}
\end{proof}

By the previous lemma, every update-closed $m$-ovoid $X$ has $m=2^{n-1}-1$. In particular, given a spread $\{L_i\}_{i=1}^{2^n+1}$, $X$ defines an $h$-system $\{\tL_i\}_{i=1}^{2^n+1}$ (see Def.~\ref{def: h-systems}). Equivalently, in terms of Pauli operators, viewed with respect to a partition into $2^n+1$ maximal Abelian subgroups, $X$ thus corresponds to a set $\{\tA^\circ_i\}_{i=1}^{2^n+1}$ of (complementary) Abelian subgroups $\tA_i$ of dimension $n-1$, where complementarity means that $C(\tA_i)\cap\tA_j=\{\one\}=C(\tA_j)\cap\tA_i$ for all $i\neq j$ (see Thm.~1 in Ref.~\cite{FrembsHoehnNataleWever2026b}). Next, we relate Jordan closure of $X$ to commutator-closure of its complement.

\begin{lemma}\label{lm: Jordan-closed ovoids correspond with Pauli-Lie algebras}
    Let $\one\in J\subset\aP_n$ be a proper, update-closed pure-state invariant, equivalently, by Thm.~\ref{thm: all Boolean pure-state invariants} and Lm.~\ref{lm: Jordan-closed ovoids}, $J^\circ$ is an update-closed $(2^{n-1}-1)$-ovoid. Then $L:=\overline{J}=\aP_n\backslash J$ is commutator-closed.
\end{lemma}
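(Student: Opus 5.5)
The plan is to encode $J$ by the Boolean function $q:=\ind_L$, so that $J=q^{-1}(0)$ and $L=q^{-1}(1)$. Then aim to prove that $q$ is a quadratic refinement of $\omega$, that is,
\begin{align*}
    q(v+w)=q(v)+q(w)+\omega(v,w)\mod 2\qquad\forall v,w\in V\; .
\end{align*}
Commutator-closure of $L$ is then immediate. If $v,w\in L$ with $\omega(v,w)=1$, then $q(v+w)=1+1+1=1$, so $[P_v,P_w]\in L$. As a consistency check, $|J|=1+(2^{n-1}-1)(2^n+1)=2^{n-1}(2^n-1)$, which is exactly the number of zeros of a quadratic refinement with Arf invariant $1$. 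This suggests that the identity above is the right target.

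\textbf{Step 1 (commuting pairs).} Let $L_0\in\Lag(V)$. By Jordan closure, $J\cap L_0$ is a linear subspace of $L_0$. Since $J^\circ$ is a $(2^{n-1}-1)$-ovoid (Lm.~\ref{lm: Jordan-closed ovoids}), this subspace has exactly $2^{n-1}$ elements, so it is a hyperplane of $L_0$. Hence $L\cap L_0$ is the nontrivial coset of this hyperplane, and $q|_{L_0}$ is a nonzero linear functional. This gives the identity whenever $\omega(v,w)=0$. Concretely, two elements of $J$ sum into $J$; an element of $J$ plus an element of $L$ lies in $L$; and, the nontrivial case, two commuting elements of $L$ sum into $J$. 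Since $q_W$ is also linear on isotropic subspaces, the function $d:=q+q_W$ is linear on every Lagrangian subspace.

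\textbf{Step 2 (anticommuting pairs).} It remains to show that $d$ is globally linear. Then $q=q_W+d$ differs from a quadratic refinement by a linear term, which is itself a quadratic refinement. This is the main obstacle. Linearity on all lines of $W(2n-1,2)$ does not by itself imply global linearity, because the universal embedding of the symplectic polar space over $\zz_2$ is larger than $2n$ for $n\geq3$. So the ovoid property has to be used again.

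What I would try first is Thm.~\ref{thm: all weighted quadratic invariants}. Since $\ind_J$ is a pure-state invariant, $\ind_J=\beta_0\zeta_0+\sum_{a\in\cT_-}\beta_a\xi_a$, and equivalently, by Lm.~\ref{lm: symmetric eigenvalues}, $\widehat{\ind_J\tau}$ is constant on $\cT_+$. Write $(-1)^{q}=(-1)^{q_W}(-1)^{d}$. By Step 1, $(-1)^{d}$ restricted to each Lagrangian is a character. The goal is to show that the Fourier condition forces $\widehat{(-1)^{d}}$ to be supported on a single point $a$, so that $(-1)^d=\chi_a$ and $\ind_J=\zeta^+_a$ or $\zeta^-_a$ for some $a$ (the case $\zeta^+_a$ fits the count of $J$).

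As a fallback, I would argue locally. Take an anticommuting pair $v,w$, the hyperbolic plane $H=\langle v,w\rangle$, and a Lagrangian $L'$ of $H^\perp$. The three Lagrangians $L'\oplus\langle x\rangle$ for $x\in H^\circ$ each meet $J$ in a hyperplane, and all three hyperplanes restrict to the same subspace $J\cap L'$ of $L'$. Varying $L'$ and $H$, and using transitivity of $\Sp(2n,\zz_2)$ on hyperbolic pairs together with the residue-closure of Lm.~\ref{lm: characterising Jordan}, would then show that $d(v)+d(w)+d(v+w)$ is independent of the pair. That common value must be $0$: for $n\ge 2$, choose an anticommuting pair spanning a hyperbolic plane $H$ with $H\cap J=\{0\}$ and evaluate there.
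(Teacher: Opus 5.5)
Your Step 1 is correct: Jordan closure and the ovoid count make $J\cap L_0$ a hyperplane of every Lagrangian $L_0$, so $q=\ind_L$ is additive on commuting pairs. But commutator-closure of $L$ is a statement only about \emph{anticommuting} pairs, so Step 1 by itself proves none of the lemma. All of the content sits in Step 2, and there you give only sketches, so the proposal has a genuine gap.

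Neither sketch works as written. The Fourier route just restates the goal that $\widehat{(-1)^d}$ be a point mass. The Fourier condition alone cannot give this, since Boolean invariants mixing several $\xi_a$ exist, and you do not say how Step 1 would enter. The local route rests on transitivity of $\Sp(2n,\zz_2)$ on hyperbolic pairs, but $\Sp(2n,\zz_2)$ does not preserve $J$. So it cannot carry the value of $d(v)+d(w)+d(v+w)$ from one pair to another. The group that does preserve $J$ is only known to be transitive once $J$ is identified as a quadric (Lm.~\ref{lm: transitivity}), which is circular. Your normalisation via a hyperbolic plane $H$ with $H\cap J=\{0\}$ also presupposes a hyperbolic line inside $L$, i.e.\ an instance of the claim itself.

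The gap can be closed from Step 1 alone plus $|J|$. For $\omega(u,v)=1$ set $G(u,v):=q(u)+q(v)+q(u+v)$; this depends only on the line $\langle u,v\rangle$. Expanding $q(v+x)$ and $q(u+v+x)$ by Step 1 gives $G(u,v+x)=G(u,v)$ for every $x\in\langle u,v\rangle^\perp$.

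Now suppose also $\omega(u,v')=1$. Then either $v+v'\in\langle u,v\rangle^\perp$, or $v'=u+v+y'$ with $y'\in\langle u,v\rangle^\perp$. If $y'=0$ this is the same line. If $y'\neq 0$, two moves $v\mapsto v+z\mapsto v'$ connect them, with $z\in\langle u,v\rangle^\perp$ and $\omega(z,y')=1$. Hence $G$ is constant on hyperbolic lines through each point. It is then globally constant, $G\equiv c$, because two distinct nonzero vectors either span a hyperbolic line or both anticommute with some common $z$.

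This gives $q(u+v)=q(u)+q(v)+c\,\omega(u,v)$ for all $u,v$. If $c=0$, $J$ would be a subspace of order $2^{2n-1}$ or $2^{2n}$, contradicting $|J|=2^{2n-1}-2^{n-1}$. So $c=1$, $q$ is a quadratic refinement, and your commutator-closure argument goes through. Equivalently, the functions linear on all totally isotropic lines of $W(2n-1,2)$ are exactly $\ell$ and $q_W+\ell$. Your remark on the universal embedding therefore overstates the obstacle: the excess over $2n$ is a single dimension, spanned by $q_W$, for every $n\geq 2$.

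Completed this way, your route differs from the paper's. The paper fixes a spread, writes $\oa=g^j_i\ta_i$ and $\oa'=g^i_j\ta_j$, and uses $[\oa,\oa']=g^j_ig^i_j\{\ta_i,\ta_j\}$ together with Jordan closure to keep the bracket in $L$. Your route gives the stronger conclusion that $\ind_L$ is a quadratic refinement, of Arf invariant $1$ by the zero count. That is the forward direction of Thm.~\ref{thm: pure-state invariants as quadratic refinements}, obtained without the Pauli Lie algebra classification.
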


\begin{proof}
    Consider a collection of pairwise disjoint maximal Abelian subgroups $\{A_i\}_{i=1}^{2^n+1}$ of dimension $n$, equivalently (after removing the identity element) a complete spread of $PG(V,2)$ (with $V\cong\aP_n$). Since $J$ is Jordan-closed (by Lm.~\ref{lm: characterising Jordan}) and a pure-state invariant, we have $|J\cap A_i|=2^{n-1}$ for all $i$ by Lm.~\ref{lm: Jordan-closed ovoids}, that is, $J$ defines a set $\{\tA_i\}_{i=1}^{2^n+1}$ of Abelian subgroups $\tA_i<A_i$ of dimension $n-1$. Note that for any $1\leq i\neq j\leq 2^n+1$ there exist elements $g^j_i\in A_i$, $g^i_j\in A_j$ with $\langle\tA_i,g^j_i\rangle=A_i$, $\langle\tA_j,g^i_j\rangle=A_j$ (that is, $g^j_i\notin\tA_i$ and $g^i_j\notin\tA_j$) and such that $[g^j_i,\tA_j]=0=[g^i_j,\tA_i]$, and thus also $[g^j_i,g^i_j]\neq 0$. To see this, note first that, for every $1\leq i\leq 2^n+1$, there exists an element $g^j_i\in A_i$ such that $[g^j_i,\tA_j]=0$. Indeed, w.l.o.g. that is, up to conjugation by a Clifford unitary (cf. Lm.~1 in Ref.~\cite{FrembsHoehnNataleWever2026a}), let $A_i=\langle Z_k\rangle_{k=1}^n$, $\tA_i=\langle Z_k\rangle_{k=2}^n$ and $A_j=\langle X_k\rangle_{k=1}^n$, $\tA_j=\langle X_k\rangle_{k=2}^n$ then $g^j_i=Z_1$. But since $J$ is a proper Jordan-closed pure-state invariant, we must have $g^j_i\notin J$, hence, $g^j_i\notin\tA_i$ (similarly for $i\leftrightarrow j$). Moreover, $[g^j_i,g^i_j]\neq 0$ since otherwise $\langle g^j_i,g^i_j,\tA_j\rangle$ is an Abelian subgroup of dimension greater than $n$, which is impossible.

    Now, let $\oa,\oa'\in L=\overline{J}$, then from above we can write $\oa=g^j_i\ta_i$ and $\oa'=g^i_j\ta_j$ for $\ta_i\in\tA_i$, $\ta_j\in\tA_j$ and $1\leq i,j\leq 2^n+1$. If $i=j$, the commutator vanishes, while for $i\neq j$, the commutator reads
    \begin{equation}\label{eq: commutator from anti-commutator closure}
    \begin{aligned}
        [\oa,\oa']
        =\frac{1}{2}(g^j_i\ta_ig^i_j\ta_j-g^i_j\ta_jg^j_i\ta_i) 
        =\frac{1}{2}(g^j_ig^i_j\ta_i\ta_j-g^i_jg^j_i\ta_j\ta_i) 
        =\frac{1}{2}(g^j_ig^i_j\ta_i\ta_j+g^j_ig^i_j\ta_j\ta_i)
        =g^j_ig^i_j\{\ta_i,\ta_j\}\; ,
    \end{aligned}
    \end{equation}
    where we used $[g^j_i,\tA_j]=0=[g^i_j,\tA_i]$ and $[g^j_i,g^i_j]\neq 0$ from above. Next, observe that $g^j_ig^i_j\notin J$. Indeed, $[g^j_ig^i_j,\tA_i]=0$ and $g^j_ig^i_j\notin\tA_i$, hence, if $g^j_ig^i_j\in J$ then for the maximal Abelian subgroup $A=\langle\tA_i,g^j_ig^i_j\rangle$ we find $|J\cap A|=2^n$, in contradiction with $I_\psi(J)=2^{n-1}$ for all pure states. Finally, $[g^j_ig^i_j,\ta_i\ta_j]=0$ and since $J$ is Jordan-closed, $\{\ta_i,\ta_j\}\in J$. Together this implies that $L$ is closed under commutators: indeed, by Eq.~(\ref{eq: commutator from anti-commutator closure}), $[\oa,\oa']\sim hb$ with $h=g^j_ig^i_j\in L$, $b=\{\ta_i,\ta_j\}\in J$ and $[h,b]=0$. If $b=\{\tilde a_i,\tilde a_j\}=0$, then $[a,a']=0$ and there is nothing to prove. Otherwise, if $[\oa,\oa']\in J$, then Jordan-closure of $J$ would imply $h=(hb)b\in J$, contradicting that $h\in L$.
\end{proof}

Finally, we uniquely identify $L$ with a generating set of a copy of the symplectic (Pauli-)Lie algebra $\spf(2^{n-1})\subset\su(2^n)$ (see Def.~\ref{def: Pauli Lie algebra} in Sec.~\ref{sec: Cartan decomposition}).

\begin{theorem}\label{thm: all update-closed pure-state invariants}
    Let $n\geq 2$, and let $\one\in J\subset\aP_n$ such that $J^\circ$ is a proper, update-closed $(2^{n-1}-1)$-ovoid. Then there exists $\lf\cong\spf(2^{n-1})\subset\su(2^n)$ such that  $L:=\overline{J}\cong\lf|_\aP$. Conversely, every such copy defines a proper, update-closed $(2^{n-1}-1)$-ovoid.
\end{theorem}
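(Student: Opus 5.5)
By Lm.~\ref{lm: Jordan-closed ovoids correspond with Pauli-Lie algebras}, $L=\overline{J}$ is commutator-closed, so $\lf:=\mathrm{span}_\R\{iP\mid P\in L\}$ is a Lie subalgebra of $\su(2^n)$ with a basis of Pauli operators. It remains to identify $\lf$ with $\spf(2^{n-1})$, and then to show the converse. The first counting step is $|J|=1+(2^{n-1}-1)(2^n+1)=2^{n-1}(2^n-1)$, which gives $|L|=4^n-|J|=2^{n-1}(2^n+1)=\dim\spf(2^{n-1})$. Dimension alone does not pin down $\lf$, so I will exhibit an explicit invariant structure.

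The plan for the forward direction runs through the Bell-basis picture of Sec.~\ref{sec: weighted case}. I first show that $\ind_{J^\circ}$ equals $\zeta^+_a$ for some $a\in\cT_-$ after a suitable relabelling of Hermitian phases. Equivalently, $J=\{v\mid \chi_a(v)\tau(v)=+1\}$, and in terms of the refinement $q_a:=q_W+\omega(a,\cdot)$, $J$ is the zero set of $q_a$ with $\Arf(q_a)=1$. This is the main obstacle. I would show that a Jordan-closed $(2^{n-1}-1)$-ovoid is the singular set of a quadratic form polarising to $\omega$. To do this, define $q(v)=0$ for $v\in J$ and $q(v)=1$ otherwise, and verify $q(v+w)=q(v)+q(w)+\omega(v,w)$. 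For commuting pairs this reduces to Jordan closure of $J$ together with the statement that two commuting elements of $L$ sum into $J$. The latter should follow from the ovoid count: in any Lagrangian $A\supset\langle v,w\rangle$, $J\cap A$ is a hyperplane, so $A\setminus J$ is its complementary coset, which is closed under adding two elements back into the hyperplane. For anticommuting pairs I use commutator closure of $L$ from Lm.~\ref{lm: Jordan-closed ovoids correspond with Pauli-Lie algebras}, together with the analogous statement for $J$ obtained from the $g^j_i$ construction in its proof. Once $q$ is known to be a quadratic refinement, the count $|J|=2^{n-1}(2^n-1)$ forces $\Arf(q)=1$, i.e.\ the minus type. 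Since every refinement has the form $q_W+\omega(a,\cdot)$, we get $a\in\cT_-$.

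With $J=q_a^{-1}(0)$ in hand, I construct the structure directly. Let $W_a$ be the antisymmetric Pauli, so $W_a^T=-W_a$. Then $\Theta:=W_aK$, with $K$ complex conjugation, satisfies $\Theta^2=W_a\overline{W_a}=W_aW_a^T=-\one$, so $\Theta$ is a quaternionic structure on $\C^{2^n}$. For $v\in V$ we have $\Theta(iW_v)\Theta^{-1}=-i\,W_a\overline{W_v}W_a^{-1}=-i\,\tau(v)\chi_a(v)W_v$. Hence $iW_v$ commutes with $\Theta$ exactly when $\tau(v)\chi_a(v)=-1$, i.e.\ when $v\in L$. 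The commutant of $\Theta$ in $\su(2^n)$ is $\spf(2^{n-1})$, and it is spanned by $\{iW_v\}_{v\in L}$, because the Paulis form a basis and $\Theta$ maps each $iW_v$ to $\pm iW_v$. Therefore $\lf\cong\spf(2^{n-1})$ and $L=\lf|_\aP$.

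For the converse, I take a copy $\lf\cong\spf(2^{n-1})$ spanned by Paulis. The key step is to show that its invariant quaternionic structure can be taken Pauli-generated, i.e.\ of the form $W_aK$ up to a Clifford relabelling. For this I would use that the commutant of a Pauli-spanned $\lf$ is normalised by the Pauli group, and argue as in the previous paragraph. The previous computation then gives $L=\{v\mid \xi_a(v)=-1\}$, so $J$ is the support of $\zeta^+_a$. By Cor.~\ref{cor: Boolean quadratic invariants} and Thm.~\ref{thm: all Boolean pure-state invariants}, $J^\circ$ is a $(2^{n-1}-1)$-ovoid. Jordan closure follows because $q_a$ is a refinement: for commuting $v,w\in J$ we get $q_a(v+w)=0$. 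Finally, Lm.~\ref{lm: characterising Jordan} shows that $J^\circ$ is update-closed.
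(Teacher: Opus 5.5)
Your proposal is correct in outline, but it takes a genuinely different route from the paper in both directions.

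\textbf{Forward direction.} The paper counts $|L|=2^{n-1}(2^n+1)$ and then appeals to the full classification of Pauli Lie algebras in Ref.~\cite{AguilarEtAl2024}. It rules out the orthogonal, free-fermionic, direct-sum and controlled families by dimension, and absorbs $n=2$ via $\sof(5)\cong\spf(2)$. This yields only the abstract isomorphism type; the identification of $J$ with the zero set of some $q_a$, $a\in\cT_-$, comes later, via Eq.~(\ref{eq: symplectic Paulis}) and Cor.~\ref{cor: completeness}. You instead prove directly that $\ind_L$ is a quadratic refinement of $\omega$. You then read off $\Arf=1$ from the zero count, write the refinement as $q_W+\omega(a,\cdot)$ with $a\in\cT_-$ by Lm.~\ref{lm: Arf reps}, and exhibit the quaternionic structure $W_aK$, whose fixed subalgebra is spanned by $\{iW_v\}_{v\in L}$. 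This is self-contained: it needs no external classification and no special case $n=2$, and it gives the stronger conclusion $J=J_a$ in one stroke.

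\textbf{Converse.} The paper starts from $L$ already in the form of Eq.~(\ref{eq: symplectic Paulis}) for an antisymmetric Pauli $Q$. It thus takes for granted that every Pauli-spanned copy has a Pauli symplectic form. It then proves Jordan closure from $Q\oS Q=\oS^T$, and the ovoid property by excluding $n$-dimensional Abelian subgroups from $J$ and counting over a spread. You address that reduction explicitly. You then get the ovoid property for free from Cor.~\ref{cor: Boolean quadratic invariants} and Thm.~\ref{thm: all Boolean pure-state invariants}, and Jordan closure from the refinement identity, which is cleaner.

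\textbf{Two steps need an extra line.}

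First, Lm.~\ref{lm: Jordan-closed ovoids correspond with Pauli-Lie algebras} only gives $[L,L]\subseteq L$. The case $v,w\in J$ with $\omega(v,w)=1$, where you need $v+w\in L$, is not stated there, but it follows from the cases you already have.
\begin{itemize}
\item Pick $\ell\in L$ with $(\omega(v,\ell),\omega(w,\ell))\neq(1,1)$. This exists because the coset where both values are $1$ has size $4^{n-1}<|L|$.
\item By the mixed rules, $v+\ell\in L$ if $\omega(v,\ell)=0$ and $v+\ell\in J$ if $\omega(v,\ell)=1$, and likewise for $w$.
\item Writing $v+w=(v+\ell)+(w+\ell)$, the $L\times L$ anticommuting rule (case $(0,0)$) or the $J\times L$ commuting rule (cases $(1,0)$, $(0,1)$) gives $v+w\in L$.
\end{itemize}

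Second, in the converse, \emph{its invariant quaternionic structure} requires existence and uniqueness up to scalar.
\begin{itemize}
\item Since $2^{n-1}\geq 2$, the defining representation is the only faithful $2^n$-dimensional representation of $\spf(2^{n-1})$. So $\C^{2^n}$ is irreducible and pseudoreal under $\lf$, and the forms $B$ with $xB+Bx^T=0$ for all $x\in\lf$ span a line, generated by an antisymmetric $B$.
\item Pauli conjugation preserves $\lf$, hence this line. It acts by $B\mapsto W_vBW_v^T$, which is diagonal in the Pauli basis with eigenvalue $\tau(v)\chi_w(v)$ on $W_w$.
\item Since the characters $\chi_w$ are pairwise distinct, $B\propto W_a$ with $a\in\cT_-$. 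No Clifford relabelling is needed.
\item Then $\lf\subseteq\{x\in\su(2^n)\mid \Theta x=x\Theta\}$ for $\Theta=W_aK$. Equality, and hence $L=\{v\mid\xi_a(v)=-1\}$, follows from equal dimensions.
\end{itemize}
Properness of $J$ is immediate from $|J|=2^{n-1}(2^n-1)$.
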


\begin{proof}
    By Lm.~\ref{lm: Jordan-closed ovoids correspond with Pauli-Lie algebras}, $L=\overline{J}$ generates a (Pauli-)Lie algebra (see Def.~\ref{def: Pauli Lie algebra} in Sec.~\ref{sec: Cartan decomposition}). Now, $L$ contains $4^n-((2^n+1)(2^{n-1}-1)+1)=2^{n-1}(2^n+1)$ non-zero elements. We now invoke the full classification of Pauli Lie algebras in Thm.~2 of Ref.~\cite{AguilarEtAl2024}. Besides the simple families, this classification contains only their direct-sum and controlled extensions. We show that the only Pauli-Lie algebra in this classification that matches this dimension is $\spf(2^{n-1})\subset\su(2^n)$. (Recall that $\mathfrak{sp}(k)$ has real dimension $k(2k+1)$.) Indeed, for $n\geq 3$ the other simple Pauli-Lie algebras in Ref.~\cite{AguilarEtAl2024} have dimension $\dim(\su(2^n))=4^n-1$, $\dim(\sof(2^n))=2^{n-1}(2^n-1)$, and $\dim(\sof(m))\leq n(2n+1)$ for $m\leq 2n+1$, while direct sums and controlled cases are necessarily of smaller dimension: for direct sum decompositions, one has
    \begin{align*}
        \sum_j(4^{n_j}-1)
        \leq 4^{n-1}+2
        <2^{n-1}(2^n+1)\; ,
    \end{align*}
    while for controlled Pauli-Lie algebras with $c\geq 1$ and $m=n-c$,
    \begin{align*}
        2^c(4^m-1)
        \leq 2(4^{n-1}-1)
        =\frac{1}{2}4^n-2
        <\frac{1}{2}4^n+2^{n-1}
        =2^{n-1}(2^n+1)\; .
    \end{align*}
    Consequently, if the sets $L=\overline{J}\subset\aP_n$ with $J$ an update-closed pure-state invariant exist, then they arise from restrictions $L=\lf|_{\aP_n}$ of symplectic Pauli Lie algebras $\lf\cong\spf(2^{n-1})\subset\su(2^n)$. For $n=2$, the same argument does not rule out the free-fermionic Lie algebra $\sof(2n+1)$; indeed, $\sof(5)\cong\spf(2)$, hence, this case also fits the claim.
    
    To prove existence, we show that $J$ is indeed Jordan-closed and $J^\circ$ an $2^{n-1}-1$-ovoid. First, we show that $J=\oL$ is Jordan-closed, that is, closed under anti-commutators.\footnote{We mention in passing that this property also holds for the special orthogonal Pauli-Lie algebra $\soP(2^n)$.} To see this, note that
    \begin{align}\label{eq: symplectic Paulis}
    \begin{split}
        L&=\{S\in\aP_n\mid S^TQ+QS=0, \mathrm{\ where\ } Q^T=-Q\in\aP_n\} \\
        &=\{S\in\aP_n\mid QSQ=-S^T, \mathrm{\ where\ } Q^T=-Q\in\aP_n\}\; ,\\
        \Longleftrightarrow \quad J=\oL&=\{\oS\in\aP_n\mid Q\oS Q=\oS^T, \mathrm{\ where\ } Q^T=-Q\in\aP_n\}\; ,
    \end{split}
    \end{align}
    from which it is easy to prove the claim: let $\oS,\oS'\in\oL$, then
    \begin{align*}
        Q\{\oS,\oS'\}Q
        =\{Q\oS Q,Q\oS'Q\}
        =\{\oS^T,\oS'^T\}
        =\{\oS,\oS'\}^T\; ,
    \end{align*}
    hence, $\{\oS,\oS'\}\in J$. It follows that $J$ can be partitioned into Abelian subgroups, and since $|J^\circ|=(2^n+1)(2^{n-1}-1)$, this implies that $J^\circ$ is an $2^{n-1}-1$-ovoid if $J$ contains no Abelian subgroup of dimension $n$. To this end, note first that since $\aP_n$ is a basis for $\mathrm{Mat}_{2^n}(\C)$ there are $\frac{2^n(2^n-1)}{2}=2^{n-1}(2^n-1)$ anti-symmetric matrices $Q\in\aP_n$, and thus an equal number of (Clifford-equivalent) representations of $L$ and $J$ in Eq.~(\ref{eq: symplectic Paulis}). Now, assume that $A\subset J$ is an Abelian subgroup of dimension $n$. Since the group of symplectic transformations $\Sp(2n,\zz_2)$ acts transitively on Lagrangian subspaces, since every element $C\in \Sp(2n,\zz_2)$ is induced by a Clifford unitary via the surjection $\Cl_n\ra\Sp(2n,\zz_2)$ and since Clifford unitaries preserve Jordan closure, we may assume that $A=A_X=\langle X_1,\cdots,X_n\rangle$. Next, observe that the anti-symmetric Pauli operator $Q$ in Eq.~(\ref{eq: symplectic Paulis}) is mapped to an anti-symmetric Pauli under the transformation $Q\mapsto CQC^T$. Indeed, $CQC^T$ is anti-symmetric, $(CQC^T)^T=CQ^TC^T=-CQC^T$, and a Pauli operator since $CQC^T=(CQC^\dagger)(CC^T)$ and $CC^T=C\overline{C}^\dagger$ is in the kernel of $\Cl_n\ra\Sp(2n,\zz_2)$ since $C$ and $\overline{C}$ induce the same symplectic transformation. Consequently, $CQC^T$ corresponds with a Pauli operator containing an odd number of $Y$ tensor factors. However, for any such operator $Q$ there exists an element $P\in A_X$ which does not commute with $Q$, hence, we arrive at the contradiction $J\supset A_X\ni P\notin J$. Finally, since every Lagrangian is included in some complete spread, summing the contributions $|J\cap L^\circ|\leq 2^{n-1}-1$ of a complete spread $\{L^\circ_i\}_{i=1}^{2^n+1}$ and comparing with $|J^\circ|=(2^n+1)(2^{n-1}-1)$ implies that $J^\circ$ is an update-closed $(2^{n-1}-1)$-ovoid, equivalently (by Thm.~\ref{thm: all Boolean pure-state invariants} and Lm.~\ref{lm: Jordan-closed ovoids}) $J$ is a proper, update-closed Boolean pure-state invariant.
\end{proof}

Thm.~\ref{thm: all update-closed pure-state invariants} shows that symplectic Pauli-Lie algebras are Boolean pure-state invariants, and that every proper, update-closed invariant is a complement thereof. In particular, the six pure-state invariants for two qubits (`pentagon identities') in Ref.~\cite{HoehnWever2017} are of this type. More generally, by Thm.~2 in Ref.~\cite{FrembsHoehnNataleWever2026a}, there exist $2^{n-1}(2^n-1)$ distinct, but Clifford-conjugate subsets $L\subset\aP_n$ with $L\cong\lf|_{\aP_n}$. Indeed, the space of anti-symmetric matrices in $\mathrm{Mat}_{2^n}(\C)$ (of which $\aP_n$ is a generating set) has dimension $\frac{2^n(2^n-1)}{2}$, hence, there are that many choices for the anti-symmetric matrix $Q$ in Eq.~(\ref{eq: symplectic Paulis}). We stress that the additional assumption on pure-state invariants cannot be dropped from Thm.~\ref{thm: all update-closed pure-state invariants}: there exist many more pure-state invariants that are not update-closed (see Sec.~\ref{sec: Boolean pure-state invariants}).

In the following sections, we will focus on update-closed Boolean pure-state invariants. The attentive reader will have noticed that the proofs of Thm.~\ref{thm: all weighted quadratic invariants} (and Cor.~\ref{cor: Boolean quadratic invariants}), as well as that of Thm.~\ref{thm: all Boolean pure-state invariants} make heavy use of the quantities $\xi_r$. In the next section, we will study these in more detail.

\section{Structure of update-closed Boolean pure-state invariants}\label{sec: structure}

In this section, we highlight various perspectives on update-closed Boolean pure-state invariants - moving from concrete to abstract. We begin with a parametrisation in terms of the Weyl representation of a canonical pair of complete Lagrangian subspaces. In particular, we identify update-closed pure-state invariants with the family of invariants in Ref.~\cite{FrembsHoehnNataleWever2026a}. In Sec.~\ref{sec: Cartan decomposition}, we focus on the fact that they correspond with generating sets of Pauli Lie algebras, and their recent classification in Ref.~\cite{AguilarEtAl2024,GargiuloHerringerZeier2026}. In Sec.~\ref{sec: quadratic refinement}, we identify these with quadratic refinements of the symplectic structure with Arf invariant $1$ (see Eq.~(\ref{eq: Arf invariant}) below), equivalently of Witt index $n-1$. In Sec.~\ref{sec: quaternionic structure}, we distinguish these two cases from the perspective of the threefold way \cite{Dyson1962,Baez2012}, specifically of anti-unitary involutions $\Theta$ which equip the complex Hilbert space either with a real - for $\Theta^2=+1$ - or quaternionic - for $\Theta^2=-1$ - structure. We discuss in detail the difference between these two cases, relating pure-state invariants with the quaternionic case. Finally, based on this, in Sec.~\ref{sec: invariant set stabiliser} we analyse the stabiliser symmetries of pure-state invariants.

The mathematics in this section has been long known in the literature on symplectic polar spaces \cite{Arf1941,Dye1978,Taylor1992,BrownHumphries1986,Cuypers1994,Munemasa1996,Grove2002}, yet only very recently has it entered a wider audience within the quantum information community. In particular, some of the results we obtain overlap with two independent and concurrent works that appeared while this manuscript was in preparation: Ref.~\cite{Cuypers2026} studies Pauli Lie algebras via the quadratic space $\Pi_n/\{\pm 1\}$ and obtains the orthogonal/symplectic dichotomy (see Sec.~\ref{sec: quaternionic structure} below) in terms of the type of a geometric hyperplane; and Ref.~\cite{GargiuloHerringerZeier2026} classifies invariant bilinear forms for Pauli strings and relates them to binary quadratic forms. The results below were obtained independently, building on Ref.~\cite{FrembsHoehnNataleWever2026a,FrembsHoehnNataleWever2026b}; we state them in full for self-containedness and indicate the correspondence with the formulations of Ref.~\cite{Cuypers2026,GargiuloHerringerZeier2026} as we go.\\

\textbf{Weyl form and symplectic Pauli Lie algebras.} We begin by providing an explicit representation for the pure-state invariants in Thm.~\ref{thm: all update-closed pure-state invariants}. Clearly, $\zeta_0$ corresponds with the identity component, which yields a trivial invariant for all states by normalisation. Moreover, for pure states $I_\psi(\zeta_1)=I_\psi(V)=2^n$. We show that up to these ambiguities the family of pure-state invariants in Ref.~\cite{FrembsHoehnNataleWever2026a} is complete. For convenience, we reprint Def.~1 from Ref.~\cite{FrembsHoehnNataleWever2026a} here.

\begin{definition}\label{def: family of pure-state invariants}
    For $n\geq 1$, let $A_X,A_Z\leq \aP_n$ be two maximal Abelian and complementary subgroups, equivalently, two non-overlapping Lagrangian subspaces, that is, $A_X\cap A_Z=\{\one\}$, and thus $A_X\oplus A_Z\cong \aP_n$. For any subgroups $\tA_X\subset A_X$, $\tA_Z\subset A_Z$ of dimension $n-1$ and elements $g^Z_X\in A_X\backslash\tA_X$, $g^X_Z\in A_Z\backslash\tA_Z$ such that $[g^Z_X,\tA_Z]=0=[g^X_Z,\tA_X]$ (and thus $[g^Z_X,g^X_Z]\neq 0$), define\footnote{As shown in Lm.~1 in Ref.~\cite{FrembsHoehnNataleWever2026a}, given an Abelian subgroup $\tA_Z\subset A_Z$ and an element $g^X_Z\in A_Z\backslash\tA_Z$, there exists a unique subgroup $\tA_X\leq A_X$ and element $g^Z_X\in A_X\backslash\tA_X$ satisfying these properties.}
    \begin{align}\label{eq: family of pure-state invariants}
    \begin{split}
        J(\tA_Z,g^X_Z)
        &:=\{\{\ta_X,\ta_Z\}\mid \ta_X\in\tA_X,\ta_Z\in\tA_Z\}\ \cupdot\ 
        \{\{g^Z_X\ta_X,g^X_Z\ta_Z\}\mid \ta_X\in\tA_X,\ta_Z\in\tA_Z\}\\
        &\hspace{.5cm}\cupdot \{[\ta_X,g^X_Z\ta_Z]\mid \ta_X\in\tA_X,\ta_Z\in\tA_Z
        \}\ \cupdot\
        \{[g^Z_X\ta_X,\ta_Z]\mid \ta_X\in\tA_X,\ta_Z\in\tA_Z\}\; ,
    \end{split}
    \end{align}
    where, according to our convention in Sec.~\ref{sec: Pauli observables}, vanishing brackets are omitted.
\end{definition}

\begin{corollary}\label{cor: completeness}
     Proper, update-closed pure-state invariants correspond with the family in Def.~\ref{def: family of pure-state invariants}, equivalently with a single Boolean weight $\zeta^+_a=\frac{1}{2}(1+\xi_a)$ for $a\in\cT_-$ in Thm.~\ref{thm: all weighted quadratic invariants} and Cor.~\ref{cor: Boolean quadratic invariants}.
\end{corollary}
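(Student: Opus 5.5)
The plan is to route both descriptions through the characterisation in Thm.~\ref{thm: all update-closed pure-state invariants}, and specifically through the explicit form in Eq.~(\ref{eq: symplectic Paulis}). That theorem shows that every proper, update-closed Boolean pure-state invariant has the form $J=\{\oS\in\aP_n\mid Q\oS Q=\oS^T\}$ for some anti-symmetric Hermitian Pauli $Q$, and that every such set is one. It therefore suffices to show two things: that the sets of this form are exactly the supports of the weights $\zeta^+_a$ with $a\in\cT_-$, and that they are exactly the sets $J(\tA_Z,g^X_Z)$ of Def.~\ref{def: family of pure-state invariants}.

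\textbf{Step 1 (identification with $\zeta^+_a$).} Write $Q=W_a$. Since $Q^T=-Q$, we have $\tau(a)=-1$, so $a\in\cT_-$, and conversely every $a\in\cT_-$ yields an admissible $Q$. Because $Q$ is a Hermitian Pauli, $Q^2=\one$, so by Eq.~(\ref{eq: Weyl commutation relations}) we get $QW_vQ=\chi_a(v)W_v$. On the other hand $W_v^T=\tau(v)W_v$. Hence the condition $QW_vQ=W_v^T$ holds if and only if $\chi_a(v)\tau(v)=1$, i.e.\ $\xi_a(v)=1$, i.e.\ $\zeta^+_a(v)=1$. This shows $J=\zeta^{+\,-1}_a(1)$, and the identity is included since $\xi_a(0)=1$.

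As a consistency check, both families have $|\cT_-|=2^{n-1}(2^n-1)$ members. Moreover, Cor.~\ref{cor: Boolean quadratic invariants} together with the counting $|J_s|$ from the proof of Thm.~\ref{thm: all Boolean pure-state invariants} (applied with $a$ in place of $s$, now for a refinement of the other Arf type) reproduces $|J^\circ|=(2^n+1)(2^{n-1}-1)$.

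\textbf{Step 2 (identification with Def.~\ref{def: family of pure-state invariants}).} I would first reduce to a normal form. By Lm.~1 of Ref.~\cite{FrembsHoehnNataleWever2026a} and transitivity of $\Sp(2n,\zz_2)$ on the relevant data, any choice $(A_X,A_Z,\tA_Z,g^X_Z)$ can be brought by a Clifford unitary to $A_Z=\langle Z_k\rangle$, $\tA_Z=\langle Z_k\rangle_{k\ge2}$, $g^X_Z=Z_1$. The partner data is then forced to be $\tA_X=\langle X_k\rangle_{k\ge2}$ and $g^Z_X=X_1$.

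In this frame I would check case by case that $J(\tA_Z,Z_1)=\{v\mid\xi_{a}(v)=1\}$ for the single anti-symmetric Pauli $W_a=Y_1$ (tensored with identities). Splitting $v=(a_1,b_1;a',b')$, the four pieces of Eq.~(\ref{eq: family of pure-state invariants}) correspond as follows:
\begin{itemize}
\item[(1)] $(a_1,b_1)=(0,0)$ with $a'\cdot b'$ even, where $\chi_a=1=\tau$;
\item[(2)] $(a_1,b_1)=(1,1)$ with the appropriate parity, where again $\chi_a=1$;
\item[(3),(4)] exactly one of $a_1,b_1$ equal to $1$, with nonvanishing commutator, where $\chi_a=-1$ and, after recomputing the parity of $a\cdot b$ including the first qubit, $\tau=-1$.
\end{itemize}
Comparing cardinalities then closes any gap.

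\textbf{Main obstacle: transporting Step 2 back to an arbitrary frame.} The sets $\zeta^{+\,-1}_a(1)$ are not literally Clifford covariant, because they involve the transpose. I would handle this as in the proof of Thm.~\ref{thm: all update-closed pure-state invariants}. A Clifford $C$ maps $\{S\mid QSQ=S^T\}$ to $\{S\mid Q'SQ'=S^T\}$ with $Q'=CQC^T$ (up to the kernel element $CC^T$), and this $Q'$ is again an anti-symmetric Pauli. Meanwhile $J(\tA_Z,g^X_Z)$ is manifestly Clifford covariant, since it is built from brackets.

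With both facts in hand, every member of the family of Def.~\ref{def: family of pure-state invariants} is of the form $\zeta^+_{a'}$ for some $a'\in\cT_-$. For the converse direction, the orbit count of $2^{n-1}(2^n-1)$ on both sides (Thm.~2 of Ref.~\cite{FrembsHoehnNataleWever2026a}), combined with injectivity of $a\mapsto\zeta^+_a$ from Step 1, gives the reverse inclusion.
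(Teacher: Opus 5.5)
Your proposal is correct and follows essentially the paper's route. You reduce to the normal frame via Lm.~1 of Ref.~\cite{FrembsHoehnNataleWever2026a}, identify $J(\tA_Z,g^X_Z)$ there with the level set $\xi_a=1$ for $W_a\sim Y_1$, and match the complementary level set $\xi_a=-1$ with Eq.~(\ref{eq: symplectic Paulis}) and Thm.~\ref{thm: all update-closed pure-state invariants}; the paper does the identification in one line, writing the defining condition as $\omega(v_X,v_Z)=\delta_X(v_X)+\delta_Z(v_Z)$ with $a$ the label of $g^Z_Xg^X_Z$, rather than with your four-case split. Your explicit transport back to an arbitrary frame via $Q\mapsto CQC^T$ (needed because $\tau$ is tied to the Weyl frame) and your counting argument for the reverse inclusion make precise two steps that the paper's ``without loss of generality'' leaves implicit.
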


\begin{proof}
    Without loss of generality (see Lm.~1 in Ref.~\cite{FrembsHoehnNataleWever2026a}), we may take $A_Z=\langle Z_1,\cdots,Z_n\rangle$, $A_X=\langle X_1,\cdots,X_n\rangle$, as well as $\tA_Z=\langle Z_1,\cdots,Z_{n-1}\rangle$ and $g^X_Z=Z_n$. Eq.~(\ref{eq: family of pure-state invariants}) can thus be re-written in terms of the character $\delta_X:A_X\times A_Z\ra\zz_2$ given by $\delta_X(a_Xa_Z)=0$ if $a_X\in\tA_X$ (similarly for $\delta_Z$),
    \begin{equation}\label{eq: J from P1}
    \begin{aligned}
        J(\tA_Z,g^X_Z)
        &=\{v=(v_X,v_Z)\in V\mid\omega(v_X,v_Z)=\delta_X(v_X)+\delta_Z(v_Z)\mod 2\}\\
        &=\{v=(v_X,v_Z)\in V\mid\tau(v)=(-1)^{\omega(a,v_X)+\omega(a,v_Z)}\}\\
        &=\{v\in V\mid\xi_a(v)=\chi_a(v)\tau(v)=1\}
        =:J_a\; ,
    \end{aligned}
    \end{equation}
    where we defined $a:=(v^X_Z,v^Z_X)$, used that $(-1)^{\omega(v_X,v_Z)}=\tau(v)$ for $v=(v_X,v_Z)$ (by which $a\in\cT_-$), and $\delta_X(v_X)=\omega(a,v_X)\mod 2$ as well as $\delta_Z(v_Z)=\omega(a,v_Z)\mod 2$. It follows that every element in Def.~\ref{def: family of pure-state invariants} corresponds (up to the trivial invariant $\zeta_0$) with the Boolean pure-state invariants corresponding to a single Boolean weight $\zeta^+_a$ in Cor.~\ref{cor: Boolean quadratic invariants} (see also Thm.~\ref{thm: all weighted quadratic invariants}). The complementary condition $\xi_a(v)=-1$ is precisely the defining relation of the symplectic Pauli Lie algebra in Eq.~(\ref{eq: symplectic Paulis}), hence, Def.~\ref{def: family of pure-state invariants} also corresponds with the update-closed Boolean pure-state invariants by Thm.~\ref{thm: all update-closed pure-state invariants}.
\end{proof}

As in Eq.~(\ref{eq: family of pure-state invariants}) and Eq.~(\ref{eq: J from P1}), we will write $L$ and $J$ both for subsets of the symplectic vector space $(V,\omega)$, as well as for subsets of their phase-free Pauli representatives in $\aP_n$.

Def.~\ref{def: family of pure-state invariants} thus provides an explicit parametrisation of update-closed pure-state invariants as well as of maximal, Jordan-closed, complementary sets of abelian subgroups \cite{FrembsHoehnNataleWever2026b}, and identifies them with those in Cor.~\ref{cor: Boolean quadratic invariants}. Building on the analysis in Ref.~\cite{FrembsHoehnNataleWever2026a}, in the following sections we further extract the structural properties of these pure-state invariants in more detail.

\subsection{Cartan decompositions and Pauli Lie algebras}\label{sec: Cartan decomposition}

Lm.~2 in Ref.~\cite{FrembsHoehnNataleWever2026a} shows that the sets in Def.~\ref{def: family of pure-state invariants} satisfy strong algebraic closure relations. To state these explicitly, it is best to interpret (phase-free) Pauli operators as generators of a Lie subalgebra of the special unitary Lie algebra. We recall the following more general definition from Ref.~\cite{AguilarEtAl2024}.

\begin{definition}\label{def: Pauli Lie algebra}
    A \emph{Pauli Lie algebra $\fg\subset\su(2^n)$} is any Lie subalgebra (closed under commutators), generated from $n$-qubit Pauli strings under commutators and real-linear combinations. Equivalently, $\fg\subset\su(2^n)$ is a Pauli Lie algebra if its restriction $\fg|_{\pm i\aP_n}$ is closed under the commutator (on $\cP_n$).

    A \emph{Pauli Jordan algebra $\cJ\subset\LHsa$} is any Jordan subalgebra (closed under anti-commutators), generated from $n$-qubit Pauli strings under anti-commutators and real-linear combinations.

    A generating set $L\subset\pm i\aP_n$ of a Pauli Lie, respectively $J\subset\pm\aP_n$ of a Pauli Jordan algebra is called \emph{minimal} if its proper subsets generate smaller Lie, respectively Jordan algebras.
\end{definition}

The structure of Def.~\ref{def: Pauli Lie algebra} naturally differentiates $L$ from $J$ when considering phases explicitly, as $L\subset\pm i\aP_n$ and $J\subset\pm\aP_n$. In the next section, we will find that the decompositions can indeed be encoded by choosing phases in this manner. Nevertheless, we will often identify $L$ and $J$ with their phase-free Pauli representatives in $\aP_n$, upon which $J=\oL=\aP_n\backslash L$ (as in Thm.~\ref{thm: all update-closed pure-state invariants}).

The real generating sets $L$ in Def.~\ref{def: family of pure-state invariants} are closed under commutators and thus define a Pauli Lie algebra. In fact, they satisfy even stronger relations, as captured by the following definition.

\begin{definition}\label{def: PLJ algebra}
    A subset $L\subset\pm i\aP_n$ is called a generating set of a Pauli Lie Jordan algebra if
    \begin{equation}\label{eq: algebraic closure relations}
    \begin{aligned}
        \{L,L\}&=:J &\hspace{2cm} \{L,J\}&=L &\hspace{2cm} \{J,J\}&=J \\
        [L,L]&=L &\hspace{2cm} [L,J]&=J &\hspace{2cm}  [J,J]&=L\; .
    \end{aligned}
    \end{equation}
\end{definition}

Here, we use the convention from Sec.~\ref{sec: Pauli observables} that set-valued commutators and anti-commutators are understood projectively: vanishing brackets are discarded, while every nonzero bracket is identified with its unique Hermitian Pauli representative up to central phase. Our terminology in Def.~\ref{def: PLJ algebra} follows that in Def.~\ref{def: Pauli Lie algebra}. Decompositions of this type, namely Cartan decompositions of $\su(2^n)$ with Pauli-adapted generating sets — have since been studied systematically in Ref.~\cite{Cuypers2026,GargiuloHerringerZeier2026}; what concerns us here is which of them arise as pure-state invariants. To put Def.~\ref{def: PLJ algebra} in a broader context, recall that a \emph{Cartan decomposition} of a Lie algebra $\fg=\lf\oplus\pf$ is defined by the relations
\begin{align}\label{eq: Cartan decomposition}
    [\lf,\lf]&\subseteq\lf &
    [\lf,\pf]&\subseteq\pf &
    [\pf,\pf]&\subseteq\lf\; .
\end{align}
Cartan decompositions are the infinitesimal shadow of symmetric spaces \cite{Helgason2001}. In particular, differentiating the compact symmetric spaces $\SU(d)/\SO(d)$ and $\SU(d)/\Sp(d/2)$, for $n$-qubit systems with $d=2^n$ yields the Cartan decompositions $\su(2^n)=\lf\oplus\pf$ with either $\lf\cong\sof(2^n)$ or $\lf\cong\spf(2^{n-1})$. In geometric terms, $\lf=T_eK$ and $\pf\cong T_{eK}(G/K)$, where the second identification is induced by the differential $G\ra G/K$, and $(G,K)=(\SU(2^n),\SO(2^n))$ and $(G,K)=(\SU(2^n),\Sp(2^{n-1}))$, respectively. In Sec.~\ref{sec: quaternionic structure} we will see that these symmetric spaces obtain a natural interpretation as real structures on the complex Hilbert space $\C^{2^n}$.

Given the restrictive conditions in Def.~\ref{def: PLJ algebra}, one may wonder whether these are characteristic of update-closed Boolean pure-state invariants, yet this is not the case. The classification of Ref.~\cite{AguilarEtAl2024} shows that, up to Clifford equivalence, Pauli Lie algebras are built from the free-fermionic family $\sof(m)$ with $m\leq 2n+1$, and the exponentially large families $\sof(2^n)$, $\spf(2^{n-1})$, and $\su(2^n)$, together with direct-sum and controlled extensions. For the dimension argument below, it is sufficient to compare these families. Notably, the only case that is not exponential in dimension is the first, which corresponds with the free-fermion Pauli Lie algebra (for details, see Ref.~\cite{AguilarEtAl2024,GargiuloHerringerZeier2026}). The last case is simply the unconstrained one, which from the perspective of pure-state invariants corresponds with the scalar constraint $\tr[p^2_\psi]=1$. It is not difficult to see that neither of them are Pauli Lie Jordan algebras (see Def.~\ref{def: PLJ algebra}). In contrast, as stated above, both the orthogonal and symplectic Pauli Lie algebras satisfy the closure relations in Def.~\ref{def: PLJ algebra},\footnote{This is easily checked, e.g. from their generating sets in Eq.~(\ref{eq: so and sp}) below.} yet (by Thm.~\ref{thm: all update-closed pure-state invariants}) only the symplectic case gives rise to pure-state invariants. We thus seek an additional structural property satisfied by pure-state invariants that distinguishes between the two. In Lm.~5 of Ref.~\cite{FrembsHoehnNataleWever2026a}, we identify such a property in the (constant) size of commutants $C_{L/J}(Q):=\{P\in L/J\mid [P,Q]=0\}$ within the decompositions $\aP_n\cong V=L_a\cupdot J_a$, where $J_a$ as in Eq.~(\ref{eq: J from P1}), $L_a=V\backslash J_a$ and $W_a^T=-W_a\in\aP_n$ is any anti-symmetric Pauli operator which takes on the role of a symplectic form, e.g. $iW_a=iY_1=\begin{pmatrix}
    0 & \one \\ -\one & 0
\end{pmatrix}$. Before determining these cardinalities in Lm.~\ref{lm: constant commutants} below, we first identify the reason for this with an additional invariant of quadratic refinements of the symplectic form in characteristic $2$.

\subsection{Quadratic refinements, Arf invariant and Witt index}\label{sec: quadratic refinement}

In this and the following sections, we will develop some of the material on quadratic refinements. Quadratic refinements of a symplectic form in characteristic $2$ are standard in the mathematical literature; we recall the relevant facts to keep the exposition self-contained, referring to Ref.~\cite{Arf1941,Dye1978,BrownHumphries1986,Taylor1992,Cuypers1994,Munemasa1996,Grove2002} for details. In the Pauli setting they have recently been brought to bear on the classification of Pauli Lie algebras \cite{Cuypers2026,GargiuloHerringerZeier2026}; our concern here is different, namely to characterise which of the associated Pauli subsets correspond with pure-state invariants. Our main result of this section, Thm.~\ref{thm: pure-state invariants as quadratic refinements}, proves that update-closed pure-state invariants are precisely the zero sets of quadratic refinements with Arf invariant $1$. Given its relative novelty, and to keep the exposition self-contained, we provide basic proofs of the results relevant for our purposes, and refer the interested reader to above references for further details.

Let $V$ be a vector space over a field $K$, and let $q:V\ra K$ be a homogeneous quadratic function, that is, $q(kv)=k^2q(v)$ for all $k\in K$ and $v\in V$. Its polar form $b_q:V\times V\ra K$ given by $b_q(u,v)=q(u+v)-q(u)-q(v)$ defines a symmetric bilinear form, and is uniquely determined by $q$. Conversely, given a symmetric bilinear form $b:V\times V\ra K$, evaluation on the diagonal yields a homogeneous quadratic by setting $q(v)=\frac{1}{2}b(v,v)$ whenever $K$ has characteristic $\ch(K)\neq 2$. Hence, $q$ is also uniquely determined by $b$ and there is a bijective correspondence between these two structures.\footnote{More generally, this correspondence is the vector space isomorphism between homogeneous polynomials and symmetric forms, and extends beyond the quadratic case (over fields of characteristic zero).}

The situation changes if $\ch(K)=2$, as in this case $b_q(v,v)=q(2v)-2q(v)=0$ for all $v\in V$. This implies that $b_q$ is alternating, however, otherwise forces no constraints on the diagonal values of $b_q$. As a consequence, the relation between homogeneous quadratic functions and alternating bilinear forms $b$ is no longer one to one. In order to characterise the ambiguity in the correspondence $b\leftrightarrow q$, call any quadratic function $q:V\ra K$ with $q(kv)=k^2q(v)$ and polarisation $b_q=b$ a \emph{quadratic refinement of $b$}. A given $b$ generally admits different quadratic refinements. A simple example is given by the zero form $b=0$ on $V=\zz_2$ which has quadratic refinements $q,q':\zz_2\ra\zz_2$ given by $q(0)=0$, $q(1)=0$ and $q'(0)=0$, $q'(1)=1$. Moreover, the relation $(-1)^{q_a(v)}:=\xi_a(v)=\chi_a(v)\tau(v)$ for $v\in V$ defines a quadratic refinement of the symplectic form $\omega$ on $V$ for all $a\in\cT_-$, since
\begin{align}\label{eq: quadratic refinement from pure-state invariant}
    (-1)^{q_a(u+v)}
    =\chi_a(u+v)\tau(u+v)
    =\chi_a(u)\chi_a(v)\tau(u)\tau(v)(-1)^{\omega(u,v)}
    =(-1)^{q_a(u)+q_a(v)+\omega(u,v)}\; ,
\end{align}
where we used multiplicativity of $\chi_a$ and $\tau(u+v)=(-1)^{\omega(u,v)}\tau(u)\tau(v)$.

Which types of quadratic refinements correspond to the $q_a$ in Eq.~(\ref{eq: quadratic refinement from pure-state invariant})? To answer this question, note first that in the present setting $K=\zz_2$, $V=\zz^{2n}_2$ and $b:V\times V\ra K$ is non-degenerate. Since a polar form is necessarily alternating, $b$ thus coincides with the symplectic form $\omega$ on $V$. Moreover, quadratic refinements are characterised by their \emph{Arf invariant}: given a symplectic basis $\{e_i,f_j\}_{i,j=1}^n$, that is, such that $\omega(e_i,e_j)=0=\omega(f_k,f_l)$ and $\omega(e_i,f_j)=\delta_{ij}$ for all $i,j,k,l\in[n]$, it is defined by
\begin{align}\label{eq: Arf invariant}
    \Arf(q)
    :=\sum_{i=1}^n q(e_i)q(f_i)\mod 2\; .
\end{align}
For $K=\zz_2$, it is not too difficult to see that this definition is independent of the choice of symplectic basis (see e.g. Lm.~4 in Ref.~\cite{FrembsHoehnNataleWever2026a}), and thus defines an invariant under $\Sp(2n,\zz_2)$.\footnote{More generally, $\Arf(q;e,f)$ is independent of the symplectic basis in the quotient $K/p(K)$ for $K$ any field of characteristic $2$, and where $p(K)=\{k^2+k\mid k\in K\}$. Clearly, for $K=\zz_2$, $p(K)=\{0\}$.}

There is an alternative characterisation. Let $V=V_0(q)\cupdot V_1(q)$ with $V_s=\{v\in V\mid q(v)=s\}$ be the partition into level sets under $q$, that is, into singular and nonsingular vectors. An alternative characterisation is then given by the \emph{Witt index $i_W(q)$ of $q$}, defined by
\begin{align}\label{eq: Witt index}
    i_W(q)
    :=\max\{\dim(U)\mid U\subset V_0\ \mathrm{isotropic}\}\; .
\end{align}
In words, $i_W(q)$ specifies the dimension of the largest isotropic subspace in the level set $V_0=\{v\in V\mid q(v)=0\}$ of $q$. It follows from Lm.~\ref{lm: Jordan-closed ovoids}, Thm.~\ref{thm: all update-closed pure-state invariants} and Cor.~\ref{cor: completeness} that the quadratic refinements $(-1)^{q_a}=\xi_a$ have Witt index $i_W(q_a)=n-1$. By contrast, the Witt index of quadratic refinements with Arf invariant $0$ is $n$ \cite{Munemasa1996,Taylor1992}. We have the following canonical representation.

\begin{lemma}\label{lm: quadratic refinement standard form}
    Let $(V=\zz^{2n}_2,\omega)$ be a symplectic vector space. Up to a change of symplectic basis, any quadratic refinement $q$ of $\omega$ is of one of the following canonical forms
    \begin{equation}\label{eq: quadratic refinement standard form}
    \begin{aligned}
        q_+(v)&=\sum_{i=1}^n\alpha_i\beta_i
        & &\longrightarrow & \Arf(q_+)&=0 &
        &\Longleftrightarrow & i_W(q_+)&=n \\
        q_-(v)&=\sum_{i=1}^n\alpha_i\beta_i+\alpha_1+\beta_1
        & &\longrightarrow & \Arf(q_-)&=1 &
        &\Longleftrightarrow & i_W(q_-)&=n-1
    \end{aligned}\; .
    \end{equation}
    Moreover, $\Sp(2n,\zz_2)$ acts transitively on quadratic refinements of the same Arf invariant.
\end{lemma}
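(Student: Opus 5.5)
The proof will go by induction on $n$, splitting off one hyperbolic plane at a time. This gives the canonical form and the link between the Arf invariant and the Witt index in a single argument. After that we count the level sets, which separates the two forms, and we finish with transitivity.

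\textbf{Step 1 (normal form).} Let $q$ be any quadratic refinement of $\omega$. We construct a symplectic basis on which $q$ is diagonal in the sense that $q(\sum\alpha_ie_i+\beta_if_i)=\sum_i\alpha_i\beta_i+\sum_i(\alpha_iq(e_i)+\beta_iq(f_i))$ with $q(e_i)q(f_i)$ the only data. This formula holds automatically for any symplectic basis, since $q(u+v)=q(u)+q(v)+\omega(u,v)$. The task is therefore to choose hyperbolic pairs with $(q(e_i),q(f_i))\in\{(0,0),(1,1)\}$, and with at most one $(1,1)$ pair.

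For a pair with $(q(e),q(f))=(0,1)$ we may replace $f$ by $f+e$, which gives $q(f+e)=1+0+1=0$. Symmetrically, a pair $(1,0)$ becomes $(0,0)$. Two pairs both of type $(1,1)$, say $(e_1,f_1)$ and $(e_2,f_2)$, are replaced by $(e_1+e_2,\,f_1)$ and $(e_2,\,f_2+f_1)$. One checks that these again form a symplectic basis of the same $4$-dimensional block, and that the new values are $(0,1)$ and $(1,0)$. These then reduce to $(0,0)$ by the previous move. Iterating yields $q_+$ or $q_-$. Since $\Arf$ is basis independent (the paper cites Lm.~4 of the companion paper for this), $\Arf(q_\pm)=0,1$ follows by reading off the canonical form. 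In particular, the two canonical forms are inequivalent.

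\textbf{Step 2 (Witt index).} For $q_+$, the Lagrangian $\langle e_1,\dots,e_n\rangle$ is totally singular, so $i_W(q_+)=n$.

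For $q_-$, the subspace $\langle e_2,\dots,e_n\rangle$ is singular, so $i_W\ge n-1$. To exclude a singular Lagrangian $L$, count zeros. One has $|V_0(q_\pm)|=2^{2n-1}\pm2^{n-1}$, by the product formula $\sum_v(-1)^{q(v)}=\prod_i\sum_{\alpha,\beta}(-1)^{\alpha\beta+\dots}=\pm2^n$. If $L$ were totally singular, then for every $v\notin L$ the map $l\mapsto q(v+l)=q(v)+\omega(v,l)$ on $L$ is balanced, because $\omega(v,\cdot)|_L\neq0$ (as $L=L^\perp$). So $V\setminus L$ contributes $0$ to $\sum(-1)^q$, and $L$ contributes $2^n$, giving a sum of $+2^n$. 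This contradicts the sum $-2^n$ for $q_-$. Hence $i_W(q_-)=n-1$, and the equivalences in Eq.~(\ref{eq: quadratic refinement standard form}) hold, because the Arf invariant and the Witt index each separate the two forms.

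\textbf{Step 3 (transitivity).} Given $q,q'$ with equal Arf invariant, Step 1 produces symplectic bases $(e,f)$ and $(e',f')$ bringing each to the same canonical form. The linear map $S:e_i\mapsto e'_i$, $f_i\mapsto f'_i$ lies in $\Sp(2n,\zz_2)$ and satisfies $q'=q\circ S^{-1}$.

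The main obstacle is Step 1. The two-pair exchange must be checked carefully so that it preserves $\omega$ on the block, including $\omega(e_1+e_2,f_2+f_1)=0$ over $\zz_2$. A secondary concern is whether to invoke basis independence of $\Arf$ or re-derive it. The Gauss-sum count $\sum(-1)^q=(-1)^{\Arf(q)}2^n$ gives an independent proof, and I would use it to avoid relying on the cited lemma.
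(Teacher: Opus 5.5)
Your proof is correct. It is organised differently from the paper's, and it proves more of the statement than the paper's proof does.

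\textbf{Normal form.} The paper uses hyperbolic splitting. For $n\geq 2$ it finds a nonzero singular vector $e$: if two orthogonal basis vectors are nonsingular, their sum is singular. It completes $e$ to a pair $(e,f)$ with $q(f)=0$ via $f\mapsto f+e$, passes to $\langle e,f\rangle^\perp$, and repeats, so only the last plane can be of type $(1,1)$. You instead stay inside one fixed symplectic basis and apply local moves. Your four-dimensional move is indeed symplectic, including $\omega(e_1+e_2,f_2+f_1)=1+1=0$, and it produces types $(0,1)$ and $(1,0)$. It rests on the same computation $q(e_1+e_2)=0$ that the paper uses to find a singular vector, so the two arguments share their core. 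Yours is an explicit normalisation that needs no orthogonal complements. Your opening sentence announces an induction on $n$, but Step~1 is not actually inductive.

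\textbf{Witt index and Arf invariant.} The more substantive difference is Step~2. The paper's proof only reads off $\Arf(q_\pm)$ in the canonical basis. It cites the companion paper for basis independence of $\Arf$. It takes the Witt-index values from the discussion before the lemma, which rests on the ovoid results and on the literature. So the equivalence with $i_W$ is never proved inside the paper's proof. Your Gauss sum $\sum_v(-1)^{q(v)}=(-1)^{\Arf(q)}2^n$ settles basis independence and gives the zero count of the subsequent lemma in one line. Combined with your observation that a totally singular Lagrangian forces this sum to be $+2^n$, it proves $i_W(q_-)=n-1$ directly. Your argument is therefore self-contained where the paper relies on external or later results.
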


\begin{proof}
    For $n\geq 2$, pick $0\neq e\in V$ with $q(e)=0$ (if $e_1,e_2$ are nonsingular, then $q(e_1+e_2)=1+1+0=0$) and let $f\in V$ with $\omega(e,f)=1$. If $q(f)=1$, replace $f$ by $f'=f+e$, for which $\omega(e,f')=\omega(e,f+e)=\omega(e,f)$ and $q(f')=q(f+e)=q(f)+q(e)+\omega(e,f)=q(f)+1=0$. Consequently, on $W=\langle e,f\rangle$, $q$ has the canonical form $q(\alpha e+\beta f)=\alpha\beta$. Decompose $V=W\oplus W^\perp$ (with respect to the bilinear form $\omega$) and note that by Eq.~(\ref{eq: Arf invariant}), $\Arf(q)=\Arf(q|_{W^\perp})$. Repeating the same construction $n-1$ times (and applying the same basis change argument for a symplectic pair $e,f$ with $\omega(e,f)=1$ as above), we are left with two cases for a symplectic basis on the remaining two-dimensional subspace:
    \begin{align}\label{eq: base case}
        &q(e)=q(f)=0 & 
        &q(e)=q(f)=1\; .
    \end{align}
    Clearly, with respect to this basis $q$ is of either one of the forms in Eq.~(\ref{eq: quadratic refinement standard form}), with $\Arf(q)=0$ in the first and $\Arf(q)=1$ in the second case. Moreover, this argument also establishes Eq.~(\ref{eq: quadratic refinement standard form}) for $n=1$.

    Finally, let $q,q'$ be two quadratic refinements of $\omega$ with the same Arf invariant. By the first part, there exist symplectic bases $\{e_i,f_j\}_{i,j=1}^n$ and $\{e'_i,f'_j\}_{i,j=1}^n$ on which both $q$ and $q'$ take either one of the standard forms in Eq.~(\ref{eq: quadratic refinement standard form}). Let $S\in\mathrm{GL}(V)$ be defined by $S(e_i)=e'_i$ and $S(f_i)=f'_i$ for all $i\in[n]$. Then $S\in\Sp(2n,\zz_2)$ and $q'(v)=(S\cdot q)(v):=q(S^{-1}(v))$ for all $v\in V$.
\end{proof}

What type of information does the Arf invariant of a quadratic refinement of $\omega$, equivalently its Witt index encode? For one thing, it distinguishes quadratic refinements by their number of zeros, (see e.g. Lm.~5 in Ref.~\cite{FrembsHoehnNataleWever2026a}).

\begin{lemma}
    Let $(V=\zz^{2n}_2,\omega)$ be a symplectic vector space and $q:V\ra\zz_2$ a quadratic refinement of $\omega$ with Arf invariant $\Arf(q)$. Then
    \begin{align}\label{eq: zero count}
        N(q):=
        |\{v\in V\mid q(v)=0\}|
        =\frac{1}{2}4^n+(-1)^{\Arf(q)}2^{n-1}\; .
    \end{align}
\end{lemma}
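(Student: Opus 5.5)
The plan is to reduce to the canonical forms of Lm.~\ref{lm: quadratic refinement standard form} and then count zeros with a character sum. By Lm.~\ref{lm: quadratic refinement standard form}, there is $S\in\Sp(2n,\zz_2)$ with $q=q_\pm\circ S^{-1}$. Since $S$ is a bijection of $V$, $N(q)=N(q_\pm)$. It therefore suffices to verify Eq.~(\ref{eq: zero count}) for $q_+(v)=\sum_i\alpha_i\beta_i$ and $q_-(v)=q_+(v)+\alpha_1+\beta_1$. These are written in coordinates $v=\sum_i\alpha_ie_i+\beta_if_i$ with respect to a symplectic basis.

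Rather than counting directly, I will compute the Gauss-type sum $G(q):=\sum_{v\in V}(-1)^{q(v)}$. This gives the count via $N(q)=\frac{1}{2}(4^n+G(q))$.

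The form $q_\pm$ is a sum over the $n$ hyperbolic pairs $\langle e_i,f_i\rangle$ with no cross terms. Hence $G$ factorises as a product of $n$ two-dimensional sums. For each pair with $q(\alpha e+\beta f)=\alpha\beta$, the sum is $1+1+1-1=2$. For the exceptional pair in $q_-$, with $\alpha\beta+\alpha+\beta$, the values at $(0,0),(1,0),(0,1),(1,1)$ are $0,1,1,1$, so the sum is $1-1-1-1=-2$.

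This yields $G(q_+)=2^n$ and $G(q_-)=-2^n$, i.e. $G(q)=(-1)^{\Arf(q)}2^n$, using the Arf values recorded in Lm.~\ref{lm: quadratic refinement standard form}. Substituting gives $N(q)=\frac{1}{2}4^n+(-1)^{\Arf(q)}2^{n-1}$.

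There is essentially no obstacle once the canonical form is available. The only point needing care is that $\Arf$ is invariant under the symplectic change of basis, so the sign attached to the canonical representative is the Arf invariant of the original $q$. This is exactly what Lm.~\ref{lm: quadratic refinement standard form} (together with basis-independence of Eq.~(\ref{eq: Arf invariant})) provides.

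Alternatively, one could avoid the canonical form by squaring $G(q)$. Using $q(u+v)=q(u)+q(v)+\omega(u,v)$, one gets $G(q)^2=\sum_{u,w}(-1)^{q(w)+\omega(u,w)}\cdots=4^n$, so $G(q)=\pm2^n$. However, fixing the sign would still require the Arf invariant, so I would use the factorisation route above.
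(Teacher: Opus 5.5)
Your proof is correct. It is a mild variant of the paper's argument rather than a copy.

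The paper inducts on $n$. It splits $V=V_1\perp V_2$ with $\dim V_1=2$ and uses additivity of $q$ and of $\Arf$ across the orthogonal sum. It then checks the count recursion $N(q)=N_1N_2+(4-N_1)(2^{2n-2}-N_2)$.

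Your Gauss sum $G(q)=2N(q)-|V|$ is exactly the quantity that turns this additive recursion into the multiplicative identity $G(q)=G(q|_{V_1})\,G(q|_{V_2})$. The algebra the paper leaves to the reader thus becomes a product of factors $\pm 2$. The underlying mechanism is the same in both: an orthogonal splitting into hyperbolic planes on which $q$ is additive.

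The real difference is the input. You route through the canonical-form lemma and the basis-independence of $\Arf$. The paper only needs additivity of $\Arf$ in a basis adapted to the splitting.

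In fact the canonical form is dispensable in your approach. In an arbitrary symplectic basis, $q(\alpha e_i+\beta f_i)=\alpha q(e_i)+\beta q(f_i)+\alpha\beta$. Each plane therefore contributes $1+(-1)^{q(e_i)}+(-1)^{q(f_i)}-(-1)^{q(e_i)+q(f_i)}=2(-1)^{q(e_i)q(f_i)}$, which gives $G(q)=(-1)^{\Arf(q)}2^n$ directly.

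Since $G(q)$ is manifestly basis-independent, this computation also proves the basis-independence of $\Arf$. Both your argument and the paper's otherwise take that fact as an input. This is what the Gauss-sum formulation buys over the counting recursion.
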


\begin{proof}
    This follows easily by induction on $n$. For $n=1$ one has $N(q)=3$ or $1$ according whether $\Arf(q)=0$ or $1$. For $n>1$ write $V=V_1\perp V_2$ with $\dim V_1=2$. Since $q(v_1+v_2)=q(v_1)+q(v_2)$ for $v_i\in V_i$ and $\Arf(q)=\Arf(q|_{V_1})+\Arf(q|_{V_2})$, counting the two ways of writing $0=q(v_1)+q(v_2)$ gives $N(q)=N_1N_2+(4-N_1)(2^{2n-2}-N_2)$, which one checks yields the stated formula.
\end{proof}

Comparing with the cardinality of the family of pure-state invariants in Def.~\ref{def: family of pure-state invariants}, which correspond with the zero-level sets of the quadratic form $(-1)^{q_a}=\xi_a$ by Cor.~\ref{cor: completeness}, we find $\Arf(q_a)=1$ (for $a\in\cT_-$). In other words, pure-state invariants define quadratic refinements of Arf invariant $1$. Conversely, we will show that every quadratic refinement $q$ of the symplectic form $\omega$ on $V=\zz^{2n}_2$ with $\Arf(q)=1$ gives rise to an (update-closed) pure-state invariant.

\begin{lemma}\label{lm: Arf reps}
    Let $(V=\zz^{2n}_2,\omega)$ be a symplectic vector space, and let $q,q'$ be two quadratic refinements of $\omega$. Then there exists a unique $v'\in V$ such that $q'(v)=q(v)+\omega(v,v')$ for all $v\in V$, and
    \begin{align}\label{eq: shifted Arf invariant}
        \Arf(q')=\Arf(q)+q(v')\; .
    \end{align}
\end{lemma}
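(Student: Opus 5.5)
The proof has two parts: first existence and uniqueness of the shift $v'$, then the Arf invariant formula.

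\emph{Existence and uniqueness of $v'$.} Consider the difference $d:=q'-q:V\ra\zz_2$. Both $q$ and $q'$ polarise to $\omega$, so for all $u,v\in V$
\begin{align*}
    d(u+v)-d(u)-d(v)=\omega(u,v)-\omega(u,v)=0\; .
\end{align*}
Hence $d$ is additive. Over $\zz_2$ an additive map is linear, so $d\in V^*$. Because $\omega$ is nondegenerate, the map $V\ra V^*$, $v'\mapsto\omega(\cdot,v')$, is injective. Since $\dim V=\dim V^*$, it is also bijective. Therefore there is exactly one $v'$ with $d(v)=\omega(v,v')$ for all $v$, i.e.\ $q'(v)=q(v)+\omega(v,v')$. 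This is the same mechanism behind $\xi_a=\chi_a\tau$ in Eq.~(\ref{eq: quadratic refinement from pure-state invariant}).

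\emph{Arf invariant formula.} Fix any symplectic basis $\{e_i,f_i\}_{i=1}^n$. Since $\Arf$ is basis independent, both $\Arf(q)$ and $\Arf(q')$ may be computed in this basis. Expand $v'=\sum_i(x_ie_i+y_if_i)$. Using $\omega(e_i,f_j)=\delta_{ij}$ and the isotropy of the $e$'s and of the $f$'s (signs are irrelevant mod $2$), we get
\begin{align*}
    \omega(e_i,v')=y_i\; ,\qquad \omega(f_i,v')=x_i\; .
\end{align*}
Substituting into Eq.~(\ref{eq: Arf invariant}) gives
\begin{align*}
    \Arf(q')=\sum_i\bigl(q(e_i)+y_i\bigr)\bigl(q(f_i)+x_i\bigr)
    =\Arf(q)+\sum_i\bigl(x_iq(e_i)+y_iq(f_i)+x_iy_i\bigr)\; .
\end{align*}
The problem is now reduced to showing that the second sum equals $q(v')$.

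To see this, evaluate $q(v')$ by repeatedly applying the refinement identity $q(u+w)=q(u)+q(w)+\omega(u,w)$. This gives
\begin{align*}
    q(v')=\sum_i\bigl(q(x_ie_i)+q(y_if_i)\bigr)+\sum_{\text{pairs}}\omega(\cdot,\cdot)\; .
\end{align*}
Two facts simplify the right-hand side. First, $q(x_ie_i)=x_iq(e_i)$ since $x_i\in\{0,1\}$ and $q(0)=0$; likewise $q(y_if_i)=y_iq(f_i)$. Second, the only nonvanishing pairings among the basis summands are $\omega(x_ie_i,y_if_i)=x_iy_i$. Hence $q(v')=\sum_i\bigl(x_iq(e_i)+y_iq(f_i)+x_iy_i\bigr)$, which is exactly the required sum, proving Eq.~(\ref{eq: shifted Arf invariant}).

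The only real bookkeeping step is this last expansion of $q(v')$. The cleanest route is to first split $V=\perp_i\langle e_i,f_i\rangle$, on which $q$ is additive, and then expand within each hyperbolic plane.
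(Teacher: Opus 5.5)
Your proof is correct and follows the paper's argument essentially step for step. In both, the difference of two refinements is additive and hence linear, nondegeneracy of $\omega$ represents it as $\omega(\cdot,v')$, and the Arf shift comes from evaluating $q'$ on a symplectic basis and recognising $\sum_i\bigl(x_iq(e_i)+y_iq(f_i)+x_iy_i\bigr)$ as the expansion of $q(v')$ via the refinement identity; you additionally make the uniqueness of $v'$ explicit (injectivity plus a dimension count), which the paper leaves implicit.
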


\begin{proof}
    If two quadratic functions $q,q':V\ra K$ have the same polar form, then their difference $l=q-q'$ has vanishing polarisation, which implies that $l(v+w)=l(v)+l(w)$ is additive. In the case $K=\zz_2$ that we are interested in, $l:V\ra\zz_2$ is a linear function. Consequently, the space of quadratic refinements of $\omega$ is the affine space $\{q+l\mid l:V\ra\zz_2\ \mathrm{linear}\}$ for $q$ any quadratic refinement of $\omega$. Since $\omega$ is non-degenerate, any linear function can be written as $v\ra\omega(v,v')$ for some $v'\in V$.

    To prove Eq.~(\ref{eq: shifted Arf invariant}), pick a symplectic basis $\{e_i,f_j\}_{i,j=1}^n$ and let $v'=\sum_{i=1}^n\alpha_ie_i+\beta_if_i$. By the first part, $q'(e_i)=q(e_i)+\omega(v',e_i)=q(e_i)+\beta_i$ and $q'(f_j)=q(f_j)+\omega(v',f_j)=q(f_j)+\alpha_j$, hence,
    \begin{align*}
        \Arf(q')
        =\sum_{i=1}^nq'(e_i)q'(f_i)
        =\Arf(q)+\sum_{i=1}^n(\alpha_iq(e_i)+\beta_iq(f_i)+\alpha_i\beta_i)
        =\Arf(q)+q(v')\; ,
    \end{align*}
    where we used that $\Arf$ is independent of the choice of symplectic basis, as well as the identity $q(u+v)=q(u)+q(v)+\omega(u,v)$ with the relations $\omega(e_i,e_i)=\omega(f_j,f_j)=0$ and $\omega(e_i,f_j)=\delta_{ij}$.
\end{proof}

Note that $q_+(v)=a\cdot b$ in Lm.~\ref{lm: quadratic refinement standard form} is the (standard) quadratic form in the Weyl representation of Eq.~(\ref{eq: Weyl operators}), that is, $q_+=q_W$. By Lm.~\ref{lm: Arf reps}, any other quadratic refinement is of the form $q_{v'}:=q_W+\omega(\cdot,v')$ for $v'\in V$, in particular, $q_a(v)=q_W(v)+\omega(v,a)$, equivalently $(-1)^{q_a}=\chi_a\tau=\xi_a$ for all $a\in\cT_-$.

As a consequence, we obtain another characterisation of update-closed pure-state invariants.

\begin{theorem}\label{thm: pure-state invariants as quadratic refinements}
    $I_\psi(X)$ for $\{0\}\subsetneq X\subsetneq V$ is an update-closed pure-state invariant if and only if $X=\{v\in V\mid q(v)=0\}$ for $q$ a quadratic refinement of the symplectic form $\omega$ on $V\cong\aP_n$ with $\Arf(q)=1$.
\end{theorem}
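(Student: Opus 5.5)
The plan is to prove both directions by reducing to Cor.~\ref{cor: completeness}, which already identifies proper, update-closed pure-state invariants with the zero sets $J_a=\{v\in V\mid\xi_a(v)=1\}$ for $a\in\cT_-$. Here $(-1)^{q_a}=\xi_a=\chi_a\tau$, so $J_a=\{v\mid q_a(v)=0\}$. The two directions then become statements about which quadratic refinements have the form $q_a$ with $a\in\cT_-$.

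For the forward direction, let $I_\psi(X)$ be an update-closed pure-state invariant with $\{0\}\subsetneq X\subsetneq V$. I would first check that $X$ is proper in the sense used before Lm.~\ref{lm: Jordan-closed ovoids}, i.e.\ $X\neq V^\circ$. Since $X$ contains $0$ (identity component), the only excluded case is $X=V$ itself. The degenerate $X^\circ=V^\circ$ case is $X=V$ and is therefore excluded. Cor.~\ref{cor: completeness} then gives $X=J_a$ for some $a\in\cT_-$. The map $q_a(v)=q_W(v)+\omega(v,a)$ is a quadratic refinement of $\omega$ by Eq.~(\ref{eq: quadratic refinement from pure-state invariant}). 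By Lm.~\ref{lm: Arf reps},
\[
\Arf(q_a)=\Arf(q_W)+q_W(a)=0+1=1,
\]
since $\Arf(q_W)=0$ by Lm.~\ref{lm: quadratic refinement standard form} and $\tau(a)=-1$ means $q_W(a)=1$.

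For the converse, let $q$ be any quadratic refinement with $\Arf(q)=1$. By Lm.~\ref{lm: Arf reps} there is a unique $a\in V$ with $q=q_W+\omega(\cdot,a)$, and $\Arf(q)=q_W(a)$. Hence $\Arf(q)=1$ forces $\tau(a)=-1$, i.e.\ $a\in\cT_-$, and therefore $(-1)^q=\xi_a$. Its zero set is then $J_a=\zeta_a^{+\,-1}(1)$, which is a Boolean pure-state invariant by Cor.~\ref{cor: Boolean quadratic invariants}. Update-closure would follow from Cor.~\ref{cor: completeness} and Thm.~\ref{thm: all update-closed pure-state invariants}, via the identification of $\overline{J_a}$ with the symplectic Pauli Lie algebra defined by the antisymmetric $W_a$ in Eq.~(\ref{eq: symplectic Paulis}).

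Alternatively, Jordan closure can be checked directly. If $q(v)=q(w)=0$ and $\omega(v,w)=0$, then
\[
q(v+w)=q(v)+q(w)+\omega(v,w)=0.
\]
This gives update-closure by Lm.~\ref{lm: characterising Jordan}. Finally, by Eq.~(\ref{eq: zero count}), $|X|=\frac12 4^n-2^{n-1}$, which satisfies $\{0\}\subsetneq X\subsetneq V$ for $n\geq1$, since $q(0)=0$.

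The main obstacle is bookkeeping rather than substance. The main point to settle is that the phrase ``$\{0\}\subsetneq X\subsetneq V$'' matches the notion of ``proper'' in Cor.~\ref{cor: completeness}. For $n=1$ I would check this separately, since Thm.~\ref{thm: all update-closed pure-state invariants} assumes $n\geq2$. For $n=1$, a refinement with $\Arf=1$ has zero set $\{0\}$ only, which is excluded by the hypothesis. The $n=1$ case is then consistent if stated for $n\geq2$, and I would add this restriction explicitly.
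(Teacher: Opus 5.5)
Your proposal is correct and is essentially the paper's proof. The forward direction goes through Cor.~\ref{cor: completeness} (resting on Thm.~\ref{thm: all update-closed pure-state invariants}) together with $\Arf(q_a)=\Arf(q_W)+q_W(a)=1$ from Lm.~\ref{lm: Arf reps}. The converse uses Lm.~\ref{lm: Arf reps} to write every refinement with $\Arf(q)=1$ as $q_a$ with $a\in\cT_-$. Your direct check $q(v+w)=q(v)+q(w)+\omega(v,w)=0$ is a slightly more elementary route to update-closure (via Lm.~\ref{lm: characterising Jordan}) than citing Thm.~\ref{thm: all update-closed pure-state invariants}.

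No restriction to $n\geq 2$ is needed, though. For $n=1$ every point of $V^\circ$ is itself a Lagrangian, so the only ovoids are $\emptyset$ and $V^\circ$. Hence both sides of the equivalence fail for every $\{0\}\subsetneq X\subsetneq V$, and the statement holds vacuously. Relatedly, your remark that $|X|=\frac12 4^n-2^{n-1}$ gives $\{0\}\subsetneq X$ ``for $n\geq 1$'' should read $n\geq 2$.
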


\begin{proof}
    One direction follows from Thm.~\ref{thm: all update-closed pure-state invariants}, Cor.~\ref{cor: completeness} and the preceding paragraphs; the converse direction follows from Lm.~\ref{lm: Arf reps}, in particular, $\Arf(q_a)=\Arf(q_W)+q_W(a)=1$ for all $a\in\cT_-$.
\end{proof}

Equivalently, $X\subset\aP_n$ defines an update-closed pure-state invariant if and only if $X=\{v\in V\mid q(v)=0\}$ for $q$ a quadratic refinement of Witt index $n-1$. In Ref.~\cite{FrembsHoehnNataleWever2026b}, we show that the latter characterisation  is directly related to the complementarity structure in the $n$-qubit Pauli group, which in turn gives rise to strong entropic uncertainty relations (see also Sec.~\ref{sec: uncertainty relations}).

In particular, quadratic refinements with non-zero Arf invariant are parametrised by vectors $r\in V$ with $q_W(r)=1$. Clearly, this picks out all Pauli operators with an odd number of Pauli-Y components, equivalently, all anti-symmetric Pauli labels $r=a\in\cT_-$. In the next section, we relate the Arf invariant with the two types of Cartan decompositions of Pauli Lie Jordan algebras, which correspond with the (exponentially-sized) orthogonal and symplectic Pauli Lie algebras.

\subsection{Orthogonal vs symplectic Lie algebra}\label{sec: quaternionic structure}

In Sec.~\ref{sec: quadratic refinement}, we have seen that the pure-state invariants in Thm.~\ref{thm: all update-closed pure-state invariants} correspond with quadratic refinements of the symplectic form $\omega$ with $\Arf$ invariant $1$. In this section, we take a more abstract view on this characterisation, by identifying the $\Arf$ invariant with the choice of a real as opposed to a quaternionic structure on the complex Hilbert space $\cH=\C^{2^n}$. Such a structure arises from an anti-unitary operator $\Theta:\cH\ra\cH$ with either $\Theta^2=\one$ or $\Theta^2=-\one$. Recall that an anti-unitary operator $\Theta:\cH\ra\cH$ is anti-linear, that is, $\Theta(\alpha\psi)=\overline{\alpha}\Theta\psi$ for all $\alpha\in\C$ and $\psi\in\cH$, and decomposes as
\begin{align}\label{eq: anti-unitary decomposition}
    \Theta=\Theta_R=RK\; ,
\end{align}
where $K$ denotes complex conjugation and $R\in\UH$ is unitary. Now, given such an anti-unitary structure $\Theta$, the unitary subgroups preserving it are respectively
\begin{align}\label{eq: J-preserving unitaries}
\begin{split}
    \OG_\Theta(2^n)&=\{U\in\UH\mid \Theta U\Theta^{-1}=U,\ \Theta^2=\one\} \\
    \Sp_\Theta(2^{n-1})
    &=\{U\in\UH\mid \Theta U\Theta^{-1}=U,\ \Theta^2=-\one\}\; .
\end{split}
\end{align}
In the orthogonal case, we can find a unitary such that $\Theta=K$ (namely $R=\one$), in the compact symplectic case $\Sp(n)=\Sp(2n,\C)\cap \U(2n)$ a canonical choice for $\Theta=RK$ is given by the standard symplectic form, $R=\begin{pmatrix}
    \mathbbm{0} & \mathbbm{1} \\
    -\mathbbm{1} & \mathbbm{0}
\end{pmatrix}=iY_1$. The respective Lie algebras are thus isomorphic to
\begin{align}\label{eq: so and sp}
\begin{split}
    \sof_\Theta(2^n)&\cong\{u\in\su(2^n)\mid uR+Ru^T=0,\ R^T=R\} \\
    \spf_\Theta(2^{n-1})
    &\cong\{u\in\su(2^n)\mid uR+Ru^T=0,\ R^T=-R\}
\end{split}\; .
\end{align}
By Thm.~\ref{thm: all update-closed pure-state invariants}, pure-state invariants arise from copies of the symplectic Pauli Lie algebra $\spf(2^{n-1})$ with Pauli generating sets. More generally, for Pauli operators $R=W_r\in\aP_n$ for $r\in V$,
\begin{align}\label{eq: symplectic generating sets}
    L_r
    =\{P\in\aP_n\mid PR+RP^T=0\}
    =\{P\in\aP_n\mid PW_r+W_rP^T=0\}\; ,
\end{align}
is a generating set of a (copy of a) symplectic Pauli Lie algebra for $R^T=-R\in\aP_n$ anti-symmetric, and a generating set of a (copy of an) orthogonal Pauli Lie algebra for $R^T=R\in\aP_n$ symmetric. Building on Thm.~\ref{thm: pure-state invariants as quadratic refinements} in Sec.~\ref{sec: quadratic refinement}, we find that the respective generating sets correspond with quadratic refinements, and are distinguished from one another by their Arf invariant.

\begin{proposition}\label{prop: Arf-PLJ correspondence}
    Let $(V=\zz^{2n}_2,\omega)$ be a symplectic vector space, let $q_r$ for $r\in V$ be a quadratic refinement of $\omega$, and let $L_r$ be defined as in Eq.~(\ref{eq: symplectic generating sets}). Then (with Eq.~(\ref{eq: anti-unitary decomposition}))
    \begin{equation}\label{eq: Arf distinguishes real structures}
    \begin{aligned}
        \langle L_r\rangle&\cong\sof(2^n) & &\Longleftrightarrow & \Theta^2_{W_r}&=\one & &\Longleftrightarrow & \Arf(q_r)&=0 \\
        \langle L_r\rangle&\cong\spf(2^{n-1}) & &\Longleftrightarrow & \Theta^2_{W_r}&=-\one & &\Longleftrightarrow & \Arf(q_r)&=1\; ,
    \end{aligned}
    \end{equation}
    where $r\in\cT_+$ in the first and $r\in\cT_-$ in the second line. Moreover, the correspondence between generating sets of Pauli Lie algebras and quadratic refinements is bijective in both cases.
\end{proposition}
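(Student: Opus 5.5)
The plan is to set up three equivalences, one per column of Eq.~(\ref{eq: Arf distinguishes real structures}), using the refinement $q_r:=q_W+\omega(\cdot,r)$, i.e.\ $(-1)^{q_r}=\chi_r\tau$. By Lm.~\ref{lm: Arf reps} every quadratic refinement has this form for a unique $r\in V$. It is therefore natural to index both sides of the claimed bijection by $r\in V$.

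\textbf{Step 1: $\Theta$ and the transpose sign.} For $\Theta_{W_r}=W_rK$ we have $\Theta^2_{W_r}=W_rKW_rK=W_r\overline{W_r}$. Since $W_r$ is Hermitian, $\overline{W_r}=W_r^T=\tau(r)W_r$, and $W_r^2=\one$. Hence $\Theta^2_{W_r}=\tau(r)\one$. So $\Theta^2_{W_r}=\pm\one$ exactly when $r\in\cT_\pm$.

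\textbf{Step 2: transpose sign and the Arf invariant.} By Eq.~(\ref{eq: shifted Arf invariant}), $\Arf(q_r)=\Arf(q_W)+q_W(r)=q_W(r)$, because $q_W=q_+$ has Arf invariant $0$ by Lm.~\ref{lm: quadratic refinement standard form}. Since $\tau=(-1)^{q_W}$, this gives $\Arf(q_r)=0$ iff $r\in\cT_+$ and $\Arf(q_r)=1$ iff $r\in\cT_-$.

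\textbf{Step 3: Lie algebra type and $L_r$ as a level set.} For $P=W_v$ we compute $W_vW_r+W_rW_v^T=(\chi_r(v)+\tau(v))W_rW_v$, using $W_vW_r=\chi_r(v)W_rW_v$. This vanishes iff $\chi_r(v)\tau(v)=-1$. Thus
\begin{align*}
    L_r=\{v\in V\mid q_r(v)=1\},
\end{align*}
the nonsingular set of $q_r$. This is consistent with Eq.~(\ref{eq: symplectic Paulis}) and Cor.~\ref{cor: completeness}.

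With $R=W_r$ in Eq.~(\ref{eq: so and sp}), the real span of $iL_r$ is exactly the Lie algebra $\{u\in\su(2^n)\mid uR+Ru^T=0\}$. Two facts give this. First, the Pauli basis diagonalises the linear map $u\mapsto uR+Ru^T$: each $W_v$ is sent to a multiple of $W_rW_v$. Second, $\su(2^n)$ is spanned by the $iW_v$. That Lie algebra is $\sof(2^n)$ when $R^T=R$ and $\spf(2^{n-1})$ when $R^T=-R$. As a check, the dimension is $|L_r|=4^n-N(q_r)$, which by Eq.~(\ref{eq: zero count}) equals $2^{n-1}(2^n\mp1)$, matching $\sof(2^n)$ and $\spf(2^{n-1})$ respectively. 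Combined with Steps 1–2, this gives both lines of Eq.~(\ref{eq: Arf distinguishes real structures}).

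\textbf{Bijectivity.} The map $r\mapsto q_r$ is a bijection from $V$ onto the quadratic refinements (Lm.~\ref{lm: Arf reps}). The map $q_r\mapsto L_r=q_r^{-1}(1)$ is injective, because a $\zz_2$-valued function is determined by its level set. Within each Arf class the generating sets $L_r$ are, by construction, indexed by $r\in\cT_\pm$ via Eq.~(\ref{eq: symplectic generating sets}). So the correspondence is bijective in both cases.

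The main subtlety I expect is the precise meaning of ``$\langle L_r\rangle\cong\sof$ vs $\spf$''. One must ensure the distinct abstract types cannot be confused, e.g.\ small-$n$ coincidences such as $\sof(5)\cong\spf(2)$ or $\sof(4)$. For $n\ge1$ this is settled by the dimension count, since $2^{n-1}(2^n-1)\ne2^{n-1}(2^n+1)$. One must also check that $\langle L_r\rangle$ is genuinely closed under commutators. That closure follows from the Lie algebra description of Eq.~(\ref{eq: so and sp}), rather than needing the classification of Ref.~\cite{AguilarEtAl2024}.
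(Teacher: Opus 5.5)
Your proposal is correct and follows essentially the same route as the paper: both compute $\Theta^2_{W_r}=W_r\overline{W_r}=\tau(r)\one$, use Lm.~\ref{lm: Arf reps} to get $\Arf(q_r)=q_W(r)$, and obtain bijectivity from writing $L_r$ as the level set $q_r^{-1}(1)$. Your Pauli-diagonal check that $\mathrm{span}_\R(iL_r)$ is exactly the Lie algebra of Eq.~(\ref{eq: so and sp}) is a useful addition, since the paper justifies the $\sof$/$\spf$ identification only by citing the preceding paragraphs and its earlier theorems.
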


\begin{proof}
    The first correspondences follow from the paragraphs preceding the theorem. For the second, let $R=W_r\in\aP_n$ for $r\in V$, then
    \begin{align}\label{eq: real Pauli-structures}
        \Theta^2_{W_r}
        =(W_rK)^2
        =W_rKW_rK
        =W_r\overline{W_r}
        =(-1)^{q_W(r)}\one
        =(-1)^{\Arf(q_r)}\one\; ,
    \end{align}
    where $q_W(v)=v_X\cdot v_Z$ is the quadratic refinement in the Weyl representation in Eq.~(\ref{eq: Weyl operators}) with $\Arf(q_W)=0$, and we used that $\Arf(q_r)=\Arf(q_W)+q_W(r)=q_W(r)$ by Lm.~\ref{lm: Arf reps}. Hence, $\Theta_{W_r}$ defines a real structure for symmetric Pauli operators, equivalently for $r\in\cT_+$, and a quaternionic structure for anti-symmetric ones, equivalently for $r\in\cT_-$. Finally, bijectivity follows from
    \begin{align*}
        L_r
        =\{W_v\in\aP_n\mid W_vW_r+W_rW_v^T=0\}
        =\{W_v\in\aP_n\mid W_rW^T_vW_r=-W_v\}
        =\{W_v\in\aP_n\mid q_r(v)=1\}\; ,
    \end{align*}
    where $(-1)^{q_r}:=\xi_r=\chi_r\tau$ is the quadratic form associated with these sets. By Thm.~\ref{thm: all update-closed pure-state invariants} and Thm.~\ref{thm: pure-state invariants as quadratic refinements}, $r\in\cT_-$ corresponds with the symplectic, and $r\in\cT_+$ with the orthogonal case.
\end{proof}

In particular, combining Eq.~(\ref{eq: J from P1}) and Eq.~(\ref{eq: symplectic generating sets}) (with the notation in Sec.~\ref{sec: Cartan decomposition}), and using again the identification $\aP_n\cong V$ as sets, we thus set
\begin{equation}\label{eq: J and L}
\begin{aligned}
    J_r
    &:=\{v\in V\mid \xi_r(v)=\chi_r(v)\tau(v)=1\}
    =\{v\in V\mid q_r=0\}\\
    L_r
    &:=\{v\in V\mid \xi_r(v)=\chi_r(v)\tau(v)=-1\}
    =\{v\in V\mid q_r=1\}
\end{aligned}\; .
\end{equation}

The Arf invariant, equivalently, the Witt index of a quadratic refinement of the symplectic form $\omega$ on $V=\zz^{2n}_2$ thus distinguish between the two Pauli Lie Jordan algebras in Def.~\ref{def: PLJ algebra}.

The correspondence between Pauli generating sets and quadratic forms also appears, in different formulations, in Ref.~\cite{Cuypers2026,GargiuloHerringerZeier2026}. We end this section by pointing out that the above analysis sheds more light on the classification of Pauli Lie algebras. First, as a consequence of Thm.~\ref{thm: all update-closed pure-state invariants} and Cor.~\ref{cor: completeness}, Def.~\ref{def: family of pure-state invariants} characterises all embeddings of the symplectic Pauli Lie algebras $\spf(2^{n-1})\subset\su(2^n)$, and provides an explicit representation of the symplectic structure of the Pauli sets (see App.~\ref{app: symplectic representation}). Second, our characterisation of update-closed pure-state invariants as zero sets of quadratic refinements with Arf invariant $1$ adds another physical layer to the classification of bilinear forms in Ref.~\cite{AguilarEtAl2024}.

In the last two sections, we have predominantly focussed on the symplectic Lie algebra structure underlying (with generating sets $L_a$ for $a\in\cT_-$ of) update-closed Boolean pure-state invariants. Yet, since they are indeed Pauli Lie Jordan algebras (see Def.~\ref{def: PLJ algebra}), this covers only one side of the story. In App.~\ref{app: cubic purity relations}, we consider an alternative characterisation of purity in terms of cubic relations, which turns out to be more naturally related with the closure relations under the anti-commutator. More generally, it would be interesting to mirror the characterisation of Pauli Lie algebras in Ref.~\cite{AguilarEtAl2024} by one for Pauli Jordan algebras. We will not do this here, but point out that Thm.~\ref{thm: all update-closed pure-state invariants} singles out those that are also pure-state invariants, equivalently, that define $m$-ovoids (see also Sec.~\ref{sec: positive Pauli projection maps}).

\subsection{Stabiliser groups of pure-state invariants}\label{sec: invariant set stabiliser}

In preparation for Sec.~\ref{sec: applications}, we study the symmetries of the sets $J_r$ for $r\in V$, specifically those for $r\in\cT_-$, which correspond with pure-state invariants by Thm.~\ref{thm: all update-closed pure-state invariants}. More precisely, we compute the projective stabiliser group $\PStab(J_r)$ within the projective unitary group $\mathrm{PU}(2^n)$,
\begin{align}\label{eq: stabiliser group}
    \PStab(J_r)
    =\{U\in\mathrm{PU}(2^n)\mid UJ_rU^\dagger\sim J_r\}\; ,
\end{align}
as well as the stabiliser group of its complement in the Pauli group $L_r:=\overline{J_r}=\aP_n\backslash J_r$,
\begin{align}\label{eq: complement stabiliser group}
    \PStab(L_r)
    =\{U\in\mathrm{PU}(2^n)\mid UL_rU^\dagger\sim L_r\}\; .
\end{align}
Note that these are different from the stabiliser of $R=W_r$, which corresponds with the orthogonal group for $R^T=R\in\aP_n$ or the symplectic group $R^T=-R\in\aP_n$ (cf. Eq.~(\ref{eq: so and sp})). Still, in Cor.~\ref{cor: Clifford intersections}, we will see that these subgroups are closely related. Already by Prop.~\ref{prop: Arf-PLJ correspondence}, we expect $\PStab(J_r)$ and $\PStab(L_r)$ to be closely related to the group of linear isometries of quadratic refinements $q$,
\begin{align}\label{eq: isomety group}
    \OG(q)
    :=\{T\in\mathrm{GL}(V)\mid q(Tv)=q(v)\ \forall v\in V\}\; .
\end{align}
By Lm.~\ref{lm: quadratic refinement standard form}, $\Sp(2n,\zz_2)$ acts transitively on quadratic refinements of either Arf invariant via $S\cdot q(v):=q(S^{-1}(v))$ for $S\in\Sp(2n,\zz_2)$ a symplectic transformation and $v\in V$. It follows that the isometry groups of different quadratic refinements of the symplectic form $\omega$ on $V=\zz^{2n}_2$ with the same Arf invariant are isomorphic to the same abstract group denoted by $\OG^\pm(2n,\zz_2)$ for $q_\pm$ with $\Arf(q_+)=0$ and $\Arf(q_-)=1$. In particular, the isometry groups $\OG(q_a)$ of the quadratic refinements $(-1)^{q_a}=\xi_a$ for $a\in\cT_-$ (corresponding to pure-state invariants by Thm.~\ref{thm: pure-state invariants as quadratic refinements}) are isomorphic to $\OG^-(2n,\zz_2)$.

To relate these with the stabiliser subgroups in Eq.~(\ref{eq: stabiliser group}), recall that the Clifford group $\Cl_n$ induces an action on Pauli labels by symplectic transformations. More precisely, let $\pCl_n:=\Cl_n/\U(1)$ be the projective Clifford group with corresponding projection $\pi:\pCl_n\ra\Sp(2n,\zz_2)$, and write $C_S\in\pi^{-1}(S)$ for any chosen representative of the pre-image of $S\in\Sp(2n,\zz_2)$ under $\pi$. Now, in Ref.~\cite{FrembsHoehnNataleWever2026a} it was shown that the stabiliser group of the family of all pure-state invariants in Def.~\ref{def: family of pure-state invariants} (which parametrise all update-closed pure-state invariants by Thm.~\ref{thm: all update-closed pure-state invariants} and Cor.~\ref{cor: completeness}), is the full Clifford group. Nevertheless, $\PStab(J_r)$ and $\PStab(L_r)$ may a priori contain both Clifford and non-Clifford unitaries. The next lemma proves that this is not the case, and identifies the relevant subgroups in the Clifford group with the corresponding isometry group in Eq.~(\ref{eq: isomety group}) up to phases.

\begin{lemma}\label{lm: transitivity}
    Let $r\in V=\zz^{2n}_2$, and $J_r$, $L_r$ as in Eq.~(\ref{eq: J and L}). Then for all $n\geq 2$,
    \begin{align*}
        \PStab(J_r)
        &=\PStab(L_r)
        =\pi^{-1}(\OG(q_r)) 
        \subset\pCl_n\; ,
    \end{align*}
    whereas for $n=1$, $\PStab(J_r)=\pi^{-1}(\OG(q_r))$ and $\PStab(L_r)\cong \OG(2)$ for $r\in\cT_+$ as well as $\PStab(J_r)=\mathrm{PU}(2)$ and $\PStab(L_r)=\pi^{-1}(\OG(q_r))=\pCl_1$ for $r\in\cT_-$.
\end{lemma}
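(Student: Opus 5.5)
The plan has three steps: reduce both stabilisers to the Clifford group, identify their images under $\pi$ with $\OG(q_r)$, and treat $n=1$ by hand.

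\textbf{Step 1: reduction to the Clifford group.} First I would show that any $U\in\PStab(J_r)$ is Clifford. The idea is to use the linear span. $U$ maps the real span of $J_r$ (resp.\ $L_r$) to itself, hence it preserves the Jordan algebra $\cJ_r$ generated by $J_r$ and the Lie algebra $\lf_r=\langle L_r\rangle$. For $n\geq 2$ and $r\in\cT_-$, $\lf_r\cong\spf(2^{n-1})$ by Thm.~\ref{thm: all update-closed pure-state invariants}; for $r\in\cT_+$ it is $\sof(2^n)$ by Prop.~\ref{prop: Arf-PLJ correspondence}. In both cases $J_r$ spans the orthogonal complement $\pf_r$ (up to $i$) of the Cartan decomposition, so $U$ stabilises $J_r$ iff it stabilises $L_r$. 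This gives $\PStab(J_r)=\PStab(L_r)$ for $n\geq 2$.

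The main obstacle is the next part: showing that a unitary mapping the finite Pauli subset $L_r$ to itself up to phase must normalise all of $\pP_n$. I would argue that $L_r$ generates $\pP_n$ as a group under products, not merely as a Lie algebra. By Eq.~(\ref{eq: commutator from anti-commutator closure}) every element of $J_r$ is a product of two anticommuting elements of $L_r$ (a commutator, up to phase). Hence $U W_v U^\dagger\sim W_{v'}$ for all $v$, since $U(PQ)U^\dagger=(UPU^\dagger)(UQU^\dagger)$. Therefore $U\in\pCl_n$. For $n=1$ this fails exactly when the generated set is too small, and that is where the exceptions arise.

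\textbf{Step 2: identification with $\pi^{-1}(\OG(q_r))$.} With $U=C_S$ Clifford, $U$ stabilises $J_r=q_r^{-1}(0)$ iff $S$ preserves the zero set of $q_r$. Any $S\in\Sp(2n,\zz_2)$ maps $q_r$ to another refinement $q_r\circ S^{-1}$ of the same polar form $\omega$. By Lm.~\ref{lm: Arf reps} this refinement equals $q_r+\omega(\cdot,v')$. Two refinements with the same zero set coincide, so $S$ preserves $J_r$ iff $q_r\circ S^{-1}=q_r$, i.e.\ $S\in\OG(q_r)$. Conversely, every $T\in\OG(q_r)$ preserves the polarisation $\omega$, so $T\in\Sp(2n,\zz_2)$, and every lift $C_T$ stabilises $J_r$ projectively. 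This yields $\PStab(J_r)=\pi^{-1}(\OG(q_r))$.

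\textbf{Step 3: the case $n=1$.} For $r\in\cT_-$, i.e.\ $r=Y$, we have $q_Y(v)=q_W(v)+\omega(v,Y)$, which gives $J_Y=\{\one\}$ and $L_Y=\{X,Y,Z\}$. Every unitary stabilises $J_Y$, so $\PStab(J_Y)=\mathrm{PU}(2)$, while the stabiliser of $\{X,Y,Z\}$ up to phase is $\pCl_1$, whose image $\Sp(2,\zz_2)=\OG(q_Y)$ is everything. For $r\in\cT_+$, take $r=\one$, so that $J=\{\one,X,Z\}$ and $L=\{Y\}$; the case $r=X,Z$ is analogous. Stabilising $\{X,Z\}$ up to phase forces the unitary to be Clifford, by the argument of Step 1, since $XZ\sim Y$; this gives $\pi^{-1}(\OG(q_r))$. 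Stabilising $\{Y\}$ alone gives the unitaries with $UYU^\dagger=\pm Y$, which projectively form a group isomorphic to $\OG(2)$ (rotations about $Y$ together with a reflection).
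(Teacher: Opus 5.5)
Your overall architecture matches the paper's: show that the level sets generate $\pP_n$, so the stabilisers are Clifford, then identify the Clifford stabiliser with $\pi^{-1}(\OG(q_r))$. Step~2 and the $n=1$ bookkeeping are correct. Step~1, however, has two genuine gaps.

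\textbf{First gap: the equality $\PStab(J_r)=\PStab(L_r)$ does not follow from spans.} If $UJ_rU^\dagger\sim J_r$, then $\Ad_U$ preserves $\mathrm{span}_\R J_r$. Hence it also preserves its Hilbert--Schmidt complement $\mathrm{span}_\R L_r$; orthogonality of the Pauli basis suffices for this, and no Cartan theory is needed. But preserving a span is much weaker than permuting its Pauli basis up to phase, and that stronger step requires $U$ to be Clifford already. For instance, for $n\geq 2$, $a\in\cT_-$ and $u\in L_a$, the unitary $e^{i\theta W_u}$ with generic $\theta$ lies in $\Sp_{W_aK}(2^{n-1})$. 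It preserves both spans but neither set. Your argument would also ``prove'' $\PStab(J_Y)=\PStab(L_Y)$ for $n=1$, contradicting the lemma's own exception. Since you attempt the Clifford reduction only for $L_r$, the inclusion $\PStab(J_r)\subseteq\pCl_n$ is never established. You need, separately, that $J_r$ generates $\pP_n$ for $n\geq 2$.

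\textbf{Second gap: the argument that $L_r$ generates $\pP_n$ is wrong as stated.} For $u,v\in L_r$ with $\omega(u,v)=1$ one has $q_r(u+v)=1+1+1=1$. So products of \emph{anticommuting} elements of $L_r$ stay in $L_r$. This is exactly the commutator closure that Eq.~(\ref{eq: commutator from anti-commutator closure}) expresses, and that equation was only derived for $r\in\cT_-$. It is \emph{commuting} pairs that produce $J_r$:
\begin{itemize}
\item $0\neq w\in J_r$ equals $u+(u+w)$ with $u,u+w\in L_r$ if and only if $u\in L_r\cap w^\perp$;
\item $w\in L_r$ equals $u+(u+w)$ with $u,u+w\in J_r$ if and only if $u\in J_r\backslash w^\perp$.
\end{itemize}
The content of the lemma is that both of these sets are non-empty for $n\geq 2$. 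By Eq.~(\ref{eq: zero count}) and the commutant counts of Lm.~\ref{lm: constant commutants}, they have sizes $4^{n-1}-(-1)^{\Arf(q_r)}2^{n-1}$ and $4^{n-1}+(-1)^{\Arf(q_r)}2^{n-1}$ respectively. These vanish precisely in the two $n=1$ exceptions.

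The paper instead reduces, via Lm.~\ref{lm: quadratic refinement standard form}, to the canonical forms $q_\pm$. It then exhibits explicit spanning vectors in each level set; for example, $q_-^{-1}(0)$ contains $e_i,f_i$ for $i\geq 2$ together with $e_1+e_2+f_2$, $f_1+e_2+f_2$ and $e_1+e_2+f_1+f_2$. Once both level sets generate $V$, both stabilisers are Clifford. For Clifford unitaries, preserving $J_r$ is equivalent to preserving its complement $L_r$ in $\aP_n$, which gives $\PStab(J_r)=\PStab(L_r)$.
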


\begin{proof}
    By Eq.~(\ref{eq: J and L}), the Pauli subsets $J_r$ and $L_r$ correspond with the level sets of $q_r$,
    \begin{align}\label{eq: gen sets as level sets}
        J_r&=\{v\in V\mid q_r(v)=0\} &
        L_r
        &=\{v\in V\mid q_r(v)=1\}\; .
    \end{align}
    It is thus sufficient to prove that these sets span the Pauli group, as $UJ_rU^\dagger\sim J_r$ implies
    \begin{align*}
        U\cP_n U^\dagger
        =U\langle J_r\rangle U^\dagger
        =\langle UJ_rU^\dagger\rangle
        =\langle J_r\rangle
        =\cP_n\; ,
    \end{align*}
    hence, $U=C\in\Cl_n$ (and similarly for $L_r$). Since Clifford unitaries $C\in\Cl_n$ induce symplectic transformations, $CJ_rC^\dagger=J_r$ with Eq.~(\ref{eq: J and L}) is then equivalent to
    \begin{align*}
        q_r(S_C(v))=1
        \quad\Longleftrightarrow\quad
        q_r(v)=1\quad\quad\quad\forall v\in V\; ,
    \end{align*}
    that is, $C\in\PStab(J_r)$ if and only if $S_C\in \OG(q_r)$ is a linear isometry of $q_r$ (and similarly for $L_r$).
    
    To show that the respective sets span $\aP_n$, equivalently, that the corresponding level sets in Eq.~(\ref{eq: gen sets as level sets}) span $V$, we consider the two canonical cases $q_\pm$ in turn. By transitivity of $\Sp(2n,\zz_2)$ on quadratic refinements with fixed Arf invariant in Lm.~\ref{lm: quadratic refinement standard form}, this is without loss of generality.

    First, note that $q^{-1}_+(0)$ contains the standard symplectic basis $\{e_i,f_j\}_{i,j=1}^n$ for all $n\geq 1$ since $q_+(e_i)=q_+(f_i)=0$ for all $i\in[n]$, hence, $\langle q^{-1}_+(0)\rangle=V$. For $q^{-1}_+(1)$ and $n\geq 2$, consider the elements $h_i=e_i+f_i$ and note that $q_+(h_i)=1=q_+(h_i+e_j)$ for every $i\neq j$. Now, since $e_j=h_i+(h_i+e_j)$ and $f_j=h_j+e_j$, it follows again that $q^{-1}_+(1)$ spans $V$. For $n=1$, the argument fails and one instead finds $q^{-1}_+(1)=(1,1)$, corresponding to Pauli-$Y$ (up to phase), hence, $\PStab(L_r)=\OG(2)$ in this case.

    Second, for $q^{-1}_-(0)$ there exists a symplectic basis with $q_-(e_1)=q_-(f_1)=1$ and $q_-(e_i)=q_-(f_i)=0$ for all $i\geq 2$ by Lm.~\ref{lm: quadratic refinement standard form}. For $n\geq 2$, $q^{-1}_-(0)$ thus contains $e_i,f_i$ for $i\geq 2$ as well as the vectors
    \begin{align*}
        u&=e_1+e_2+f_2 & v&=f_1+e_2+f_2 & w&=e_1+e_2+f_1+f_2\; .
    \end{align*}
    Since $e_1=v+w$ and $f_1=u+w$, together these vectors again span $V$, and since $q_-(u)=q_-(v)=q_-(w)=0$ so does $q^{-1}_-(0)$. For $n=1$, the argument fails and one instead finds $q^{-1}_-(0)=0$ (where $q_-=q_a$ for $a=(1,1)$ with $W_a=Y$), hence, $\OG^-(2,\zz_2)=\mathrm{GL}(2,\zz_2)=\mathrm{SL}(2,\zz_2)=\Sp(2,\zz_2)$ (the first equality holds since $q_-(v)=1$ for every $0\neq v\in V$, the second since $\zz^\times_2=1$, and the third since $A^TJA=\det(A)J$ for $2\times 2$-matrices and $J=iY$ the standard symplectic matrix), and thus $\PStab(J_a)=\mathrm{PU}(2)$. For $q^{-1}_-(1)$, note that $e_1,f_1$ together with $e_i+f_i$, $e_1+f_i$ and $e_i+f_1$ for $i\geq 2$ lie in the $1$-level set of $q_-$, hence, so do $e_i=f_1+(e_i+f_1)$ and $f_i=e_1+(e_1+f_i)$ from which $\langle q^{-1}_-(1)\rangle =V$.
\end{proof}

The relation between Pauli Lie algebras and Clifford subgroups generated by transvections has been developed systematically and more generally in Ref.~\cite{GargiuloHerringerZeier2026}; Lm.~\ref{lm: transitivity} is the special case relevant from the perspective of pure-state invariants.

For completeness, we record the cardinalities of the isometry groups (see e.g. Ref.~\cite{Taylor1992}),
\begin{align*}
    |\OG^\pm(2n,\zz_2)|
    =2^{n(n-1)+1}(2^n\mp 1)\prod_{i=1}^{n-1}(2^{2i}-1)\; .
\end{align*}
For $n\geq 2$, the cardinalities of the projective stabiliser groups are a multiple thereof, with multiplication factor given by $4^n=|\pP_n|$, since $\pP_n$ is the kernel of the projective map $\pi:\pCl_n\ra\Sp(2n,\zz_2)$.

As an immediate consequence of Lm.~\ref{lm: transitivity}, we obtain a description of these stabiliser subgroups as intersections with the stabiliser groups of the real structures in Eq.~(\ref{eq: so and sp}).

\begin{corollary}\label{cor: Clifford intersections}
    For every $n\geq 2$, $s\in\cT_+$ and $a\in\cT_-$, and $\mathrm{P}:\U(2^n)\ra\mathrm{PU}(2^n)$ it holds
    \begin{equation*}
    \begin{aligned}
        \PStab(J_s)
        =\PStab(L_s)
        &=\mathrm{P}(\OG_{W_sK}(2^n)\cap\Cl_n) \\
        \PStab(J_a)
        =\PStab(L_a)
        &=\mathrm{P}(\Sp_{W_aK}(2^{n-1})\cap\Cl_n)
    \end{aligned}\; .
    \end{equation*}
\end{corollary}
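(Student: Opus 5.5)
By Lm.~\ref{lm: transitivity}, for $n\geq 2$ we already have $\PStab(J_r)=\PStab(L_r)=\pi^{-1}(\OG(q_r))\subset\pCl_n$. The corollary therefore reduces to showing, for $r\in V$ and $C\in\Cl_n$,
\begin{align*}
    \pi(C)\in\OG(q_r)
    \quad\Longleftrightarrow\quad
    e^{i\phi}C\in\OG_{W_rK}(2^n)\ \text{resp.}\ \Sp_{W_rK}(2^{n-1})\ \text{for some phase}\ e^{i\phi}.
\end{align*}
The first case applies for $r=s\in\cT_+$, where $\Theta^2_{W_s}=\one$, and the second for $r=a\in\cT_-$, where $\Theta^2_{W_a}=-\one$ by Eq.~(\ref{eq: real Pauli-structures}). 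The commutation condition $\Theta U\Theta^{-1}=U$ with $\Theta=W_rK$ reads $W_r\overline{U}W_r^{-1}=U$. Since $\overline{U}=(U^\dagger)^T$, this is equivalent to $UW_rU^T=W_r$ up to a sign. This is exactly the form-preservation condition for the bilinear form defined by $W_r$, matching Eq.~(\ref{eq: so and sp}).

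For the direction ($\supseteq$), take $U$ unitary with $UW_rU^T=W_r$ and $U$ Clifford. The conjugation action of $U$ then commutes with the map $P\mapsto W_rP^TW_r^{-1}$. By Prop.~\ref{prop: Arf-PLJ correspondence}, $W_rW_v^TW_r^{-1}=\pm W_v$, with sign $-1$ exactly when $q_r(v)=1$, i.e.\ $v\in L_r$. Write $UW_vU^\dagger=\epsilon W_{Sv}$ with $S=\pi(U)$ and $\epsilon=\pm1$. Applying the anti-automorphism $X\mapsto W_rX^TW_r^{-1}$ to this equation and using $UW_rU^T=W_r$, the sign attached to $v$ must equal the sign attached to $Sv$. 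Hence $q_r(Sv)=q_r(v)$ for all $v$, so $S\in\OG(q_r)$.

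For the direction ($\subseteq$), take $C\in\Cl_n$ with $S_C\in\OG(q_r)$, and consider $M:=CW_rC^T$. Following the argument in the proof of Thm.~\ref{thm: all update-closed pure-state invariants}, $CC^T$ lies in the kernel of $\Cl_n\to\Sp(2n,\zz_2)$, so $M$ is a Pauli operator up to phase, $M\sim W_{r'}$. Since transposition preserves symmetry, $r'\in\cT_\pm$ according as $r$ is. The set $\{v\mid W_{r'}W_v^TW_{r'}^{-1}=-W_v\}$ is the image $S_C(L_r)$, and $S_C(L_r)=L_r$ because $S_C$ preserves $q_r$. By the bijectivity statement in Prop.~\ref{prop: Arf-PLJ correspondence}, $r'=r$, hence $CW_rC^T=\lambda W_r$ for a phase $\lambda$. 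Rescaling $C\mapsto\lambda^{-1/2}C$ leaves the projective class unchanged and gives $CW_rC^T=W_r$, i.e.\ $C$ commutes with $\Theta_{W_r}$. Applying $\mathrm P$ then gives the stated equalities.

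The step I expect to be the main obstacle is checking that $M$ is genuinely a Pauli operator up to phase, and then getting the phase bookkeeping right in the rescaling. In particular one must check that $\Theta_{W_r}$ commuting with $C$ up to a phase $\lambda$ can always be absorbed by a phase change of $C$. This holds because $\lambda^{-1/2}C\,W_r\,(\lambda^{-1/2}C)^T=\lambda^{-1}CW_rC^T$.
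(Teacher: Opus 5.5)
Your proof is correct. For the inclusion $\mathrm{P}(\OG_{W_sK}(2^n)\cap\Cl_n)\subseteq\PStab(L_s)$ and its symplectic analogue, it is essentially the paper's argument: a form-preserving Clifford unitary maps $L_r$ to itself. The paper checks this via $U(PR+RP^T)U^T=0$, and you check it via the anti-automorphism $X\mapsto W_rX^TW_r^{-1}$ and the sign $\xi_r=(-1)^{q_r}$.

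The converse inclusion is where you genuinely differ. The paper argues Lie-theoretically. An element $U\in\PStab(L_r)$ normalises $\fg_R=\langle L_r\rangle$, so $R'=U^\dagger R\overline{U}$ is a second $\fg_R$-invariant bilinear form. Schur's lemma for the irreducible defining representation of $\fg_R\cong\sof(2^n)$ or $\spf(2^{n-1})$ then gives $R'=\lambda R$. Lm.~\ref{lm: transitivity} is used there only to know that $U$ is Clifford.

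You instead use the Clifford structure from the start:
\begin{itemize}
\item $CW_rC^T=(CW_rC^\dagger)(C\overline{C}^\dagger)$ is a Pauli $\lambda W_{r'}$, because $C$ and $\overline{C}$ induce the same symplectic map.
\item The same conjugation computation gives $\xi_{r'}\circ S_C=\xi_r$.
\item Invariance of $q_r$ under $S_C$ then yields $q_{r'}=q_r$.
\item Since $q_r=q_W+\omega(\cdot,r)$, nondegeneracy of $\omega$ forces $r'=r$.
\end{itemize}

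Your route stays entirely within the finite symplectic and Pauli bookkeeping of the paper and needs neither Schur's lemma nor any irreducibility input. The paper's route is more conceptual: it shows that any unitary normalising $\fg_R$ preserves the form up to a phase, independently of Clifford structure.

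One small correction: $\Theta U\Theta^{-1}=U$ with $\Theta=W_rK$ is exactly equivalent to $UW_rU^T=W_r$, since $\overline{U}^{-1}=U^T$ for unitary $U$. It is not merely equivalent up to a sign, and the exact form is what you actually use.
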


\begin{proof}
    The first equality is Lm.~\ref{lm: transitivity}. For the second equality, let first $URU^T=R$ for $R=W_r$ with $r=s,a$. For $P\in L_r$, we have $PR+RP^T=0$ and thus 
    \begin{align*}
        (UPU^\dagger)R+R(UPU^\dagger)^T
        =(UPU^\dagger)(URU^T)+(URU^T)(UPU^\dagger)^T
        =U(PR+RP^T)U^T
        =0\; .
    \end{align*}
    Moreover, since $U$ is Clifford, $UPU^\dagger$ is again a Pauli operator (up to phase), which proves that $\mathrm{P}(\OG_{W_sK}(2^n)\cap\Cl_n)\subset\PStab(L_s)$ and $\mathrm{P}(\Sp_{W_aK}(2^{n-1})\cap\Cl_n)\subset\PStab(L_a)$.
    
    Conversely, let $U\in\PStab(L_r)$. By the proof of Lm.~\ref{lm: transitivity}, $L_r$ generates the Pauli Lie algebra $\fg_R=\{x\in\mathfrak{su}(2^n)\mid xR+Rx^T=0\}$ for $R=W_r$ (that is, $\fg_{W_s}=\sof_{W_sK}(2^n)$ for $W^T_s=W_s$ and $\fg_{W_a}=\spf_{W_aK}(2^{n-1})$ for $W^T_a=-W_a$), hence, $U\fg_R U^\dagger=\fg_R$. Next, observe that $R':=U^\dagger R\overline{U}$ is another $\fg_R$-invariant nondegenerate bilinear form on the irreducible defining vector space $\C^{2^n}$. Indeed, from $UxU^\dagger\in\fg_R$ for every $x\in\fg_R$, multiplication of $(UxU^\dagger)R+R(UxU^\dagger)^T=0$ from the left by $U^\dagger$ and from the right by $\overline{U}$ yields $xR'+R'x^T=0$. Comparing this with $xR+Rx^T=0$ gives $xR'R^{-1}=R'R^{-1}x$ for all $x\in\fg_R$. Since the defining representation is irreducible (and since $R,R'$ are unitary), Schur's lemma implies $R'=U^\dagger R\overline{U}=\lambda R$ with $\lambda\in U(1)$. Consequently, after re-phasing $U'=\sqrt{\lambda}U$, this gives $U'RU'^T=R$, and thus projectively, and together with $U\in\pCl_n$ by Lm.~\ref{lm: transitivity}, that $U\in\mathrm{P}(\OG_{RK}(2^n)\cap\Cl_n)$, respectively $U\in\mathrm{P}(\Sp_{RK}(2^{n-1})\cap\Cl_n)$.
\end{proof}

In particular, the isometry group of the quadratic refinement $q_+=q_W=a\cdot b$ in the Weyl representation (see Eq.~(\ref{eq: Weyl operators})) corresponds with the real Clifford group $\Cl^\R_n:=\OG(2^n)\cap\Cl_n=\{C\in\Cl_n\mid C_{ij}\in\R\}$, via $\OG^+(2n,\zz_2)\cong\pi(\Cl^\R_n)$. A generating set is given by Paulis, Hadamard and CNOT gates (phase gates are evidently not real). In particular, $q_+=q_W=q_0$ with $\OG(2^n)=\OG_\one(2^n)$ (see Eq.~(\ref{eq: so and sp})).

Note also that by Lm.~\ref{lm: transitivity}, the stabiliser groups $\PStab(q)$ and $\PStab(q')$ for quadratic refinements $q,q'$ with $\Arf(q')=\Arf(q)$ are related by conjugation with a Clifford unitary $C_S\in\Cl_n$, via
\begin{align*}
    \pi^{-1}(\OG(q'))
    =C_S\pi^{-1}(\OG(q))C^{-1}_S\; ,
\end{align*}
whose symplectic projection $S=\pi(C)\in\Sp(2n,\zz_2)$ satisfies $q'=S\cdot q$. In particular, $\OG(q_r)$ for $r\in\cT_+$ is Clifford-conjugate to $\OG(q_W)=\pi(\Cl^\R_n)$, yet, unlike $\Cl^\R_n$, is generally not real.

For an explicit Clifford representation of $\PStab(L_r)$ with $r\in\cT_-$, define the \emph{transvections}
\begin{align}\label{eq: transvection}
    t_u(v):=v+\omega(u,v)u
\end{align}
for every $u,v\in V$. Then for $q_r(u)=1$ (and only in this case) one finds $t_u\in\OG(q_r)$ since,
\begin{align}\label{eq: action by transvection}
    q_r(t_u(v))
    =q_r(v+\omega(u,v)u)
    =q_r(v)+\omega(u,v)(q_r(u)+1)
    =q_r(v)\; ,
\end{align}
where we used the defining property of a quadratic refinement in the second step. In terms of (projective) Clifford unitaries, $t_u$ lifts to conjugation by $C_{t_u}:=e^{-\frac{i\pi}{4}W_u}$ (up to phase). Indeed,
\begin{align}\label{eq: Clifford transvection}
    T_u(W_v):=\Ad_{C_{t_u}}(W_v)
    =C_{t_u}W_vC^\dagger_{t_u}
    \sim\begin{cases}
        W_v&\mathrm{if}\ \omega(u,v)=0\\
        W_uW_v\sim W_{u+v}&\mathrm{if}\ \omega(u,v)=1
    \end{cases}\; ,
\end{align}
and thus $T_u(W_v)\sim W_{t_u(v)}$. Conversely, since $\OG^-(2n,\zz_2)$ is generated by transvections (see, e.g. Cor.~11.42 in Ref.~\cite{Taylor1992}),\footnote{Over $\zz_2$, the transvections $t_u(v)=v+\omega(u,v)u$ with $q(u)=1$ are the orthogonal reflections in the terminology of Ref.~\cite{Taylor1992}. By Cor.~11.42 therein, these transformations generate $O(q)$ (except for $O^+(4,2)$, where they generate an index-$2$ subgroup instead). This is the characteristic-two analogue of the Cartan–Dieudonn\'e theorem for fields $\mathrm{char}(K)\neq 2$, which asserts that any orthogonal transformation on a finite-dimensional vector space $V$ with non-degenerate symmetric bilinear form over $K$ is generated by at most $\mathrm{dim}(V)$ reflections. In fact, the result extends to all fields $K\neq\zz_2$ (see Thm.~11.39 of Ref.~\cite{Taylor1992}).\label{fn: Cartan-Dieudonne}}
it follows that all Clifford unitaries corresponding to $\OG^-(2n,\zz_2)$ are $\frac{\pi}{2}$-rotations along the Pauli directions in $L_r$, that is, for $r\in\cT_-$,
\begin{align*}
    \pi^{-1}(\OG(q_r))
    =\langle\mathrm{P}(C_{t_u})
    \mid u\in L_r\rangle\subset\pCl_n\; .
\end{align*}
In fact, transvections generate the isometry groups $\OG(q)$ for any quadratic refinement (with the exception of $q_+$ for $n=2$).$^{\ref{fn: Cartan-Dieudonne}}$ Using this, we obtain an alternative proof for Lm.~5 in Ref.~\cite{FrembsHoehnNataleWever2026a}. To this end, recall that the commutant of $Q\in\aP_n$ in $X\subset\aP_n$ is defined by $C_X(Q)=\{P\in X\mid [P,Q]=0\}$.

\begin{lemma}\label{lm: constant commutants}
    Let $(V=\zz^{2n}_2,\omega)$ be a symplectic space, and $q_r$, $r\in V$ be a quadratic refinement of $\omega$ with $0$- and $1$-level sets denoted by $J_r$ and $L_r$, respectively (see Eq.~(\ref{eq: J and L})). Then the cardinalities of the commutants $C_X(Q)$ for $X=J_r,L_r$ depend only on $q_r(Q)$, and are given by,
    \begin{equation}\label{eq: constant commutants}
    \begin{aligned}
        |C_{J_r}(Q\in L_r)|
        &=4^{n-1} &\hspace{1cm}
        |C_{J_r}(Q\in J^\circ_r)|
        &=4^{n-1}+(-1)^{\Arf(q_r)}2^{n-1} \\
        |C_{L_r}(Q\in L_r)|
        &=4^{n-1} &\hspace{1cm}
        |C_{L_r}(Q\in J^\circ_r)|
        &=4^{n-1}-(-1)^{\Arf(q_r)}2^{n-1}\; .
    \end{aligned}
    \end{equation}
\end{lemma}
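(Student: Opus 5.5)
We need to count, for fixed $Q=W_u$ with $u\neq 0$, the vectors $v$ in $u^\perp$ with $q_r(v)=0$ or $1$. The paper suggests using transvections (the orthogonal group acts transitively on each nonzero level set); a more direct route is to compute the size of $u^\perp\cap q_r^{-1}(0)$ by character sums, and we plan to do that, using transitivity only as a cross-check. Write $N_0(u):=|\{v\in u^\perp\mid q_r(v)=0\}|$. Since $|u^\perp|=2^{2n-1}$, the counts in $L_r$ follow as $2^{2n-1}-N_0(u)$.

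\textbf{Step 1 (character sum).} Use the indicator of $u^\perp$, namely $\frac12(1+\chi_u(v))$, to write
\begin{align*}
2N_0(u)-|u^\perp| \;=\; \sum_{v\in u^\perp}(-1)^{q_r(v)} \;=\; \frac12\sum_{v\in V}(-1)^{q_r(v)}+\frac12\sum_{v\in V}(-1)^{q_r(v)+\omega(u,v)}\; .
\end{align*}
By Eq.~(\ref{eq: zero count}), $\sum_v(-1)^{q(v)}=2N(q)-4^n=(-1)^{\Arf(q)}2^n$ for any quadratic refinement $q$. The second sum involves $q'=q_r+\omega(\cdot,u)$, which by Lm.~\ref{lm: Arf reps} is again a quadratic refinement, with $\Arf(q')=\Arf(q_r)+q_r(u)$. (Check the sign convention: $\omega(v,u)=\omega(u,v)$ over $\zz_2$, so it is harmless.) Therefore
\begin{align*}
\sum_{v\in u^\perp}(-1)^{q_r(v)}=2^{n-1}(-1)^{\Arf(q_r)}\bigl(1+(-1)^{q_r(u)}\bigr).
\end{align*}

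\textbf{Step 2 (read off cases).} If $q_r(u)=1$, i.e.\ $Q\in L_r$, the sum vanishes, so $N_0=2^{2n-2}=4^{n-1}$ and $|C_{L_r}(Q)|=2^{2n-1}-4^{n-1}=4^{n-1}$. If $q_r(u)=0$ with $u\neq0$, i.e.\ $Q\in J_r^\circ$, then $2N_0-2^{2n-1}=(-1)^{\Arf}2^n$, giving $N_0=4^{n-1}+(-1)^{\Arf}2^{n-1}$ and $|C_{L_r}(Q)|=4^{n-1}-(-1)^{\Arf}2^{n-1}$. This matches Eq.~(\ref{eq: constant commutants}), with the commutant in $J_r$ including the identity $v=0$ and $v=u$ itself; that convention must be checked against the paper's definition of $C_X(Q)$. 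The dependence only on $q_r(Q)$ is then manifest.

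\textbf{Main obstacle.} The delicate point is the bookkeeping of conventions rather than the algebra. We need to confirm that commutants are counted inside $J_r$ including $\one$, and that Lm.~\ref{lm: Arf reps} applies with $\omega(v,u)$ versus $\omega(u,v)$. As a consistency check, alternatively invoke transitivity of $\OG(q_r)$, generated by transvections $t_u$ with $q_r(u)=1$ (Eq.~(\ref{eq: action by transvection})), on each nonzero level set. This shows constancy a priori, and double counting pairs $(Q,P)$ then gives the same numbers.
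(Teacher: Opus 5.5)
Your proof is correct, but it takes a different route from the paper's.

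\textbf{The paper's route.} The paper first argues constancy through transvections: $\OG(q_r)$ is generated by the $t_u$ with $q_r(u)=1$, with $\OG^+(4,\zz_2)$ checked by hand, and it acts transitively on the nonzero level sets. It then counts geometrically, in two cases:
\begin{itemize}
\item If $q_r(w)=0$, it descends $q_r$ to $w^\perp/\langle w\rangle$. It asserts, without argument, that the induced refinement has the same Arf invariant, applies Eq.~(\ref{eq: zero count}) in dimension $2(n-1)$, and doubles.
\item If $q_r(w)=1$, it notes that $v\mapsto v+w$ is a fixed-point-free involution of $w^\perp$ exchanging $J_r\cap w^\perp$ and $L_r\cap w^\perp$.
\end{itemize}
This direct count already depends only on $q_r(w)$, so the transitivity step is in fact redundant.

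\textbf{Your route.} You replace all of this by a single twisted Gauss sum,
\[
\sum_{v\in u^\perp}(-1)^{q_r(v)}=2^{n-1}(-1)^{\Arf(q_r)}\bigl(1+(-1)^{q_r(u)}\bigr),
\]
using only Eq.~(\ref{eq: zero count}) and Lm.~\ref{lm: Arf reps} applied to $q_r+\omega(\cdot,u)$. This treats both cases uniformly. It avoids the quotient construction and the need to control its Arf invariant, and it needs neither generation by transvections nor Witt's theorem. It is also the same identity the paper itself invokes later in Cor.~\ref{cor: state-independent unitarity}. What the paper's version buys in exchange is that it exhibits the residue-space geometry behind the numbers.

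\textbf{Your convention check.} It comes out in your favour. The paper writes $C_{J_r}(W_w)=J_r\cap w^\perp$, which contains both $0$ and $w$, so it is exactly your $N_0(u)$.

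\textbf{One caveat.} Your optional double-counting cross-check, even granting constancy, yields only a single linear relation between $|C_{J_r}(Q\in J_r^\circ)|$ and $|C_{J_r}(Q\in L_r)|$. It can confirm that the values are consistent, but it cannot reproduce them on its own. This is harmless, since your main argument does not rely on it.
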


\begin{proof}
    For every $u\in L_r=\{v\in V\mid q_r(v)=1\}$, define the bijective map $T_u:\aP_n\ra\aP_n$ by
    \begin{align}\label{eq: transvections}
        T_u(W_v)
        =\Ad_{C_{t_u}}(W_v)
        \sim\begin{cases}
            W_v&\mathrm{if}\ \omega(v,u)=0\\
            W_uW_v&\mathrm{if}\ \omega(v,u)=1
        \end{cases}\; .
    \end{align}
    $T_u$ corresponds with the Clifford representation of the transvections in Eq.~(\ref{eq: transvection}), that is, with (a conjugation by) the Clifford unitary $C_{t_u}=e^{-\frac{i\pi}{4}W_u}$ which induces a $\frac{\pi}{2}$-rotation around the Pauli-$W_u$ axis in Eq.~(\ref{eq: Clifford transvection}). From the relations in Eq.~(\ref{eq: algebraic closure relations}), $T_u(J_r)=J_r$ and $T_u(L_r)=L_r$. Consequently,
    \begin{equation}\label{eq: transvected commutants}
    \begin{aligned}
        T_u(C_{J_r}(Q))
        &=\{T_u(P)\in J_r\mid[P,Q]=0\}\\
        &=\{P'\in J_r\mid[T_u(P'),Q]=0\}\\
        &=\{P'\in J_r\mid[P',T_u(Q)]=0\}
        =C_{J_r}(T_u(Q))\; ,
    \end{aligned}
    \end{equation}
    where we used $T^2_u=\mathrm{id}$ in the second, and $[T_u(P),T_u(Q)]=[P,Q]$ for all $P,Q\in\aP_n$ in the third step, both of which are readily verified from Eq.~(\ref{eq: transvections}). The case $T_u(C_{L_r}(Q))=C_{L_r}(T_u(Q))$ is analogous. Now, by Cor.~11.42 in Ref.~\cite{Taylor1992},$^{\ref{fn: Cartan-Dieudonne}}$ the isometry groups $\OG(q)<\Sp(2n,\zz_2)$ are generated by the transvections $t_u$ for $u\in L_r$, that is, $\OG(q_r)=\langle t_u\mid q_r(u)=1\rangle$ (for all but $\OG^+(4,\zz_2)$, that is, $n=2$ and $\Arf(q)=0$, in which case one verifies correctness of Eq.~(\ref{eq: constant commutants}) by direct computation). Moreover, $\OG(q_r)$ acts transitively on the level sets of $q_r$ in $V^\circ$ (see e.g. Thm.~7.4 in Ref.~\cite{Taylor1992}). Together with Eq.~(\ref{eq: transvected commutants}) this implies that $|C_{J_r}(Q)|$ and $|C_{L_r}(Q)|$ depend only on $q_r(w)$ for $Q=W_w$.
    
    To compute these cardinalities, let $Q=W_w$. Since $[W_v,Q]=0$ if and only if $v\in w^\perp$, we have
    \begin{align*}
        C_{J_r}(Q)&=J_r\cap w^\perp
        &
        C_{L_r}(Q)&=L_r\cap w^\perp\; .
    \end{align*}
    Now, if $q_r(w)=0$, then for $v\in w^\perp$ we have $q_r(v+w)=q_r(v)+q_r(w)+\omega(v,w)=q_r(v)$, that is, $q_r$ is constant on the cosets of $\langle w\rangle$ and descends to a quadratic refinement $\overline{q}_r$ of the induced nondegenerate symplectic form on $w^\perp/\langle w\rangle\cong\zz_2^{2(n-1)}$ with $\Arf(\overline{q}_r)=\Arf(q_r)$. By Eq.~(\ref{eq: zero count}), $|\overline{q}_r^{-1}(0)|=\frac{1}{2}4^{n-1}+(-1)^{\Arf(q_r)}2^{n-2}$, and every such coset lifts to two elements in $w^\perp$, which yields the two equations on the right of Eq.~(\ref{eq: constant commutants}). If instead $q_r(w)=1$, then $q_r(v+w)=q_r(v)+q_r(w)+\omega(v,w)=q_r(v)+1$, hence, $v\mapsto v+w$ is a fixed-point-free involution of $w^\perp$ interchanging $J_r\cap w^\perp$ and $L_r\cap w^\perp$. Clearly, this implies that exactly half of the $|w^\perp|=|C_{J_r}(Q)|+|C_{L_r}(Q)|=2^{2n-1}$ elements are singular.
\end{proof}

Note that the cardinalities in each column of Eq.~(\ref{eq: constant commutants}) are not independent but sum to $\frac{1}{2}4^n$, expressing that every non-identity Pauli operator (anti-)commutes with half the elements in $\aP_n$.

Notably, Lm.~\ref{lm: constant commutants} applies to both the orthogonal and symplectic Pauli Lie algebras. Comparing with the derivation of pure-state invariants in Ref.~\cite{FrembsHoehnNataleWever2026a} this suggests both cases should lead to pure-state invariants. However, this is not the case: while the relation in Thm.~1 of Ref.~\cite{FrembsHoehnNataleWever2026a},
\begin{align}\label{eq: P1 relation}
    I_\psi(J_r)
    &=\frac{1-2^{n-1}}{2^{n-1}}|J_r|+\frac{1}{2^{n-1}}\sum_{\one\neq P\in\aP_n} |C_{J_r}(P)|\alpha^2_P \nonumber \\
    \Longleftrightarrow\quad \left(1-\frac{|C_{J_r}(Q\in J^\circ_r)|-|C_{J_r}(Q\in L_r)|}{2^{n-1}}\right)I_\psi(J_r)
    &=\frac{1-2^{n-1}}{2^{n-1}}|J_r|+\frac{(2^n|C_{J_r}(Q\in L_r)|-|C_{J_r}(Q\in J^\circ_r)|)}{2^{n-1}}\; ,
\end{align}
is valid, both for the generating sets of the orthogonal and symplectic Lie algebras in Eq.~(\ref{eq: so and sp}), only in the latter case does it give rise to pure-state invariants; indeed, by Lm.~\ref{lm: constant commutants},
\begin{align}\label{eq: pre-factor}
    1-\frac{|C_{J_r}(Q\in J^\circ_r)|-|C_{J_r}(Q\in L_r)|}{2^{n-1}}
    =1-\frac{(-1)^{\Arf(q_r)}2^{n-1}}{2^{n-1}}
    =1-(-1)^{\Arf(q_r)}\; ,
\end{align}
hence, this pre-factor in Eq.~(\ref{eq: P1 relation}) vanishes for $\Arf(q_r)=0$, that is, for $r\in\cT_+$, rendering Eq.~(\ref{eq: P1 relation}) tautological (the RHS of Eq.~(\ref{eq: P1 relation}) for $R^T=R$, becomes $2^{n-1}(1-(-1)^{\Arf(q_r)})$). Eq.~(\ref{eq: pre-factor}) therefore distinguishes the symplectic from the orthogonal Pauli Lie Jordan algebras (see Def.~\ref{def: PLJ algebra}), and explains why only the former give rise to pure-state invariants.

In addition to highlighting this distinction, the comparison of the isometry groups of quadratic refinements will allow us to prove a counterpart to Thm.~4 of Ref.~\cite{HashagenFlammiaGrossWallman2018}, which establishes the real Clifford group $\Cl^\R_n$ with $\OG^+(2n,\zz_2)\cong\pi(\Cl^\R_n)$ as an orthogonal $2$-design. By contrast, the group $\OG^-(2n,\zz_2)$ seems to have received much less attention than its $+$-type counterpart (see, however, Ref.~\cite{CalderbankRainsShorSloane1996,DehaeneDeMoor2003} and the recent Ref.~\cite{GargiuloHerringerZeier2026}). In Sec.~\ref{sec: 2-designs}, we prove that similar to its more popular cousin, $\OG^-(2n,\zz_2)$ gives rise to complex projective state $2$-designs. In addition, the next section collects a number of other applications of the structure of pure-state invariants that promotes their usefulness beyond its immediate geometric description of $n$-qubit pure states.

\section{Applications}\label{sec: applications}

Thm.~\ref{thm: pure-state invariants as quadratic refinements} identifies update-closed Boolean pure-state invariants with the level sets of quadratic refinements with Arf invariant $1$. Unlike the prevalently discussed refinement with Arf invariant $0$ in the Weyl representation (see Eq.~(\ref{eq: Weyl operators})), we have seen that it gives rise to interesting properties. In this section, we collect some further corollaries of Thm.~\ref{thm: all weighted quadratic invariants} and Thm.~\ref{thm: all update-closed pure-state invariants}, thus highlighting the broader applicability of our results.

First, in Sec.~\ref{sec: uncertainty relations}, we show that Thm.~\ref{thm: all update-closed pure-state invariants} is equivalent to a characterisation of all weighted Pauli subsets whose total variance is constant, which in turn give rise to uncertainty relations in variance.  This is another indication that underlying the uncertainty relations in Cor.~\ref{cor: all variance uncertainty relations} are certain maximal complementarity structures, which we analyse in more detail in Ref.~\cite{FrembsHoehnNataleWever2026b}. In Sec.~\ref{sec: Pauli channels} and Sec.~\ref{sec: positive Pauli projection maps}, we further relate these results to benchmarking parameters for Pauli channels, as well as to positivity of Pauli projections. The remaining sections exploit the symmetry properties of update-closed pure-state invariants. Sec.~\ref{sec: 2-designs} shows that the stabiliser groups of these pure-state invariants give rise to complex projective $2$-designs. Finally, in Sec.~\ref{sec: reconstruction} we make contact with the reconstruction programme laid out in Ref.~\cite{Hoehn2017}, which uniquely identifies $\PSU(4)$ as the subgroup within all Bloch symmetries $\SO(15)$ that preserve the structure underlying pure-state invariants in the two-qubit case. In light of the present results, this motivates a generalisation of the reconstruction to arbitrary $n$, which we leave to future work.

\subsection{Complementarity and variance-based uncertainty relations}\label{sec: uncertainty relations}

So far, we have mostly considered pure states. Yet, the quadratic measure $I$ in Def.~\ref{def: I} is defined for all quantum states. For applications, it is important to understand its behaviour away from the purity boundary. To this end, we point out that pure-state invariants satisfy the following important bound, justifying in part its interpretation as an information measure in Ref.~\cite{BruknerZeilinger1999,BruknerZeilinger2001,Hoehn2017,HoehnWever2017}.

\begin{lemma}\label{lm: mixed state upper bound}
    For any pure-state invariant $\zeta\in\R^V_{\geq 0}$ and mixed state $\rho$, it holds that
    \begin{equation*}
        I_\rho(\zeta)\leq I_\psi(\zeta)\; .
    \end{equation*}
\end{lemma}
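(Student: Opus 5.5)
The plan is to use convexity of $I_\rho(\zeta)$ in $\rho$ together with the spectral decomposition of $\rho$, rather than the operator picture of Lm.~\ref{lm: symmetric decomposition}. The operator picture is less convenient here because $\rho\otimes\rho$ for mixed $\rho$ is in general not supported on $\Sym^2(\C^{2^n})$.

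\textbf{Step 1 (convexity).} By Def.~\ref{def: I},
\begin{align*}
    I_\rho(\zeta)=\sum_{v\in V}\zeta(v)\,\ell_v(\rho)^2,
\end{align*}
where $\ell_v(\rho):=\tr[\rho W_v]$ is a real-linear functional on $\LHsa$. Each map $\rho\mapsto\ell_v(\rho)^2$ is the square of a linear function, hence convex. Since $\zeta\in\R^V_{\geq 0}$ by assumption, $I_\rho(\zeta)$ is a nonnegative combination of convex functions, and therefore convex on $\Den$. The identity term $v=0$ contributes $\zeta(0)\tr[\rho]^2=\zeta(0)$ for every state, so it is constant and harmless.

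\textbf{Step 2 (decomposition into pure states).} Write $\rho=\sum_i p_i\, p_{\psi_i}$ with $p_i\geq 0$ and $\sum_i p_i=1$, for instance via its spectral decomposition. Jensen's inequality for the convex function from Step 1 gives
\begin{align*}
    I_\rho(\zeta)\leq\sum_i p_i\, I_{\psi_i}(\zeta).
\end{align*}

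\textbf{Step 3 (invariance).} Because $\zeta$ is a pure-state invariant, $I_{\psi_i}(\zeta)=\lambda_\zeta$ for all $i$, where $\lambda_\zeta$ is the constant in Lm.~\ref{lm: invariant explicit} and Lm.~\ref{lm: symmetric decomposition}. Hence $\sum_i p_i\, I_{\psi_i}(\zeta)=\lambda_\zeta=I_\psi(\zeta)$, which is the claimed bound $I_\rho(\zeta)\leq I_\psi(\zeta)$.

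There is no serious obstacle; the only point requiring care is where the hypothesis $\zeta\geq 0$ enters. It is exactly what makes the weighted sum of squares convex, and without it the inequality can fail, for example for the negative-weight combinations $\beta_a\xi_a$ allowed in Thm.~\ref{thm: all weighted quadratic invariants}. Optionally, one can also characterise equality. Equality in Jensen's inequality requires $\ell_v(p_{\psi_i})$ to be independent of $i$ for every $v$ with $\zeta(v)>0$. For Boolean $\zeta=\ind_X$ with $X$ spanning $V$ (Lm.~\ref{lm: transitivity}), this forces all the $\psi_i$ to coincide, so equality holds only for pure states.
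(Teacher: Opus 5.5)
Your proof is correct and is essentially the paper's own argument: convexity of $I_\rho(\zeta)$ from $\zeta\geq 0$, the spectral decomposition $\rho=\sum_ip_ip_{\psi_i}$, and constancy of $I_{\psi_i}(\zeta)$.

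Your optional equality remark is false, although the lemma does not need it. That $X$ generates $V$ over $\zz_2$ does not imply that the values $\tr[p_{\psi_i}W_v]$, $v\in X$, determine $p_{\psi_i}$; that would require $\{W_v\}_{v\in X}$ to span $\LHsa$ over $\R$, i.e.\ $X=V$. For instance, take $n=2$ and $W_a=Y\otimes\one$. Then $J_a=\{\one\otimes\one,\one\otimes X,\one\otimes Z,Y\otimes Y,X\otimes Y,Z\otimes Y\}$, which spans $V$. Yet the rank-two state $\rho=\frac{1}{4}(\one\otimes\one+\one\otimes Z)$ attains $I_\rho(J_a)=2=I_\psi(J_a)$, so equality does not force purity.
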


\begin{proof}
    Note that $I_\rho(\zeta)$ is a quadratic function in the Bloch components of $\rho$ whose coefficients are positive by assumption. It follows that $I_\rho(\zeta)$ is convex, that is, $I_{p\rho+(1-p)\rho'}(\zeta)\leq pI_\rho(\zeta)+(1-p)I_{\rho'}(\zeta)$ for all $\rho,\rho'\in\cD(\C^{2^n})$ and $p\in[0,1]$. Now, let $\rho=\sum_ip_ip_{\psi_i}$ be the decomposition of $\rho$ into orthogonal rank-1 eigenspaces (possibly including degeneracies). Then
    \begin{align*}
        I_\rho(\zeta)
        =I_{\sum_ip_ip_{\psi_i}}(\zeta)
        \leq \sum_ip_iI_{\psi_i}(\zeta)
        =I_\psi(\zeta)\; ,
    \end{align*}
    where we used that $I_\psi(\zeta)$ is a pure-state invariant in the last step.
\end{proof}

Ref.~\cite{BruknerZeilinger1999,BruknerZeilinger2001,Hoehn2017,HoehnWever2017} argue that a suitable information measure for quantum systems needs to take into account that measurements are generally complementary. The classical Shannon entropy is expressed for a single measurement only and thus inadequate from this vantage point. By taking into account more than one measurement, an experimenter needs to also account for the uncertainty between measurements. The classical way to express this is in terms of variances.\footnote{In more modern terms, uncertainty is measured using entropic quantities. In Ref.~\cite{FrembsHoehnNataleWever2026b}, we relate pure-state invariants with complementarity structures and find that these give rise to entropic uncertainty relations, too.}

Recall that the variance of an observable $O\in\LHsa$ in the state $\rho\in\Den$ is given by,
\begin{align}\label{eq: variance}
    \Var_\rho(O)
    =\mathbb{E}_\rho(O^2)-\mathbb{E}^2_\rho(O)
    =\tr[O^2\rho]-\tr[O\rho]^2\; .
\end{align}
We will use the shorthand $\Var_\psi:=\Var_{p_\psi}$. Now, consider a scenario where an observable $O_k$ from a set $\{O_k\}_k$ is measured with probability weight $\mu(O_k)$. Then the averaged preparation uncertainty is the weighted sum of the variances of the individual observables. In the case that the observables are taken from the $n$-qubit Pauli group, Thm.~\ref{thm: all weighted quadratic invariants} implies a characterisation of all tight variance-based uncertainty relations, whose bound is constant on all pure states.

\begin{corollary}\label{cor: all variance uncertainty relations}
    Let $\mu$ be a probability distribution over $\aP_n$. Then any tight uncertainty relation such that $\sum_{Q\in\aP_n}\mu(Q)\Var_\psi(Q)=\mathrm{const}$ for all $\psi\in\CP^{2^n-1}$  corresponds with a pure-state invariant.
\end{corollary}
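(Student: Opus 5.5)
The plan is to rewrite the weighted variance sum in terms of the quadratic measure of Def.~\ref{def: I}, and then read off the claim from Thm.~\ref{thm: all weighted quadratic invariants}.

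For any Hermitian Pauli operator $Q=W_v\in\aP_n$ we have $Q^2=\one$. Hence, by Eq.~(\ref{eq: variance}), for every pure state $\psi$,
\begin{align*}
    \Var_\psi(W_v)=1-\tr[p_\psi W_v]^2=1-\alpha_v^2\; .
\end{align*}
Writing $\mu(v):=\mu(W_v)$ and using $\sum_v\mu(v)=1$, we obtain
\begin{align*}
    \sum_{Q\in\aP_n}\mu(Q)\Var_\psi(Q)=1-\sum_{v\in V}\mu(v)\alpha_v^2=1-I_\psi(\mu)\; ,
\end{align*}
where $\mu\in\R^V_{\geq0}$ is viewed as a weight function $\zeta=\mu$ as in Eq.~(\ref{eq: I}). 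The weighted variance is therefore constant on $\CP^{2^n-1}$ if and only if $I_\psi(\mu)$ is constant, that is, if and only if $\mu$ is a pure-state invariant. This establishes the correspondence.

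Next, we make the correspondence explicit using Thm.~\ref{thm: all weighted quadratic invariants}. It gives $\mu=\beta_0\zeta_0+\sum_{a\in\cT_-}\beta_a\xi_a$, subject to the additional constraints $\mu\ge0$ and normalisation. By Lm.~\ref{lm: invariant explicit}, the constant value of the variance sum is
\begin{align*}
    1-\mu(0)-\frac{1}{2^n+1}\sum_{v\neq0}\mu(v)=\frac{2^n}{2^n+1}\bigl(1-\mu(0)\bigr)\; .
\end{align*}
Tightness of the resulting relation then needs a short argument. The bound is attained by every pure state. For mixed states, concavity of the variance sum in $\rho$ (equivalently, convexity of $I_\rho$, as in Lm.~\ref{lm: mixed state upper bound}) shows that the pure-state value is a lower bound. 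So the constant is the optimal state-independent bound, and it is attained.

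The only point needing care is the interpretation of ``tight''. I would read it as ``the bound is attained with equality on all pure states''. Under that reading, the equivalence with constancy, and hence with a pure-state invariant, is immediate from the identity above. Beyond that, I expect no real obstacle, since the substantive work is contained in Thm.~\ref{thm: all weighted quadratic invariants}. If desired, one can also note the Boolean special cases: by Thm.~\ref{thm: all Boolean pure-state invariants}, uniform $\mu$ on a set $X$ corresponds to $X$ being an $m$-ovoid.
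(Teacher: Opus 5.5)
Your proposal is correct and follows essentially the same route as the paper: the identity $\sum_{Q}\mu(Q)\Var_\psi(Q)=1-I_\psi(\mu)$, obtained from $Q^2=\one$ and the normalisation of $\mu$, reduces the claim to Thm.~\ref{thm: all weighted quadratic invariants}, and tightness follows from the convexity argument of Lm.~\ref{lm: mixed state upper bound}, which applies because $\mu\geq 0$. Your explicit evaluation of the constant as $\frac{2^n}{2^n+1}(1-\mu(0))$ via Lm.~\ref{lm: invariant explicit} is correct and is a small addition that the paper does not spell out.
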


\begin{proof}
    Inserting Eq.~(\ref{eq: variance}) and using $P^2=\one$ for every Hermitian Pauli $P\in\aP_n$, one finds
    \begin{equation}\label{eq: uncertainty relation}
        \sum_{Q\in\aP_n}\mu(Q)\Var_\psi(Q)
        =\sum_{Q\in\aP_n}\mu(Q)(\tr[Q^2p_\psi]-\tr[Qp_\psi]^2)
        =1-I_\psi(\mu)\; ,
    \end{equation}
    which is constant for all $\psi\in\CP^{2^n-1}$ if and only if $I_\psi(\mu)=\mathrm{const}$, hence, Thm.~\ref{thm: all weighted quadratic invariants} applies. Moreover, $1-I_\psi(\mu)$ is a tight lower bound for the average uncertainty by Lm.~\ref{lm: mixed state upper bound}.
\end{proof}

Cor.~\ref{cor: all variance uncertainty relations} considers the (individual) variances of (dichotomic) Pauli observables which are highly degenerate (for $n>>1$). On the other extreme lie uncertainty relations for non-degenerate observables. For instance, one often considers non-degenerate observables whose eigenvectors correspond with sets of mutually unbiased bases \cite{Ivonovic1981,WoottersFields1989,SanchezRuiz1994,Azarchs2004,BallesterWehner2007}. Such bases are readily obtained from common eigenbases of stabiliser states of maximal sets of mutually commuting Pauli operators. The corresponding non-degenerate observables are thus associated with maximal Abelian stabiliser subgroups of the Pauli group $\{A_k\}_{k=0}^{K}$ for $A_k<\cP_n$, that is, they are of the form $O_k=\sum_{\chi\in A^*_k}o_\chi\Pi_{\chi}$ with outcomes $o_\chi\in\R$ such that $o_\chi\neq o_{\chi'}$ for $\chi\neq\chi'$ and where $A^*_k$ denotes the set of characters of $A_k$.

Uncertainty relations for sets of non-degenerate observables associated with Pauli operators in this way have been studied before. In particular, for two as well as maximal sets of $2^n+1$ such observables one obtains entropic uncertainty relations with strong bounds \cite{BallesterWehner2007,WehnerWinter2010}. As with uncertainty relations in general, these arise from the complementary nature of quantum observables \cite{Bohr1928,BuschHeinosaariSchultzStevens2020,StarkeEtAl2026}. In the setting of non-degenerate observables associated to the Pauli group, this connection is particularly strong: the complementary nature of observables is expressed in the mutual unbiasedness of their eigenspaces. Put in information-theoretic terms: \emph{two observables are complementary if certainty about (the outcome of) one implies complete uncertainty about (the outcomes of) the other and vice versa}. A set of observables is complementary if any two of its members are.

In Ref.~\cite{FrembsHoehnNataleWever2026b}, we study this notion of complementarity more generally for degenerate observables associated with the $n$-qubit Pauli group, that is, observables whose eigenspaces correspond with the joint eigenspaces of Pauli operators in not necessarily maximal Abelian subgroups $A<\aP_n$. In particular, we relate maximal sets (of cardinality $2^n+1$) of such complementary observables with pure-state invariants and prove that similar to the non-degenerate case, these sets give rise to (entropic) uncertainty relations with strong bounds.\footnote{The characterisation of non-maximal complementary Abelian subgroups in the Pauli and the specification of the bounds on uncertainty relations they give rise to remain open problems.} As alluded to in Sec.~\ref{sec: update-closed Boolean invariants}, maximal sets of complementary observables associated with the Pauli group correspond with $h$-systems (see Def.~\ref{def: h-systems}), for which a classification is still missing. Still, with Thm.~\ref{thm: all update-closed pure-state invariants} we obtain the following sub-classification. We again identify $V\cong\aP_n$, in particular, Abelian subgroups $A<\aP_n$ up to phase with isotropic subspaces $U\subset V$, $\omega_U=0$.

\begin{corollary}\label{cor: all Jordan-closed complementary sets}
    Proper maximal, Jordan-closed complementary sets of Abelian stabiliser subgroups ($-\one\notin\tA<\cP_n$), correspond with partitions of update-closed pure-state invariants,
    \begin{align}
        J_a^\circ
        =\bigcupdot_{i=1}^{2^n+1}\tA_i^\circ\; ,
    \end{align}
    into subgroups of dimension $n-1$ for some $a\in T_-$ (see Def.~\ref{def: family of pure-state invariants}).
\end{corollary}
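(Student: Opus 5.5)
We prove both directions by reducing the statement to Lm.~\ref{lm: Jordan-closed ovoids}, Thm.~\ref{thm: all update-closed pure-state invariants} and Cor.~\ref{cor: completeness}.

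We first fix the meaning of the objects. A \emph{maximal complementary set} is a collection $\{\tA_i\}_{i=1}^{2^n+1}$ of Abelian subgroups (isotropic subspaces $U_i<V$) that are pairwise opposite, $U_i\cap U_j^\perp=\{0\}$ for $i\neq j$, i.e.\ an $h$-system in the sense of Def.~\ref{def: h-systems}. It is \emph{Jordan-closed} if the union $X:=\{0\}\cup\bigcup_i U_i^\circ$ is closed under sums of commuting elements, i.e.\ Eq.~(\ref{eq: symplectic inference}). \emph{Proper} excludes the trivial cases $h=-1$ (empty) and $h=n-1$, where the union is all of $V^\circ$ because spreads have $m=2^n-1$.

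\emph{From complementary sets to invariants.} Let $\{U_i\}$ be a proper, Jordan-closed maximal complementary set. By the Shult--Thas result quoted after Def.~\ref{def: h-systems}, its point set $X^\circ$ is an $m$-ovoid with $m=2^{h+1}-1$. By Thm.~\ref{thm: all Boolean pure-state invariants}, $I_\psi(X)$ is therefore a Boolean pure-state invariant. Since $X$ is Jordan-closed, Lm.~\ref{lm: characterising Jordan} shows it is update-closed, and it is proper by assumption. Lm.~\ref{lm: Jordan-closed ovoids} then forces $m=2^{n-1}-1$, so $h=n-2$ and each $U_i$ has dimension $n-1$. Cor.~\ref{cor: completeness} (equivalently Thm.~\ref{thm: pure-state invariants as quadratic refinements}) identifies $X$ with $J_a$ for some $a\in\cT_-$, and the disjointness of the $U_i^\circ$ gives the claimed partition. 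To get genuine stabiliser subgroups with $-\one\notin\tA_i$, we lift each $U_i$ to Hermitian representatives with a compatible sign choice. This is always possible for an isotropic subspace, by choosing a character on a basis and extending multiplicatively.

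\emph{From invariants to complementary sets.} Let $a\in\cT_-$. We must exhibit a partition $J_a^\circ=\bigcupdot_i\tA_i^\circ$ into $(n-1)$-dimensional isotropic subspaces that are pairwise opposite. The plan is to take the explicit form of Def.~\ref{def: family of pure-state invariants} and the construction in the proof of Lm.~\ref{lm: Jordan-closed ovoids correspond with Pauli-Lie algebras}. Pick any complete spread $\{L_i\}$. Because $J_a^\circ$ is a $(2^{n-1}-1)$-ovoid and Jordan-closed, each intersection $\tA_i:=J_a\cap L_i$ is a subspace of dimension $n-1$. These intersections partition $J_a^\circ$, since the spread partitions $V^\circ$.

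The main obstacle is pairwise opposition. We must show $\tA_i\cap\tA_j^\perp=\{0\}$ for $i\neq j$; the spread only gives $L_i\cap L_j=\{0\}$, which is weaker. The argument goes by contradiction. Suppose $0\neq x\in\tA_i$ commutes with all of $\tA_j$. Then $\langle\tA_j,x\rangle$ is isotropic of dimension $n$, hence Lagrangian, and contained in $J_a$ by Jordan closure. This contradicts the Witt index $i_W(q_a)=n-1$ from Lm.~\ref{lm: quadratic refinement standard form}. The contradiction is exactly the argument used in Lm.~\ref{lm: Jordan-closed ovoids correspond with Pauli-Lie algebras}, where $|J\cap A|=2^n$ is incompatible with the invariant value.

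Finally, maximality (cardinality $2^n+1$) is automatic. The commutation structure is also consistent with Thm.~1 of Ref.~\cite{FrembsHoehnNataleWever2026b}, which we may quote to match that reference's notion of complementarity, $C(\tA_i)\cap\tA_j=\{\one\}$.
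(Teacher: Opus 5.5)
Your proposal is correct and follows the paper's proof. For the forward direction you use the $m$-ovoid property together with Lm.~\ref{lm: Jordan-closed ovoids} and Cor.~\ref{cor: completeness}, taking the ovoid property from the Shult--Thas count rather than from Thm.~3 of Ref.~\cite{FrembsHoehnNataleWever2026b}; for the converse you use the paper's own contradiction, namely that $\langle\tA_j,u\rangle$ would be a totally singular subspace of dimension $n$, against $i_W(q_a)=n-1$. The only difference is that you build the partition of $J_a^\circ$ from a spread, while the paper starts from an arbitrary partition into $(n-1)$-dimensional subspaces; your opposition argument uses only disjointness of the $\tA_i$, so it covers the general case unchanged.
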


\begin{proof}
    Let $\{\tA_i\}_{i=1}^{2^n+1}$ be a maximal set of complementary Abelian Pauli subgroups of dimension smaller than $n$, and set $X:=\bigcupdot_{i=1}^{2^n+1}\tA_i^\circ$. By Thm.~3 in Ref.~\cite{FrembsHoehnNataleWever2026b}, $X$ defines a Boolean pure-state invariant. Now, if the complementary set is Jordan-closed, then by Thm.~\ref{thm: all update-closed pure-state invariants} and Cor.~\ref{cor: completeness}, there exists $a\in T_-$ such that $X=J_a^\circ$. Moreover, by Lm.~\ref{lm: Jordan-closed ovoids}, every proper Jordan-closed $m$-ovoid has $m=2^{n-1}-1$. Consequently, $|\tA_i|=2^{n-1}$, and therefore $\dim(\tA_i)=n-1$.

    Conversely, suppose that $J_a^\circ=\bigcupdot_{i=1}^{2^n+1}\tA_i^\circ$ with $\dim(\tA_i)=n-1$ for all $i$ is an update-closed pure-state invariant. By Thm.~\ref{thm: pure-state invariants as quadratic refinements}, $J_a=\{v\in V\mid q_a(v)=0\}$, where $q_a$ is a quadratic refinement of $\omega$ with $\Arf(q_a)=1$, equivalently Witt index $i_W(q_a)=n-1$. We claim that the isotropic subspaces $\tA_i$ are complementary, that is, for all $i\neq j$,
    \begin{align*}
        \tA_i^\circ\cap\tA_j^\perp
        =\emptyset\; .
    \end{align*}
    Indeed, assume that for some $i\neq j$ there exists $0\neq u\in\tA_i\cap\tA_j^\perp$. Since the above union is disjoint, $u\notin\tA_j$. Moreover, $q_a(u)=0$, $q_a|_{\tA_j}=0$ and $\omega(u,\tA_j)=0$ from which it follows that $\langle \tA_j,u\rangle$ is an isotropic subspace of dimension $n$, contradicting $i_W(q_a)=n-1$. Hence, $\{\tA_i\}_{i=1}^{2^n+1}$ is a maximal set of complementary Abelian subgroups and thus defines an $h$-system (see Def.~\ref{def: h-systems}). Finally, since $J_a$ is Jordan-closed by Thm.~\ref{thm: all update-closed pure-state invariants}, the resulting complementary set is Jordan-closed.
\end{proof}

\subsection{Pauli channels with state-independent fidelity and output purity}\label{sec: Pauli channels}

Pauli channels are a key tool in quantum information, specifically, they are central to (randomised and unitarity) benchmarking \cite{WallmanEtAl2015,HelsenEtAl2022}, represent standard error models and thus are prevalent in quantum error correction \cite{FlammiaWallman2020}, and more generally in Pauli sampling and simulation \cite{HuangKuengPreskill2020,HuangKuengPreskill2021}. In this section, we show that pure-state invariants characterise Pauli channels whose input fidelity and output purity are isotropic over pure inputs. Consequently, the usual Haar-averaged contraction and unitarity parameters can be recovered from an arbitrary single pure input.

Define the (weighted) Pauli channel $\Lambda_\mu(\rho)=\sum_{Q\in\aP_n}\mu(Q)Q\rho Q^\dagger$ for $\rho\in\cD(\C^{2^n})$ and with $\mu$ a probability distribution over $\aP_n$, that is, $\sum_{Q\in\aP_n}\mu(Q)=1$. As a consequence, note that $\Lambda_\mu$ is completely positive, unital and trace-preserving. For $\mu_X:=\frac{1}{|X|}\ind_X$ the uniform mixture of the subset $X\subset\aP_n$, we also write $\Lambda_X(\rho):=\frac{1}{|X|}\sum_{Q\in X}Q\rho Q^\dagger$.

Clearly, a Pauli channel is diagonal in the Pauli basis, explicitly for $\rho=\frac{1}{2^n}\sum_{Q\in\aP_n}\alpha_QQ$,
\begin{align}\label{eq: Pauli channel}
    \Lambda_\mu(\rho)
    =\frac{1}{2^n}\one+\frac{1}{2^n}\sum_{\one\neq Q\in\aP_n}\alpha_Q\left(\sum_{Q'\in\aP_n}(-1)^{\omega(Q,Q')}\mu(Q')\right)Q
    =:\frac{1}{2^n}\sum_{v\in V}\gamma_\mu(v)\alpha_vW_v\; ,
\end{align}
with eigenvalue function $\gamma_\mu(v):=\sum_{w\in V}\mu(w)(-1)^{\omega(v,w)}$ such that $\Lambda_\mu(W_v)=\gamma_\mu(v)W_v$. The corresponding Pauli transfer matrix $M(\Lambda_\mu)=\mathrm{diag}(\gamma_\mu(v))_{0\neq v\in V}\in\mathrm{Mat}_{4^n-1}(\R)$ describes the action of $\Lambda_\mu$ on Bloch vectors $\alpha\in\R^{4^n-1}$ by $\alpha'=M(\Lambda_\mu)\alpha$. (Note that since $\Lambda_\mu$ is unital, the identity term is fixed, in particular, $\gamma_\mu(0)=\sum_{v\in V}\mu(W_v)=1$.)

For (randomised and unitarity) benchmarking, one is interested in Haar-random averages of the moments of $M(\Lambda_\mu)$. Write $\mu_\mathrm{FS}$ for the Fubini-Study measure on $\CP^{2^n-1}$, then
\begin{equation}\label{eq: benchmarking parameters}
\begin{aligned}
    \mathbb{E}_{\mu_\mathrm{FS}}[F_\psi(\Lambda_\mu)]
    &:=\mathbb{E}_{\mu_\mathrm{FS}}\left[\tr[p_\psi\Lambda_\mu(p_\psi)]\right]
    =\frac{1}{2^n}+\frac{\tr[M(\Lambda_\mu)]}{2^n(2^n+1)}
    =\frac{1+(2^n-1)p(\Lambda_\mu)}{2^n}
    \\
    \mathbb{E}_{\mu_\mathrm{FS}}[P_\psi(\Lambda_\mu)]
    &:=\mathbb{E}_{\mu_\mathrm{FS}}\left[\tr[(\Lambda_\mu(p_\psi))^2]\right]
    =\frac{1}{2^n}+\frac{||M(\Lambda_\mu)||^2_2}{2^n(2^n+1)}
    =\frac{1+(2^n-1)u(\Lambda_\mu)}{2^n}\; ,
\end{aligned}
\end{equation}
where $||M||^2_2:=\tr[M^TM]$ is the Frobenius norm, and the relevant benchmarking parameters are the first and second moments of the Pauli transfer matrix, which are called the contraction $p(\Lambda_\mu)$ and unitarity $u(\Lambda_\mu)$ \cite{WallmanEtAl2015}, and are readily computed to be given by the following expressions,\footnote{Clearly, $\Lambda_\mu$ is unitary if and only if $\mu(Q)=\delta_{QQ_0}$ for some $Q_0\in\aP_n$, in which case $u(\Lambda_\mu)=1$.}
\begin{align*}
    p(\Lambda_\mu)
    &=\frac{\tr[M(\Lambda_\mu)]}{4^n-1}
    =\frac{4^n\mu(\one)-1}{4^n-1} &
    u(\Lambda_\mu)
    &=\frac{||M(\Lambda_\mu)||^2_2}{4^n-1}
    =\frac{4^n\sum_{Q\in\aP_n}\mu^2(Q)-1}{4^n-1}\; .
\end{align*}

Evidently, the contraction parameter bears close resemblance with pure-state invariants.

\begin{corollary}\label{cor: state-independent contraction}
    All Pauli channels $\Lambda_\mu$ for which $F_\psi(\Lambda_\mu)=\tr[p_\psi\Lambda_\mu(p_\psi)]$ is state-independent correspond with pure-state invariants in Thm.~\ref{thm: all weighted quadratic invariants}. In this case, the contraction parameter reads
    \begin{align*}
        p(\Lambda_\mu)
        =\frac{2^nI_\psi(\mu)-1}{2^n-1}\; .
    \end{align*}
\end{corollary}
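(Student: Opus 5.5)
I will compute the fidelity $F_\psi(\Lambda_\mu)$ directly in the Bloch representation and recognise it as the quadratic quantity $I_\psi$ of Def.~\ref{def: I} with weight $\zeta=\mu$. For a pure state $p_\psi=\frac{1}{2^n}\sum_{v\in V}\alpha_vW_v$, expanding $\Lambda_\mu(p_\psi)=\sum_{w}\mu(w)W_wp_\psi W_w$ and taking the trace against $p_\psi$ gives
\begin{align*}
    F_\psi(\Lambda_\mu)
    =\sum_{w\in V}\mu(w)\tr[p_\psi W_w p_\psi W_w]
    =\sum_{w\in V}\mu(w)\,|\langle\psi|W_w|\psi\rangle|^2
    =\sum_{w\in V}\mu(w)\alpha_w^2
    =I_\psi(\mu)\; .
\end{align*}
The second equality uses that $p_\psi$ has rank one, so $\tr[p_\psi W_wp_\psi W_w]=\langle\psi|W_w|\psi\rangle^2$. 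The third uses that $W_w$ is Hermitian, so this expectation value is real and equal to $\alpha_w$.

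\textbf{Step 1.} The identity above shows that $F_\psi(\Lambda_\mu)$ is state-independent exactly when $\mu$, viewed as an element of $\R^V$, is a pure-state invariant. Thm.~\ref{thm: all weighted quadratic invariants} then characterises all such $\mu$: they are $\mu=\beta_0\zeta_0+\sum_{a\in\cT_-}\beta_a\xi_a$, additionally constrained to be a probability distribution. This gives the first assertion.

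\textbf{Step 2.} For the contraction parameter, I first average the identity over the Fubini--Study measure. Because $I_\psi(\mu)$ is constant, its average is $\mathbb{E}_{\mu_\mathrm{FS}}[F_\psi(\Lambda_\mu)]=I_\psi(\mu)$. Comparing with the first line of Eq.~(\ref{eq: benchmarking parameters}) gives $I_\psi(\mu)=\frac{1+(2^n-1)p(\Lambda_\mu)}{2^n}$, and solving for $p$ yields the stated formula $p(\Lambda_\mu)=\frac{2^nI_\psi(\mu)-1}{2^n-1}$.

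As a consistency check, Lm.~\ref{lm: invariant explicit} gives $I_\psi(\mu)=\mu(0)+\frac{1-\mu(0)}{2^n+1}$ using $\sum_v\mu(v)=1$. Substituting this into the formula should reproduce $p=\frac{4^n\mu(\one)-1}{4^n-1}$, which I would verify by a one-line simplification.

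\textbf{Main obstacle.} I expect no substantial difficulty. The only delicate point is the trace identity, specifically using rank one of $p_\psi$ and hermiticity of the $W_w$ so that the result is exactly $\alpha_w^2$ with no stray phase. The independence of $I$ from the phase convention, noted after Def.~\ref{def: I}, settles this.
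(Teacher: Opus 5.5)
Your proposal is correct and follows the paper's own proof. The paper likewise rewrites $I_\psi(\mu)=\sum_Q\mu(Q)\tr[p_\psi Q]^2=\sum_Q\mu(Q)\tr[p_\psi Q p_\psi Q^\dagger]=\tr[p_\psi\Lambda_\mu(p_\psi)]$ using hermiticity of the Paulis, appeals to Thm.~\ref{thm: all weighted quadratic invariants}, and reads off $p(\Lambda_\mu)$ from Eq.~(\ref{eq: benchmarking parameters}). Your additional consistency check via Lm.~\ref{lm: invariant explicit} also goes through, since $2^nI_\psi(\mu)-1=\frac{4^n\mu(\one)-1}{2^n+1}$.
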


\begin{proof}
    This follows by simply re-writing the expression in Def.~\ref{def: I},
    \begin{align*}
        I_\psi(\mu)
        =\sum_{Q\in\aP_n}\mu(Q)\tr[p_\psi Q]^2
        =\sum_{Q\in\aP_n}\mu(Q)\tr[(p_\psi Q)^2]
        =\sum_{Q\in\aP_n}\mu(Q)\tr[p_\psi (Qp_\psi Q^\dagger)]
        =\tr[p_\psi\Lambda_\mu(p_\psi)]\; ,
    \end{align*}
    since $Q^\dagger=Q$ for all Paulis $Q\in\aP_n$. The explicit value of $p(\Lambda_\mu)$ is readily computed from Eq.~(\ref{eq: benchmarking parameters}).
\end{proof}

What is more, for the uniform Pauli channels associated with update-closed Boolean pure-state invariants, the output purity provides a state-independent single-input estimator of its unitarity.

\begin{corollary}\label{cor: state-independent unitarity}
    Let $\Lambda_\mu$ be a Pauli channel as in Eq.~(\ref{eq: Pauli channel}). Then
    \begin{align*}
        P_\psi(\Lambda_\mu)
        :=\tr\!\left[(\Lambda_\mu(p_\psi))^2\right]
        =\frac1{2^n}I_\psi(\gamma_\mu^2).
    \end{align*}
    Consequently, $P_\psi(\Lambda_\mu)$ is independent of the pure input state if and only if $\gamma_\mu^2$ is a pure-state invariant in the sense of Def.~\ref{def: I}. In this case, $u(\Lambda_\mu)=\frac{I_\psi(\gamma_\mu^2)-1}{2^n-1}$, in particular, for every $a\in T_-$, the uniform channels associated with $J_a$ and $L_a$ 
    satisfy
    \begin{align*}
        u(\Lambda_{L_a})
        =\frac1{(2^n+1)^2}\; ,\qquad
        u(\Lambda_{J_a})
        =\frac1{(2^n-1)^2}.
    \end{align*}
\end{corollary}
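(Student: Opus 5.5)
The plan is to first compute the output purity directly in the Bloch representation and then specialise to the level sets of $q_a$ via a Gauss-sum evaluation of the eigenvalue function $\gamma_\mu$.

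\textbf{Output purity.} By Eq.~(\ref{eq: Pauli channel}), $\Lambda_\mu(p_\psi)=\frac{1}{2^n}\sum_{v\in V}\gamma_\mu(v)\alpha_vW_v$. Using the orthogonality relation $\tr[W_vW_w]=2^n\delta_{vw}$ we get
\[
P_\psi(\Lambda_\mu)=\frac{1}{4^n}\sum_{v}\gamma_\mu(v)^2\alpha_v^2\,2^n=\frac{1}{2^n}I_\psi(\gamma_\mu^2).
\]
The equivalence ``state-independent iff $\gamma_\mu^2$ is a pure-state invariant'' is then immediate from Def.~\ref{def: I}.

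\textbf{Unitarity.} When $\gamma_\mu^2$ is a pure-state invariant, Lm.~\ref{lm: invariant explicit} gives its constant value
\[
I_\psi(\gamma_\mu^2)=\gamma_\mu(0)^2+\frac{1}{2^n+1}\sum_{v\neq0}\gamma_\mu(v)^2 .
\]
Here $\gamma_\mu(0)=1$, and $\sum_{v\neq0}\gamma_\mu(v)^2=\|M(\Lambda_\mu)\|_2^2=(4^n-1)u(\Lambda_\mu)$. Solving gives $u(\Lambda_\mu)=(I_\psi(\gamma_\mu^2)-1)(2^n+1)/(4^n-1)=(I_\psi(\gamma_\mu^2)-1)/(2^n-1)$. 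This is consistent with the Haar average in Eq.~(\ref{eq: benchmarking parameters}).

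\textbf{Level-set channels.} The main step is computing $\gamma$ for the uniform channels on $L_a$ and $J_a$. Write $\ind_{L_a}=\frac12(1-\xi_a)$ and $\ind_{J_a}=\frac12(1+\xi_a)$ with $\xi_a=(-1)^{q_a}$. Then
\[
\sum_{w\in L_a}\chi_v(w)=\tfrac12\Bigl(4^n\delta_{v0}-\sum_w\chi_v(w)\xi_a(w)\Bigr),
\]
and similarly for $J_a$ with a plus sign. The Gauss-type sum is handled by the refinement identity $q_a(w+v)=q_a(w)+q_a(v)+\omega(w,v)$. Substituting $w\mapsto w+v$ gives $\sum_w\chi_v(w)\xi_a(w)=\xi_a(v)\sum_w\xi_a(w)$. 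By Eq.~(\ref{eq: zero count}) with $\Arf(q_a)=1$ (Thm.~\ref{thm: pure-state invariants as quadratic refinements}), $\sum_w\xi_a(w)=2N(q_a)-4^n=-2^n$.

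Hence, for $v\neq0$, $\sum_{w\in L_a}\chi_v(w)=2^{n-1}\xi_a(v)$ and $\sum_{w\in J_a}\chi_v(w)=-2^{n-1}\xi_a(v)$. Using $|L_a|=2^{n-1}(2^n+1)$ and $|J_a|=2^{n-1}(2^n-1)$, we obtain $\gamma_{L_a}(v)^2=(2^n+1)^{-2}$ and $\gamma_{J_a}(v)^2=(2^n-1)^{-2}$ for all $v\neq0$.

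So $\gamma^2$ equals $\beta_0\zeta_0+c\,\zeta_1$, a combination of the trivial invariant and $\ind_V$. By Thm.~\ref{thm: all weighted quadratic invariants} and Eq.~(\ref{eq: total information}) this is a pure-state invariant with value $1+c(2^n-1)$. Inserting this value into the unitarity formula yields $u=c$, namely $(2^n\pm1)^{-2}$ respectively. The only delicate point is the sign bookkeeping in the Gauss sum; the Arf invariant $1$ is exactly what fixes $\sum\xi_a=-2^n$.
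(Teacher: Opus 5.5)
Your proof is correct and follows essentially the same route as the paper. You compute the output purity via Pauli orthogonality, get the unitarity relation from the Haar average (your use of Lm.~\ref{lm: invariant explicit} is the same computation as Eq.~(\ref{eq: benchmarking parameters})), and obtain $\gamma_{L_a},\gamma_{J_a}$ from a Gauss sum fixed by $\Arf(q_a)=1$. The differences are cosmetic: you evaluate $\sum_w\chi_v(w)\xi_a(w)=-2^n\xi_a(v)$ by the shift $w\mapsto w+v$ together with the zero count Eq.~(\ref{eq: zero count}), whereas the paper reduces it to the Weyl Gauss sum of Eq.~(\ref{eq: inhomogenous solution}) via $q_a=q_W+\omega(a,\cdot)$, and you explicitly check that $\gamma^2$ is a pure-state invariant for these channels, which the paper leaves implicit.
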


\begin{proof}
    Using Eq.~(\ref{eq: Pauli channel}) and Pauli orthogonality yields, one immediately finds
    \begin{align*}
        \tr\left[(\Lambda_\mu(p_\psi))^2\right]
        =\frac{1}{2^n}\sum_{v\in V}\gamma_\mu(v)^2\alpha_v^2
        =\frac{1}{2^n}I_\psi(\gamma_\mu^2)\; ,
    \end{align*}
    which proves the first statement, and since $\gamma_\mu(0)=1$, Eq.~(\ref{eq: benchmarking parameters}) gives $u(\Lambda_\mu)=\frac{I_\psi(\gamma_\mu^2)-1}{2^n-1}$.
    
    Now, for $q_a=q_W+\omega(a,\cdot)$ with $a\in\cT_-$ (see Lm.~\ref{lm: Arf reps}), the Gauss sum from Eq.~(\ref{eq: inhomogenous solution}) becomes
    \begin{align*}
        \sum_{w\in V}(-1)^{q_a(w)+\omega(v,w)}
        =-2^n(-1)^{q_a(v)}\; .
    \end{align*}
    Together with character orthogonality, this yields for all $0\neq v\in V$
    \begin{align}\label{eq: gamma}
    \begin{split}
        \gamma_{L_a}(v)
        &=\frac{1}{2|L_a|}\sum_{v\in V}(1-(-1)^{q_a(w)})(-1)^{\omega(v,w)}
        =\frac{(-1)^{q_a(v)}}{2^n+1}\; ,\\
        \gamma_{J_a}(v)
        &=\frac{1}{2|J_a|}\sum_{v\in V}(1+(-1)^{q_a(w)})(-1)^{\omega(v,w)}
        =-\frac{(-1)^{q_a(v)}}{2^n-1}\; ,
    \end{split}
    \end{align}
    and $\gamma_{L_a}(0)=\gamma_{J_a}(0)=1$. The stated unitarities now follow immediately from
    \begin{equation*}
        u(\Lambda_\mu)
        =\frac{||M(\Lambda_\mu)||^2_2}{4^n-1}
        =\frac{1}{4^n-1}\sum_{0\neq v\in V}\gamma^2_\mu(v)\; .\qedhere
    \end{equation*}
\end{proof}

More generally, Eq.~(\ref{eq: gamma}) implies the following property for every density operator $\rho$,
\begin{align*}
    \tr[(\Lambda_{L_a}(\rho))^2]
    =\frac{1}{2^n}+
    \frac{\tr[\rho^2]-\frac{1}{2^n}}{(2^n+1)^2}\; ,
    \qquad
    \tr[(\Lambda_{J_a}(\rho))^2]
    =\frac{1}{2^n}+
    \frac{\tr[\rho^2]-\frac{1}{2^n}}{(2^n-1)^2}\; .
\end{align*}
Consequently, the output purity depends only on the input purity, and not on the orientation of the input state in Bloch space. In other words, the Pauli channels $\Lambda_{L_a}$ and $\Lambda_{J_a}$ enact isotropic contractions of the traceless Bloch vector in Hilbert-Schmidt norm (without generally being depolarising). Note, however, that this property is not special to pure-state invariants.\footnote{For uniform subset channels with $\mu_X=\frac{1}{|X|}\mathbf{1}_X$ for $X\subset V$, constancy of $|\gamma_\mu(v)|=\frac{1}{|X|}\sum_{x\in X}(-1)^{\omega(v,x)}$ on $V^\circ$ is equivalent to $X$ being the support of a bent (Boolean) function (defined by its discrete Fourier (or Walsh) transform  having constant magnitude), equivalently, for $X$ to be a Hadamard difference set. The functions $\mathbf{1}_{L_a}=q_a$ and $\mathbf{1}_{J_a}=q_a+1$ are precisely the quadratic bent functions from this family, equivalently $L_a$ and $J_a$ are Hadamard difference sets in the elementary Abelian group $V\cong\zz^{2n}_2$ \cite{Dillon1974,Carlet2021}. For $n\geq 2$, Boolean bent functions $f:\zz^{2n}_2\ra\zz_2$ exist for all degrees $2\leq d\leq n$ \cite{CarletVilla2025}; for $n=1$, bent functions are quadratic.}

Finally, we point out that the overlap of the Pauli channels corresponding to the different sets in the partition $\aP_n=L_a\cupdot J_a$ for $a\in\cT_-$ is state-independent. Indeed,
\begin{align*}
    4^{2n}\tr[(\Lambda_{\aP_n}(p_\psi))^2]
    &=\tr[\left(|J_a|\Lambda_{J_a}(p_\psi)+|L_a|\Lambda_{L_a}(p_\psi)\right)\left(|J_a|\Lambda_{J_a}(p_\psi)+|L_a|\Lambda_{L_a}(p_\psi)\right)]\\
    &=|J_a|^2\tr[(\Lambda_{J_a}(p_\psi))^2]+|L_a|^2\tr[(\Lambda_{L_a}(p_\psi))^2]+2|J_a|\cdot|L_a|\tr[\Lambda_{J_a}(p_\psi)\Lambda_{L_a}(p_\psi)]\; ,
\end{align*}
which with $\Lambda_{\aP_n}(p_\psi)=\frac{1}{2^n}\one$ and the above yields $\tr[\Lambda_{L_a}(p_\psi)\Lambda_{J_a}(p_\psi)]=\frac{1}{2^n+1}$ for all $\psi\in\CP^{2^n-1}$. In particular, $I_\psi(L_a),I_\psi(J_a)$ are pure-state invariants if and only if (any two out of) the quantities $\tr[(\Lambda_{L_a}(p_\psi))^2],\tr[(\Lambda_{J_a}(p_\psi))^2]$ and $\tr[\Lambda_{J_a}(p_\psi)\Lambda_{L_a}(p_\psi)]$ are state-independent.

\subsection{Positive Pauli projection maps}\label{sec: positive Pauli projection maps}

In this section, we provide an alternative characterisation of the update-closure condition for pure-state invariants in Sec.~\ref{sec: update-closed Boolean invariants}. For any $\one\in X\subset\pm\aP_n$ with at most one element in every projective class, that is, such that $P\in X\Rightarrow -P\notin X$, define the Pauli-diagonal projection,
\begin{align}\label{eq: purity Pauli channel}
    \epsilon_X(\rho)
    :=\frac{1}{2^n}\sum_{P\in X}\tr[\rho P]P\; .
\end{align}
In fact, this map is simply the projection of (the Bloch representation of $\rho$) onto the span of $X$.

\begin{lemma}\label{lm: Jordan projection}
    Let $\one\in X\subset\pm\aP_n$ with $P\in X\Rightarrow -P\notin X$. Then $X$ is Jordan-closed if and only if $\epsilon_X$ is positive.
\end{lemma}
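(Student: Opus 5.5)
The plan is to prove the two implications separately. Both rest on one observation: since $X$ contains at most one sign from each projective class, $\mathrm{span}_\R(X)$ has the elements of $X$ as an orthogonal basis, and $\epsilon_X$ is the Hilbert--Schmidt orthogonal projection onto this span. Jordan-closure of $X$ (with signs) then says exactly that $\{P,Q\}\in X\cup\{0\}$ for all $P,Q\in X$. Since $\{P,Q\}$ is either $0$ (anti-commuting case) or $PQ$ (commuting case), this makes $\mathrm{span}_\R(X)$ a unital Jordan subalgebra $\cJ_X\subset\LHsa$, in the sense of Def.~\ref{def: Pauli Lie algebra}.

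\textbf{Positivity implies Jordan-closure.} I argue by contraposition. Suppose $P,Q\in X$ commute, $P\neq\pm Q$, and $PQ\notin X$. Then either $\pm PQ\notin X$, or $-PQ\in X$. Let $A=\langle P,Q\rangle$ and take $\rho=\Pi_{A,\chi}/\tr[\Pi_{A,\chi}]$, the normalised projector onto the joint eigenspace with $P=a$ and $Q=b$. Then $\tr[\rho W]=0$ for every $W\notin\pm A$, so
\begin{align*}
    \epsilon_X(\rho)=\frac{1}{2^n}\left(\one+aP+bQ+c\,ab\,PQ\right),
\end{align*}
with $c=0$ in the first case and $c=-1$ in the second.
\begin{itemize}
    \item If $c=0$, evaluate on the joint eigenspace $P=-a$, $Q=-b$: the eigenvalue is $1-1-1=-1<0$.
    \item If $c=-1$, evaluate on the same eigenspace: the eigenvalue is $1-1-1-1=-2<0$.
\end{itemize}
In both cases $\epsilon_X(\rho)\not\geq 0$. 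The same computation with a general Abelian $A$ shows more: positivity is equivalent to $\sum_{P\in X\cap A}\chi(P)\chi'(P)\ge0$ for all characters $\chi,\chi'$ of $A$. This holds precisely when $X\cap A$ is a signed subgroup, which gives a consistency check.

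\textbf{Jordan-closure implies positivity.} If $X$ is Jordan-closed, then $\epsilon_X$ is the trace-preserving, unital, trace-orthogonal projection of $\LHsa$ onto the unital JC-subalgebra $\cJ_X$. I would invoke the standard fact (Effros--Størmer) that this projection is a Jordan conditional expectation and hence positive.

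To keep the argument self-contained, I would instead give a direct duality argument. First, $\epsilon_X$ is self-adjoint with respect to the trace. Hence $\epsilon_X(\rho)\geq0$ for all $\rho\ge 0$ if and only if $\tr[\rho\,\epsilon_X(\sigma)]\ge0$ for all $\rho,\sigma\ge0$. It therefore suffices to show two things:
\begin{itemize}
    \item the cone $\cJ_X^+$ of positive elements of $\cJ_X$ satisfies $\epsilon_X(\LHsa^+)\subseteq\cJ_X^+$;
    \item equivalently, every $j\in\cJ_X$ that is positive as a matrix pairs non-negatively with every state.
\end{itemize}
The key step is to express the positivity of $j\in\cJ_X$ using only Jordan operations, namely $j=k\circ k$ with $k=\sqrt{j}\in\cJ_X$. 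This is where Jordan-closure is used: continuous functional calculus stays inside a Jordan algebra. With this in hand, for $\sigma\geq0$ and $h\in\cJ_X$,
\begin{align*}
    \tr[\epsilon_X(\sigma)\,h^2]=\tr[\sigma\,h^2]\ge0.
\end{align*}
Moreover, $\epsilon_X(\sigma)\in\cJ_X$, and an element of a JC-algebra that is non-negative against all squares $h^2$ is positive: take $h$ to be a spectral projection of $\epsilon_X(\sigma)$, which lies in $\cJ_X$ by functional calculus. Together these give $\epsilon_X(\sigma)\ge0$.

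I expect the main obstacle to be making this functional-calculus step rigorous: that spectral projections of elements of $\cJ_X$ remain in $\mathrm{span}_\R(X)$. This holds because the powers $h^k$ are Jordan products and hence lie in the finite-dimensional subspace $\cJ_X$, so every polynomial in $h$ lies in $\cJ_X$. If this causes trouble, the fallback is the classification route. Via Lm.~\ref{lm: characterising Jordan} and Thm.~\ref{thm: all update-closed pure-state invariants}, one would check positivity on the explicit Pauli Jordan algebras that arise, such as spin factors and the symmetric parts of $\sof$ and $\spf$, where $\epsilon_X$ can be written down explicitly.
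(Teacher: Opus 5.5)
\textbf{Gap: the sign case in ``positive $\Rightarrow$ Jordan-closed''.} You read Jordan-closure with signs, and under that reading this direction is false. In your second case ($-PQ\in X$) the sign is wrong. The contribution of $-PQ$ to $\epsilon_X(\rho)$ is $\frac{1}{2^n}\tr[\rho(-PQ)](-PQ)=\frac{1}{2^n}\tr[\rho PQ]\,PQ=\frac{ab}{2^n}PQ$. So $c=+1$, and then $\epsilon_X(\rho)=\frac{1}{2^n}(\one+aP+bQ+ab\,PQ)=\rho\geq 0$, with no negative eigenvalue.

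This cannot be patched, because $\epsilon_X$ is invariant under flipping the sign of any element of $X$. It only sees projective classes and so cannot detect signed closure. Concretely, $X=\{\one,Z_1,Z_2,-Z_1Z_2\}$ is not Jordan-closed in your signed sense, since $\{Z_1,Z_2\}=Z_1Z_2\notin X$. Yet $\epsilon_X=\epsilon_{\{\one,Z_1,Z_2,Z_1Z_2\}}$ is the map $\rho\mapsto\sum_{a,b}\tr[\rho\Pi_{ab}]\,\Pi_{ab}/\tr[\Pi_{ab}]$ onto the joint eigenprojections $\Pi_{ab}$ of $Z_1,Z_2$, which is manifestly positive.

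Your own consistency check points the same way. The eigenvalues $\sum_{P\in X\cap A}\chi(P)\chi'(P)$ do not depend on the signs in $X$. So the correct criterion there is that the \emph{phase-free} set $X\cap A$ is a subgroup, not a signed subgroup.

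\textbf{Fix.} Adopt the paper's convention that brackets of Pauli subsets are taken projectively. Jordan-closure then means that every nonzero $\{P,Q\}$ with $P,Q\in X$ lies in $X$ up to sign, and this is the only reading under which the lemma holds. With it, $-PQ\in X$ is not a failure of closure, and only your first case ($\pm PQ\notin X$) arises. Your computation there, eigenvalue $-2^{-n}$ on the joint eigenspace $P=-a$, $Q=-b$, is exactly the paper's argument with $\Pi=\frac14(\one+Q+P+QP)$ and $\epsilon_X(\Pi)=\frac14(\one+Q+P)$.

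\textbf{The converse direction.} This is in substance the paper's proof. A spectral idempotent $e$ of $B=\epsilon_X(\sigma)$ for a negative eigenvalue lies in $\cJ_X$. This is immediate because powers of $B$ are iterated Jordan products $B\circ B^{k-1}$, so the ``main obstacle'' you anticipate is not one. Then $\tr[eB]=\tr[\epsilon_X(e)\sigma]=\tr[e\sigma]\geq 0$ contradicts $\tr[eB]<0$. The $\sqrt{j}$ detour and the Effros--St{\o}rmer citation are unnecessary. The second bullet of your reduction is trivially true rather than equivalent to the first, but the argument you actually run is sound.
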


\begin{proof}
    Suppose $X$ is not Jordan-closed, then there exist $Q,P\in X$ with $[Q,P]=0$, yet $\{Q,P\}=QP\notin X$. Consider the joint projector $\Pi=\frac{1}{4}(\one+Q+P+QP)$, which in particular is positive. Yet,
    \begin{align}
        \epsilon_X(\Pi)
        =\frac{1}{4}(\one+Q+P)\; ,
    \end{align}
    has a negative eigenvalue $-\frac{1}{4}$, hence, $\epsilon_X$ is not positive.

    To show that Jordan-closure implies positivity for $\epsilon_X$, define the Jordan algebra $\cJ_X=\mathrm{span}_\R\{X\}$. Let $A\geq 0$ and assume that there exists $\cJ_X\ni B:=\epsilon_X(A)<0$. Since $\cJ_X$ is a finite-dimensional Euclidean (equivalently, formally real) Jordan algebra, $B$ has a spectral decomposition \cite{Hanche-OlsenStormer1984},
    \begin{align*}
        B=\sum_{i\in\cI}\lambda_ie_i\; ,
    \end{align*}
    where $\lambda_i\in\R$ for every $i\in\cI$, and $\{e_i\}_{i\in\cI}$ is a set of pairwise orthogonal idempotents. $B\not\geq 0$ thus implies that there exists a projector $e_i$ with $\tr[e_iB]=\lambda_i\tr[e_i]<0$. Yet, this contradicts that
    \begin{equation*}
        \tr[e_iB]
        =\tr[e_i\epsilon_X(A)]
        =\tr[\epsilon_X(e_i)A]
        =\tr[e_iA]\geq 0\; .\qedhere
    \end{equation*}
\end{proof}

\begin{corollary}
    A pure-state invariant $\one\in X\subset\aP_n$ is update-closed if and only if $\epsilon_X$ is positive.
\end{corollary}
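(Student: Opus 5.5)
This corollary should follow by chaining Lm.~\ref{lm: characterising Jordan} with Lm.~\ref{lm: Jordan projection}. The only care needed is in matching hypotheses and in choosing phases.

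First, let $\one\in X\subset\aP_n$ be a pure-state invariant. By Thm.~\ref{thm: all Boolean pure-state invariants}, $X^\circ$ is an $m$-ovoid, so the hypothesis of Lm.~\ref{lm: characterising Jordan} holds. That lemma gives: $X$ is update-closed (Def.~\ref{def: update-closure}) if and only if $X$ is Jordan-closed, meaning that $v+w\in X$ whenever $v,w\in X$ commute. This is the projective closure, taken up to phase.

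Second, apply Lm.~\ref{lm: Jordan projection}. It is stated for $X\subset\pm\aP_n$ with at most one sign per projective class. Since $X\subset\aP_n$, this condition holds trivially with the $+$ choice. The remaining subtlety is that Jordan-closure in Lm.~\ref{lm: Jordan projection} is presumably meant with signs: $QP\in X$ as an actual operator, so that $\cJ_X=\mathrm{span}_\R X$ is a Jordan algebra. In Lm.~\ref{lm: characterising Jordan}, by contrast, it is meant projectively. Two observations resolve this.

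\begin{itemize}
\item[(i)] The map $\epsilon_X$ in Eq.~(\ref{eq: purity Pauli channel}) is independent of the signs of the elements of $X$, because each term $\tr[\rho P]P$ is invariant under $P\mapsto -P$. So positivity of $\epsilon_X$ depends only on the projective set.
\item[(ii)] The span $\mathrm{span}_\R X$ is likewise sign-independent. If $X$ is projectively Jordan-closed, then for commuting $P,Q\in X$ the product $PQ$ is Hermitian and equals $\pm R$ for some $R\in X$. Hence $\{P,Q\}=PQ\in\mathrm{span}_\R X$, while anticommuting pairs give a vanishing anticommutator. So $\mathrm{span}_\R X$ is a Jordan algebra, which is exactly what the positivity direction of the proof of Lm.~\ref{lm: Jordan projection} uses.
\end{itemize}

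For the converse direction, suppose $X$ is not projectively closed. Then there are commuting $P,Q\in X$ with $PQ\notin X$ up to sign. The counterexample $\Pi=\frac14(\one+Q+P+QP)$ then works verbatim, since the $QP$ component is removed by $\epsilon_X$ regardless of sign. This gives: $\epsilon_X$ is positive if and only if $X$ is (projectively) Jordan-closed.

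Combining the two equivalences yields the corollary. The main obstacle is this sign and phase bookkeeping between the projective and the signed notions of Jordan closure. I would handle it precisely as in (i) and (ii), and otherwise treat the result as the immediate composition of the two lemmas.
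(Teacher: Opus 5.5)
Your proposal is correct and matches the paper's proof, which simply composes Lm.~\ref{lm: characterising Jordan} with Lm.~\ref{lm: Jordan projection}: the first gives update-closure $\Leftrightarrow$ Jordan-closure, with the ovoid hypothesis supplied by Thm.~\ref{thm: all Boolean pure-state invariants}, and the second gives Jordan-closure $\Leftrightarrow$ positivity of $\epsilon_X$. Your extra bookkeeping is a valid clarification that the paper leaves implicit: you show that $\epsilon_X$ and $\mathrm{span}_\R X$ are insensitive to signs, so projective Jordan-closure already makes $\mathrm{span}_\R X$ a Jordan algebra and the positivity argument goes through.
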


\begin{proof}
    This follows directly from Lm.~\ref{lm: Jordan projection} and Lm.~\ref{lm: characterising Jordan}.
\end{proof}

In other words, the family of pure-state invariants in Def.~\ref{def: family of pure-state invariants} can also be interpreted as the shadow of positive projections on the full quantum state space.

\subsection{Symplectic $2$-designs and complex projective state $2$-designs}\label{sec: 2-designs}

In this section, we show that symmetries of the pure-state invariants give rise to complex projective state $2$-designs. To begin with, we recall the closely related definition of a unitary $k$-design.\footnote{For other equivalent definitions, see e.g. Ref.~\cite{Mele2024}.}

\begin{definition}\label{def: unitary k-design}
    A pair $(\cU,\mu)$ for $\mu$ a probability distribution with support restricted to the subset of unitaries $\cU\subset\UH$ (for $\dim(\cH)<\infty$), is called a \emph{unitary $k$-design} if and only if
    \begin{align*}
        \mathbb{E}_\mu[U^{\otimes k}O(U^\dagger)^{\otimes k}]
        =\mathbb{E}_{\mu_{\UH}}[U^{\otimes k}O(U^\dagger)^{\otimes k}]\; ,
    \end{align*}
    for all linear operators $O\in(\cL(\cH))^{\otimes k}\cong\cL(\cH^{\otimes k})$, where $\mathbb{E}_\mu(f(U)):=\int_{\UH} d\mu(U) f(U)$, and $\mu_{\UH}$ denotes the unique normalised Haar measure on $\UH$.
\end{definition}

$k$-designs for other groups are defined similarly, e.g. symplectic and orthogonal $k$-designs for the symplectic and orthogonal subgroups of the unitary group.

\begin{definition}\label{def: state k-design}
    A pair $(\cS,\mu)$ for $\mu$ a probability distribution with support restricted to a subset of pure quantum states $\cS\subset\CP^{\dim(\cH)-1}$, is called a \emph{(complex projective) state $k$-design} if and only if
    \begin{align*}
        \mathbb{E}_\mu[p_\psi^{\otimes k}]
        =\mathbb{E}_{\mu_\mathrm{FS}}[p_\psi^{\otimes k}]
        =\mathbb{E}_{\mu_{\UH}}[(Up_\psi U^\dagger)^{\otimes k}]\; ,
    \end{align*}
    where $\mu_\mathrm{FS}$ denotes the normalised (unitarily-invariant) Fubini-Study measure on $\CP^{2^n-1}$ (equivalently, the push-forward of the normalised Haar measure $\mu_{\UH}$ on $\UH$).
\end{definition}

State $k$-designs arise naturally from unitary $k$-design $(\cU,\mu)$ via $\{\mu(U),U\psi\}_{U\in\cU}$ with $\psi\in\CP^{\dim(\cH)-1}$ some fiducial state. We will be interested in the case where $\cU=G<\UH$ is a subgroup of the unitary group. For instance, the $n$-qubit Clifford group $G=\Cl_n$ is well known to be a unitary $3$-design \cite{Webb2016,Zhu2017}, which implies that $\{\mu(U),U\psi\}_{U\in\Cl_n}$ is a complex projective $3$-design; in particular, this is true for the orbit of stabiliser states \cite{KuengGross2015}. However, note that in order for $\{\mu(U),U\psi\}_{U\in G}$ to be a state $k$-design it is not necessary that $G$ is a unitary $k$-design itself.

To see this, define the $k$-th moment operator $\cM^{(k)}_\mu$ of $\mu$ for every $O\in\cL(\cH^{\otimes k})$ by
\begin{align}\label{eq: moment operator}
    \cM^{(k)}_\mu(O)
    :=\mathbb{E}_\mu[U^{\otimes k}O(U^\dagger)^{\otimes k}]\; .
\end{align}
Clearly, $\{\mu(U),U\}_{U\in\cU}$ is a unitary $k$-design if and only if $\cM^{(k)}_\mu=\cM^{(k)}_{\mu_{\UH}}$. To test this, recall (e.g. from Ref.~\cite{Mele2024}) that for any compact subgroup $G\leq\UH$ with Haar measure $\mu_G$,\footnote{In particular, for finite groups $G$, $\mu_G$ is the uniform measure, that is, $\mu_G(g)=\frac{1}{|G|}$.\label{fn: uniform measure}}
\begin{align}\label{eq: moment operator as projector}
    \cM^{(k)}_{\mu_G}
    =\Pi_{\mathrm{Comm}(G,k)}\; ,
\end{align}
where $\Pi_{\mathrm{Comm}(G,k)}$ denotes the orthogonal projector (with respect to the Hilbert Schmidt inner product) onto the commutant $\mathrm{Comm}(G,k):=\{O\in\cL(\cH^{\otimes k})\mid[O,U^{\otimes k}]=0\ \forall U\in G\}$. Now, by Schur-Weyl duality (applied to the unitary group $\UH$) the commutant $\mathrm{Comm}(\UH,k)$ is spanned by permutations on tensor factors \cite{Mele2024}. For $k=2$, this implies that $\mathrm{Comm}(\UH,2)$ is spanned by $\one\otimes\one$ and $\SWAP$, from which one readily computes the second moment operator,
\begin{align}\label{eq: second moment}
    \cM^{(2)}_{\mu_{\UH}}(O)&=\frac{\tr[O(\one+\SWAP)]}{2^n(2^n+1)}\Pi_\sym+\frac{\tr[O(\one-\SWAP)]}{2^n(2^n-1)}\Pi_\Lambda\; .
\end{align}

In contrast, for a state $k$-design in Def.~\ref{def: state k-design}, it is (necessary and) sufficient that the moment operators agree on the symmetric subspace $\Sym^{(k)}(\C^{2^n})$ only, for which Eq.~(\ref{eq: second moment}) reduces to
\begin{align}\label{eq: symmetric second moment}
    \cM^{(2)}_{\mu_{\UH}}(p_\psi^{\otimes 2})
    =\frac{1}{2^{n-1}(2^n+1)}\Pi_\sym\quad\quad \forall\psi\in\CP^{2^n-1}\; .
\end{align}
For $\{\mu_G(U)U\psi\}_{U\in G}$ with $\psi\in\CP^{\dim(\cH)-1}$ and $G<\UH$ to be a complex projective state $2$-design, the commutant of $G$ must therefore be constant on the symmetric subspace.

\begin{lemma}\label{lm: Schur for state designs}
    Let $G<\UH$ with $\cH=\C^{2^n}$ be a closed unitary subgroup. Then $\{\mu_GU\psi\}_{U\in G}$ with $\mu_G$ the unique normalised Haar measure$^{\ref{fn: uniform measure}}$ on $G$ defines a complex projective state $2$-design for every state $\psi\in\CP^{2^n-1}$ if and only if $G$ acts irreducibly on the symmetric subspace $\Sym^2(\C^{2^n})$, that is,
    \begin{align}\label{eq: symmetric commutant equivalence}
        \Comm(G,2)|_\sym
        =\Pi_\sym\Comm(G,2)\Pi_\sym
        =\C\cdot\Pi_\sym\; .
    \end{align}
\end{lemma}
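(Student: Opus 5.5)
The plan is to rewrite the state-design condition in terms of the second moment operator of $\mu_G$ restricted to the symmetric subspace. For a fixed fiducial $\psi$, the orbit $\{\mu_G,U\psi\}_{U\in G}$ is a state $2$-design if and only if
\begin{align*}
    \cM^{(2)}_{\mu_G}(p_\psi^{\otimes 2})=\frac{1}{2^{n-1}(2^n+1)}\Pi_\sym ,
\end{align*}
by Def.~\ref{def: state k-design} and Eq.~(\ref{eq: symmetric second moment}). By Eq.~(\ref{eq: moment operator as projector}), the left-hand side is the Hilbert--Schmidt projection $\Pi_{\Comm(G,2)}(p_\psi^{\otimes 2})$. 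Since $U^{\otimes 2}$ commutes with $\SWAP$, both $\Comm(G,2)$ and the twirl preserve the block decomposition $\Sym^2\oplus\Lambda^2$. The twirl of an operator supported on $\Sym^2$ is therefore again supported there, and it equals the orthogonal projection onto $\Comm(G,2)|_\sym=\Pi_\sym\Comm(G,2)\Pi_\sym$, which is the commutant of the restricted representation $G\to\U(\Sym^2(\C^{2^n}))$.

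For ($\Leftarrow$), assume $\Comm(G,2)|_\sym=\C\cdot\Pi_\sym$. Then the twirl of $p_\psi^{\otimes 2}$ is a multiple $c\,\Pi_\sym$. Trace preservation of the twirl fixes $c\,\tr[\Pi_\sym]=\tr[p_\psi^{\otimes 2}]=1$, so $c=1/(2^{n-1}(2^n+1))$ for every $\psi$. This matches Eq.~(\ref{eq: symmetric second moment}). Irreducibility of the restricted representation and the statement $\Comm|_\sym=\C\Pi_\sym$ are equivalent by Schur's lemma, because the restricted representation is unitary and hence completely reducible.

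For ($\Rightarrow$), assume the design property holds for every $\psi$. Then $\Pi_{\Comm(G,2)|_\sym}(p_\psi^{\otimes 2})=c\,\Pi_\sym$ for all $\psi$. The key step is that the $p_\psi^{\otimes 2}$ span all operators on $\Sym^2(\C^{2^n})$, as already used in the proof of Lm.~\ref{lm: symmetric decomposition}. By linearity, the orthogonal projection onto $\Comm(G,2)|_\sym$ then maps every operator on $\Sym^2$ into $\C\Pi_\sym$. Its image must therefore be contained in $\C\Pi_\sym$, and since $\Pi_\sym$ is itself in the commutant, the image is exactly $\C\Pi_\sym$. This gives Eq.~(\ref{eq: symmetric commutant equivalence}).

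The main obstacle is the spanning claim together with the compatibility of projections. On spanning: polarization shows that $\{\psi\otimes\psi\}$ spans $\Sym^2$. From this, the operators $\dyad{\psi\otimes\psi}{\phi\otimes\phi}$ span $\End(\Sym^2)$, and the polarization identity in both slots reduces these to real combinations of the diagonal projectors $p_\psi^{\otimes 2}$. On compatibility: one must check that restricting the Hilbert--Schmidt projection onto $\Comm(G,2)$ to operators on $\Sym^2$ agrees with the projection onto $\Comm(G,2)|_\sym$. This follows because both the twirl and the projection $\Pi_\sym(\cdot)\Pi_\sym$ are commuting orthogonal projections.
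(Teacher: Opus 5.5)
Your proof is correct and follows the same route as the paper: the second moment operator is the Hilbert--Schmidt projection onto $\Comm(G,2)$, you compress it to $\Sym^2(\C^{2^n})$, and then apply Schur's lemma, making explicit the $(\Rightarrow)$ direction via the spanning property of $\{p_\psi^{\otimes 2}\}$ that the paper's short proof leaves implicit. One small slip: polarization writes $|\psi\otimes\psi\rangle\langle\phi\otimes\phi|$ as a \emph{complex}, not real, combination of the $p_\chi^{\otimes 2}$ (real combinations of projectors are always Hermitian), but complex-linearity of the twirl is all your argument needs.
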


\begin{proof}
    This follows from Eq.~(\ref{eq: moment operator as projector}) and Eq.~(\ref{eq: symmetric second moment}). In particular, note that Eq.~(\ref{eq: symmetric commutant equivalence}) is equivalent to $\End_G(V)=\C\cdot\one_V$ for $V=\Sym^2(\C^{2^n})$, where $\End_G(V)=\{A\in\mathrm{End}(V)\mid A\rho(g)=\rho(g)A\ \forall g\in G\}$ is the space of equivariant linear operators on $V$ for the restriction of the tensor-product representation $g\mapsto g^{\otimes 2}$ to $V$. By Schur's lemma this is equivalent to $G$ acting irreducibly on $V$.
\end{proof}

In fact, since global phases are irrelevant, it is sufficient to restrict to subgroups of the projective unitary group. It is thus natural to ask whether pure-state invariants give rise to state $k$-designs for some (closed) unitary subgroup $G\leq \U(2^n)$? This is indeed the case for the groups
\begin{align}\label{eq: unitary lift}
    G_a
    &:=\Sp_{W_aK}(2^{n-1})\cap\Cl_n\; ,
\end{align}
for all $a\in\cT_-$ and $k=2$, where, $G_a$ is a unitary lift of $\PStab(L_a)$. (Recall from Cor.~\ref{cor: Clifford intersections} that $\mathrm{P}(G_a)=\PStab(L_a)=\mathrm{P}(\Sp_{W_aK}(2^{n-1})\cap\Cl_n)$ for all $n\geq 2$; the case $n=1$ follows immediately from Lm.~\ref{lm: transitivity}.) For a projective unitary $[U]\in\mathrm{PU}(\cH)$, we write $[U]\psi$ for its induced action on $\CP^{d-1}$, that is, the action of any unitary lift $U$ on the ray $\psi$, equivalently $p_\psi\mapsto p_{U\psi}=[U]p_\psi[U]^\dagger$.

\begin{theorem}\label{thm: 2-design from invariant symmetries}
    For every $a\in\cT_-$ and pure state $\psi\in\CP^{2^n-1}$, $\{\frac{1}{|\PStab(L_a)|},[U]\psi\}_{[U]\in\PStab(L_a)}$ is a complex projective state $2$-design, and $(G_a,\mu_{G_a}=\frac{1}{|G_a|}=\frac{1}{2|\PStab(L_a)|})$ a symplectic 2-design.
\end{theorem}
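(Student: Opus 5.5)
By Lm.~\ref{lm: Schur for state designs}, the state-design claim reduces to showing that $G_a$, equivalently $\PStab(L_a)$, acts irreducibly on $\Sym^2(\C^{2^n})$. So the task is to compute $\Comm(G_a,2)|_\sym$. I would work in the Bell basis $\{\kett{v}\}_{v\in V}$ of Lm.~\ref{lm: vec}, where $\Sym^2(\C^{2^n})=\langle\kett{s}\mid s\in\cT_+\rangle$.

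\textbf{Step 1: Pauli part of $G_a$.} By Cor.~\ref{cor: Clifford intersections}, $G_a$ contains (up to phase) the Pauli group; more precisely, it contains all $W_v$ with $W_vW_aW_v^T=W_a$. By Eq.~(\ref{eq: diagonal action in Bell basis}), $W_v^{\otimes 2}$ acts diagonally on $\kett{w}$ with eigenvalue $\chi_v(w)\tau(v)$. Hence, after the standard rephasing, the subgroup of Paulis separates the Bell vectors by their characters $\chi_\cdot(w)$. Since distinct $w$ give distinct characters, any operator commuting with all $W_v^{\otimes 2}$, $v\in V$, is diagonal in the Bell basis.

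The phase issue needs care: only those $W_v$ with $W_v\in\Sp_{W_aK}$ lie in $G_a$, though a rephased version $e^{i\theta}W_v$ lies in $G_a$ for every $v$. I will use $(e^{i\theta}W_v)^{\otimes 2}=e^{2i\theta}W_v^{\otimes 2}$. The global scalar is irrelevant for commutation, so the conclusion still holds: $\Comm(G_a,2)|_\sym$ consists of operators diagonal in $\{\kett{s}\}_{s\in\cT_+}$.

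\textbf{Step 2: transitivity on $\cT_+$.} A diagonal operator commuting with $C^{\otimes 2}$ for Cliffords $C\in G_a$ must have constant entries along the orbits of the induced action on Bell labels. By Lm.~\ref{lm: vec}(iii), $C^{\otimes 2}\kett{s}=\vect(CW_sC^T)$, and $CW_sC^T\sim W_{S_C s}$ up to the correction $CC^T$, which I expect to be proportional to $W_a$ for $C\in\Sp_{W_aK}$. I would check this, then compute the induced map on labels. I expect it to be the affine map $s\mapsto S_Cs+a+S_C a$ or similar, or else, after relabelling $s\mapsto s+a$, the linear action of $\OG(q_a)$.

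Under $s\mapsto s+a$, the set $\cT_+$ maps to $\{v\mid \tau(v+a)=1\}=\{v\mid q_a(v)=1\}=L_a$, using $\tau(v+a)=\tau(v)\tau(a)\chi_a(v)=-\xi_a(v)$. This is consistent with the count $|\cT_+|=|L_a|=2^{n-1}(2^n+1)$. So the question becomes whether $\OG(q_a)$ acts transitively on $L_a=q_a^{-1}(1)$. This holds by the transitivity of $\OG(q)$ on the level sets (Thm.~7.4 in Ref.~\cite{Taylor1992}, already used in Lm.~\ref{lm: constant commutants}), since $0\notin L_a$. Combining with Lm.~\ref{lm: transitivity}, every $S\in\OG(q_a)$ lifts to an element of $G_a$.

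It follows that $\Comm(G_a,2)|_\sym=\C\Pi_\sym$, so the orbit is a state $2$-design for any fiducial $\psi$.

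\textbf{Step 3: symplectic $2$-design.} I would show $\Comm(G_a,2)=\Comm(\Sp_{W_aK}(2^{n-1}),2)$. The latter is spanned by $\one$, $\SWAP$ and the rank-one projector $\kett{a}\braa{a}$, since $\Lambda^2$ splits as $\langle\kett a\rangle\oplus$ an irreducible complement. By Step 1, $\Comm(G_a,2)$ is Bell-diagonal. On $\cT_-$, the orbit analysis gives orbits $\{a\}$ and $\cT_-\setminus\{a\}$; the latter translates under $v\mapsto v+a$ to $J_a^\circ$, on which $\OG(q_a)$ is again transitive. So the commutant is spanned by $\Pi_\sym$, $\Pi_a$ and $\Pi_\Lambda-\Pi_a$, which matches the symplectic commutant and gives equal moment operators via Eq.~(\ref{eq: moment operator as projector}). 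The normalisation $|G_a|=2|\PStab(L_a)|$ reflects the kernel $\{\pm\one\}$ of $\mathrm{P}$ restricted to $\Sp_{W_aK}\cap\Cl_n$.

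\textbf{Main obstacle.} The main obstacle is Step 2's bookkeeping: determining exactly how $C^{\otimes 2}$ permutes Bell labels. This requires controlling $CC^T$ and the phases for $C\in\Sp_{W_aK}(2^{n-1})\cap\Cl_n$, where the defining relation $CW_aC^T=W_a$ should make the label action precisely the $a$-shifted isometry action. I would first verify this on the transvection generators $C_{t_u}=e^{-\frac{i\pi}{4}W_u}$ for $u\in L_a$, where everything is explicit, and then extend to all of $G_a$ since these generate $\OG(q_a)$.
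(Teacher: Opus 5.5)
Your proof is correct and has the same skeleton as the paper's. The rephased Pauli elements of $G_a$ force every $X\in\Comm(G_a,2)$ to be Bell-diagonal. Witt transitivity of $\OG(q_a)$ on its level sets then cuts the commutant down to three dimensions, with a scalar symmetric block. Finally, comparison with the three-dimensional commutant of $\Sp_{W_aK}(2^{n-1})$ gives the symplectic design.

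What differs is the basis in which you track orbits.
\begin{itemize}
\item The paper expands $X=\sum_{P,Q}\beta_{PQ}P\otimes Q$. There $(C\otimes C)(W_v\otimes W_v)(C\otimes C)^\dagger=W_{S_Cv}\otimes W_{S_Cv}$ holds exactly, because the sign of $CW_vC^\dagger$ cancels in the tensor square. So the orbits $\{\one\otimes\one\}$, $L_a$, $J_a^\circ$ appear without ever touching $CC^T$. It then needs Lm.~\ref{lm: anti-symmetric eigenvalues} to rewrite the orbit sums $\sum_{P\in L_a}P\otimes P$ and $\sum_{P\in J_a}P\otimes P$ in terms of $\SWAP$ and $\Pi_a$.
\item You act on Bell vectors instead. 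This produces $\Pi_\sym$, $\Pi_a$ and $\Pi_\Lambda-\Pi_a$ directly, but requires the label bookkeeping you flag as the main obstacle.
\end{itemize}

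That obstacle dissolves in one line, with no generator-by-generator check. From $CW_aC^T=W_a$ you get $C^T=W_aC^\dagger W_a$, so $CC^T=(CW_aC^\dagger)W_a\sim W_{S_Ca+a}$. Hence $C^{\otimes 2}\kett{w}\propto\kett{S_C(w+a)+a}$, which is exactly your affine guess.

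Your translation $v\mapsto v+a$ sends $\cT_+$ onto $L_a$ and $\cT_-$ onto $J_a$, since $\tau(v+a)=-\xi_a(v)$. This is a real gain of your route. It is the label-level version of $\Phi_A:x\mapsto\kett{xA}$ in Lm.~\ref{lm: adjoint rep in Bell basis}, and it makes transparent why the symmetric block is a single orbit.

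Your argument also covers $n=1$ uniformly, with $J_a^\circ=\emptyset$ and $\Pi_a=\Pi_\Lambda$; the paper handles this case separately via the Clifford $3$-design property. For $n=1$, take the surjectivity $\mathrm{P}(G_a)=\pCl_1$ from $\Sp_{YK}(1)=\SU(2)$, not from Cor.~\ref{cor: Clifford intersections}, which is stated only for $n\geq 2$.
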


\begin{proof}
    For $n=1$, $\PStab(L_a)=\pCl_1$ by Lm.~\ref{lm: transitivity} and the result thus holds since $\Cl_n$ is a unitary $3$-design for all $n\geq 1$ \cite{Webb2016,Zhu2017}. For $n\geq 2$, use that $\aP_n$ is a basis for $\LHsa$ to write any element $X\in\Comm(G_a,2)$ as $X=\sum_{P,Q\in\aP_n}\beta_{PQ}P\otimes Q$. Since $\mathrm{P}(G_a)=\PStab(L_a)$ contains the projective Pauli group, for every $P\in\aP_n$ there exists a phase $\lambda_P$ such that $\lambda_PP\in G_a$, hence, commutation with $G^{\otimes 2}_a$ implies commutation with $P^{\otimes 2}$ for every $P$, and therefore $\beta_{PQ}=0$ unless $P=Q$. Moreover, by Lm.~\ref{lm: transitivity} and Witt's extension theorem (see e.g. Thm.~7.4 in Ref.~\cite{Taylor1992}), $G_a$ acts transitively on $L_a$ and its complement $J^\circ_a=\aP^\circ_n\backslash L_a$, hence,
    \begin{align}\label{eq: stabiliser commutant algebra}
        \Comm(G_a,2)
        =\langle\one\otimes\one,\sum_{P\in L_a}P\otimes P,\sum_{P\in J_a}P\otimes P\rangle
        =\langle \one\otimes\one,\SWAP,\Pi_a\rangle
        =\langle \Pi_\sym,\Pi_\Lambda,\Pi_a\rangle\; ,
    \end{align}
    where we used the identity $\sum_{P\in\aP_n}P\otimes P=2^n\SWAP$ with $\SWAP=\Pi_\sym-\Pi_\Lambda$ as well as Lm.~\ref{lm: anti-symmetric eigenvalues}, by which $\sum_{P\in L_a}P\otimes P=M^-_a=\frac{1}{2}(2^n\SWAP-4^n\Pi_a)$ (similarly, $\sum_{P\in J_a}P\otimes P=M^+_a=\frac{1}{2}(2^n\SWAP+4^n\Pi_a)$, and where $\Pi_a=\frac{1}{2^n}\kett{a}\braa{a}$ for $\kett{a}=\vect(W_a)=(1\otimes W_a)|\Omega\rangle$ (cf. Eq.~(\ref{eq: vec})). Now, since both $\Pi_\Lambda$ and $\Pi_a$ have support on $\Lambda^2(\C^{2^n})$ only, Eq.~(\ref{eq: stabiliser commutant algebra}) implies that $\Comm(G_a,2)|_\sym\sim \Pi_\sym$, and the first statement thus follows from Lm.~\ref{lm: Schur for state designs}. 
    
    Finally, since $G_a<\Sp_{W_a}(2^{n-1})$ by definition, and $\Comm(\Sp(2^{n-1}),2)$ has dimension three for $n\geq 2$ \cite{GoodmanWallach2009}, the respective commutant algebras agree, and the second statement thus follows from Eq.~(\ref{eq: stabiliser commutant algebra}) and Eq.~(\ref{eq: moment operator as projector}). For $n=1$, $\Sp(1)\cong\SU(2)$, hence, $G_a=\Sp(1)\cap\Cl_1=\SU(2)\cap\Cl_1$, hence, $\mathrm{P}(G_a)=\Cl_1$ and the result thus follows since $\Cl_1$ is a symplectic $3$-design \cite{Webb2016,Zhu2017}.
\end{proof}

The proof of the second statement in Thm.~\ref{thm: 2-design from invariant symmetries} is analogous to that of Thm.~4 in Ref.~\cite{HashagenFlammiaGrossWallman2018}; what distinguishes orthogonal and symplectic cases is the first statement. In related work, it was recently shown that complex projective state $k$-designs can be constructed from symplectic $k$-designs (see Thm.~1 in Ref.~\cite{WestEtAl2024}), raising the question whether more efficient complex projective state $k$-designs can be constructed under the restriction to symplectic $k$-designs. Thm.~\ref{thm: 2-design from invariant symmetries} provides such an example for the case $k=2$, since (by Cor.~\ref{cor: Clifford intersections}) the unitary lifts $G_a$ in Eq.~(\ref{eq: unitary lift}) are not only Clifford unitaries (whose action on Pauli labels is given by symplectic transformations in $\Sp(2n,\zz_2)$), but elements in an ambient symplectic group $\Sp_{W_a}(2^{n-1})$. For an explicit representation of the various symplectic Lie algebras corresponding to the sets $L_a$, $a\in\cT_-$ and the relation between them, see App.~\ref{app: symplectic representation}.

Ref.~\cite{GargiuloHerringerZeier2026} independently proves the stronger fact that the corresponding Clifford transvection groups are symplectic $3$-designs. Here we identify this group from a different starting point---as the stabiliser of an update-closed pure-state invariant---and use its commutant to connect the design property directly with the purity geometry and the associated state 2-design orbit.

In turn, given Thm.~\ref{thm: 2-design from invariant symmetries}, one may ask if the groups $G_a$ exhaust all symplectic $2$-designs in the Clifford group. This is not the case, as there exist subgroups $H<O^-(2n,\zz_2)$ such that $\pi^{-1}(H)<G_a$ and whose action remains transitive on the level sets of quadratic refinements and thus, by the same arguments as in the proof of Thm.~\ref{thm: 2-design from invariant symmetries}, have the same commutant algebra as in Eq.~(\ref{eq: stabiliser commutant algebra}) \cite{LiebeckPraegerSaxl2010,GiudiciGlasbyPraeger2023}.

\subsection{On the reconstruction of quantum theory from questions}\label{sec: reconstruction}

We end our list of applications by recalling a key motivation for this work, and by placing our results within the physical context from which it arises.

The modern reconstruction programme seeks to derive the mathematical formalism of quantum theory from operational or information-theoretic principles rather than postulating Hilbert-space structure outright; representative approaches include Ref.~\cite{Hardy2016,MasanesMueller2010,ChiribellaEtAl2011,Hoehn2016,Hoehn2017,Muller2021}. The framework most directly relevant here is the reconstruction from `rules of information acquisition', developed in Ref.~\cite{Hoehn2016,Hoehn2017,HoehnWever2017}, for which the Brukner-Zeilinger information measure plays a fundamental role. As pointed out in Ref.~\cite{BruknerZeilinger1999,BruknerZeilinger2001}, when seeking to describe the information contained in a state over complementary observables, the classical Shannon entropy fails to account for measurement uncertainties. By contrast, the quadratic $I$ is essentially a sum of variances (see Sec.~\ref{sec: uncertainty relations}).

The surprising insight of Ref.~\cite{HoehnWever2017} is that a finite set of pure-state invariants - with respect to a given Pauli frame - is sufficient to reconstruct the pure-state manifold. More precisely, the authors of Ref.~\cite{HoehnWever2017} show that in the two-qubit case pure-state invariants, together with the demand that they be preserved under linear transformations of the Bloch vector, single out the projective unitary group, from which the pure-state manifold is then constructed as the orbit of a state of maximal information, e.g. a stabiliser state. This result is not only interesting for conceptual reasons, but also as an insight into the geometry of Bloch vectors, which for $n>2$ remains poorly understood. It is thus natural to ask whether the explicit construction in Ref.~\cite{HoehnWever2017} generalises beyond the two-qubit case, as well as to seek a more formal derivation of their result that highlights these geometric aspects more clearly. In particular, given that the most direct generalisation of the two-qubit pure-state invariants is given by the family of update-closed invariants in Def.~\ref{def: family of pure-state invariants} (see Ref.~\cite{FrembsHoehnNataleWever2026a}), one may hope that these questions have a positive answer already in this (from the perspective of Thm.~\ref{thm: all weighted quadratic invariants} and Thm.~\ref{thm: all Boolean pure-state invariants}) highly restricted setting. It turns out that this is indeed the case. Since the result has a rather different flavour and requires substantial additional machinery, we leave its proof to future work, and finish this section by placing the result within the broader conceptual context of reconstructions of quantum theory.

Unlike most reconstructions of quantum mechanics \cite{Hardy2016,MasanesMueller2010,ChiribellaEtAl2011,Hoehn2016,Muller2021}, the programme laid out in Ref.~\cite{Hoehn2017} assumes fewer `global' constraints and thus remains more constructive. Still, the explicitly worked-out two-qubit case in Ref.~\cite{HoehnWever2017} relies on a postulate of continuous symmetries. While justifiable on physical grounds on macroscopic scales, it is arguably the weak link in an otherwise constructivist approach. It is therefore tempting to seek a stronger reconstruction result that, instead of relying on continuous symmetries, uses only discrete notions such as the compatibility condition in Def.~\ref{def: update-closure}. This latter programme is, of course, much more ambitious, as there are fewer structural results that can be employed and one has to rely on combinatorial arguments instead. While this puts additional burden on proofs, it offers new structural insights (of which the characterisation of pure-state invariants in this and previous works \cite{HoehnWever2017,FrembsHoehnNataleWever2026a,FrembsHoehnNataleWever2026b} is a prime example). Indeed, if successful, it would yield a novel reconstruction of the non-associative Jordan product, and thus the formal structure of measurements in quantum mechanics, and possibly provide it with a more intuitive physical interpretation. This line of reasoning is not new, but has pre-cursors in e.g. Ref.~\cite{Leifer2006,LeiferSpekkens2013,Wetering2019,ParzygnatRusso2021,Baez2022,ParzygnatBuscemi2023}. A detailed development of this programme is left to future work.

\section{Conclusion}\label{sec: conclusion}

The set of $n$-qubit pure states comprises a space with rich geometric structure, of which only the cases for small $n$ seem to be well understood. Here, we provide novel insight into the intricate geometry of $n$-qubit pure quantum states by studying quadratic purity invariants that are diagonal in Bloch components. Surprisingly, this restriction yields a rich family of purity constraints, whose characterisation reveals deep connections with the theory of symplectic polar spaces. Restricting further to the subfamily of constraints that are closed under a physically motivated notion of `information update' (see Def.~\ref{def: update-closure}), reveals yet further previously hidden structure in the pure-state manifold: we show that the complements of their supports furnish Pauli generating sets, or ‘cross-sections’, of different embeddings of symplectic Pauli Lie algebras $\spf(2^{n-1})\hookrightarrow\su(2^n)$. We trace the appearance of this symplectic structure (and its relation with the well-known discrete symplectic structure of all Pauli observables $\aP_n$ \cite{DehaeneDeMoor2003,Gross2006}), to covariance with quaternionic structures on $\C^{2^n}$, and distinguish it from the corresponding real/orthogonal structures via the Arf invariant of quadratic refinements of the symplectic form on $\aP_n$.

Our analysis offers new structural insights not only into the geometry of the Bloch representation of pure quantum states but also the $n$-qubit Pauli group itself and its symmetries more generally. The underlying theory of symplectic polar spaces is not new, but its applications in quantum information have been noticed only very recently \cite{Cuypers2026,GargiuloHerringerZeier2026} (see also Ref.~\cite{SanigaPlanat2007}). Our analysis of purity, complementarity and uncertainty relations adds another physical dimension to this discussion.

Additionally, we presented a few selected examples showcasing how our results are of relevance for certain specific areas in quantum information including benchmarking, the study of uncertainty relations and state $k$-designs. These are likely exemplary cases only, and we expect more applications to be found in the future. Beyond the areas touched upon already, it would be interesting, for instance, to see whether the imprint of the ambient symplectic structure (within the unitary group) on the Pauli and Clifford groups allows to improve error correction protocols, to optimise Pauli sampling or classical simulation techniques.

\bibliographystyle{abbrv}
\bibliography{bibliography}

\appendix

\section{Cubic purity relations and Jordan structure}\label{app: cubic purity relations}

Thm.~\ref{thm: all weighted quadratic invariants}, Thm.~\ref{thm: all Boolean pure-state invariants} and Thm.~\ref{thm: all update-closed pure-state invariants} characterise purity in terms of quadratic constraints. On the other hand, purity can also be characterised by $\tr[\rho^2]=1$ and the analogous cubic relation $\tr[\rho^3]=1$. With this in mind, and motivated by the relations in Def.~\ref{def: PLJ algebra}, we seek a reformulation of the update-closed pure-state invariants in Def.~\ref{def: family of pure-state invariants} in terms of cubic relations that exposes their connection with Jordan structure more clearly.

To this end, we evaluate the purity constraint $p^2_\psi=p_\psi$ with $p_\psi=\frac{1}{2^n}(\one+\sum_{\one\neq P\in\aP_n}\alpha_P P)$ directly,
\begin{align*}
    p^2_\psi
    &=\frac{1}{4^n}(\one+\sum_{\one\neq P\in\aP_n}\alpha_P P)^2 \\
    &=\frac{1}{4^n}(\one+2\sum_{\one\neq P\in\aP_n}\alpha_P P+\sum_{\one\neq P,P'}\alpha_P\alpha_{P'}PP') \\
    &=\frac{1}{4^n}(2^n\one+2\sum_{\one\neq P\in\aP_n}\alpha_P P+\sum_{\one\neq P\neq P'\neq\one}\alpha_P\alpha_{P'}\{P,P'\})\; ,
\end{align*}
where we used $\{P,Q\}=\frac{1}{2}(PQ+QP)$ and that $\sum_{\one\neq P\in\aP_n}\alpha^2_P=2^n-1$ from $\tr[p^2_\psi]=1$ in the last line. Now, from $p^2_\psi=p_\psi$ one reads off the following relations: for every $\one\neq Q\in\aP_n$, it holds that
\begin{align}\label{eq: elementary quadratic constraints}
    (2^n-2)\alpha_Q
    =\frac{1}{2^n}\sum_{\one\neq P\neq P'\neq\one}\alpha_P\alpha_{P'} \tr[\{P,P'\}Q]\; .
\end{align}
Using again that $\sum_{\one\neq P\in\aP_n}\alpha^2_P=2^n-1$ (from $\tr[p^2_\psi]=1$) thus gives
\begin{align}\label{eq: triple product relation}
    T_\psi(\aP^\circ_n)
    :=\hspace{-.3cm}\sum_{\substack{\one\neq P,Q,R\in\aP_n\\ [P,Q]=[P,R]=[Q,R]=0}}\hspace{-.3cm}\alpha_P\alpha_Q\alpha_R\tr[PQR]
    =2^n(2^n-2)\sum_{\one\neq Q\in\aP_n}\alpha^2_Q
    =2^{n+1}(2^n-1)(2^{n-1}-1)\; .
\end{align}
Similarly, by including the identity element and using $I_\psi(\aP^\circ_n)=2^n-1$, one obtains
\begin{align}\label{eq: triple product relation incl 1}
    T_\psi(\aP_n)
    &:=\hspace{-.3cm}\sum_{\substack{P,Q,R\in\aP_n\\ [P,Q]=[P,R]=[Q,R]=0}}\hspace{-.3cm}\alpha_P\alpha_Q\alpha_R\tr[PQR]
    =T_{\psi}(\aP^\circ_n)+3\cdot 2^nI_\psi(\aP^\circ_n)+2^n
    =2^{3n}\; .
\end{align}
As a real constant, $T_\psi(\aP_n)$ is invariant under complex conjugation, that is, $T_\psi(\aP_n)=\overline{T_\psi(\aP_n)}$, hence, $T_\psi(\aP_n)=\frac{1}{2}(T_\psi(\aP_n)+\overline{T_\psi(\aP_n)})=T^+_\psi(\aP_n)$, where (by Hermiticity of the Pauli operators)
\begin{align*}
    T^+_\psi(\aP_n)
    &\ :=\hspace{-.2cm}\sum_{\substack{P,Q,R\in\aP_n\\ [P,Q]=[P,R]=[Q,R]=0\\(PQR)^T=PQR}}\hspace{-.3cm}\alpha_P\alpha_Q\alpha_R\tr[PQR] \\
    T^-_\psi(\aP_n)
    &\ :=\hspace{-.2cm}\sum_{\substack{P,Q,R\in\aP_n\\ [P,Q]=[P,R]=[Q,R]=0\\ (PQR)^T=-PQR}}\hspace{-.3cm}\alpha_P\alpha_Q\alpha_R\tr[PQR]\; .
\end{align*}
In other words, Eq.~(\ref{eq: triple product relation}) and Eq.~(\ref{eq: triple product relation incl 1}) are invariant under transposition. Now, recall that $\aP_n=L_a\cupdot J_a$ with $L_a=\{P\in\aP_n\mid\xi_a(P)=-1\}$ and $ J_a=\overline{L_a}=\{P\in\aP_n\mid \xi_a(P)=+1\}$ for $APA^{-1}=\xi_a(P)P^T$ with $A=W_a$, and where $\{J_a,J_a\}=J_a=\{L_a,L_a\}$ and $\{J_a,L_a\}=L_a$ (see Def.~\ref{def: PLJ algebra}). Invariance under transposition, equivalently the twisted $\xi_a$-relative complex conjugation (cf. Sec.~\ref{sec: quaternionic structure}), thus implies that the terms with an odd number of $L_a$ entries vanish, namely $T_\psi(L_a, J_a, J_a)$ and $T_\psi(L_a,L_a,L_a)$:
\begin{align*}
    \tr[PQR]
    &=\xi_a(P)\xi_a(Q)\xi_a(R)\tr[(AP^TA^{-1})(AQ^TA^{-1})(AR^TA^{-1})]\\
    &=\xi_a(P)\xi_a(Q)\xi_a(R)\tr[P^TQ^TR^T]\\
    &=\xi_a(P)\xi_a(Q)\xi_a(R)\tr[(RQP)^T]
    =\xi_a(P)\xi_a(Q)\xi_a(R)\tr[PQR]\; .
\end{align*}
Moreover, for every decomposition $\aP_n=L_a\cupdot J_a$, Eq.~(\ref{eq: triple product relation incl 1}) reduces to
\begin{align}\label{eq: symmetric decomposition}
\begin{split}
    \tT_\psi(\aP_n)
    &=\hspace{-.3cm}\sum_{\substack{Q\in J_a\\ \{P,P'\}=Q}}\hspace{-.3cm}\alpha_P\alpha_{P'}\alpha_Q\tr[PP'Q] \hspace{.2cm} + \hspace{-.1cm}\sum_{\substack{Q\in L_a\\ \{P,P'\}=Q}}\hspace{-.3cm}\alpha_P\alpha_{P'}\alpha_Q\tr[PP'Q] \\
    &=\hspace{-.3cm}\sum_{\substack{Q\in J_a\\P,P'\in J_a}}\hspace{-.3cm}\alpha_P\alpha_{P'}\alpha_Q\tr[\{P,P'\}Q]\hspace{.2cm} +
    \hspace{-.3cm}\sum_{\substack{Q\in J_a\\P,P'\in L_a}}\hspace{-.3cm}\alpha_P\alpha_{P'}\alpha_Q\tr[\{P,P'\}Q] \\
    &\hspace{.5cm} + \hspace{-.3cm}\sum_{\substack{P'\in J_a\\Q,P\in L_a}}\hspace{-.3cm}\alpha_P\alpha_{P'}\alpha_Q\tr[\{Q,P\}P'] + \hspace{-.3cm}\sum_{\substack{P\in J_a\\P',Q\in L_a}}\hspace{-.3cm}\alpha_P\alpha_{P'}\alpha_Q\tr[\{P',Q\}P] \\[.1cm]
    &=T_\psi( J_a, J_a, J_a)+3T_\psi( J_a,L_a,L_a)\; ,
\end{split}
\end{align}
where we used the cyclic symmetry in $T_\psi(\cdot,\cdot,\cdot)$ (cyclicity of the trace) in the last step. Note that Eq.~(\ref{eq: symmetric decomposition}) is valid both for (generating sets of) orthogonal and symplectic Pauli Lie algebras, as it merely relies on the algebraic relations in Def.~\ref{def: PLJ algebra}. Yet, the following stronger constraint,
\begin{align*}
    T_\psi( J_a, J_a, J_a)+T_\psi( J_a,L_a,L_a)
    &=\hspace{-.75cm}\sum_{\substack{P,P',Q\in J_a\\ [P,P']=[P,Q]=[P',Q]=0}}\hspace{-.5cm}\alpha_P\alpha_{P'}\alpha_Q\tr[PP'Q]+
    \hspace{-.5cm}\sum_{\substack{P,P'\in L_a,Q\in J_a\\ [P,P']=[P,Q]=[P',Q]=0}}\hspace{-.5cm}\alpha_P\alpha_{P'}\alpha_Q\tr[PP'Q]\\[.2cm]
    &=\sum_{\substack{P,P'\in\aP_n, Q\in J_a}}\alpha_P\alpha_{P'}\alpha_Q\tr[\{P,P'\}Q]\\
    &=4^n\sum_{Q\in J_a}\alpha^2_Q
    =4^nI_\psi( J_a)
    =2^{3n-1}\; ,
\end{align*}
uses the identities $\{J_a,J_a\}=J_a=\{L_a,L_a\}$ as well as the purity identities $I_\psi( J_a)=2^{n-1}$ in Eq.~(\ref{eq: elementary quadratic constraints}), and thus only applies to the symplectic case. Together with Eq.~(\ref{eq: triple product relation incl 1}) and Eq.~(\ref{eq: symmetric decomposition}), we thus obtain the following reformulation of the purity constraints as cubic relations,
\begin{align}\label{eq: restricted Jordan triple product}
\begin{split}
    T_\psi( J_a, J_a, J_a)
    =T_\psi( J_a,L_a,L_a)
    =\frac{2^{3n}}{4}
\end{split}\; .
\end{align}

While Eq.~(\ref{eq: triple product relation incl 1}) is simply the purity constraint $\tr[\rho^3]=1$, evaluated as an idempotency relation \cite{ByrdKhaneja2003,Kimura2003,KimuraKossakowski2005} on the full algebra, the analogous Eq.~(\ref{eq: restricted Jordan triple product}) refines this relation to symplectic Cartan decompositions (with Pauli shadow $\aP_n=L_a\cupdot J_a$ for $a=\cT_-$). This shows that pure-state invariants constrain not only the quadratic but also the cubic Bloch moments, in a manner adapted to these decompositions. From a more physical perspective, it shows that a substantial part of the Jordan structure of Pauli operators is already encoded by purity constraints, complementarity structures in the Pauli group and the associated uncertainty relations (see Ref.~\cite{FrembsHoehnNataleWever2026b}).

\section{On representations of symplectic Pauli Lie algebras}\label{app: symplectic representation}

We compare the orthogonal and symplectic Pauli Lie algebras by their action on the Bell basis. To do so, we interpret Pauli operators $Q\in\aP^\circ_n$ as generators of the special unitary Lie algebra (via $\aP_n\ni Q\ra\frac{i}{2}Q\in\su(2^n)$), and throughout this appendix adopt the convention that $[\cdot,\cdot]$ denotes the ordinary matrix Lie bracket, $[x,y]:=xy-yx$, in contrast with the normalised Pauli commutator introduced in Sec.~\ref{sec: Pauli observables}. We define the diagonal representation of $\su(2^n)$ on $\C^{2^n}\otimes\C^{2^n}$ by,
\begin{align}\label{eq: infinitesimals}
    \rho_\Delta(x)
    :=x\otimes\one+\one\otimes x\; ,
\end{align}
for every $x\in\su(2^n)$. This yields a Lie algebra homomorphism for $\su(2^n)$,
\begin{align}\label{eq: Lie alg homo}
    [\rho_\Delta(x),\rho_\Delta(y)]
    =[(x\otimes\one+\one\otimes x),(y\otimes\one+\one\otimes y)]
    =\left([x,y]\otimes\one+\one\otimes[x,y]\right)
    =\rho_\Delta([x,y])\; ,
\end{align}
and similarly for any Lie subalgebra $\fg\subset\su(2^n)$, specifically for $\fg=\spf(2^{n-1}),\ \sof(2^n)$. Since $[\rho_\Delta(x),\SWAP]=0$ for all $x\in\su(2^n)$, the representation $\rho_\Delta$ is not irreducible, but decomposes with respect to symmetric and anti-symmetric subspaces. For any $R\in\aP_n$ and using Eq.~(\ref{eq: vec}),
\begin{align}\label{eq: Bell basis rep}
    \rho_\Delta(\frac{i}{2}Q)\kett{R}
    :=\frac{i}{2}(Q\otimes\one+\one\otimes Q)\kett{R}
    =\kett{\frac{i}{2}(QR+ RQ^T)}\; ,
\end{align}
hence, either $\rho_\Delta(\frac{i}{2}Q)\kett{R}=0$ or $\rho_\Delta(\frac{i}{2}Q)\kett{R}\in\Sym^2(\C^{2^n})$ for $R^T=R$ symmetric and $\rho_\Delta(\frac{i}{2}Q)\kett{R}\in\Lambda^2(\C^{2^n})$ for $R^T=-R$ anti-symmetric. By Eq.~(\ref{eq: Bell basis rep}), the tensored Lie algebra representations of $\lf_A=\langle L_A\rangle\cong\spf(2^{n-1})$ for $L_A=\{X\in\aP_n\mid XA+AX^T=0\}$ with $A^T=-A$ and $\lf_S=\langle L_S\rangle\cong\sof(2^n)$ for $L_S=\{X\in\aP_n\mid XS+SX^T=0\}$ and $S^T=S$ decompose along the splitting $\C^{2^{2n}}=\Sym^2(\C^{2^n})\oplus\Lambda^2(\C^{2^n})$.\footnote{Here, we write $L_R:=L_r$ for $R=W_r\in\aP_n$.} In fact, for any nondegenerate (anti-)symmetric operator, we have the following fact.

\begin{lemma}\label{lm: adjoint rep in Bell basis}
    For any invertible $A^T=-A$, let $\lf^\C_A=\{x\in\mathfrak{sl}_{2^n}(\C)=\su(2^n)\otimes_\R\C\mid xA+Ax^T=0\}$. Similarly, for any invertible $S^T=S$, let $\lf^\C_S:=\{x\in\mathfrak{sl}_{2^n}(\C)\mid xS+Sx^T=0\}$. Then the maps
    \begin{align*}
        \Phi_A:\lf^\C_A&\longrightarrow\Sym^2(\C^{2^n}) &
        x&\longmapsto \kett{xA}\; ,\\
        \Phi_S:\lf^\C_S&\longrightarrow \Lambda^2(\C^{2^n}) &
        x&\longmapsto \kett{xS}\; ,
    \end{align*}
    are isomorphisms of $\lf^\C_R$-modules for $R=A,S$, where $\lf^\C_R$ acts on itself by the adjoint representation, and on $\Sym^2(\C^{2^n})$, respectively $\Lambda^2(\C^{2^n})$ by the diagonal representation $\rho_\Delta$.
\end{lemma}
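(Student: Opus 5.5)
Three things need checking for $\Phi_A$ (and analogously $\Phi_S$): that it lands in the claimed subspace, that it is bijective, and that it intertwines the adjoint action with $\rho_\Delta$. The tool throughout is $\vect$, extended linearly to all of $\LH$ via Eq.~(\ref{eq: vec}), so that $\kett{M}=(\one\otimes M)\ket\Omega$ for any matrix $M$. We also use the matrix form of Lm.~\ref{lm: vec}: part (iii) gives $(Q\otimes R)\kett{M}=\kett{RMQ^T}$, and the argument of part (ii) gives $\SWAP\kett{M}=\kett{M^T}$ for arbitrary $M$.

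\textbf{Target subspace.} For $x\in\lf^\C_A$ we have $xA=-Ax^T$. Hence
\[
(xA)^T=A^Tx^T=-Ax^T=xA,
\]
so $xA$ is a symmetric matrix and $\kett{xA}\in\Sym^2(\C^{2^n})$. The same computation with $S^T=S$ gives $(xS)^T=Sx^T=-xS$, so $\kett{xS}\in\Lambda^2(\C^{2^n})$.

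\textbf{Bijectivity.} Since $\vect$ is a linear isomorphism $\LH\to\C^{2^{2n}}$, $\Phi_A$ is injective because $A$ is invertible. It remains to identify the image. The condition $xA+Ax^T=0$ says exactly that $M:=xA$ is symmetric, so $x\mapsto xA$ is a bijection from $\{x\in\LH\mid xA+Ax^T=0\}$ onto the symmetric matrices, with inverse $M\mapsto MA^{-1}$.

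The tracelessness constraint defining $\mathfrak{sl}_{2^n}$ is the one point needing care. For $x$ in this space, $\tr[x]=\tr[-Ax^TA^{-1}]=-\tr[x]$, so every such $x$ is automatically traceless and the $\mathfrak{sl}$ condition costs nothing. The same holds in the orthogonal case. Therefore $\Phi_A$ maps onto all of $\Sym^2$, and $\Phi_S$ onto all of $\Lambda^2$. As a check, the dimensions agree: $2^{n-1}(2^n+1)$ for $\spf$ and $2^{n-1}(2^n-1)$ for $\sof$.

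\textbf{Equivariance.} This is the main step, and it reduces to a one-line computation. For $y\in\lf^\C_A$, matrix Lie bracket convention,
\[
\rho_\Delta(y)\kett{xA}=\kett{y\,xA+xA\,y^T}.
\]
Substituting $Ay^T=-yA$ gives
\[
y\,xA+x\,(Ay^T)=yxA-xyA=[y,x]A,
\]
hence $\rho_\Delta(y)\Phi_A(x)=\Phi_A(\ad_y x)$, which is the required intertwining. The orthogonal case is identical with $S$ in place of $A$.

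Combining the three steps, $\Phi_A$ and $\Phi_S$ are module isomorphisms.
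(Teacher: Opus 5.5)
Your proof is correct and follows the paper's argument: the target-subspace check $(xA)^T=xA$ and the equivariance computation $\rho_\Delta(y)\kett{xA}=\kett{yxA-xyA}=\kett{[y,x]A}$ are exactly the ones used there. The only difference is in surjectivity. The paper combines injectivity with the dimension count $\dim_\C\lf^\C_A=2^{n-1}(2^n+1)=\dim_\C\Sym^2(\C^{2^n})$, whereas you use the explicit inverse $M\mapsto MA^{-1}$ and the observation $\tr[x]=-\tr[x]$, which makes the trace condition automatic; this is slightly more self-contained, since it does not presuppose the dimension of $\lf^\C_A$.
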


\begin{proof}
    If $x\in\lf^\C_A$, then $xA+Ax^T=0$, from which
    \begin{align*}
        (xA)^T=A^Tx^T=-Ax^T=xA\; ,
    \end{align*}
    hence, $xA\in\Sym^2(\C^{2^n})$. Since $A$ is invertible, $x\mapsto xA$ is injective, and since $\dim_\C\lf^\C_A=\dim_\R\mathfrak{sp}(2^{n-1})=2^{n-1}(2^n+1)=\dim_\C\Sym^2(\C^{2^n})$, $\Phi_A$ is an isomorphism of vector spaces. Moreover, (after extending the definition of the diagonal representation in Eq.~(\ref{eq: infinitesimals}) to $\mathfrak{sl}_{2^n}(\C)=\su(2^n)\otimes_\R\C$), we have
    \begin{align*}
        \rho_\Delta(y)\kett{xA}
        =\kett{yxA+xAy^T}
        =\kett{yxA-xyA}
        =\kett{[y,x]A}\; ,
    \end{align*}
    for all $y\in\lf^\C_A$, equivalently $\Phi_A([y,x])=\rho_\Delta(y)\Phi_A(x)$. Consequently, $\Phi_A$ identifies the diagonal action on $\Sym^2(\C^{2^n})$ with the adjoint action on $\lf^\C_A$. The proof for $S$ is analogous. If $x\in\lf^\C_S$, then $xS+Sx^T=0$, from which
    \begin{align*}
        (xS)^T=Sx^T=-xS\; ,
    \end{align*}
    and thus $xS\in\Lambda^2(\C^{2^n})$. Invertibility of $S$ implies that $x\mapsto xS$ is injective, and since $\dim_\C\lf^\C_S=\dim_\R\mathfrak{so}(2^n)=2^{n-1}(2^n-1)=\dim_\C\Lambda^2(\C^{2^n})$, $\Phi_S$ is an isomorphism of vector spaces. Moreover,
    \begin{align*}
        \rho_\Delta(y)\kett{xS}
        =\kett{yxS+xSy^T}
        =\kett{yxS-xyS}
        =\kett{[y,x]S}\; ,
    \end{align*}
    for all $y\in\lf^\C_S$, equivalently $\Phi_S([y,x])=\rho_\Delta(y)\Phi_S(x)$, hence, $\Phi_S$ is an isomorphism of $\lf^\C_S$-modules.
\end{proof}

The module isomorphisms in Lm.~\ref{lm: adjoint rep in Bell basis} are the standard identifications $\spf(V)\simeq\Sym^2(V)$ and $\sof(V)\simeq\Lambda^2(V)$, obtained by lowering an index with the invariant bilinear form; closely related tensorial identifications underlie, respectively, bosonic and fermionic quadratic (Gaussian) dynamics.

\begin{corollary}\label{cor: irreducibility}
    The $\lf_A$-module $\Sym^2(\C^{2^n})$ is irreducible for every $A^T=-A\in\aP_n$ and $n\geq 1$. Moreover, the $\lf_S$-module $\Lambda^2(\C^{2^n})$ is irreducible for every $S^T=S\in\aP_n$ and $n\geq 3$.\footnote{For $n=1$, $\Lambda^2(\C^2)$ is one-dimensional and thus irreducible. For $n=2$, $\sof(4,\C)$ is not simple as $\sof(4,\C)\cong\mathrm{sl}_2(\C)\oplus\mathrm{sl}_2(\C)$, and for $n=1$, $\sof(2,\C)$ is Abelian, hence, any irreducible representation is one-dimensional.\label{fn: exceptional so cases}}
\end{corollary}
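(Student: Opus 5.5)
The plan is to transport the question from modules over $\lf_A$ (resp.\ $\lf_S$) to the adjoint representation via Lm.~\ref{lm: adjoint rep in Bell basis}. First I will pass to complexifications. Since $L_A$ is a Pauli generating set of $\lf_A\cong\spf(2^{n-1})$ (Prop.~\ref{prop: Arf-PLJ correspondence}), and likewise $L_S$ generates $\lf_S\cong\sof(2^n)$, the complex span $\lf_A\otimes_\R\C$ coincides with $\lf^\C_A=\{x\in\mathfrak{sl}_{2^n}(\C)\mid xA+Ax^T=0\}$. The inclusion is clear from the defining relation, and equality follows from the dimension count $\dim_\C\lf^\C_A=2^{n-1}(2^n+1)=\dim_\R\lf_A$ (analogously for $S$). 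A complex subspace of $\Sym^2(\C^{2^n})$ is $\rho_\Delta(\lf_A)$-invariant if and only if it is $\rho_\Delta(\lf^\C_A)$-invariant, because $\rho_\Delta$ extends complex-linearly. Irreducibility over $\lf_A$ is therefore equivalent to irreducibility over $\lf^\C_A$.

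Next I will apply Lm.~\ref{lm: adjoint rep in Bell basis}. The map $\Phi_A$ is an isomorphism of $\lf^\C_A$-modules between $\Sym^2(\C^{2^n})$ with the diagonal action and $\lf^\C_A$ with the adjoint action. Since $A$ is an invertible antisymmetric matrix, $\lf^\C_A\cong\spf(2^n,\C)$, i.e.\ type $C_{2^{n-1}}$. This algebra is simple for every $n\geq1$, since $\spf(2,\C)\cong\mathfrak{sl}_2(\C)$. Under $\Phi_A$, invariant subspaces of $\Sym^2(\C^{2^n})$ correspond exactly to ideals of $\lf^\C_A$. Simplicity leaves only $0$ and the whole algebra, which gives the first claim.

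For $S$ the argument is the same. $\Phi_S$ identifies $\Lambda^2(\C^{2^n})$ with the adjoint module of $\lf^\C_S\cong\sof(2^n,\C)$, which is type $D_{2^{n-1}}$. This algebra is simple exactly when $2^{n-1}\geq3$, i.e.\ $n\geq3$, using the standard facts that $\sof(m,\C)$ is simple for $m=5$ and $m\geq7$ and that $D_2$ is not simple. This also explains the exceptions in the footnote for $n=1,2$.

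I expect the main obstacle to be the identification $\lf_A\otimes\C=\lf^\C_A$, since $\lf^\C_A$ is defined inside $\mathfrak{sl}_{2^n}(\C)$. To handle it I will check directly that each $\frac{i}{2}P$ with $P\in L_A$ satisfies the defining relation and is traceless. I will then check that the real span of $iL_A$ has dimension $|L_A|=2^{n-1}(2^n+1)$, which holds because Pauli operators are linearly independent. The simplicity of the classical Lie algebras I will quote rather than prove.
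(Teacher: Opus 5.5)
Your proposal is correct and follows the paper's proof. Both identify $\Sym^2(\C^{2^n})$ and $\Lambda^2(\C^{2^n})$ with the adjoint modules of $\lf^\C_A$ and $\lf^\C_S$ via Lm.~\ref{lm: adjoint rep in Bell basis}, invoke simplicity of these complex Lie algebras in the stated ranges, and pass between irreducibility over $\lf_R$ and over its complexification; your dimension count showing $\mathrm{span}_\C(iL_A)=\lf^\C_A$ makes explicit a step the paper leaves implicit. (Minor aside: $\sof(m,\C)$ is in fact simple for $m=3$ and all $m\geq 5$, including $\sof(6,\C)\cong\mathfrak{sl}_4(\C)$, but since only $m=2^n\geq 8$ is needed this does not affect your argument.)
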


\begin{proof}
    By Lm.~\ref{lm: adjoint rep in Bell basis}, $\Sym^2(\C^{2^n})$ and $\Lambda^2(\C^{2^n})$ are isomorphic to the adjoint modules of $\lf^\C_A$ and $\lf^\C_S$, respectively. These complex Lie algebras are simple in the stated ranges \cite{GoodmanWallach2009}, hence their adjoint representations are irreducible. Finally, a complex-linear $\fg$-module is irreducible if and only if it is irreducible as a $\fg^\C$ module, which implies the result.
\end{proof}

While the diagonal representation of $\spf(2^{n-1})$ acts irreducibly on $\Sym^2(\C^{2^n})$, $\Lambda^2(\C^{2^n})$ decomposes under this representation into a one-dimensional representation spanned by $A$, since $\rho_\Delta(\frac{i}{2}Q)\kett{A}=0$ for all $Q\in L_A$, and an irreducible piece $\Lambda^2_0(\C^{2^n})$ (see Sec.~10.2.3 in Ref.~\cite{GoodmanWallach2009}),\footnote{$\Lambda^2_0(\C^{2^n})$ can be identified with the kernel of the restriction to $\Lambda^2(\C^{2^n})$ of the contraction $c_A:\Lambda^2(\C^{2^n})\ra\C$ given by $c_A(v\wedge w)=B_A(v,w)=w^TAv$. Since $B_A$ is nondegenerate, $\dim(\mathrm{im}(c_A))=1$, hence, $\Lambda^2_0(\C^{2^n})$ has codimension $1$.}
\begin{align}\label{eq: Lambda decomposition}
    \C^{2^{2n}}
    =\Sym^2(\C^{2^n})\oplus\left(\langle A\rangle\oplus\Lambda^2_0(\C^{2^n})\right)\; .
\end{align}
Similarly, by Cor.~\ref{cor: irreducibility}, the diagonal action of $\sof(2^n)$ is irreducible on $\Lambda^2(\C^{2^n})$ for $n\geq 3$, yet splits on $\Sym^2(\C^{2^n})$ into a one-dimensional summand spanned by $S$ and the irreducible component of traceless symmetric states $\Sym^2_0(\C^{2^n})$ (see Sec.~10.2.5 in Ref.~\cite{GoodmanWallach2009}),\footnote{$\Sym^2_0(\C^{2^n})$ is the kernel of the contraction map $c_S:\Sym^2(\C^{2^n})\ra\C$ given by $c_S(v\otimes v)=B_S(v,v)=v^TSv$.}
\begin{align}\label{eq: Sym decomposition}
    \C^{2^{2n}}
    =\left(\langle S\rangle\oplus\Sym^2_0(\C^{2^n})\right)\oplus\Lambda^2(\C^{2^n})\; .
\end{align}

\subsection{Pure-state invariants from quadratic Casimirs}\label{app: pure-state invariants from quadratic Casimirs}

Irreducibility of the $\lf_A$-representation on $\Sym^2(\C^{2^n})$ (by Cor.~\ref{cor: irreducibility}) is closely related with pure-state invariants. To see this, recall that the quadratic Casimir element of a Lie algebra $\fg$ is defined as $C_\fg:=\sum_{s\in B}ss^*$, where $B$ is any orthonormal basis of $\fg$, and $B^*$ its dual basis, defined with respect to a non-degenerate, symmetric, invariant, bilinear form on $\fg$. For simple Lie algebras, the Killing form, $K_\fg(x,y)=\tr_\fg[\ad_x\ad_y]$ defines such a form and is unique up to rescaling. Writing $\ad_x=L_x-R_x$ for $L_x(z)=xz$ and $R_x(z)=zx$ with $x,z\in\mathfrak{gl}(2^n)$ the left and right multiplication operators in $\End(\mathfrak{gl}(2^n))$, one obtains
\begin{align*}
    K_{\mathfrak{gl}(2^n)}(x,y)
    &=\tr_{\mathfrak{gl}(2^n)}[\ad_x\ad_y]\\
    &=\tr_{\mathfrak{gl}(2^n)}[L_xL_y]+\tr_{\mathfrak{gl}(2^n)}[R_xR_y]-\tr_{\mathfrak{gl}(2^n)}[L_xR_y]-\tr_{\mathfrak{gl}(2^n)}[R_xL_y]\\
    &=2^{n+1}\tr_\cH[xy]-2\tr_\cH[x]\tr_\cH[y]\; .
\end{align*}
for the Killing form on $\mathfrak{gl}(2^n)$ with $\cH=\C^{2^n}$. For $x,y\in\mathfrak{sl}(d,\C)$ the second term vanishes. Moreover, $\mathfrak{gl}(2^n,\C)=\mathfrak{sl}(2^n,\C)\oplus\C\one$ is invariant under $\ad_x$ for $x\in\mathfrak{sl}(d,\C)$,
and $\ad_x(\one)=0$, hence, the scalar summand contributes nothing to the trace, and $K_{\mathfrak{sl}(2^n)}(x,y)=2^{n+1}\tr_{\cH}[xy]$. Finally, since $\mathfrak{sl}(2^n)\cong\su(2^n)\otimes\C$ and the Killing form of a real Lie algebra is the restriction of its complexification, we have $K_{\su(2^n)}(x,y)=2^{n+1}\tr[xy]$.\footnote{For a simple Lie subalgebra, the restriction of the ambient Killing form is an invariant symmetric bilinear form and thus proportional to its Killing form. In particular, the Killing forms of $\lf_A\cong\spf(2^{n-1})$ and $\lf_S\cong\sof(2^n)$ are proportional to the Killing form on $\su(2^n)$, explicitly
\begin{align*}
    K_{\spf(2^{n-1})}
    &=\frac{2^{n-1}+1}{2^n}K_{\su(2^n)}|_{\spf(2^{n-1})} &
    K_{\sof(2^n)}
    &=\frac{2^{n-1}-1}{2^n}K_{\su(2^n)}|_{\sof(2^n)}
    \; .
\end{align*}}
Since the proportionality constant will not concern us here, a natural choice is to pick a multiple of the Killing form for which the Pauli basis is orthonormal in the respective Lie algebras, and we thus define $k_{\su(2^n)}\sim K_{\su(2^n)}$ by  $k_{\su(2^n)}(x,y)=-\frac{4}{2^n}\tr[xy]$ such that $k_{\su(2^n)}(\frac{i}{2}Q,\frac{i}{2}P)=\delta_{PQ}$. It then follows that the quadratic Casimir element of $\lf_A\subset\su(2^n)$ in the diagonal representation $\rho_\Delta$ takes the following form
\begin{equation}\label{eq: quadratic Casimir}
\begin{aligned}
    C_{\lf_A}
    =\sum_{Q\in L_A}\rho_\Delta\left(\frac{i}{2}Q\right)\rho^*_\Delta\left(\frac{i}{2}Q\right)
    &=\sum_{Q\in L_A}\rho^2_\Delta\left(\frac{i}{2}Q\right)\\
    &=-\frac{1}{4}\sum_{Q\in L_A}(Q\otimes\one+\one\otimes Q)^2
    =-\frac{1}{2}\sum_{Q\in L_A}(\one\otimes\one+Q\otimes Q)\; .
\end{aligned}
\end{equation}
(Here, $\rho^*_\Delta$ denotes the dual basis element with respect to the invariant form $k_{\su(2^n)}$ above.) Comparison with Eq.~(\ref{eq: diagonal invariant operator}) shows that the operator $M_A:=M(\zeta^-_a)=\sum_{Q\in L_A}Q\otimes Q$ (for $A=W_a$ and $\zeta^\pm_a:=\frac{1}{2}(1\pm\xi_a)$) corresponds with the quadratic Casimir in Eq.~(\ref{eq: quadratic Casimir}) up to a constant shift,
\begin{align*}
    M_A
    =\sum_{Q\in L_A}Q\otimes Q
    =-2C_{\lf_A}-|L_A|\one\otimes\one\; .
\end{align*}
Now, the crucial point is that since $\lf_A \cong\spf(2^{n-1})$ acts irreducibly on $\Sym^2(\C^{2^n})$ by Cor.~\ref{cor: irreducibility}, by Schur's lemma any $\spf(2^{n-1})$-invariant must thus act as a multiple of the projector $\Pi_\sym$ on the restriction to the symmetric subspace. Clearly, this implies $C_{\lf_A}|_{\Sym^2(\C^{2^n})}\sim\Pi_\sym$ and thus,
\begin{align*}
    I_\psi(\zeta^-_a)
    =\sum_{Q\in L_A}\tr[p_\psi Q]^2
    =\tr[p^{\otimes 2}_\psi M_A]
    =-2\tr[p^{\otimes 2}_\psi|C_{\lf_A}]-|L_A|
    =\mathrm{const}\; ,
\end{align*}
for all $\psi\in\CP^{2^n-1}$. In particular, with $|L_A|=2^{n-1}(2^n+1)$ we find $C_{\lf_A}|_{\Sym^2(\C^{2^n})}=-2^{n-1}(2^{n-1}+1)\Pi_\sym$ for all $A^T=-A\in\aP_n$. We have thus shown that pure-state invariants in Thm.~\ref{thm: all update-closed pure-state invariants} correspond with quadratic Casimir elements of $\spf(2^{n-1})$. By contrast, $\sof(2^n)$ does not act irreducibly on $\Sym^2(\C^{2^n})$, and its quadratic Casimir is not a scalar on $\Sym^2(\C^{2^n})$, but has distinct eigenvalues $0$ and $-4^{n-1}$ on $\langle S\rangle$ and $\Sym^2_0(\C^{2^n})$, respectively. Consequently, $\sof(2^n)$ does not give rise to pure-state invariants.

\subsection{Changing representation}\label{app: changing representation}

We close by relating representations and associated quadratic Casimir elements corresponding to different $A^T=-A\in\aP_n$. To this end, fix an element $0\neq x_0\in\lf^\C_A$, and define the operator
\begin{align*}
    \rho^-_\Delta(x_0)
    :=x_0\otimes\one-\one\otimes x_0\; ,
\end{align*}
such that $\rho^-_\Delta(x_0)\kett{A}=\kett{x_0A-Ax_0^T}=\kett{2x_0A}=2\Phi_A(x_0)\in\Sym^2(\C^{2^n})$. Then
\begin{align*}
    \Sym^2(\C^{2^n})
    =\U(\lf^\C_A)\cdot(\rho^-_\Delta(x_0)|A\rangle\rangle)\; ,
\end{align*}
where $\U(\lf^\C_A)$ denotes the universal enveloping algebra of the Lie algebra $\lf^\C_A$ acting on the diagonal representation $\rho_\Delta$. Under the identification $\Phi_A:\lf^\C_A\to\Sym^2(\C^{2^n})$, $x\mapsto\kett{xA}$ in Lm.~\ref{lm: adjoint rep in Bell basis}, the vectors obtained by iterated diagonal action correspond with nested commutators in $\lf^\C_A$,\footnote{In addition to $\rho_\Delta$ being a Lie algebra homomorphism, that is, $[\rho_\Delta(x),\rho_\Delta(y)]=\rho_\Delta([x,y])$ (see Eq.~(\ref{eq: Lie alg homo})), one easily verifies that $[\rho_\Delta(x),\rho^-_\Delta(y)]=\rho^-_\Delta([x,y])$ and $[\rho^-_\Delta(x),\rho^-_\Delta(y)]=\rho_\Delta([x,y])$ for all $x,y\in\su(2^n)$.}
\begin{align}\label{eq: nested commutators}
    \kett{(\ad_{y_k}\cdots\ad_{y_2}\ad_{y_1}x_0)A}\quad\quad y_j\in\lf^\C_A\; .
\end{align}
Since $\lf^\C_A$ is simple, the ideal generated by $x_0$ is all of $\lf^\C_A$. Hence, these nested commutators span all of $\Sym^2(\C^{2^n})$.\footnote{It is easy to see that two elements in $\aP_n$ are related by at most two nested commutators. For $n\geq 2$, the same applies under the restriction to $L_A$.} Similarly, for $0\neq x_0\in\lf^\C_S$, $\rho^-_\Delta(x_0)\kett{S}=\kett{2x_0S}\in\Lambda^2(\C^{2^n})$,
\begin{align*}
    \Lambda^2(\C^{2^n})
    =\U(\lf^\C_S)\cdot(\rho^-_\Delta(x_0)\kett{S})\; ,
\end{align*}
and for $n\geq 3$, that is, when $\lf^\C_S$ is simple,$^{\ref{fn: exceptional so cases}}$ $\Lambda^2(\C^{2^n})$ is spanned by nested commutators.

The diagonal representation $\rho_\Delta$ thus relates the different copies of the symplectic Pauli Lie algebras corresponding to different choices of anti-symmetric operators $A^T=-A(\in\aP_n)$ as follows. Recall (from Lm.~6 and Thm.~2 in Ref.~\cite{FrembsHoehnNataleWever2026a}) that the different Pauli Lie algebras $\lf_A\cong\spf(2^{n-1})$ for $A^T=-A\in\aP_n$ are related via conjugation by a Clifford unitary,
\begin{align*}
    (U\otimes U)C_{\lf_A}(U\otimes U)^\dagger
    &=(U\otimes U)\sum_{Q\in L_A}\rho^2_\Delta(\frac{i}{2}Q)(U\otimes U)^\dagger\\
    &=-\frac{1}{2}(U\otimes U)\left(|L_A|\one\otimes\one+\sum_{Q\in L_A}Q\otimes Q\right)(U\otimes U)^\dagger\\
    &=-\frac{1}{2}\left(|L_A|\one\otimes\one+\sum_{Q'\in L_{UAU^T}}Q'\otimes Q'\right)
    =\sum_{Q'\in L_{UAU^T}}\rho^2_\Delta(\frac{i}{2}Q')
    =C_{\lf_{UAU^T}}\; .
\end{align*}
This suggests to extend the representation $\rho_\Delta:\su(2^n)\ra\mathrm{End}(\C^{2^{2n}})$ equivariantly with respect to the adjoint action for all $x\in\lf_A$,
\begin{align*}
    \rho_\Delta(\Ad_U(x))
    =\Ad_{U\otimes U}(\rho_\Delta(x))\; .
\end{align*}
While the different conjugated Lie algebras and corresponding quadratic Casimir elements differ, the latter remain proportional to the projector $\Pi_\sym$ under the restriction to $\Sym^2(\C^{2^n})$, since $\Ad_{U\otimes U}$ commutes with $\SWAP$. Explicitly, using the isomorphism in Lm.~\ref{lm: adjoint rep in Bell basis} and the representation via nested commutators in Eq.~(\ref{eq: nested commutators}), the different copies of the symplectic Lie algebra $\spf(2^{n-1})$ are distinguished by different choices of `cyclic seed vectors' $\kett{xA}$.

Finally, we remark that Lm.~\ref{lm: adjoint rep in Bell basis} and Cor.~\ref{cor: irreducibility} are not tied to qubit systems. The finite Pauli geometry of the pure-state invariants in Def.~\ref{def: I} is thus best seen as a discrete shadow of the above irreducibility statements about the action of the symplectic Lie algebra. Generalising the pure-state invariants in Def.~\ref{def: I} beyond qubit systems is a natural direction for future work.

\end{document}